\theoremstyle{plain}
\newtheorem{theorem}{Theorem}[subsection]
\newtheorem{corollary}  [theorem]{Corollary}
\newtheorem{example}    [theorem]{Example}
\newtheorem{lemma}      [theorem]{Lemma}
\newtheorem{proposition}[theorem]{Proposition}
\newtheorem{observation}[theorem]{Observation}
\theoremstyle{definition}
\newtheorem*{claim*}{Claim}
\newcommand{\N}{\mathbb{N}}
\newcommand{\Z}{\mathbb{Z}}
\newcommand{\Define}[1]{\textbf{#1}}
\newcommand{\Lattice}[1]{ \mathcal{#1} }
\newcommand{\Opposite}[1]{ {#1}^{\mathrm{op}} }
\newcommand{\Powerset}{ P }
\newcommand{\meet}      {\wedge}
\newcommand{\join}      {\vee}
\newcommand{\bigmeet}   {\bigwedge}
\newcommand{\bigjoin}   {\bigvee}
\newcommand{\Reflexive}             [1]{ {#1}^\mathrm{R} }
\newcommand{\Transitive}            [1]{ {#1}^\mathrm{T} }
\newcommand{\ReflexiveTransitive}   [1]{ {#1}^\mathrm{RT} }
\newcommand{\Symmetric}             [1]{ {#1}_\mathrm{S} }
\newcommand{\Residual}              [1]{ {#1}_+ }
\newcommand{\Residuated}            [1]{ {#1}_- }
\newcommand{\Dual}                  [1]{ {#1}_* }
\newcommand{\Transpose}             [1]{ {#1}_\top }
\newcommand{\CoTranspose}             [1]{ {#1}_\perp }
\newcommand{\Graphs}    { \Lattice{G} }
\newcommand{\one}       { \mathbf{1} }
\newcommand{\zero}      { \mathbf{0} }
\newcommand{\id}        {\mathrm{id}}
\newcommand{\Fix}       {\mathrm{Fix}}
\newcommand{\Pyr}       {\mathrm{Pyr}}
\newcommand{\Par}       {\mathrm{Par}}
\newcommand{\Converges}{\to}
\newcommand{\StronglyConverges}{\twoheadrightarrow}
\newcommand{\MeetSequence}{ \mathrm{M} }
\newcommand{\JoinSequence}{ \mathrm{J} }
\setlist[enumerate]{label=(\alph*)}
\newlist{equivalent}{enumerate}{1}
\setlist[equivalent]{label=(\arabic*)}
\newlist{dependency}{enumerate}{1}
\setlist[dependency]{label=(\greek*)}
\title{Robert's theorem and graphs on complete lattices}
\author{Maximilien Gadouleau\thanks{Department of Computer Science, Durham University, UK. Email: \texttt{m.r.gadouleau@durham.ac.uk} }}
\date{\today}
\begin{document}

\maketitle


\begin{abstract}
Automata networks, and in particular Boolean networks, are used to model diverse networks of interacting entities. The interaction graph of an automata network is its most important parameter, as it represents the overall architecture of the network. A continuous amount of work has been devoted to infer dynamical properties of the automata network based on its interaction graph only. Robert's theorem is the seminal result in this area; it states that automata networks with an acyclic interaction graph converge to a unique fixed point. The feedback bound can be viewed as an extension of Robert's theorem; it gives an upper bound on the number of fixed points of an automata network based on the size of a minimum feedback vertex set of its interaction graph. Boolean networks can be viewed as self-mappings on the power set lattice of the set of entities. In this paper, we consider self-mappings on a general complete lattice. We make two conceptual contributions. Firstly, we can view a digraph as a residuated mapping on the power set lattice; as such, we define a graph on a complete lattice as a residuated mapping on that lattice. We extend and generalise some results on digraphs to our setting. Secondly, we introduce a generalised notion of dependency whereby any mapping $\phi$ can depend on any other mapping $\alpha$. In fact, we are able to give four kinds of dependency in this case. We can then vastly expand Robert's theorem to self-mappings on general complete lattices; we similarly generalise the feedback bound. We then obtain stronger results in the case where the lattice is a complete Boolean algebra. We finally show how our results can be applied to prove the convergence of automata networks.
\end{abstract}

\section{Introduction} \label{section:introduction}

\subsection{Automata networks and Robert's theorem} \label{subsection:background}

Automata networks are used to represent different networks of interacting entities, such as neural networks, gene regulatory networks, or social interactions (see \cite{GRR15} and references therein). In such networks, each entity $i \in V$ has a state $x_i$ taking its values in some alphabet. Usually, this alphabet $Q$ is finite and is the same for all entities, so that the overall configuration of the network is $x \in Q^V$; if $Q = \{0,1\}$, this is called a Boolean network. The value of $x_i$ then evolves over time according to a deterministic function $\phi_i$, which depends on the state of several entities in the network. As such, the whole network is formalised as a mapping $\phi : Q^V \to Q^V$ which represents the evolution of the configuration over time.

A main problem in the theory of automata networks is to infer properties on the dynamics of the configuration $x$ based only on partial knowledge of the mapping $\phi$. Amongst the numerous dynamical properties of interest, such as the number of periodic points of $\phi$ or its transient length, the number of fixed points of $\phi$ is the most thoroughly studied. In terms of partial knowledge of $\phi$, the interaction graph of an automata network represents the overall architecture of the network, where each arc represents the influence of an entity on another. Formally, it is a digraph $D = (V,E)$ where $(i,j)$ is an arc whenever $\phi_j$ depends essentially on $x_i$, i.e.
\begin{equation} \label{equation:interaction_graph}
    \exists a,b \in Q^V : a_k = b_k \forall k \ne i, \phi_j(a) \ne \phi_j(b).
\end{equation}
A continuous amount of work has been devoted to infer dynamical properties based on the interaction graph only (see \cite{Gad20} for a review). We shall focus on two main results on the number of fixed points. Firstly, Robert's theorem \cite{Rob80, Rob95} is the seminal result in the study of automata networks. It states that automata networks with an acyclic interaction graph converge to a unique fixed point. Secondly, the feedback bound \cite{Rii07, Ara08} can be viewed as an extension of Robert's theorem as it gives an upper bound on the number of fixed points of an automata network based on the size of a minimum feedback vertex set of its interaction graph.

\subsection{Robert's theorem, paradoxes and hypodoxes}

On top of their applications, automata networks have also been used in theoretical work, such as network coding \cite{Rii07, Rii07a, GR11, GRF16}, hat games \cite{GG15, Gad18a}, or logic. Let us briefly explain the connection to logic, as it was used to expand on Robert's theorem to the case of infinite Boolean networks; the details can be found in \cite{RRM13}. 

The basic idea is to identify any set of propositional sentences with a Boolean network, where $Q = \{0,1\}$ and $x_i$ is the truth value of sentence $i$ for all $i$. A valid truth assignment to these sentences then corresponds to a fixed point of the Boolean network. Thus, a Boolean with no fixed points is a paradox, while a Boolean network with multiple fixed points corresponds to a hypodox. The interaction graph of the Boolean network represents how the sentences refer to one another; a cycle in the interaction graph is then a case of self-reference. Thus, Robert's theorem implies that a paradox or a hypodox involving a finite number of sentences must use self-reference. 

For instance, consider the well-known paradox:
\begin{quote}
    The next sentence is true. \\
    The previous sentence is false.
\end{quote}
This can be viewed as a Boolean network with two variables $x_1, x_2 \in \{0,1\}$ describing the truth value of each sentence, and two Boolean functions $\phi_1(x) = x_2, \phi_2(x) = \neg x_1$ expressing the actual sentences. The interaction graph of this Boolean network is simply a cycle of length two. Since this Boolean network has no fixed points, this is a paradox. Similarly, consider this hypodox:
\begin{quote}
    The next sentence is true. \\
    The previous sentence is true.
\end{quote}
This can be viewed as a Boolean network with two variables $x_1, x_2 \in \{0,1\}$, and two Boolean functions $\phi_1(x) = x_2, \phi_2(x) = x_1$. Again, its interaction graph is a cycle of length two. It has two fixed points, namely $x = 11$ (both sentences are true) and $x = 00$ (both sentences are false).

However, the situation is different when considering infinite Boolean networks. Yablo \cite{Yab93} discovered his famous paradox without self-reference with a sentence for each natural number:
\begin{quote}
    Sentence $i$: ``For all $j > i$, Sentence $j$ is false.''
\end{quote}
In our terminology, this corresponds to the Boolean network with update function $\phi_i(x) = \bigmeet_{j > i} \neg x_j$ for every $i \in \N$. This Boolean network has no fixed points, and hence it is a paradox, even though its interaction graph is acyclic. Another difference with the finite case is the definition of dependency. Take this simple Boolean network as an example: $\omega: \{ 0, 1 \}^\N \to \{0,1\}^\N$ where $\omega_i(x) = 1$ if and only if $x$ has infinite support for all $i \in \N$. According to Equation \eqref{equation:interaction_graph}, the interaction graph of $\omega$ is empty: changing the value of a single variable does not affect the finiteness of the overall support. Rabern, Rabern and Macauley \cite{RRM13} use a workaround to the interaction graph issue, whereby a function is independent of a set of variables; this shall be reviewed in the sequel. They then prove that non-self-referential paradoxes must contain an infinite number of sentences that involve an infinite number of variables. They also fully classify the graphs that allow hypodoxes.

\subsection{Contributions and outline} \label{subsection:outline}

In this paper, we expand Robert's theorem and the feedback bound to even wider domains than infinite Boolean networks. We view a configuration as $x \in L$, and we consider self-mappings $\phi: L \to L$, where $\Lattice{L} = (L, \join, \meet, \zero, \one)$ is a complete lattice. Boolean networks then reduce to the case where $\Lattice{L}$ is the lattice of subsets $\Lattice{\Powerset}(V)$. Robert's theorem asserts that a finite automata network with an acyclic interaction graph converges to a single fixed point. Accordingly, we generalise the concepts of graph, interaction graph (via the notion of dependency), acyclic graph, and we obtain different convergence results. We give a summary of our contributions leading to the generalised Robert's theorem below; we simplify some technical terms which will be fully explained in the following sections of the paper.

\begin{itemize}
\item 
Firstly, we generalise the concept of graph to any complete lattice. A digraph $D$ naturally induces a mapping on $\Lattice{\Powerset}(V)$, whereby $f(x) = N_D(x)$ is the in-neighbourhood of $x$. This mapping preserves arbitrary unions, i.e. $f(\bigcup_{x \in S} x) = \bigcup_{x \in S} f(x)$; equivalently, $f$ is a residuated mapping \cite{BJ72}, and in fact all residuated self-mappings of $\Lattice{\Powerset}(V)$ are of this form. In our general setting, we then define a graph as any residuated mapping on $\Lattice{L}$. In Section \ref{section:graphs}, we investigate the properties of graphs on lattices and we generalise some well known results on digraphs to our setting, such as the topological sort of acyclic graphs or the finite topology theorem. This solidifies our intuition that this is an appropriate way to generalise graphs. 

\item
Secondly, we widely generalise the notion of dependency given in \cite{RRM13} into four kinds of dependency. One crucial aspect is that a mapping $\phi$ can depend on any other mapping $\alpha : L \to L$, even if $\alpha$ is not a graph. Those four kinds of dependency are related to each other but they yield slightly different results.

\item
Thirdly, we generalise the notion of finite acyclic graph in several ways. In \cite{RRM13}, it is proved that the graphs on which hypodoxes depend are exactly the digraphs without infinite walks. As such, we first introduce the notion of a fixed point-free graph in Section \ref{subsection:fixed_point-free}, which generalises such digraphs to all complete lattices. However, this notion is not strong enough to generalise Robert's theorem, and hence we also introduce asymptotically meet-nilpotent graphs in Section \ref{subsection:nilpotent_graphs}, which will be appropriate when the lattice is a complete Boolean algebra. In general, Robert's theorem can be viewed as a consequence of the Banach contraction principle, where the interaction graph of $\phi$ induces a metric for which the mapping $\phi$ is contractive. Therefore, in Section \ref{section:discriminating}, we introduce so-called covering/metric/complete mappings $\alpha$, which allow to define a metric of varying strength for which the mapping $\phi$ shall be contractive.

\item 
Fourthly, we then prove a generalisation of Robert's theorem in Theorem \ref{theorem:generalised_robert}. In fact, there are four versions, one for each kind of dependency, and each version gives three levels, depending on the properties of the mapping $\alpha$ on which $\phi$ depends (e.g. metric or complete). One version can be informally summarised as: if $\phi$ depends on the complete mapping $\alpha$, then $\phi$ converges to a unique fixed point $e$. An interesting feature of Theorem \ref{theorem:generalised_robert} is that it gives an actual formula for the unique fixed point $e$. In Theorem \ref{theorem:feedback_bound}, we then generalise the feedback bound to mappings over complete lattices; this is based on metric mappings.
\end{itemize}

We then focus on the case where the lattice $\Lattice{L}$ is a complete Boolean algebra. We first introduce the transpose of a graph, which generalises the notion of reversing the direction of every arc in a digraph. The generalised Robert's theorem simplifies greatly when $\phi$ depends on a graph over a complete Boolean algebra. In that case, all four kinds of dependency are equivalent, and a graph is covering/metric/complete if and only if its transpose is asymptotically meet-nilpotent. As such, we obtain a stronger (and more succinct) Robert's theorem for complete Boolean algebras in Theorem \ref{theorem:robert_CBA}. Fixed point-free graphs play a crucial role here: based on them we strengthen the classification of graphs of hypodoxes in \cite{RRM13} and generalise it to all complete Boolean algebras, and the feedback bound for complete Boolean algebras depends on fixed point-free graphs instead of metric mappings.


Our results are generalisations of the original Robert's theorem. Even though most of our work is devoted to infinite lattices, where nontrivial problems of convergence arise, our results can also be used to infer the convergence of finite automata networks. In fact, we show a converse of Robert's theorem in the finite case: any mapping $\phi$ that converges to a single fixed point depends on a nilpotent graph. We also highlight some concrete examples of mappings whose convergence can be proved using our generalised Robert's theorem but which fail to satisfy the hypothesis of the original Robert's theorem.


The rest of the paper is organised as follows. In Section \ref{section:preliminaries}, we review some useful concepts about the Banach contraction principle, Robert's theorem, and complete lattices. In Section \ref{section:graphs}, we introduce graphs as residuated mappings on complete lattices and generalise some graph-theoretic results to this case. We then focus on fixed point-free and asymptotically meet-nilpotent graphs. Section \ref{section:robert} then deals with Robert's theorem for complete lattices. We first introduce the four kinds of dependency in Section \ref{section:dependency} and covering/metric/complete mappings in Section \ref{section:discriminating}. We then give Robert's theorem for complete lattices in Section \ref{section:robert_complete_lattices} and discuss its tightness in Section \ref{subsection:limitations_robert}. We finally give the generalised feedback bound in Section \ref{Section:feedback_bounds}. Section \ref{section:CBA} is then dedicated to Robert's theorem and the feedback bound for complete Boolean algebras. We provide some applications of our results to automata networks in Section \ref{subsection:applications_robert} and conclude the paper in Section \ref{section:conclusion}.

\section{Preliminaries} \label{section:preliminaries}

\subsection{Contraction principles} \label{subsection:contraction_principles}

We denote the set of natural numbers as $\N = \{0, 1, 2, \dots \}$ and for all $j \in \N$, we denote $\N_j = \{ i \in \N: i \ge j \}$. Let $X$ be a set and $\phi: X \to X$. We consider the following sets of points of $X$ with respect to $\phi$.
\begin{enumerate}
    \item \label{item:fixed_points}
    A point $x \in X$ is a \Define{fixed point} of $\phi$ if $\phi(x) = x$. The set of fixed points of $\phi$ is denoted by $\Fix( \phi )$.


    \item \label{item:parabolic_points}
    A point $x \in X$ is a \Define{parabolic point} of $\phi$ if there exists a sequence $(x_i : i \in \N)$ with $x_0 = x$ and $\phi( x_i ) = x_{i-1}$ for all $i \ge 1$. The set of parabolic points of $\phi$ is denoted by $\Par( \phi )$.

    \item \label{item:pyramidal_points}
    A point $x \in X$ is a \Define{pyramidal point} of $\phi$ if there exists a sequence $(x_i : i \in \N)$ with $\phi^i( x_i ) = x$ for all $i \ge 1$. The set of pyramidal points of $\phi$ is denoted by $\Pyr( \phi )$. We then have
    \[
        \Pyr(\phi) = \bigcap_{i \in \N} \phi^i( X ).
    \]
\end{enumerate}
We then have
\[
    \Fix( \phi ) \subseteq \Par( \phi ) \subseteq \Pyr( \phi ).
\]
Those special points are illustrated in Figure \ref{figure:points}.

\begin{figure}[!htp]
    \begin{tikzpicture}[xscale=3, yscale=1]
    
    \begin{scope}[yshift = 0cm]
        \node (fix) at (1,0) {\textcolor{blue}{Fixed point}};
        \node[draw, circle, blue] (f) at (0,0) {$x$};
        \draw[-latex] (f) to [loop left] (f);
    \end{scope}
    
    \begin{scope}[yshift = -2cm]
        \node (par) at (1,0) {\textcolor{violet}{Parabolic point}};
        \node[draw, circle, violet] (p0) at (0,0) {$x$};
        \node[draw, circle] (p1) at (-1,0) {$x_1$};
        \node[draw, circle] (p2) at (-2,0) {$x_2$};
        \node[draw, circle] (p3) at (-3,0) {$x_3$};
        \node (p4) at (-4,0) {$\dots$};
    
        \draw[-latex] (p4) -- (p3);
        \draw[-latex] (p3) -- (p2);
        \draw[-latex] (p2) -- (p1);
        \draw[-latex] (p1) -- (p0);
    \end{scope}

    \begin{scope}[yshift = -4cm]
        \node (par) at (1,0) {\textcolor{red}{Pyramidal point}};
        \node[draw, circle, red] (p00) at (0,0) {$x$};
        \node[draw, circle] (p11) at (-1,0) {$x_1$};
        \node[draw, circle] (p21) at (-1,-1) {};
        \node[draw, circle] (p22) at (-2,-1) {$x_2$};
        \node[draw, circle] (p31) at (-1,-2) {};
        \node[draw, circle] (p32) at (-2,-2) {};
        \node[draw, circle] (p33) at (-3,-2) {$x_3$};
        \node (p4) at (-1,-3) {$\vdots$};
        \node[draw, circle] (pii) at (-4,-4) {$x_i$};
        \node (pii1) at (-3,-4) {$\dots$};
        \node[draw, circle] (pi2) at (-2,-4) {};
        \node[draw, circle] (pi1) at (-1,-4) {};
        \node (p5) at (-1, -5) {$\vdots$};
        
        \draw[-latex] (pii) -- (pii1);
        \draw[-latex] (pii1) -- (pi2);
        \draw[-latex] (pi2) -- (pi1);
        \draw[-latex] (pi1) -- (p00);
    
        \draw[-latex] (p33) -- (p32);
        \draw[-latex] (p32) -- (p31);    
        \draw[-latex] (p31) -- (p00);
        \draw[-latex] (p22) -- (p21);
        \draw[-latex] (p21) -- (p00);
        \draw[-latex] (p11) -- (p00);
    \end{scope}
    
    \end{tikzpicture}
    \caption{Special points of $X$ with respect to $\phi$.}
        \label{figure:points}
\end{figure}
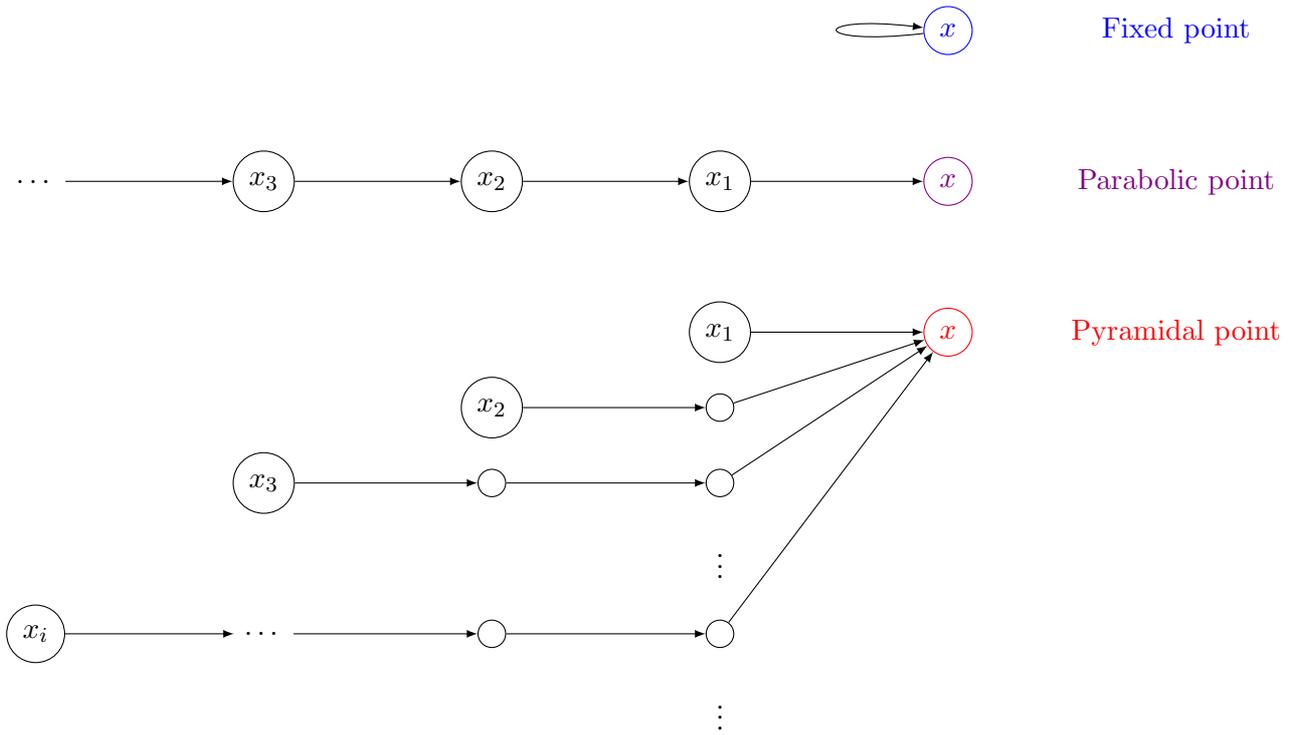

It is readily seen that $\phi( \Fix( \phi ) ) = \Fix( \phi )$. The same holds for the set of parabolic points, as seen below.

\begin{lemma} \label{lemma:phi(Par)}
For any $\phi : X \to X$, we have $\phi( \Par( \phi ) ) = \Par( \phi )$.
\end{lemma}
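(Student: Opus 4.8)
The plan is to prove the two inclusions $\phi(\Par(\phi)) \subseteq \Par(\phi)$ and $\Par(\phi) \subseteq \phi(\Par(\phi))$ separately, both by manipulating the defining backward sequences. The single idea driving both directions is that a witnessing sequence $(x_i : i \in \N)$ for $x \in \Par(\phi)$ can be shifted: dropping $x_0$ exhibits $x_1$ as parabolic, and prepending a new term exhibits $\phi(x)$ as parabolic.

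For $\Par(\phi) \subseteq \phi(\Par(\phi))$, I would take $x \in \Par(\phi)$ with a sequence $(x_i : i \in \N)$ satisfying $x_0 = x$ and $\phi(x_i) = x_{i-1}$ for all $i \ge 1$. I claim $x_1 \in \Par(\phi)$: define $z_i = x_{i+1}$, so that $z_0 = x_1$ and $\phi(z_i) = \phi(x_{i+1}) = x_i = z_{i-1}$ for all $i \ge 1$, which is exactly the required condition. Since $x = x_0 = \phi(x_1)$, we conclude $x \in \phi(\Par(\phi))$.

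For $\phi(\Par(\phi)) \subseteq \Par(\phi)$, I would take $y = \phi(x)$ with $x \in \Par(\phi)$ witnessed by $(x_i : i \in \N)$ as above. Define $y_0 = y$ and $y_i = x_{i-1}$ for all $i \ge 1$. Then $\phi(y_1) = \phi(x_0) = \phi(x) = y = y_0$, and for $i \ge 2$ we have $\phi(y_i) = \phi(x_{i-1}) = x_{i-2} = y_{i-1}$. Hence $(y_i : i \in \N)$ witnesses $y \in \Par(\phi)$.

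I do not expect any real obstacle here; the only thing to be careful about is the re-indexing of the sequences (making sure the shifted/prepended sequence still satisfies $\phi(\cdot_i) = \cdot_{i-1}$ for \emph{all} $i \ge 1$, including the boundary index $i = 1$). Everything else is a direct unwinding of the definition of a parabolic point.
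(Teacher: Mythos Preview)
Your proposal is correct and follows essentially the same approach as the paper: both inclusions are proved by shifting the witnessing sequence (prepending $\phi(x)$ for one direction, dropping $x_0$ for the other). The paper presents the two inclusions in the opposite order and is slightly terser about the boundary check at $i=1$, but the argument is identical.
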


\begin{proof}
We first prove that $\phi( \Par( \phi ) ) \subseteq \Par( \phi )$. Let $y \in \phi( \Par( \phi ) )$, i.e. $y = \phi(x)$ with $x_0 = x$ and $\phi( x_i ) = x_{i-1}$ for all $i \ge 1$. Define $y_0 = y$ and $y_i = x_{i-1}$ for all $i \ge 1$, then $\phi( y_i ) = y_{i-1}$ for all $i \ge 1$; this shows that $y \in \Par( \phi )$. 

Conversely, we now prove that $\Par( \phi ) \subseteq \phi( \Par( \phi ) )$. If $x \in \Par( \phi )$ with $x_0 = x$ and $\phi( x_i ) = x_{i-1}$ for all $i \ge 1$, then let $z = x_1$. Define $z_0 = z$ and $z_i = x_{i+1}$ for all $i \ge 1$, then $\phi( z_i ) = z_{i-1}$ for all $i \ge 1$, hence $z \in \Par( \phi )$ and $x = \phi( z ) \in \phi( \Par( \phi ) )$.
\end{proof}

We have $\phi( \Pyr( \phi ) ) \subseteq \Pyr( \phi )$ for all $\phi$, but we do not necessarily have the reverse containment, as we shall see in Example \ref{example:stairways}.

\medskip

We can now express three contraction prinicples. Let $( X, d )$ be a metric space and $\phi : X \to X$. All limits shall be taken with respect to $d$. For any $c < 1$, $\phi$ is $c$-contractive if $d( \phi(x), \phi(y) ) \le c d( x, y )$, and $\phi$ is \Define{contractive} if it is $c$-contractive for some $c < 1$. Also, $\phi$ \Define{converges} to $e \in X$ if for all $x \in X$, $\lim_{n \to \infty} \phi^n(x) = e$, which we denote by $\phi \Converges e$. Clearly, if a contractive mapping $\phi$ converges to $e$, then $\Fix(\phi) = \{ e \}$. The Banach contraction principle \cite[Chapter 1, Theorem 1.1]{GD03} asserts that a contractive mapping on a complete metric space converges to a unique fixed point -- to contrast with our next result, we shall refer to it as the complete Banach contraction principle.

\begin{theorem}[Complete Banach contraction principle] \label{theorem:complete_Banach}
Let $(X,d)$ be a complete metric space, and let $\phi: X \to X$ be contractive. Then there exists $e \in X$ such that $\Fix( \phi ) = \{ e \}$, and $\phi \Converges e$.
\end{theorem}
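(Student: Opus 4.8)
The plan is to run the classical Picard iteration argument. Fix an arbitrary basepoint $x_0 \in X$ and define the orbit $x_{n+1} = \phi(x_n)$ for all $n \in \N$. First I would establish, by an easy induction on $n$ using $c$-contractivity, that $d(x_{n+1}, x_n) \le c^n d(x_1, x_0)$. Summing a geometric series then yields, for all $m > n$,
\[
    d(x_m, x_n) \le \sum_{k=n}^{m-1} d(x_{k+1}, x_k) \le d(x_1, x_0) \frac{c^n}{1 - c},
\]
which tends to $0$ as $n \to \infty$ since $0 \le c < 1$; hence $(x_n : n \in \N)$ is a Cauchy sequence.

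By completeness of $(X,d)$, this sequence converges to some $e \in X$. Since a $c$-contractive map is $1$-Lipschitz, hence continuous, I can pass to the limit in $x_{n+1} = \phi(x_n)$ to obtain $e = \lim_n x_{n+1} = \lim_n \phi(x_n) = \phi(\lim_n x_n) = \phi(e)$, so $e \in \Fix(\phi)$. For uniqueness, suppose $\phi(e) = e$ and $\phi(e') = e'$; then $d(e, e') = d(\phi(e), \phi(e')) \le c\, d(e, e')$, and since $c < 1$ this forces $d(e,e') = 0$, i.e. $e = e'$. Thus $\Fix(\phi) = \{ e \}$.

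Finally I would upgrade ``one orbit converges'' to the stronger statement $\phi \Converges e$, i.e. $\lim_n \phi^n(x) = e$ for \emph{every} $x \in X$. This is immediate from contractivity applied to the pair $x, e$: $d(\phi^n(x), e) = d(\phi^n(x), \phi^n(e)) \le c^n d(x, e) \to 0$, so $\phi^n(x) \to e$. Hence $\phi \Converges e$.

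The argument has no serious obstacle; the only point genuinely requiring care is the Cauchy estimate, where completeness is essential — without it the displayed bound still holds but the orbit need not converge. Everything else is a routine consequence of the Lipschitz bound $d(\phi(x),\phi(y)) \le c\, d(x,y)$, and the hypothesis $c < 1$ is used exactly twice: to sum the geometric series and to force $d(e,e') = 0$ in the uniqueness step.
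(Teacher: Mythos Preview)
Your argument is the standard and correct Picard iteration proof of the Banach contraction principle. The paper does not actually supply its own proof of this theorem; it merely states the result and cites \cite[Chapter 1, Theorem 1.1]{GD03}, so there is nothing substantive to compare against.
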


The complete Banach contraction principle does not assert anything about the parabolic points of $\phi$. Indeed, if $X = \mathbb{R}$ endowed with the usual metric $d(x,y) = |x - y|$, then the mapping $\phi: x \mapsto \frac{x}{2}$ is contractive, but all real numbers are parabolic points (with $x_i = 2^i x$ for all $i \in \N$).

We say $\phi$ \Define{strongly converges} to $e \in X$ if $\phi$ converges to $e$ and $\Pyr( \phi ) = \Fix( \phi ) = \{ e \}$, and we denote it by $\phi \StronglyConverges e$. A metric space $(X,d)$ is \Define{bounded} if there exists $M$ such that $d(x,y) \le M$ for all $x, y \in X$. The bounded Banach contraction principle \cite{Jac00} shows that if $\phi$ has a pyramidal point, then it strongly converges to it.

\begin{theorem}[Bounded Banach contraction principle] \label{theorem:bounded_Banach}
Let $(X,d)$ be a bounded metric space, and let $\phi: X \to X$ be contractive. Then there exists $e \in X$ such that $\Pyr( \phi ) = \Fix( \phi ) \subseteq \{ e \}$, and if $\Fix( \phi ) = \{ e \}$, then $\phi \StronglyConverges e$.
\end{theorem}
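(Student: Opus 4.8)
The plan is to establish three things in sequence: (i) $\Pyr(\phi)=\Fix(\phi)$; (ii) this common set has at most one element, call it $e$ when it is nonempty; and (iii) if $\Fix(\phi)=\{e\}$, then $\phi$ converges to $e$ and $\Pyr(\phi)=\Fix(\phi)=\{e\}$, which is exactly strong convergence. The main leverage throughout will be the boundedness constant $M$ together with the contraction factor $c<1$; the recurring trick is that if two points $x,y$ lie in $\phi^n(X)$, then $x=\phi^n(x')$ and $y=\phi^n(y')$ for some $x',y'$, whence $d(x,y)=d(\phi^n(x'),\phi^n(y'))\le c^n d(x',y')\le c^n M$, and letting $n\to\infty$ forces $d(x,y)=0$.

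First I would prove $\mathrm{diam}\,\phi^n(X)\le c^n M$ by the observation just above. Since $\Pyr(\phi)=\bigcap_{i\in\N}\phi^i(X)$, any two pyramidal points lie in every $\phi^n(X)$, so their distance is at most $c^n M$ for all $n$, giving $d=0$; hence $|\Pyr(\phi)|\le 1$, and since $\Fix(\phi)\subseteq\Pyr(\phi)$ this already bounds $\Fix(\phi)$ too. Next, to get $\Pyr(\phi)\subseteq\Fix(\phi)$ (the reverse inclusion always holds): take $x\in\Pyr(\phi)$. I want to show $\phi(x)=x$. Note $\phi(x)\in\phi(\Pyr(\phi))\subseteq\Pyr(\phi)$, so both $x$ and $\phi(x)$ are pyramidal points; by the uniqueness just shown, $\phi(x)=x$. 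This gives $\Pyr(\phi)=\Fix(\phi)\subseteq\{e\}$ where $e$ is the unique element when the set is nonempty.

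For the convergence claim, suppose $\Fix(\phi)=\{e\}$, so in particular $e\in\phi^n(X)$ for every $n$ (as $e=\phi^n(e)$). Then for any $x\in X$, both $e$ and $\phi^n(x)$ lie in $\phi^n(X)$, so $d(\phi^n(x),e)\le \mathrm{diam}\,\phi^n(X)\le c^n M\to 0$; hence $\phi^n(x)\to e$, i.e. $\phi$ converges to $e$. Combined with $\Pyr(\phi)=\Fix(\phi)=\{e\}$, this is precisely $\phi\StronglyConverges e$.

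The only genuinely delicate point is bookkeeping the inclusion $\phi(\Pyr(\phi))\subseteq\Pyr(\phi)$, which the excerpt already asserts holds for all $\phi$, so I would simply cite it; the rest is the uniform geometric decay of diameters. There is no real obstacle here — the essential idea is entirely contained in the estimate $d(x,y)\le c^n M$ for $x,y\in\phi^n(X)$ — so the write-up is mostly a matter of organising these three applications of that single inequality cleanly.
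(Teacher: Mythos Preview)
The paper does not actually supply a proof of this theorem; it is stated with a citation to \cite{Jac00} and no proof environment follows. Your argument is correct and is essentially the standard one: the single estimate $\mathrm{diam}\,\phi^n(X)\le c^n M$ yields $|\Pyr(\phi)|\le 1$, the inclusion $\Pyr(\phi)\subseteq\Fix(\phi)$ then follows from $\phi(\Pyr(\phi))\subseteq\Pyr(\phi)$ together with that uniqueness, and convergence of every orbit to a fixed point $e$ (when one exists) is another direct application of the same diameter bound. One cosmetic remark: the statement asserts the existence of some $e\in X$ with $\Pyr(\phi)\subseteq\{e\}$ even when $\Pyr(\phi)=\emptyset$, so in that case you simply pick any point of $X$.
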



Combining the previous two contraction principles, we obtain the following corollary.

\begin{corollary}[Complete bounded Banach contraction principle] \label{corollary:complete_bounded_Banach}
Let $(X,d)$ be a complete bounded metric space, and let $\phi: X \to X$ be contractive. Then there exists $e \in X$ such that $\Pyr( \phi ) = \Fix( \phi ) = \{ e \}$, and $\phi \StronglyConverges e$.
\end{corollary}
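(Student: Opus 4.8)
The plan is to derive this corollary directly by combining Theorem~\ref{theorem:complete_Banach} and Theorem~\ref{theorem:bounded_Banach}, since a complete bounded metric space satisfies the hypotheses of both. First I would apply the complete Banach contraction principle (Theorem~\ref{theorem:complete_Banach}) to the contractive mapping $\phi$ on the complete metric space $(X,d)$: this yields a point $e \in X$ with $\Fix(\phi) = \{e\}$ and $\phi \Converges e$. The key observation is that this already pins down $e$ uniquely, so there is nothing left to choose.

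Next I would invoke the bounded Banach contraction principle (Theorem~\ref{theorem:bounded_Banach}), which applies because $(X,d)$ is bounded and $\phi$ is contractive. It gives a point $e' \in X$ with $\Pyr(\phi) = \Fix(\phi) \subseteq \{e'\}$. Since we already know $\Fix(\phi) = \{e\}$ is nonempty, the inclusion $\Fix(\phi) \subseteq \{e'\}$ forces $e' = e$, and hence $\Pyr(\phi) = \Fix(\phi) = \{e\}$. In particular $\Fix(\phi) = \{e'\}$ holds in the notation of Theorem~\ref{theorem:bounded_Banach}, so its conclusion upgrades $\phi \Converges e$ to $\phi \StronglyConverges e$. (Strictly, $\phi \StronglyConverges e$ just means $\phi \Converges e$ together with $\Pyr(\phi) = \Fix(\phi) = \{e\}$, both of which we have established, so one could also conclude this step directly without re-reading Theorem~\ref{theorem:bounded_Banach}.)

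There is no real obstacle here; the only thing to be careful about is the bookkeeping of which theorem supplies which piece of information and making sure the two points $e$ and $e'$ produced by the two principles are identified. The argument is entirely formal once one notes that completeness guarantees $\Fix(\phi)$ is nonempty (via Theorem~\ref{theorem:complete_Banach}) while boundedness guarantees $\Pyr(\phi) = \Fix(\phi)$ and that a nonempty fixed point set yields strong convergence (via Theorem~\ref{theorem:bounded_Banach}). Combining the two gives exactly the statement of Corollary~\ref{corollary:complete_bounded_Banach}.
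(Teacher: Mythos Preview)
Your proposal is correct and matches the paper's approach exactly: the paper simply states that the corollary follows by ``combining the previous two contraction principles,'' and you have carried out precisely this combination, applying Theorem~\ref{theorem:complete_Banach} to obtain $\Fix(\phi)=\{e\}$ and $\phi\Converges e$, then Theorem~\ref{theorem:bounded_Banach} to obtain $\Pyr(\phi)=\Fix(\phi)$ and strong convergence.
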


\subsection{Robert's theorem for finite automata networks} \label{subsection:robert}

Let $X$ be a Cartesian product of sets indexed by a set $V$, i.e. $X = \prod_{v \in V} X_v$. We shall introduce shorthand notations for any subset $I$ of $V$, e.g. $X_I = \prod_{i \in I} X_i$, $x_I = (x_i : i \in I)$, etc. Then an \Define{automata network} is a mapping $\phi: X \to X$; a \Define{finite automata network} is an automata network where $V$ is finite; a \Define{Boolean network} is an automata network where $|X_v| = 2$ for all $v \in V$. We can decompose any automata network $\phi: X \to X$ as a family $\phi = (\phi_v : X \to X_v)$, where $\phi(x)_v = \phi_v(x)$ for all $x \in X$.

A \Define{digraph} is a pair $D = (V,E)$ where $E \subseteq V^2$. We shall identify the digraph $D$ with its in-neighbourhood function: for any set $I \subseteq V$ of vertices, its in-neighbourhood in $D$ is 
\[
    D(I) = \{ u : \exists v \in I, (u,v) \in E \}. 
\]
Let $D = (V,E)$ be a digraph. Adapting \cite[Definition 24]{RR13}, we say that $\phi: X \to X$ \Define{depends} on $D$ if for all $x, y \in X$ and $S \subseteq V$,
\[
    x_{ D( S ) } = y_{ D( S ) } \implies \phi_S(x) = \phi_S(y).
\]
It is worth noting that when we write ``$\phi$ depends on $D$'', we effectively mean that the local function $\phi_i$ of $\phi$ only depends on the variables $x_j$ in the in-neighbourhood of $i$. In fact, \cite{RR13} say that $\phi$ is ``independent'' of all the other variables. For this notion of dependency, and its generalisations introduced in Section \ref{section:dependency}, one should always have this interpretation of ``$\phi$ depending on nothing more than $D$'' in mind.

The mapping $\phi_j : X \to X_j$ \Define{depends essentially} on $x_i$ if changing the value of $x_i$ can change the value of $\phi_j( x )$; more formally, if and only if there exist $a,b \in X$ such that $a_{V \setminus i} = b_{V \setminus i}$ and $\phi_j( a ) \ne \phi_j( b )$. The \Define{interaction graph} of $\phi$ is the digraph $G^\phi = (V, E)$ with $(i,j) \in E$ if and only if $\phi_j$ depends essentially on $i$. If $V$ is finite, then $\phi$ depends on $D$ if and only if $G^\phi \subseteq D$.

Robert's theorem states that if $\phi$ is a finite automata network with an acyclic interaction graph, then $\phi$ converges to a unique fixed point (in at most $|V|$ steps). Below we give two proofs: the first one is based on a ``crystallisation'' phenomenon, which proves convergence in a finite number of steps; while the second is based on the complete bounded Banach contraction principle, which we will be able to generalise.

\begin{theorem}[Robert's theorem for finite automata networks \cite{Rob80}] \label{theorem:robert}
Let $\phi$ be a finite automata network. If the interaction graph of $\phi$ is acyclic, then $\phi$ strongly converges to a unique fixed point.
\end{theorem}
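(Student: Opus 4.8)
The plan is to give both proofs sketched in the text, since the theorem explicitly promises two. For the first ``crystallisation'' proof, I would use the acyclicity of $G^\phi$ to fix a topological ordering of $V$: write $V = \{v_1, \dots, v_n\}$ so that $(v_i, v_j) \in E$ implies $i < j$. The key observation is that $\phi_{v_1}$ depends essentially on no variable, hence is a constant, so after one step the coordinate $v_1$ has reached a fixed value and never changes again. Inductively, once coordinates $v_1, \dots, v_{k-1}$ have stabilised (which happens after at most $k-1$ steps), the function $\phi_{v_k}$ depends essentially only on $x_{v_1}, \dots, x_{v_{k-1}}$ (its in-neighbours all precede it in the order), and so $v_k$ takes a fixed value at the next step and stays there. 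After at most $n = |V|$ steps every coordinate is constant, so $\phi^n$ is a constant map $e$; this gives $\phi \to e$, $\Fix(\phi) = \{e\}$, and $\Pyr(\phi) = \bigcap_i \phi^i(X) = \{e\}$, i.e. $\phi \StronglyConverges e$.

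For the second proof, I would exhibit a metric on $X$ making $\phi$ contractive and then invoke Corollary \ref{corollary:complete_bounded_Banach}. Using the topological order above, assign to each vertex $v_i$ a weight $w(v_i) = 2^{-i}$ (or more robustly, a weight respecting that in-neighbours of $v_i$ have strictly larger weight — the standard choice is to rank vertices by longest path length in the DAG). Define $d(x,y) = \max\{ w(v) : x_v \ne y_v\}$ (with $d(x,x) = 0$); this is an ultrametric, and since $V$ is finite the space $(X,d)$ is both complete and bounded. The dependency of $\phi$ on the acyclic $G^\phi$ forces $d(\phi(x), \phi(y)) \le c\, d(x,y)$ for a suitable $c < 1$: if $v$ is the heaviest coordinate where $\phi(x)$ and $\phi(y)$ differ, then $\phi_v$ depends essentially on some in-neighbour $u$ where $x_u \ne y_u$, and $w(u) > w(v)$, giving contraction with constant $c = \max_v w(v)/\min_{u \to v} w(u) < 1$. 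Corollary \ref{corollary:complete_bounded_Banach} then yields exactly $\Pyr(\phi) = \Fix(\phi) = \{e\}$ and $\phi \StronglyConverges e$.

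The main obstacle, and the only genuinely non-routine point, is setting up the second proof's metric so that the contraction constant is honestly $< 1$ and so that ``$\phi$ depends on $G^\phi$'' translates correctly into the contraction inequality — in particular one must be careful that $\phi_v$ depending essentially on $u$ really does mean any discrepancy at $v$ in the image is \emph{witnessed} by a discrepancy at some strictly-heavier in-neighbour of $v$ in the source, which is where the implication in the definition of ``depends on $D$'' is used. The first proof is essentially bookkeeping once the topological sort is in hand; the subtlety there is merely to note that $\phi_{v_1}$, having no essential dependencies, is constant, which is where finiteness of $V$ (and hence of the product structure) is silently used. Everything else — the three inclusions $\Fix \subseteq \Par \subseteq \Pyr$, the formula $\Pyr(\phi) = \bigcap_i \phi^i(X)$, and the contraction principles themselves — is already available from the preliminaries.
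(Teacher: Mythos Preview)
Your proposal is correct, and your first proof is essentially the paper's first proof: both topologically order $V$ and argue that the first $i$ coordinates are determined after $i$ iterations. The paper phrases this via the nested sets $S_i = \{1,\dots,i\}$ and the implication $x_{S_{i-1}} = y_{S_{i-1}} \implies \phi_{S_i}(x) = \phi_{S_i}(y)$, but this is exactly your stabilisation argument recast.

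Your second proof, however, takes a genuinely different route from the paper's. You use the classical Robert ultrametric $d(x,y) = \max\{\,2^{-i} : x_{v_i} \ne y_{v_i}\,\}$ on individual vertices, whereas the paper uses the sum-metric $d(x,y) = \sum_{i \in \Delta(x,y)} 2^{-i}$ where $\Delta(x,y) = \{\,i : x_{S_i} \ne y_{S_i}\,\}$ records disagreement on the \emph{initial segments} $S_i$. Both yield a $\tfrac{1}{2}$-contraction on a complete bounded space, so both finish via Corollary~\ref{corollary:complete_bounded_Banach}. The payoff of the paper's choice is that it is precisely the metric that generalises: the sets $S_i$ become the sequences $A = (a_i)$ of lattice elements in Section~\ref{section:discriminating}, and the metric $d_A$ used throughout the paper is the direct abstraction of this sum over initial segments. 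Your ultrametric is cleaner here but does not point toward that generalisation. (A minor quibble: your expression for the contraction constant is garbled; what you want is $c = \max_{(u,v) \in E} w(v)/w(u) \le \tfrac{1}{2}$, and the argument you give actually establishes this.)
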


\begin{proof}
Let $|V| = n$, and order the vertices of $G^\phi$ so that $(i,j)$ is an arc only if $i < j$. Let $S_0 = \emptyset$ and $S_i = \{1, \dots, i\}$ for $1 \le i \le n$. We then have $G^\phi(S_i) \subseteq S_{i-1}$, thus for any $x,y \in X$,
\[
     x_{S_{i-1}} = y_{S_{i-1}} \implies x_{ G^\phi( S_i ) } = y_{ G^\phi( S_i ) } \implies \phi_{S_i}(x) = \phi_{S_i}(y).
\]

First proof: Since $x_{S_0} = y_{S_0}$ for all $x,y$, we obtain by simple induction $\phi_{S_i}^i(x) = \phi_{S_i}^i(y)$ and hence $\phi^n( x ) = \phi^n( y )$.

Second proof: Alternatively, let $\Delta(x,y) = \{ 0 \le i \le n : x_{S_i} \ne y_{S_i} \}$ and $d( x, y ) = \sum_{i \in \Delta(x,y)} 2^{-i}$. Then $( L, d )$ forms a complete bounded metric space. Since $d( \phi(x), \phi(y) ) \le \frac{1}{2} d( x, y )$, $\phi$ is contractive and hence by the complete bounded Banach contraction principle it strongly converges to a unique fixed point.
\end{proof}

The feedback bound is a result closely related to Robert's theorem. Let $D$ be a finite digraph. A \Define{feedback vertex set} of $D$ is a set $I \subseteq V$ such that $D[V \setminus I]$ is acyclic. Anticipating notation introduced in the next section in more generality, define $\Residual{D}(S) = \{ v \in V: D(v) \subseteq S \}$ for all $S \subseteq V$. Then $I$ is a feedback vertex set of $D$ if and only if ${ \Residual{ D } }^n(I) = V$, where $n = |V|$, see \cite{Gad13}.

\begin{theorem}[Feedback bound for finite automata networks \cite{Rii07, Ara08}] \label{theorem:feedback}
Let $\phi$ be a finite automata network and let $I$ be a feedback vertex set of its interaction graph. Then $|\Fix( \phi )| \le |X_I|$.
\end{theorem}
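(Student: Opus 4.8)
The plan is to reduce the theorem to a single injectivity statement: the restriction map $\rho : \Fix(\phi) \to X_I$ defined by $\rho(x) = x_I$ is injective. Once this is established, $|\Fix(\phi)| \le |X_I|$ follows at once, so the whole proof amounts to showing that a fixed point of $\phi$ is determined by its coordinates on a feedback vertex set.

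To set up the argument, write $D = G^\phi$ for the interaction graph. Since $I$ is a feedback vertex set, $D[V \setminus I]$ is acyclic, so there is a topological order $V \setminus I = \{ j_1, \dots, j_m \}$ with the property that $(j_a, j_b) \in E$ implies $a < b$. The consequence I will use is that, for every $b$, the in-neighbourhood of $j_b$ in $D$ is contained in $I \cup \{ j_1, \dots, j_{b-1} \}$: an in-neighbour of $j_b$ either lies in $I$, or lies in $V \setminus I$ and therefore precedes $j_b$ in the chosen order.

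The core is then an induction on $b$ showing that any two fixed points $x, y$ with $x_I = y_I$ satisfy $x_{j_b} = y_{j_b}$. Since $x$ and $y$ are fixed points, $x_{j_b} = \phi_{j_b}(x)$ and $y_{j_b} = \phi_{j_b}(y)$. By the induction hypothesis together with $x_I = y_I$, the configurations $x$ and $y$ coincide on $I \cup \{ j_1, \dots, j_{b-1} \}$, which contains $D(j_b)$; and because $\phi$ depends on its interaction graph $D$ (Section \ref{subsection:robert}), agreement of $x$ and $y$ on $D(j_b)$ forces $\phi_{j_b}(x) = \phi_{j_b}(y)$, i.e.\ $x_{j_b} = y_{j_b}$. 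When the induction terminates, $x$ and $y$ agree on all of $V \setminus I$, and since they agree on $I$ by hypothesis, $x = y$. Hence $\rho$ is injective, and $|\Fix(\phi)| \le |X_I|$.

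I expect no serious obstacle here; the only thing to be careful about is the bookkeeping that converts ``$I$ is a feedback vertex set'' into the containment $D(j_b) \subseteq I \cup \{ j_1, \dots, j_{b-1}\}$ and that applies the dependency property to the single coordinate $j_b$ rather than to $\phi$ as a whole. As an alternative formulation mirroring the crystallisation proof of Theorem \ref{theorem:robert}, one can avoid fixing a topological order by setting $S_0 = I$ and $S_k = I \cup \Residual{D}(S_{k-1})$: the same one-step argument shows that two fixed points agreeing on $S_{k-1}$ agree on $S_k$, and since $(S_k)$ is a nondecreasing sequence of subsets containing $I$ that reaches $V$ once $k = |V|$ (precisely because $I$ is a feedback vertex set), we again conclude that fixed points are determined by their restriction to $I$.
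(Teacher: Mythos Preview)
Your proof is correct and follows essentially the same approach as the paper: both show that the restriction map $\Fix(\phi)\to X_I$, $x\mapsto x_I$, is injective via a crystallisation argument. The paper's version is exactly your alternative formulation, iterating $\Residual{G}$ directly from $I$ (without the $I\cup{}$, relying instead on the characterisation ${\Residual{D}}^n(I)=V$ stated just before the theorem); your main argument via a topological order of $V\setminus I$ is an equivalent unrolling of the same induction.
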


\begin{proof}
Let $G = G^\phi$ be the interaction graph of $\phi$ and suppose $a, b \in \Fix(\phi)$ satisfy $a_I = b_I$. Then it is easily shown by induction that $a_{ { \Residual{ G } }^i(I) } = b_{ { \Residual{ G } }^i(I) }$ for all $0 \le i \le n$, thus $a = b$. Therefore, the mapping $\Fix(\phi) \to X_I$ which maps $a$ to $a_I$ is injective and $|\Fix( \phi )| \le |X_I|$.
\end{proof}

\subsection{Complete lattices} \label{section:lattices}

For a review of lattices, the reader is directed to the authoritative monograph by Gr\"azer \cite{Gra13}. A \Define{complete lattice} is a lattice $\Lattice{L} = (L, \join, \meet, \zero, \one)$ where every subset $S \subseteq L$ has a well-defined meet and join: $\bigjoin S = \bigjoin_{x \in S} x \in L$ and $\bigmeet S = \bigmeet_{x \in S} x \in L$. By convention, $\zero = \bigjoin \emptyset$ while $\one = \bigmeet \emptyset$.

The \Define{opposite} of a complete lattice $\Lattice{L} = ( L, \join, \meet, \zero, \one )$ is obtained by reversing the order: $\Opposite{ \Lattice{L} } = ( L, \meet, \join, \one, \zero )$. When referring to properties for the opposite lattice, we will either add the prefix ``co-'' or use the prefixes ``meet-'' and ``join-'' in order to clarify between $\Lattice{L}$ and $\Opposite{ \Lattice{L} }$.

We consider the following special kinds of complete lattices; the containment relations amongst them are represented in Figure \ref{figure:lattices}.
\begin{enumerate}
    \item \label{item:distributive}
    A complete lattice is \Define{distributive} if for all $x,y,z \in L$, $x \meet (y \join z) = (x \meet y) \join (x \meet z)$.

    \item \label{item:frame}
    A complete lattice $\Lattice{L}$ is a \Define{frame} \cite{Ran17}, also known as a complete Heyting algebra, if for all $x \in L$ and $Y \subseteq L$, $x \meet \bigjoin Y = \bigjoin (x \meet Y)$.

    \item \label{item:locale}
    The opposite of a frame is a \Define{locale}, also known as a complete co-Heyting algebra. 

    \item \label{item:Boolean_algebra}
    A lattice is complemented if any $x \in L$ has a complement $\neg x \in L$ such that $x \join \neg x = \one$ and $x \meet \neg x = \zero$. A \Define{complete Boolean algebra} is a complemented, complete distributive lattice. The complement is necessarily unique in a Boolean algebra. A thorough introduction to Boolean algebras is given in \cite{GH09}.

    \item \label{item:power_set_lattice}
    For any set $V$, its power set is denoted as $P(V)$ and the \Define{power set lattice} of $V$ is $\Lattice{\Powerset}(V) = ( L = \Powerset(V), \join = \cup, \meet = \cap, \zero = \emptyset, \one = V)$. The power set lattice is a complete Boolean algebra, where $\neg x = V \setminus x$ for all $x \subseteq V$.

    \item \label{item:chain}
    A \Define{complete chain} is a complete lattice where all elements are comparable.

    \item \label{item:trivial_lattice}
    Let us call a complete lattice with $|L| \le 2$ \Define{trivial}.
\end{enumerate}
    
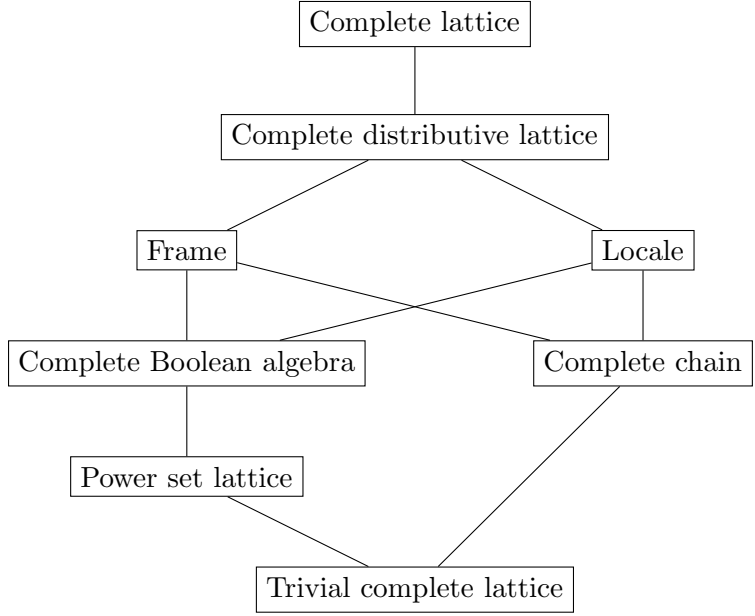
\begin{figure}
    \centering

\begin{tikzpicture}[xscale=3, yscale=1.5, every node/.style=draw]

    \node (lattice) at (1,5) {Complete lattice};
    \node (distributive) at (1,4) {Complete distributive lattice};
    \node (frame) at (0,3) {Frame};
    \node (locale) at (2,3) {Locale};
    \node (boolean) at (0,2) {Complete Boolean algebra};
    \node (chain) at (2,2) {Complete chain};
    \node (powerset) at (0,1) {Power set lattice};
    \node (trivial) at (1,0) {Trivial complete lattice};

    \draw (lattice) -- (distributive);
    \draw (distributive) -- (frame);
    \draw (distributive) -- (locale);
    \draw (frame) -- (boolean);
    \draw (frame) -- (chain);
    \draw (locale) -- (boolean);
    \draw (locale) -- (chain);
    \draw (boolean) -- (powerset);
    \draw (powerset) -- (trivial);
    \draw (chain) -- (trivial);
\end{tikzpicture}
   
    \caption{Classes of complete lattices used in this paper.}
    \label{figure:lattices}
\end{figure}

Say a complete lattice is ascending chain condition (\Define{ACC}) if all ascending chains are finite: if $(a_i : i \in \N)$ satisfy $a_{i+1} \ge a_i$ for all $i$, then there exists $n$ such that $a_m = a_n$ for all $m \ge n$. The descending chain condition (\Define{DCC}) is defined similarly: if $(b_i : i \in \N)$ satisfy $b_{i+1} \le b_i$ for all $i$, then there exists $n$ such that $b_m = b_n$ for all $m \ge n$.

For any $S \subseteq L$, we denote $S^\downarrow = \{ y \in L : \exists s \in S : y \le s \}$ and $S^\uparrow = \{ y \in L : \exists s \in S : y \ge s \}$. We denote $x^\uparrow = \{x\}^\uparrow$ and $x^\downarrow = \{x\}^\downarrow$ for all $x \in L$; those are the principal filters and ideals of the lattice, respectively.

\section{Graphs on lattices} \label{section:graphs}

In this section, we generalise the concept of graphs to all lattices. We justify our generalisation by proving multiple generalisations of classical results on digraphs in our setting.

\subsection{Graphs} \label{subsection:graphs}

The seminal observation is that in-neighbourhood functions of digraphs are exactly the union-preserving mappings:

\begin{observation} \label{observation:digraphs}
Let $f : \Powerset(V) \to \Powerset(V)$. Then $f$ is the in-neighbourhood function of some digraph on $V$ if and only if $f(\bigcup_{x \in S} x) = \bigcup_{x \in S} f(x)$ for all $S \subseteq \Powerset(V)$.
\end{observation}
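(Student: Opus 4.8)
The plan is to prove both implications directly from the definition of the in-neighbourhood function. For the ``only if'' direction, I would take a digraph $D = (V,E)$ with in-neighbourhood function $f = D(\cdot)$ and unfold the definitions: for any $S \subseteq \Powerset(V)$,
\[
    f\Bigl( \bigcup_{x \in S} x \Bigr) = \Bigl\{ u : \exists v \in \textstyle\bigcup_{x \in S} x,\ (u,v) \in E \Bigr\} = \bigcup_{x \in S} \{ u : \exists v \in x,\ (u,v) \in E \} = \bigcup_{x \in S} f(x),
\]
where the middle equality merely swaps the order of the two nested existential quantifiers, over $x \in S$ and over $v \in x$. This is a routine check with no obstacle.

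For the ``if'' direction, the key observation is that a union-preserving $f$ is completely determined by its values on singletons. Indeed, applying the hypothesis to $S = \emptyset$ forces $f(\emptyset) = \emptyset$, and applying it to $S = \{ \{i\} : i \in I \}$ for an arbitrary $I \subseteq V$ gives $f(I) = \bigcup_{i \in I} f(\{i\})$, since $I = \bigcup_{i \in I} \{i\}$. I would then define the digraph $D = (V, E)$ by declaring $(u,v) \in E$ if and only if $u \in f(\{v\})$. By construction its in-neighbourhood function satisfies $D(\{v\}) = f(\{v\})$ for every $v \in V$, and since $D(\cdot)$ is union-preserving by the first part, $D(I) = \bigcup_{i \in I} D(\{i\}) = \bigcup_{i \in I} f(\{i\}) = f(I)$ for every $I \subseteq V$; hence $f = D(\cdot)$.

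There is essentially no hard step here: the entire content is that $\Powerset(V)$ is the free join-semilattice-with-bottom generated by $V$, so a join-preserving map out of it corresponds bijectively to an arbitrary assignment $v \mapsto f(\{v\}) \subseteq V$, which is precisely the data of a digraph on $V$. If anything needs care, it is reading ``preserves arbitrary unions'' as including the empty union, so that $f(\emptyset) = \emptyset$ is a consequence rather than a separate assumption.
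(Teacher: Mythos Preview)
Your proof is correct and follows essentially the same approach as the paper's: both directions are handled identically, defining $D$ via $(u,v)\in E \iff u\in f(\{v\})$ and using union-preservation (including the empty union) to extend from singletons to all subsets. Your write-up is a bit more explicit in the forward direction and adds a nice conceptual remark about $\Powerset(V)$ as a free join-semilattice, but the argument is the same.
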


\begin{proof}
It is clear that for any $D$ and any $S \subseteq \Powerset(V)$, $D( \bigcup_{x \in S} x ) = \bigcup_{x \in S} D(x)$. 

Conversely, suppose $f(\bigcup_{x \in S} x) = \bigcup_{x \in S} f(x)$ for all $S \subseteq \Powerset(V)$. First, note that $f( \emptyset )$ is the empty union, i.e. $f( \emptyset ) = \emptyset$. Let $D = (V,E)$ with $E = \{ (u,v) : u,v \in V, u \in f(v) \}$. We prove that $D(x) = f(x)$ for any $x \in \Powerset(V)$. For $x = \emptyset$, $D( \emptyset ) = \emptyset = f( \emptyset )$. For $x = \{v\}$, we have $D(v) = \{ u : u \in f(v) \} = f(v)$. For any other $x \in \Powerset(V)$, we have $D(x) = \bigcup_{v \in x} D(v) = \bigcup_{v \in x} f(v) = f( \bigcup_{v \in x} v) = f(x)$.
\end{proof}

Let $\Lattice{L} = (L, \join, \meet, \zero, \one)$ be a complete lattice. A \Define{graph} on $\Lattice{L}$ is a mapping $f: L \to L$ that preserves arbitrary joins, i.e. for all $S \subseteq L$,
\[
    f \left( \bigjoin_{x \in S} x \right) = \bigjoin_{x \in S} f(x).
\]
We shall reserve the term ``digraph'' for a graph on $\Lattice{L} = \Lattice{\Powerset}(V)$, and make remarks about digraphs in the subsequent observations throughout the paper. 

We shall use the following shorthand notation: $f(S) = \{ f(x) : x \in S \}$. As such, a graph is any mapping $f$ for which
\[
    f \left( \bigjoin S \right) = \bigjoin f(S)
\]
for all $S \subseteq L$. We note that, by taking $S = \emptyset$, we obtain $f(\zero) = \zero$ for any graph. Here are a few examples of graphs:
\begin{enumerate}
    \item \label{item:empty_graph}
    $f(x) = \zero$ for all $x \in L$ is the empty graph;
    
    \item \label{item:complete_graph}
    $f(x) = \one$ for all $x \ne \zero$ is the complete graph;
    
    \item \label{item:identity}
    $f = \id$ is the identity graph; 
    
    \item \label{item:conjunction}
    For any $a \in L$, let $g_a: L \to L$ be defined by $g_a(x) = x \meet a$ for all $x \in L$. Then $\Lattice{L}$ is a frame if and only if $g_a$ is a graph for all $a \in L$. If so, then for any graph $f$ on $\Lattice{L}$, the subgraph of $f$ \Define{induced} by $a$ is the graph $f[a]$ on $a^\downarrow$ defined by $f[a](x) = g_a f(x) = f(x) \meet a$.
\end{enumerate}

The family of graphs on $\Lattice{L}$, denoted by $\Graphs$, is closed under taking arbitrary joins and hence forms a complete lattice \cite[Chapter I, Section 3, Lemma 14]{Gra13}. Indeed, if $F \subseteq \Graphs$, then $h = \bigjoin F \in \Graphs$, since
\[
    h \left( \bigjoin_{x \in S} x \right) = \bigjoin_{f \in F} f\left( \bigjoin_{x \in S} x \right) = \bigjoin_{f \in F} \bigjoin_{x \in S} f(x) = \bigjoin_{x \in S} \bigjoin_{f \in F} f(x) = \bigjoin_{x \in S} h(x).
\]
The family of graphs $\Graphs$ is also closed under composition. We shall omit the composition symbol, and write $fg$ for the mapping where $fg(x) = f(g(x))$ for all $x \in L$. 

A graph on the opposite lattice $\Opposite{ \Lattice{L} }$ is a \Define{co-graph}, i.e. a mapping $g : L \to L$ such that $g \left( \bigmeet S \right) = \bigmeet g(S)$ for all $S \subseteq L$.

A mapping $f : L \to L$ is \Define{monotone} if $x \le y \implies f(x) \le f(y)$ for all $x,y \in L$, or equivalently, 
\[
    f \left( \bigjoin S \right) \ge \bigjoin f( S )
\]
for all $S \subseteq L$. Clearly, graphs are monotone. 

For any complete lattice $\Lattice{L}$ with at least two elements, it is easy to construct a monotone mapping that is not a graph: $f(x) = \one$ for all $x \in L$ is not a graph since $f(\zero) \ne \zero$. Instead, we characterise below the complete lattices where all monotone mappings fixing $\zero$ are graphs.

\begin{proposition} \label{proposition:classification_L}
All monotone mappings on $\Lattice{L}$ fixing $\zero$ are graphs if and only if $\Lattice{L}$ is an ACC chain.
\end{proposition}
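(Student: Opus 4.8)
The plan is to prove both directions of the equivalence. For the easy direction, I would assume $\Lattice{L}$ is an ACC chain and show every monotone $f$ with $f(\zero) = \zero$ preserves arbitrary joins. Given $S \subseteq L$, since $\Lattice{L}$ is a chain the set $S$ is totally ordered; I would split into cases. If $S = \emptyset$ or $S = \{\zero\}$, then $\bigjoin S = \zero$ and $f(\bigjoin S) = \zero = \bigjoin f(S)$ by the hypothesis on $f$. Otherwise $\bigjoin S$ is a join of a nonempty chain; by ACC this join is actually \emph{attained}, i.e. there exists $s^\ast \in S$ with $s^\ast = \bigjoin S$ (take a maximal element of $S$, which exists by ACC). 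Then monotonicity gives $f(s) \le f(s^\ast)$ for all $s \in S$, so $\bigjoin f(S) = f(s^\ast) = f(\bigjoin S)$. This handles the ``if'' direction.

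For the ``only if'' direction I would prove the contrapositive: if $\Lattice{L}$ is not an ACC chain, construct a monotone mapping fixing $\zero$ that is not a graph. There are two ways to fail: $\Lattice{L}$ is not a chain, or $\Lattice{L}$ is a chain but fails ACC. In the first case, pick incomparable $a, b \in L$. The natural candidate is something like $f(x) = \one$ if $x \geq a$ and $f(x) = \zero$ otherwise; this is monotone and fixes $\zero$. Taking $S = \{a, b\}$, one needs $f(a \join b) \ne f(a) \join f(b)$: indeed $a \join b \ge a$ so $f(a \join b) = \one$, while $f(a) = \one$ — hmm, that gives equality. I would instead use $f(x) = \one$ if $x \ge a$, else $\zero$, and test against a set whose join crosses the threshold from below: since $a$ is join-irreducible-free in general this is delicate, so a cleaner choice is $f = g$ where $g(x) = \zero$ for $x \not\geq a$ and $g(x)=\one$ for $x \geq a$, tested on $S = \{b : b < a, b \not\geq \text{nothing}\}$... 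Actually the robust construction when $\Lattice{L}$ is not a chain: pick incomparable $a,b$, let $c = a \meet b$, and define $f(x) = b$ if $x \le a$ and $f(x) = \one$ otherwise — check monotonicity and that $f(\zero) = \zero$ fails unless $b = \zero$. The genuinely safe route is: not being a chain, there exist incomparable $a,b$; set $f(x) = \one$ if $x \vee a \ne a$ ... I will settle on: $f(x) = \one$ for $x \ne \zero$ with $x \le a$, and appropriately patched — but the slick classical argument is that non-chain lattices contain a copy of the diamond-ish configuration $\{c, a, b, a\join b\}$ with $c = a \meet b < a, b < a \join b$, on which one defines $f(c) = f(a) = \zero$, $f(b) = f(a \join b) = \one$, extended monotonically (e.g.\ $f(x) = \one$ iff $x \ge b$ and $x$ otherwise-$\zero$ — wait need $f(a) = \zero$ which holds since $a \not\ge b$), then $f(a \join b) = \one$ but $f(a) \join f(c) = \zero$, and $a \join c = a$, so testing $S = \{a, c\}$: $f(a \join c) = f(a) = \zero = f(a)\join f(c)$, no good — I must test $S = \{a, b\}$: $f(a \join b) = \one \ne \zero \join \one = \one$, still equal. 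The correct failure must use that $f$ does not preserve the join of two elements neither of which is below the other \emph{in a way that matters}; so I need $f(a) = f(b) = \zero$ but $f(a \join b) = \one$, which requires $a \join b \ge$ (threshold) while $a, b$ both below it. Choosing threshold $= a \join b$ itself: $f(x) = \one$ iff $x \ge a \join b$, else $\zero$. Then $f$ monotone, $f(\zero) = \zero$, $f(a) = f(b) = \zero$ since $a, b \not\ge a \join b$ (as $a, b$ incomparable, neither equals $a \join b$), but $f(a \join b) = \one \ne \zero$. That works.

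In the second sub-case, $\Lattice{L}$ is a chain but not ACC, so there is a strictly increasing sequence $a_0 < a_1 < a_2 < \cdots$ with supremum $s = \bigjoin_i a_i$; note $s > a_i$ for all $i$ since the chain is strictly increasing (if $s = a_k$ for some $k$, the sequence would stabilise). Define $f(x) = \zero$ if $x < s$ and $f(x) = \one$ if $x \ge s$ (using that $\Lattice{L}$ is a chain, every element is comparable to $s$, so this is well-defined and monotone, and $f(\zero) = \zero$ as $\zero < s$). Then with $S = \{a_i : i \in \N\}$ we have $\bigjoin S = s$, so $f(\bigjoin S) = \one$, whereas $\bigjoin f(S) = \bigjoin_i \zero = \zero$. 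Hence $f$ is not a graph. This completes the contrapositive.

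The main obstacle is getting the non-chain construction right: the naive threshold functions tend to accidentally preserve the tested joins, and one must be careful to test against a set $S$ whose members all lie strictly below the threshold while their join reaches it — choosing the threshold to be $a \join b$ for incomparable $a, b$ is the clean fix, and one should double-check that incomparability really does give $a, b \not\ge a \join b$ (it does: $a \ge a \join b$ would force $a \ge b$). Beyond that, everything is a routine verification of monotonicity and the $\zero \mapsto \zero$ condition, so I would keep those checks brief. A minor point worth stating explicitly is the use of ACC in the forward direction to guarantee that the join of a nonempty subset of a chain is attained as a maximum; this is the crux of why ACC (rather than merely ``chain'') is needed.
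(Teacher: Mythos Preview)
Your proof is correct and follows essentially the same approach as the paper's: the forward direction uses that in an ACC chain every nonempty subset attains its supremum as a maximum, and the converse splits into the same two sub-cases with threshold-type counterexamples. The only difference is cosmetic: for the non-chain case the paper uses $f(x) = x$ if $x \ge b$ and $\zero$ otherwise (testing on $\{a,b\}$ gives $f(a)\join f(b) = b < a\join b = f(a\join b)$), whereas you use the $\{\zero,\one\}$-valued indicator of $x \ge a\join b$ --- both work for the same reason, and your meandering before settling on the right threshold could be trimmed in a final write-up.
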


\begin{proof}
Let $\Lattice{L}$ be an ACC chain, and let $f$ be monotone and fix $\zero$. For any nonempty $S \subseteq L$ with $s = \bigjoin S$, we have $s \in S$ and $f(t) \le f(s)$ for all $t \in S$, hence $f(\bigjoin S) = f(s) = \bigjoin f(S)$. Conversely, if $\Lattice{L}$ is not a chain, let $a$ and $b$ be non-comparable, and let 
\[
    f(x) = \begin{cases}
    \zero &\text{if } x \not\ge b, \\
    x & \text{if } x \ge b;
    \end{cases}
\]
then $f$ is monotone but $f(a) \join f(b) = b$ while $f(a \join b) = a \join b > b$, hence it is not a graph. Finally, let $\Lattice{L}$ be a complete chain with an infinite ascending chain $S$. Let $s = \bigjoin S$ and 
\[
    f(x) = \begin{cases} 
    \zero &\text{if } x < s \\ 
    \one & \text{if } x \ge s; 
    \end{cases}
\]
then $f$ is monotone but $f(\bigjoin S) = \one$ while $\bigjoin f(S) = \zero$, hence it is not a graph.
\end{proof}

\subsection{Graphs as residuated mappings} \label{subsection:residuated}

Let $\psi: L \to L$. For any $S \subseteq L$, the pre-image of $S$ under $\psi$ is denoted as usual: $\psi^{-1}( S ) = \{ x \in L : \psi(x) \in S \}$. We introduce two mappings $\Residual{\psi}, \Residuated{\psi} : L \to L$ such that for all $a,b \in L$,
\begin{alignat*}{3}
    \Residual{\psi}(a) &:= \bigjoin \{ x: \psi(x) \le a \} &&= \bigjoin \psi^{-1}( a^\downarrow ),\\
    \Residuated{\psi}(b)  &:= \bigmeet\{ y: \psi(y) \ge b \} &&= \bigmeet \psi^{-1}( b^\uparrow ).
\end{alignat*}

\begin{observation} \label{observation:residual_digraphs}
For any digraph $D$, $\Residual{D}(x) = \{ v \in V: D(v) \subseteq x \}$.
\end{observation}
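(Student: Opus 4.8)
The plan is to unwind the definition of $\Residual{D}$ in the power set lattice and to exploit the fact that $D$ preserves arbitrary unions (so, in particular, $D$ is monotone and the supremum involved in the definition is actually attained). Recall that in $\Lattice{\Powerset}(V)$ we have $\bigjoin = \cup$ and that $\le$ is $\subseteq$, so by definition $\Residual{D}(x) = \bigcup \{ y \subseteq V : D(y) \subseteq x \}$. Write $w = \{ v \in V : D(v) \subseteq x \}$ for the claimed value (identifying $D(v)$ with $D(\{v\})$ as in the statement); I will establish the two inclusions $w \subseteq \Residual{D}(x)$ and $\Residual{D}(x) \subseteq w$.

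For the first inclusion, I would observe that $w$ is itself one of the sets appearing in the union defining $\Residual{D}(x)$. Indeed, since $D$ preserves unions, $D(w) = \bigcup_{v \in w} D(\{v\}) \subseteq x$, because every $v \in w$ satisfies $D(v) \subseteq x$ by definition of $w$. Hence $w \in \{ y \subseteq V : D(y) \subseteq x \}$, and therefore $w \subseteq \bigcup \{ y \subseteq V : D(y) \subseteq x \} = \Residual{D}(x)$.

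For the reverse inclusion, take any $y \subseteq V$ with $D(y) \subseteq x$. For each $v \in y$, monotonicity of $D$ gives $D(\{v\}) \subseteq D(y) \subseteq x$, so $v \in w$; thus $y \subseteq w$. Since this holds for every member $y$ of the union, we get $\Residual{D}(x) = \bigcup \{ y \subseteq V : D(y) \subseteq x \} \subseteq w$. Combining the two inclusions yields $\Residual{D}(x) = w$, as desired.

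There is essentially no real obstacle here: the statement is a short computation, and the only substantive point is that the join defining $\Residual{D}$ is attained by the set $w$ itself, which is precisely what union-preservation of $D$ provides. The sole bookkeeping subtlety is the harmless identification of the singleton notation $D(\{v\})$ with the shorthand $D(v)$ used in the statement.
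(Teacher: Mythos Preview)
Your proof is correct. The paper states this observation without proof, so there is nothing to compare against; your two-inclusion argument via union-preservation (to show $D(w)\subseteq x$) and monotonicity (to show every admissible $y$ lies in $w$) is exactly the routine verification one would expect.
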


Let us make some remarks in the general case, whose proofs are obvious and hence omitted. There is a natural partial order on $L \to L$ mappings, whereby $\psi \ge \theta$ if and only if $\psi(x) \ge \theta(x)$ for all $x \in L$, for any $\psi, \theta : L \to L$.

\begin{lemma} \label{lemma:general_remarks_residual}
Let $\psi, \theta: L \to L$. Then the following hold:
\begin{enumerate}
    \item \label{item:residual_monotone}
    $\Residual{\psi}$ and $\Residuated{\psi}$ are monotone;
    
    \item \label{item:residuated_compose}
    $\Residuated{\psi} \psi \le \id \le \Residual{\psi} \psi$;

    \item \label{ite:residual_residuated}
    $\Residual{ \Residuated{ \psi } } \le \psi \le \Residuated{ \Residual{ \psi } }$;
    
    \item \label{item:residual_antitone}
    if $\psi \ge \theta$, then $\Residual{\psi} \le \Residual{\theta}$ and $\Residuated{\psi} \le \Residuated{\theta}$.
\end{enumerate}
\end{lemma}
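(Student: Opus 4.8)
All four parts should follow directly from the defining joins and meets, using only that enlarging the index set of a join can only raise it while enlarging the index set of a meet can only lower it. I would prove (a), then (b) and (d), and finally deduce (c) from (b).

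For (a), monotonicity is read off the preimages. If $a \le a'$ then $a^\downarrow \subseteq (a')^\downarrow$, so $\psi^{-1}(a^\downarrow) \subseteq \psi^{-1}((a')^\downarrow)$ and $\Residual{\psi}(a) = \bigjoin \psi^{-1}(a^\downarrow) \le \bigjoin \psi^{-1}((a')^\downarrow) = \Residual{\psi}(a')$. Dually, if $b \le b'$ then $(b')^\uparrow \subseteq b^\uparrow$, so $\psi^{-1}((b')^\uparrow) \subseteq \psi^{-1}(b^\uparrow)$, and the meet over the smaller set dominates, giving $\Residuated{\psi}(b) \le \Residuated{\psi}(b')$.

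For (b) the single idea is that $x$ is a feasible witness for itself. From $\psi(x) \le \psi(x)$ we get $x \in \psi^{-1}(\psi(x)^\downarrow)$, hence $x \le \Residual{\psi}(\psi(x))$, i.e. $\id \le \Residual{\psi}\psi$; from $\psi(x) \ge \psi(x)$ we get $x \in \psi^{-1}(\psi(x)^\uparrow)$, hence $\Residuated{\psi}(\psi(x)) \le x$, i.e. $\Residuated{\psi}\psi \le \id$. For (d) I would compare index sets: if $\psi \ge \theta$ then $\psi(x) \le a \implies \theta(x) \le a$, so $\{x : \psi(x) \le a\} \subseteq \{x : \theta(x) \le a\}$ and taking joins gives $\Residual{\psi} \le \Residual{\theta}$; symmetrically $\{y : \theta(y) \ge b\} \subseteq \{y : \psi(y) \ge b\}$, and the meet over the larger ($\psi$) set is the smaller, giving $\Residuated{\psi} \le \Residuated{\theta}$.

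For (c) I would feed (b) into the two definitions. Since $\Residual{\psi}(\psi(x)) \ge x$, the element $\psi(x)$ lies in $\{y : \Residual{\psi}(y) \ge x\}$, so $\Residuated{\Residual{\psi}}(x) = \bigmeet\{y : \Residual{\psi}(y) \ge x\} \le \psi(x)$; since $\Residuated{\psi}(\psi(x)) \le x$, the element $\psi(x)$ lies in $\{z : \Residuated{\psi}(z) \le x\}$, so $\psi(x) \le \bigjoin\{z : \Residuated{\psi}(z) \le x\} = \Residual{\Residuated{\psi}}(x)$. The step I expect to be the real obstacle is the orientation of these bounds: this witness argument is forced to produce the chain $\Residuated{\Residual{\psi}} \le \psi \le \Residual{\Residuated{\psi}}$, because membership of $\psi(x)$ in a set can only bound that set's meet from above and its join from below. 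Obtaining the displayed inequalities $\Residual{\Residuated{\psi}} \le \psi \le \Residuated{\Residual{\psi}}$ instead requires the converse implications $x \le \Residual{\psi}(a) \implies \psi(x) \le a$ and $x \ge \Residuated{\psi}(b) \implies \psi(x) \ge b$, which hold only when $\psi$ preserves arbitrary joins, respectively meets, and fail for general $\psi$ — already on the two-element chain the empty graph $\psi \equiv \zero$ has $\Residual{\Residuated{\psi}} = \id \not\le \psi$. For a join-preserving $\psi$ the bound $\psi \le \Residuated{\Residual{\psi}}$ does collapse to the equality $\psi = \Residuated{\Residual{\psi}}$, but the remaining displayed bound needs meet-preservation as well; so I would record (c) in the provable orientation $\Residuated{\Residual{\psi}} \le \psi \le \Residual{\Residuated{\psi}}$ and specialise to the displayed form only under the join/meet-preservation hypotheses under which the lemma is applied.
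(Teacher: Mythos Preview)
Your arguments for (a), (b) and (d) are correct and are exactly the kind of one-line verifications the paper has in mind when it writes that the proofs ``are obvious and hence omitted''; there is nothing to compare against beyond that.

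Your analysis of (c) is also correct, and in fact you have caught an error in the paper's statement. The witness argument you give is the only thing available for a completely arbitrary $\psi$, and it yields
\[
\Residuated{\Residual{\psi}} \le \psi \le \Residual{\Residuated{\psi}},
\]
which is the \emph{reverse} of what is printed. Your counterexample $\psi\equiv\zero$ on the two-element chain already kills the left inequality of the paper's version, since then $\Residuated{\psi}=\id$ and $\Residual{\Residuated{\psi}}=\id\not\le\psi$; dually $\psi\equiv\one$ kills the right inequality, since then $\Residual{\psi}=\id$ and $\Residuated{\Residual{\psi}}=\id\not\ge\psi$. Your remark that the printed direction requires join-preservation (for the right bound, yielding the equality $f=\Residuated{\Residual{f}}$ of Theorem~\ref{theorem:residuated}\ref{item:f=g+}) and meet-preservation (for the left bound) is exactly right, and is consistent with how the paper actually uses these facts later. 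So the proposal is sound; the discrepancy is a misprint in the lemma, not a gap in your reasoning.
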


A mapping $f : L \to L$ is \Define{residuated} if the pre-image of every principal ideal is a principal ideal. It is easy to see that a residuated mapping is monotone. Many characterisations and properties of residuated mappings are given in \cite[Chapter 2]{BJ72} and \cite[Chapter 1]{Bly05}, some with proofs, others without proofs, and others still only implicit. For the sake of completeness and clarity, we combine the key properties that we need in the theorem below, for which we give a full proof.

\begin{theorem}[Properties of residuated mappings on complete lattices] \label{theorem:residuated}
Let $\Lattice{L}$ be a complete lattice and let $f : L \to L$. Then the following are equivalent:
\begin{equivalent}
    \item \label{item:residuated}
    $f$ is residuated;
    
    \item \label{item:graph}
    $f$ is a graph;

    \item \label{item:gf}
    $f \Residual{f} \le \id$.
\end{equivalent}

If $f$ satisfies any property above, then the following hold:
\begin{enumerate}
    \item \label{item:g_co-graph}
    $\Residual{f}$ is a co-graph;

    \item \label{item:f=g+}
    $f = \Residuated{ \Residual{ f } }$;

    \item \label{item:hf}
    for all $x,y \in L$, $f(x) \le y$ if and only if $x \le \Residual{f}(y)$;
    
    \item \label{item:vNregular}
    $f \Residual{f} f = f$ and $\Residual{f} f \Residual{f} = \Residual{f}$.
\end{enumerate}
\end{theorem}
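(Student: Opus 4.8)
The plan is to prove the cycle of implications $\ref{item:residuated} \Rightarrow \ref{item:graph} \Rightarrow \ref{item:gf} \Rightarrow \ref{item:residuated}$ first, then derive the four numbered consequences.

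For $\ref{item:residuated} \Rightarrow \ref{item:graph}$: suppose $f$ is residuated. First note $f$ is monotone (standard). Take $S \subseteq L$ and set $s = \bigjoin S$, $t = \bigjoin f(S)$. Monotonicity gives $f(s) \ge t$. For the reverse, I would use the definition: the pre-image $f^{-1}(t^\downarrow)$ is a principal ideal, say $u^\downarrow$. Each $x \in S$ satisfies $f(x) \le t$, so $x \le u$; hence $s \le u$, so $f(s) \le t$. Thus $f(\bigjoin S) = \bigjoin f(S)$.

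For $\ref{item:graph} \Rightarrow \ref{item:gf}$: assume $f$ is a graph. Unwinding $\Residual{f}(a) = \bigjoin \{x : f(x) \le a\} = \bigjoin f^{-1}(a^\downarrow)$, apply $f$ (which preserves arbitrary joins) to get $f(\Residual{f}(a)) = \bigjoin \{ f(x) : f(x) \le a \} \le a$. So $f\Residual{f} \le \id$.

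For $\ref{item:gf} \Rightarrow \ref{item:residuated}$: assume $f\Residual{f} \le \id$. We already have $\id \le \Residual{f} f$ from Lemma \ref{lemma:general_remarks_residual}\ref{item:residuated_compose} (this half needs nothing). I claim $f^{-1}(a^\downarrow) = \Residual{f}(a)^\downarrow$. If $x \le \Residual{f}(a)$ then, using that $\Residual{f}$ is monotone and then $f\Residual{f} \le \id$, $f(x) \le f(\Residual{f}(a)) \le a$, so $x \in f^{-1}(a^\downarrow)$. Conversely if $f(x) \le a$ then $x \le \Residual{f}(a)$ directly by the definition of $\Residual{f}$ as a join. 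Hence every principal ideal pulls back to a principal ideal, so $f$ is residuated.

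Now the consequences, assuming $f$ satisfies the equivalent conditions. For \ref{item:hf}: the equivalence $f(x) \le y \iff x \le \Residual{f}(y)$ — the forward direction is the definition of $\Residual{f}$ as a join of such $x$; the backward direction is exactly the argument just given ($x \le \Residual{f}(y) \Rightarrow f(x) \le f\Residual{f}(y) \le y$). For \ref{item:g_co-graph}: show $\Residual{f}(\bigmeet T) = \bigmeet \Residual{f}(T)$ for $T \subseteq L$; using \ref{item:hf}, for any $z$, $z \le \Residual{f}(\bigmeet T) \iff f(z) \le \bigmeet T \iff \forall t\in T, f(z)\le t \iff \forall t, z \le \Residual{f}(t) \iff z \le \bigmeet \Residual{f}(T)$, and since this holds for all $z$ the two sides coincide. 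For \ref{item:f=g+}: compute $\Residuated{\Residual{f}}(b) = \bigmeet\{ y : \Residual{f}(y) \ge b\}$; by \ref{item:hf}, $\Residual{f}(y) \ge b \iff b \le \Residual{f}(y) \iff$ hmm, I need $f(b) \le y$, i.e. the condition $\Residual{f}(y) \ge b$ is equivalent to $y \ge f(b)$ (taking $x=b$, $y=y$ in \ref{item:hf}); so the meet is over $y \ge f(b)$, giving $f(b)$. For \ref{item:vNregular}: from $f\Residual{f} \le \id$ compose on the right with $f$ to get $f\Residual{f}f \le f$, and from $\id \le \Residual{f}f$ compose on the left with $f$ (monotone) to get $f \le f\Residual{f}f$; hence equality. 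Dually, $\Residual{f}f\Residual{f}$: from $\id \le \Residual{f}f$ compose right with $\Residual{f}$ to get $\Residual{f} \le \Residual{f}f\Residual{f}$, and from $f\Residual{f}\le\id$ compose left with $\Residual{f}$ (monotone, being a co-graph hence monotone) to get $\Residual{f}f\Residual{f} \le \Residual{f}$; equality follows.

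The main obstacle is bookkeeping rather than depth: one must be careful to invoke monotonicity of $\Residual{f}$ (justified once $\Residual{f}$ is known to be a co-graph, or directly from Lemma \ref{lemma:general_remarks_residual}\ref{item:residual_monotone}) and to order the derivations so that \ref{item:hf} is available before proving \ref{item:g_co-graph} and \ref{item:f=g+}. The only genuinely delicate point is checking, in $\ref{item:gf} \Rightarrow \ref{item:residuated}$, that the pulled-back ideal is \emph{exactly} $\Residual{f}(a)^\downarrow$ and not merely contained in it — but both inclusions are immediate from the definition of $\Residual{f}$ together with $f\Residual{f}\le\id$.
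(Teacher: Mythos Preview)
Your arguments for $\ref{item:residuated}\Rightarrow\ref{item:graph}$ and $\ref{item:graph}\Rightarrow\ref{item:gf}$ are correct, and your derivation of the consequences \ref{item:g_co-graph}--\ref{item:vNregular} is fine (indeed, proving the adjunction \ref{item:hf} first and reading off \ref{item:g_co-graph} and \ref{item:f=g+} from it is a little cleaner than the paper's order, which establishes \ref{item:g_co-graph} by a direct two-inequality argument before turning to \ref{item:hf}).

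The genuine gap is in $\ref{item:gf}\Rightarrow\ref{item:residuated}$. You write ``If $x \le \Residual{f}(a)$ then, using that $\Residual{f}$ is monotone and then $f\Residual{f} \le \id$, $f(x) \le f(\Residual{f}(a)) \le a$.'' The step $f(x)\le f(\Residual{f}(a))$ needs $f$ (not $\Residual{f}$) to be monotone, and nothing in \ref{item:gf} alone gives that. In fact it cannot: on the four-element lattice $\{\zero,a,b,\one\}$ with $a,b$ incomparable, take $f(\zero)=f(\one)=\zero$, $f(a)=b$, $f(b)=a$. Then $\Residual{f}\equiv\one$, so $f\Residual{f}\equiv\zero\le\id$ and \ref{item:gf} holds, yet $f$ is neither monotone, nor residuated, nor a graph (e.g.\ $f(a\vee b)=\zero\ne\one=f(a)\vee f(b)$). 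So your claimed equality $f^{-1}(a^\downarrow)=\Residual{f}(a)^\downarrow$ fails here. The paper's proof shares exactly this defect: its argument for $\ref{item:gf}\Rightarrow\ref{item:graph}$ passes from $\Residual{f}(y)\ge\bigjoin S$ to $f\Residual{f}(y)\ge f(\bigjoin S)$, which again presupposes $f$ monotone, and its later disposal of the non-monotone case in $\ref{item:gf}\iff\ref{item:residuated}$ (``since \ref{item:gf} is equivalent to being a graph'') is then circular. Both your argument and the paper's become correct if monotonicity of $f$ is added to condition \ref{item:gf}.
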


\begin{proof}
$\ref{item:graph} \implies \ref{item:gf}$. If $f$ is a graph, then for any $a \in L$, we have
\[
    f \Residual{f}(a) = f \left( \bigjoin \{ x: f(x) \le a \} \right) = \bigjoin f( \{ x: f(x) \le a \} ) \le a,
\]
thus $f \Residual{f} \le \id$.

$\ref{item:gf} \implies \ref{item:graph}$. Suppose $f \Residual{f} \le \id$ and let $S \subseteq L$. Let $y = \bigjoin f( S )$, then for all $s \in S$,
\[
   \Residual{f} (y) \ge \Residual{f} f(s) \ge s,
\]
thus $\Residual{f}(y) \ge \bigjoin S$ and $y \ge f \Residual{f}(y) \ge f(\bigjoin S)$.

$\ref{item:gf} \iff \ref{item:residuated}$. If $f$ is not monotone, then it satisfies neither Property \ref{item:residuated} (as residuated mappings are monotone) nor Property \ref{item:gf} (as it is equivalent to being a graph). Let $f$ be a monotone mapping, so that $f^{-1}( x^\downarrow )$ is an ideal for all $x \in L$. For all $a \in L$, we have the following equivalences:
\[
    f \Residual{f}(a) \le a \iff f\left( \bigjoin f^{-1}( a^\downarrow ) \right) \le a \iff \bigjoin f^{-1}( a^\downarrow ) \in f^{-1}( a^\downarrow ) \iff f^{-1}( a^\downarrow ) \text{ is a principal ideal}.
\]

We now prove that, if $f$ is a graph, then $f$ satisfies Properties \ref{item:g_co-graph} to \ref{item:vNregular}.

\begin{enumerate}
    \item 
Let $S \subseteq L$. On the one hand, 
$\Residual{f}(\bigmeet S) \le \bigmeet \Residual{f}(S)$ by monotonicity. On the other hand, we have
\[
    f\left( \bigmeet \Residual{f}(S) \right) \le \bigmeet f \Residual{f}(S)  \le \bigmeet S,
\]
thus
\[
    \bigmeet \Residual{f}(S) \le \Residual{f}  f \left( \bigmeet \Residual{f}(S) \right)  \le \Residual{f} \left( \bigmeet S \right).
\]

\item 
Applying Property \ref{item:gf} to $\Residual{f}$, a graph on the opposite lattice $\Opposite{ \Lattice{L} }$, yields $\Residuated{ \Residual{f} } \Residual{ f } \le \id$ and $\Residual{ f } \Residuated{ \Residual{f} } \ge \id$. Combining with Property \ref{item:gf} for $f$, we obtain
\[
    f \le f \Residual{ f } \Residuated{ \Residual{ f } } \le \Residuated{ \Residual{ f } } \le \Residuated{ \Residual{ f } } \Residual{ f } f \le f.
\]

\item 
Let $x, y \in L$. Thanks to the monotonicity of $f$ and $\Residual{f}$ and to Property \ref{item:gf}, we have the following implications:
\[
    f(x) \le y \implies \Residual{ f } f(x) \le \Residual{ f }(y) 
    \implies x \le \Residual{ f }(y) 
    \implies f(x) \le f \Residual{ f }(y)
    \implies f(x) \le y.
\]

\item 
Follows immediately from Property \ref{item:gf} and Lemma \ref{lemma:general_remarks_residual}\ref{item:residuated_compose}.
\end{enumerate}

\end{proof}

\subsection{Reflexive and transitive graphs} \label{subsection:RT}

We now introduce three graphs on the lattice of graphs $\Graphs$, namely the reflexive, transitive, and reflexive transitive closures of graphs. 
\begin{enumerate}

    \item \label{item:reflexive}
    A graph $f$ is \Define{reflexive} if $f \ge \id$. 
    We denote the \Define{reflexive closure} of $f$ as $\Reflexive{f} = \id \join f$. The following are equivalent for a graph $g$: $g$ is reflexive; $g = \Reflexive{g}$; $g = \Reflexive{f}$ for some graph $f$.
    
    \item \label{item:transitive}
    A graph $f$ is \Define{transitive} if $f \ge f^2$. The \Define{transitive closure} of $f$ is $\Transitive{f} = \bigjoin_{i \in \N_1} f^i$. The following are equivalent for a graph $g$: $g$ is transitive; $g = \Transitive{g}$; $g = \Transitive{f}$ for some graph $f$.

    \item \label{item:reflexive_transitive}
    A graph $f$ is \Define{reflexive transitive} if $f^2 = f \ge \id$. The \Define{reflexive transitive closure} of $f$ is then $\ReflexiveTransitive{f} = \Reflexive{ ( \Transitive{f} ) } = \Transitive{ ( \Reflexive{f} ) } = \bigjoin_{i \in \N} f^i$. Therefore, the following are equivalent for a graph $g$: $g$ is reflexive transitive; $g = \ReflexiveTransitive{g}$; $g = \ReflexiveTransitive{f}$ for some graph $f$.
\end{enumerate}

Reflexive and transitive digraphs are easily characterised; see below (a \Define{loop} on the vertex $v$ is the edge $(v,v)$). The proof is obvious and hence omitted.

\begin{observation} \label{observation:reflexive_digraphs}
A digraph is reflexive if and only if there is a loop on every vertex. A digraph is transitive if and only for all $u,v,w \in V$, $(u,v), (v,w) \in E$ implies $(u,w) \in E$.
\end{observation}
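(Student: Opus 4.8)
The plan is to unwind both biconditionals to the level of individual arcs, exploiting the fact that a graph — and hence a digraph, by Observation \ref{observation:digraphs} — is determined by its action on singletons.

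For the reflexivity statement, recall that $D$ reflexive means $D \ge \id$, i.e.\ $x \subseteq D(x)$ for every $x \subseteq V$. I would first note that since $D(x) = \bigcup_{v \in x} D(v)$, the inclusion $x \subseteq D(x)$ holds for all $x$ if and only if it holds for all singletons, i.e.\ iff $v \in D(v)$ for every vertex $v$ (the backward implication rebuilds the general case by taking unions over $v \in x$). Then $v \in D(v)$ unwinds, directly from the definition of the in-neighbourhood function, to the assertion $(v,v) \in E$, which is exactly a loop on $v$. So the forward direction of the claim is: take $x = \{v\}$ and read off the loop; the backward direction is: given loops everywhere, conclude $v \in D(v)$ and hence $D \ge \id$ by the singleton reduction.

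For the transitivity statement, $D$ transitive means $D \ge D^2$, i.e.\ $D(D(x)) \subseteq D(x)$ for all $x$. For the ``if'' direction I would assume the arc-composition property and pick $u \in D(D(x))$; chasing the witnessing arcs — $v \in D(x)$ with $(u,v) \in E$, then $w \in x$ with $(v,w) \in E$ — the hypothesis gives $(u,w) \in E$ with $w \in x$, so $u \in D(x)$, proving $D^2(x) \subseteq D(x)$. For the ``only if'' direction, given arcs $(u,v)$ and $(v,w)$, I would instantiate $x = \{w\}$: then $v \in D(w)$ because $(v,w) \in E$, hence $u \in D^2(\{w\}) \subseteq D(\{w\}) = D(w)$ by transitivity, which says precisely $(u,w) \in E$.

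The only point requiring a moment's care — and it is a mild one, which is why the paper calls the proof obvious — is the encoding: the statement is phrased in terms of edges, whereas the notions ``reflexive graph'' and ``transitive graph'' from Section \ref{subsection:RT} are defined via the in-neighbourhood function $D$ and the order on $L \to L$ mappings. Thus the substantive content is just the two dictionary entries ``loop on $v$ $\iff$ $v \in D(v)$'' and ``arcs $(u,v),(v,w)$ present $\iff$ $u \in D^2(\{w\})$ and $v \in D(\{w\})$'', together with the observation that, because $D$ (and $D^2$) preserve joins, it suffices to test the relevant inequalities on singletons.
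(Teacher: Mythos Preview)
Your proof is correct and is precisely the straightforward unwinding of definitions that the paper deems obvious enough to omit entirely. There is nothing to compare: the paper gives no proof, and your argument is the canonical one.
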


Our next aim is to relate the sets of fixed points of $f$ with those of its different closures. Firstly, we say a graph $g$ is \Define{non-degenerate} if its transitive closure is reflexive, i.e. $\Transitive{g}(x) \ge x$ for all $x$.

\begin{observation} \label{observation:non-degenerate_digraphs}
A digraph is non-degenerate if and only if every vertex belongs to a cycle.
\end{observation}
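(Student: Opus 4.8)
The plan is to reduce the condition ``$\Transitive{D}$ is reflexive'' to a statement about singletons, and then to translate that statement into the combinatorial language of walks and cycles.

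First I would observe that $\Transitive{D} = \bigjoin_{i \in \N_1} D^i$ is a join of graphs, hence itself a graph on $\Lattice{\Powerset}(V)$, and in particular preserves unions. Consequently the inequality $\Transitive{D}(x) \ge x$ holds for every $x \subseteq V$ if and only if it holds for every singleton: specialising to $x = \{v\}$ gives one direction, and for the converse one writes $x = \bigcup_{v \in x}\{v\}$ and uses $\Transitive{D}(x) = \bigcup_{v \in x}\Transitive{D}(\{v\}) \supseteq \bigcup_{v \in x}\{v\} = x$. So $D$ is non-degenerate if and only if $v \in \Transitive{D}(\{v\})$ for every vertex $v$.

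Next I would unfold the definitions. Since $D(I)$ is the in-neighbourhood of $I$, an easy induction on $i \ge 1$ shows that $u \in D^i(\{v\})$ exactly when $D$ contains a walk of length $i$ from $u$ to $v$; using $\Transitive{D}(\{v\}) = \bigcup_{i \in \N_1} D^i(\{v\})$, the condition $v \in \Transitive{D}(\{v\})$ becomes: there is a closed walk of positive length through $v$. It then remains to check that $v$ lies on a closed walk of positive length if and only if $v$ lies on a cycle. The ``if'' direction is immediate. For the converse, I would take a shortest closed walk $v = w_0, w_1, \dots, w_k = v$ with $k \ge 1$; by minimality no $w_i$ with $0 < i < k$ equals $v$ and no two of $w_0, \dots, w_{k-1}$ coincide (a repetition would let one excise a segment and obtain a strictly shorter closed walk through $v$), so this walk is in fact a cycle through $v$ (a loop when $k = 1$).

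I expect the only genuine obstacle to be the last step, the extraction of a cycle through a prescribed vertex from a closed walk through it; this is a standard argument, and the rest is bookkeeping about joins of graphs and the walk interpretation of $D^i$.
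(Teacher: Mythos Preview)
Your proof is correct and follows essentially the same route as the paper's: both reduce non-degeneracy to the singleton condition $v \in \Transitive{D}(\{v\})$ for all $v$, interpret this via walks as the existence of a closed walk through $v$, and identify this with $v$ lying on a cycle. You are in fact more careful than the paper, which writes ``hence there exists a cycle $(v=v_0,\dots,v_{i-1})$'' without explicitly extracting a cycle from the closed walk, whereas you supply the shortest-closed-walk argument.
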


\begin{proof}
Suppose $D$ is non-degenerate. We have $v \in \Transitive{D}(v)$ if and only if there exists $i$ such that $v \in D^i( v )$, and hence there exists a cycle $(v = v_0, v_1, \ldots, v_{i-1})$.

Conversely, suppose every vertex of $D$ belongs to a cycle, that is for all $v \in V$ there exists $i_v$ such that $v \in D^{i_v}( V )$. Let $X \subseteq V$, then $X \subseteq \bigcup_{v \in X} D^{i_v}(v) \subseteq \Transitive{ D }(X)$.
\end{proof}

\begin{proposition} \label{proposition:fixed_points}
Let $f$ be a graph. We have
\begin{equation} \label{equation:Fix}
    \Fix( \Transitive{f} ) = \Fix( f ) \subseteq \Fix( \Reflexive{ f } ) = \Fix( \ReflexiveTransitive{ f } ),    
\end{equation}
where $\Fix( f ) = \Fix( \Reflexive{ f } )$ if and only if $f$ is non-degenerate.
\end{proposition}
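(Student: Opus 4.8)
The plan is to prove Proposition~\ref{proposition:fixed_points} by establishing the chain of (in)equalities piece by piece, using only the definitions of $\Transitive{f}$, $\Reflexive{f}$, $\ReflexiveTransitive{f}$ as joins of powers of $f$, together with monotonicity of graphs.

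First I would show $\Fix(\Transitive{f}) = \Fix(f)$. The inclusion $\Fix(\Transitive{f}) \subseteq \Fix(f)$ needs an argument: if $\Transitive{f}(x) = x$, I want $f(x) = x$. Since $f$ is a graph, $f(x) = f(\bigjoin_{i \ge 1} f^{i-1}(x))$\ldots actually the cleaner route is to note $f(x) \le \Transitive{f}(x) = x$, and conversely $x = \Transitive{f}(x) = \bigjoin_{i \ge 1} f^i(x)$, and applying $f$ (a graph, hence join-preserving) gives $f(x) = \bigjoin_{i\ge 1} f^{i+1}(x) \le \bigjoin_{i \ge 1} f^i(x) = x$; hmm that gives $f(x)\le x$ again. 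Instead: from $x = \bigjoin_{i \ge 1} f^i(x)$ I get $x \ge f(x)$, and I need the reverse. Note $\Transitive{f} = f \join f\Transitive{f}$ since $\Transitive{f} = \bigjoin_{i\ge1}f^i = f \join f(\bigjoin_{i \ge 1} f^i) = f \join f\Transitive{f}$. Evaluating at $x$: $x = \Transitive{f}(x) = f(x) \join f\Transitive{f}(x) = f(x) \join f(x)$, wait $f\Transitive{f}(x) = f(\Transitive{f}(x)) = f(x)$. So $x = f(x) \join f(x) = f(x)$. That settles it. The reverse inclusion $\Fix(f) \subseteq \Fix(\Transitive{f})$ is immediate by induction: $f(x)=x$ gives $f^i(x) = x$ for all $i \ge 1$, so $\Transitive{f}(x) = \bigjoin_{i\ge1} x = x$.

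Next, $\Fix(\Reflexive{f}) = \Fix(\ReflexiveTransitive{f})$. Here I would use $\Reflexive{f} = \id \join f$ and $\ReflexiveTransitive{f} = \Reflexive{(\Transitive{f})} = \id \join \Transitive{f}$. For $x \in \Fix(\Reflexive{f})$, i.e. $x = x \join f(x)$, equivalently $f(x) \le x$; then $f^i(x) \le x$ for all $i \ge 1$ by monotonicity, so $\ReflexiveTransitive{f}(x) = x \join \bigjoin_{i\ge1} f^i(x) = x$, i.e. $x \in \Fix(\ReflexiveTransitive{f})$. Conversely $x \in \Fix(\ReflexiveTransitive{f})$ means $x = x \join \Transitive{f}(x)$, so $f(x) \le \Transitive{f}(x) \le x$, hence $x \in \Fix(\Reflexive{f})$. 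The inclusion $\Fix(f) \subseteq \Fix(\Reflexive{f})$ is trivial (if $f(x)=x$ then $x\join f(x)=x$), which together with the already-established equalities gives the whole display~\eqref{equation:Fix}.

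Finally, the equality condition: $\Fix(f) = \Fix(\Reflexive{f})$ iff $f$ non-degenerate, i.e. $\Transitive{f} \ge \id$. If $f$ is non-degenerate and $x \in \Fix(\Reflexive{f})$, then $f(x) \le x$; combined with $x \le \Transitive{f}(x) = \bigjoin_{i \ge 1} f^i(x) \le x$ (each $f^i(x) \le x$), we get $x = \Transitive{f}(x)$, so $x \in \Fix(\Transitive{f}) = \Fix(f)$ by the first part. Conversely, if $f$ is degenerate, there is $x$ with $\Transitive{f}(x) \not\ge x$; I'd need a witness in $\Fix(\Reflexive{f}) \setminus \Fix(f)$. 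The natural candidate is $y := x \join \ReflexiveTransitive{f}(x) = \ReflexiveTransitive{f}(x)$, which satisfies $\Reflexive{f}(y) = y$ (since $f(\ReflexiveTransitive{f}(x)) \le \ReflexiveTransitive{f}(x)$ as $\ReflexiveTransitive{f}$ is reflexive transitive, hence $f\ReflexiveTransitive{f} \le \ReflexiveTransitive{f}$); I then need $f(y) \ne y$. If $f(y) = y$, then $y \in \Fix(f) = \Fix(\Transitive{f})$, so $\Transitive{f}(y) = y$, and since $y \ge x$ and $\Transitive{f}$ is monotone, $\Transitive{f}(x) \le \Transitive{f}(y) = y = x \join \Transitive{f}(x)$\ldots this does not immediately force a contradiction with $\Transitive{f}(x) \not\ge x$, so the witness must be chosen more carefully. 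The main obstacle is precisely this converse direction: I expect to need to pick $x$ minimal-ish, or rather to observe that degeneracy gives some $x$ with $\Transitive{f}(x) \not\ge x$, replace $x$ by $\Transitive{f}(x)$ to get a point $z = \Transitive{f}(x)$ with $f(z) = \Transitive{f}(\Transitive{f}(x)) = \Transitive{f}(x) = z$ wait that would make $z$ a fixed point of $\Transitive{f}$, hence of $f$, hence of $\Reflexive{f}$, with $z \in \Fix(\Reflexive{f})$; but to separate it from $\Fix(f)$ I instead want the element $x$ itself. I believe the correct witness is $y = \ReflexiveTransitive{f}(x)$ for a suitably chosen $x$: $y \in \Fix(\Reflexive{f})$ always, and if $y$ were in $\Fix(f)$ for every choice of $x$, one could derive $\Transitive{f}(x) \ge x$ for all $x$, contradicting degeneracy; spelling out this last implication carefully is the step I would spend the most care on.
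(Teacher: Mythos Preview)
Your argument for the chain~\eqref{equation:Fix} is correct and essentially the same as the paper's (the paper uses $\Transitive{f} = f\ReflexiveTransitive{f}$ where you use $\Transitive{f} = f \join f\Transitive{f}$, but the two identities are interchangeable).

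The only genuine gap is the converse direction of the equality condition, where you correctly identify the witness $y = \ReflexiveTransitive{f}(x)$ but fail to close the argument. The missing observation is one line: since $f$ is a graph,
\[
    f(y) = f\!\left(\bigjoin_{i \ge 0} f^i(x)\right) = \bigjoin_{i \ge 0} f^{i+1}(x) = \Transitive{f}(x).
\]
Hence $f(y) = y$ would give $\Transitive{f}(x) = y = x \join \Transitive{f}(x)$, i.e.\ $\Transitive{f}(x) \ge x$, contradicting the choice of $x$. This is exactly the paper's argument (phrased there as ``$f(y) = \Transitive{f}(z) < y$''). Your detour through $\Transitive{f}(y)$ and monotonicity is a dead end, and your aside that $\Transitive{f}(\Transitive{f}(x)) = \Transitive{f}(x)$ is in fact false in general (transitivity only gives $(\Transitive{f})^2 \le \Transitive{f}$). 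But you already had the right witness; you just needed to compute $f(y)$ directly.
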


\begin{proof}
We first prove Equation \eqref{equation:Fix}. 
\begin{itemize}
    \item $\Fix(f) \subseteq \Fix(\Reflexive{f})$. If $x = f(x)$, then $x = f(x) \join x$.

    \item $\Fix(f) \subseteq \Fix(\ReflexiveTransitive{f})$. If $x = f(x)$, then $x = f^i(x)$ for all $i\ge 0$, and hence $x = \ReflexiveTransitive{f}(x)$.

    \item $\Fix( \Transitive{f} ) = \Fix(f)$. The first item shows that $\Fix(\Transitive{f}) \subseteq \Fix(\ReflexiveTransitive{f})$.  Since $\Transitive{f} = f \ReflexiveTransitive{f}$, we have for all $x \in L$,
    \[
        \Transitive{ f }(x) = x \iff \left[ f \ReflexiveTransitive{f}(x) = x \text{ and } \ReflexiveTransitive{f}(x) = x \right] \iff f(x) = x.
    \]

    \item $\Fix( \ReflexiveTransitive{f} ) = \Fix(\Reflexive{f})$. Apply the previous item to $\Reflexive{f}$.
\end{itemize}

We now prove that $\Fix( f ) = \Fix( \Reflexive{ f } )$ if and only if $f$ is non-degenerate. If $f$ is non-degenerate, then $\Transitive{f} = \ReflexiveTransitive{f}$, so equality follows Equation \eqref{equation:Fix}. Otherwise, let $z \in L$ such that $\Transitive{f}(z) < \ReflexiveTransitive{f}(z)$, and let $y = \ReflexiveTransitive{f}(z)$. We then have $f(y) = \Transitive{f}(z) < y$, while $\ReflexiveTransitive{f}(y) = y$ by idempotence of $\ReflexiveTransitive{f}$.
\end{proof}

Let $X$ be a finite set. Then topologies on $X$ are in one-to-one correspondence with partial preorders, i.e. reflexive and transitive relations on $X$ \cite[Theorem 3.9.1]{Cam99}. This fact, which we shall refer to as the finite topology theorem, gives an equivalence between finite topologies and finite reflexive transitive digraphs. In \cite{Gad21}, that equivalence is reviewed in terms of disjunctive Boolean networks. It is shown that finite topologies are exactly the sets of fixed points of finite reflexive transitive digraphs, which in turn are the same as the sets of fixed points of non-degenerate digraphs (\cite{Gad21} uses the term ``non-trivial'' but this is not the most appropriate choice of terminology). We now generalise the finite topology theorem to any complete lattice by considering so-called bi-topologies. We give two equivalent characterisations of such subsets of the lattice, one for reflexive transitive graphs, the other for non-degenerate graphs.

A \Define{bi-topology} on $\Lattice{L}$ is a subset $T$ of $L$ which is closed under arbitrary joins and meets, i.e. such that for any subset $S \subseteq T$, $\bigmeet S, \bigjoin S \in T$.

\begin{theorem}[The bi-topology theorem] \label{theorem:bi-topology}
Let $\Lattice{L}$ be a complete lattice. Then the following are equivalent for $T \subseteq L$.
\begin{equivalent}
    \item \label{item:bi-topology}
    $T$ is a bi-topology on $\Lattice{L}$;
    
    \item \label{item:fix_of_RT}
    $T$ is the set of fixed points of some reflexive transitive graph on $\Lattice{L}$;
    
    \item \label{item:fix_of_non-degenerate}
    $T$ is the set of fixed points of some non-degenerate graph on $\Lattice{L}$.
\end{equivalent}
\end{theorem}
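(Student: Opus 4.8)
The plan is to prove the cyclic chain $\ref{item:bi-topology} \implies \ref{item:fix_of_RT} \implies \ref{item:fix_of_non-degenerate} \implies \ref{item:bi-topology}$; the hard work lies only in the first implication, while the other two come essentially for free from Proposition \ref{proposition:fixed_points}. Indeed, $\ref{item:fix_of_RT} \implies \ref{item:fix_of_non-degenerate}$ is immediate: a reflexive transitive graph $g$ satisfies $\Transitive{g} = \bigjoin_{i \in \N_1} g^i = g \ge \id$, so it is non-degenerate, and its set of fixed points is still $T$.

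For $\ref{item:fix_of_non-degenerate} \implies \ref{item:bi-topology}$, write $T = \Fix(f)$ with $f$ non-degenerate; by Proposition \ref{proposition:fixed_points} we also have $T = \Fix(\Reflexive{f})$, so it suffices to show that $\Fix(g)$ is a bi-topology whenever $g$ is a reflexive graph. Closure under arbitrary joins is the graph property applied to fixed points: $g(\bigjoin S) = \bigjoin g(S) = \bigjoin S$ for every $S \subseteq \Fix(g)$. Closure under arbitrary meets uses reflexivity and monotonicity: $\bigmeet S \le g(\bigmeet S) \le \bigmeet_{s \in S} g(s) = \bigmeet S$.

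The substantive step is $\ref{item:bi-topology} \implies \ref{item:fix_of_RT}$. Given a bi-topology $T$ (which, taking empty joins and meets, contains $\zero$ and $\one$), I would set $g(x) = \bigmeet \{ t \in T : t \ge x \}$ for all $x \in L$. Meet-closedness of $T$ makes $g$ well defined with $g(x) \in T$, and it is routine that $g$ is monotone, satisfies $g(x) \ge x$, and is idempotent, so that $g^2 = g \ge \id$; moreover $\Fix(g) = T$, since $g(x) = x$ forces $x = g(x) \in T$ while $x \in T$ forces $g(x) = x$ as $x$ is then the least element of $\{ t \in T : t \ge x \}$. The one inequality that genuinely uses the join-closure of $T$ is that $g$ preserves arbitrary joins: monotonicity already gives $g(\bigjoin S) \ge \bigjoin g(S)$, and for the reverse one notes that $\bigjoin g(S)$ lies in $T$ and dominates $\bigjoin S$, hence belongs to the set $\{ t \in T : t \ge \bigjoin S \}$ whose meet is $g(\bigjoin S)$; thus $g(\bigjoin S) \le \bigjoin g(S)$. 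So $g$ is a reflexive transitive graph with $\Fix(g) = T$.

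The single delicate point is this last verification — recognising that closure of $T$ under joins is precisely the condition upgrading the closure operator $g$ attached to the meet-closed family $T$ into a join-preserving map; everything else is mechanical, and all the fixed-point bookkeeping is supplied by Proposition \ref{proposition:fixed_points}.
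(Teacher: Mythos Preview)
Your proof is correct and follows essentially the same approach as the paper: the same closure operator $g(x) = \bigmeet\{t \in T : t \ge x\}$ is constructed for $\ref{item:bi-topology} \implies \ref{item:fix_of_RT}$, with the same verification that join-closure of $T$ is exactly what makes $g$ a graph, and the same use of reflexivity and monotonicity for closure under meets. The only cosmetic difference is that you arrange the implications as a cycle $\ref{item:bi-topology} \Rightarrow \ref{item:fix_of_RT} \Rightarrow \ref{item:fix_of_non-degenerate} \Rightarrow \ref{item:bi-topology}$, whereas the paper proves $\ref{item:bi-topology} \Leftrightarrow \ref{item:fix_of_RT}$ directly and then invokes Proposition~\ref{proposition:fixed_points} for $\ref{item:fix_of_RT} \Leftrightarrow \ref{item:fix_of_non-degenerate}$.
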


\begin{proof}
$\ref{item:bi-topology} \implies \ref{item:fix_of_RT}$. Suppose $T$ is a bi-topology. Define $h : L \to L$ as follows. For all $x \in L$, let $T_x = T \cap x^\uparrow$; then $h(x) = \bigmeet T_x$. Clearly, $h$ is monotone and $h \ge \id$.

The fixed point set of $h$ is $T$. Indeed, $h(x) \in T$ for all $x$, hence $\Fix(h) \subseteq T$; conversely, $h(t) = t$ if $t \in T$. 

In fact, we have just shown that $h$ is idempotent. Therefore, all that is left to prove is that $h$ is a graph. Let $S \subseteq L$. On the one hand, $h( \bigjoin S ) \ge \bigjoin h( S )$ by monotonicity. On the other hand, since $\bigjoin h( S ) \ge \bigjoin S$ and $\bigjoin h( S ) \in T$, we have $\bigjoin h( S ) \in T_{\bigjoin S}$, thus
\[
    \bigjoin h( S ) \ge \bigmeet T_{\bigjoin S} = h( \bigjoin S ).
\]

$\ref{item:fix_of_RT} \implies \ref{item:bi-topology}.$ Conversely, let $h$ be a reflexive transitive graph. Because $h$ is a graph, its image set (the same as its set of fixed points, as $h$ is idempotent) is closed under arbitrary joins: $\bigjoin h( S ) = h( \bigjoin S )$. We now prove that $\Fix(h)$ is closed under arbitrary meets. Let $S \subseteq \Fix(h)$ and $z = \bigmeet S = \bigmeet h(S)$. On the one hand $h( z ) \ge z$, and on the other hand $h( z ) \le h( x )$ for all $x \in S$ hence $h(z) \le z$. Thus $h(z) = z$ and $z \in \Fix(h)$.

$\ref{item:fix_of_RT} \iff \ref{item:fix_of_non-degenerate}$. Follows from Proposition \ref{proposition:fixed_points}.
\end{proof}

A \Define{co-topology} on $\Lattice{L}$ is a subset $T$ of $L$ which is closed under arbitrary meets and finite joins. Equivalently, $T$ is a co-topology if $\zero,\one \in T$; for any subset $S \subseteq T$, $\bigmeet S \in T$; and for all $s,t \in T$, $s \join t \in T$. A \Define{pre-graph} is a mapping $g : L \to L$ that preserves finite joins; reflexive, transitive, and non-degenerate pre-graphs are defined naturally. The proof of Theorem \ref{theorem:bi-topology} can be easily adapted to pre-graphs to yield the corresponding result for co-topologies.

\begin{corollary}[The co-topology theorem] \label{theorem:co-topology}
Let $\Lattice{L}$ be a complete lattice. Then the following are equivalent for $T \subseteq L$:
\begin{equivalent}
    \item \label{item:co-topology}
    $T$ is a co-topology on $\Lattice{L}$;
    
    \item \label{item:fix_RT_pre-graph}
    $T$ is the set of fixed points of some reflexive transitive pre-graph on $\Lattice{L}$;
    
    \item \label{item:fix_non-degenerate_pre-graph}
    $T$ is the set of fixed points of some non-degenerate pre-graph on $\Lattice{L}$.
\end{equivalent}
\end{corollary}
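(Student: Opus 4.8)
The plan is to mimic the proof of Theorem~\ref{theorem:bi-topology} almost line for line, replacing ``arbitrary joins'' by ``finite joins'' wherever a join appears, and to establish the cycle $\ref{item:co-topology} \implies \ref{item:fix_RT_pre-graph} \implies \ref{item:fix_non-degenerate_pre-graph} \implies \ref{item:co-topology}$.

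For $\ref{item:co-topology} \implies \ref{item:fix_RT_pre-graph}$ I would reuse the same closure operator: given a co-topology $T$, set $h(x) = \bigmeet(T \cap x^\uparrow)$. Since $\one \in T$, this is a meet over a nonempty subset of $T$, so $h(x) \in T$ by closure under arbitrary meets; as in Theorem~\ref{theorem:bi-topology}, $h$ is monotone, satisfies $h \ge \id$, is idempotent, and has $\Fix(h) = T$ (indeed $h(L) \subseteq T$ and $h(t) = t$ for $t \in T$). The only modification is checking that $h$ is a pre-graph rather than a graph: for a \emph{finite} $S \subseteq L$, monotonicity gives $h(\bigjoin S) \ge \bigjoin h(S)$, and conversely $\bigjoin h(S) \in T$ now only needs closure of $T$ under \emph{finite} joins, after which $\bigjoin h(S) \ge \bigjoin S$ forces $\bigjoin h(S) \in T \cap (\bigjoin S)^\uparrow$ and hence $\bigjoin h(S) \ge h(\bigjoin S)$; the case $S = \emptyset$ gives $h(\zero) = \bigmeet T = \zero$ since $\zero \in T$. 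So $h$ is a reflexive transitive pre-graph with $\Fix(h) = T$.

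The implication $\ref{item:fix_RT_pre-graph} \implies \ref{item:fix_non-degenerate_pre-graph}$ should be immediate, since a reflexive transitive pre-graph $h$ satisfies $\Transitive{h} \ge h \ge \id$ and is therefore non-degenerate. For $\ref{item:fix_non-degenerate_pre-graph} \implies \ref{item:co-topology}$, I would take a non-degenerate pre-graph $g$ and set $T = \Fix(g)$: closure of $T$ under finite joins (in particular $\zero \in T$) is immediate from $g(\bigjoin S) = \bigjoin g(S) = \bigjoin S$ for finite $S \subseteq T$; for closure under arbitrary meets, take $S \subseteq T$ and $z = \bigmeet S$, note $g(z) \le g(s) = s$ for each $s \in S$ so $g(z) \le z$, deduce $g^i(z) \le g(z)$ for all $i \ge 1$ and hence $\Transitive{g}(z) = \bigjoin_{i \in \N_1} g^i(z) \le g(z)$, and finally invoke non-degeneracy ($\Transitive{g}(z) \ge z$) to conclude $g(z) = z$, i.e. $z \in T$ (with $S = \emptyset$ giving $\one \in T$).

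The step I expect to require the most care is this last one. In Theorem~\ref{theorem:bi-topology}, $\ref{item:fix_of_RT} \iff \ref{item:fix_of_non-degenerate}$ was a single appeal to Proposition~\ref{proposition:fixed_points}, but that proposition relies on the identity $\Transitive{f} = f \ReflexiveTransitive{f}$, which in turn uses that a graph preserves the infinite join $\ReflexiveTransitive{f}(x) = \bigjoin_{i} f^i(x)$; this is unavailable for pre-graphs, so Proposition~\ref{proposition:fixed_points} cannot simply be quoted. The direct argument above sidesteps it, using only monotonicity, preservation of finite joins, and non-degeneracy in the one-sided form ``$g(z) \le z$ implies $\Transitive{g}(z) \le g(z)$'' --- which is precisely the place where one must make sure the infinite join defining $\Transitive{g}$ behaves correctly.
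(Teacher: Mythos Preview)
Your proof is correct and follows the paper's stated approach of adapting the bi-topology proof to pre-graphs and finite joins. Your care in not quoting Proposition~\ref{proposition:fixed_points} verbatim---since its proof uses the factorisation $\Transitive{f} = f\,\ReflexiveTransitive{f}$, which requires infinite-join preservation---is well placed, and your direct argument that $g(z)\le z$ together with non-degeneracy forces $g(z)=z$ is exactly the adjustment needed.
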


\subsection{Fixed point-free graphs} \label{subsection:fixed_point-free}

We say a graph $f$ is \Define{fixed point-free} if $\zero$ is the only fixed point of $f$.

Let us first classify the fixed point-free digraphs. If a digraph $D$ has a cycle $X$, then $X \subseteq D(X)$ and hence it is not fixed-point free. Moreover, if $D$ is finite and acyclic, then $X \not\subseteq D(X)$ for all $X$. As such, a finite digraph is fixed-point free if and only if it is acyclic. However, this is not true for infinite digraphs. A (forward) \Define{infinite walk} in a digraph is a sequence $W = (w_i : i \in \N)$ of vertices such that $(w_i, w_{i+1})$ is an arc for all $i \in \N$.

\begin{observation} \label{observation:strongly_acyclic_digraphs}
A digraph is fixed point-free if and only if it has no infinite walks.
\end{observation}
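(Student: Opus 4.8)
The plan is to prove both directions of the equivalence ``$D$ is fixed point-free $\iff$ $D$ has no infinite walks'' by contraposition, using the in-neighbourhood function characterisation of digraphs.

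First I would handle the forward direction: if $D$ has an infinite walk, then $D$ is not fixed point-free. Suppose $W = (w_i : i \in \N)$ is an infinite walk, so $(w_i, w_{i+1}) \in E$ for all $i$. Let $X = \{ w_i : i \in \N \}$ be the (nonempty) vertex set of the walk. I claim $X \subseteq D(X)$: for any $w_i \in X$, since $(w_i, w_{i+1}) \in E$ and $w_{i+1} \in X$, we get $w_i \in D(X)$ by definition of the in-neighbourhood function $D(X) = \{ u : \exists v \in X, (u,v) \in E\}$. Hence $X \le D(X)$, i.e. $X$ is a pre-fixed point of the monotone map $D$. By the Knaster--Tarski fixed point theorem (or simply taking $Y = \bigvee_{i \in \N} D^i(X) = \bigcup_i D^i(X)$, which satisfies $Y \subseteq D(Y)$ and is fixed once we note monotonicity forces $D^i(X)$ increasing; more directly, $\bigcup\{Z : Z \subseteq D(Z)\}$ contains $X$ and is a fixed point), $D$ has a nonzero fixed point $Y \supseteq X \ne \emptyset$. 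So $D$ is not fixed point-free. (If one prefers an explicit fixed point: the set of all vertices lying on some infinite walk is easily checked to be a fixed point of $D$ and contains $X$.)

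For the converse, I would show: if $D$ has no infinite walks, then $\emptyset$ is the only fixed point. Suppose $X = D(X)$ with $X \ne \emptyset$; pick $w_0 \in X$. Since $w_0 \in X = D(X)$, there is $w_1 \in X$ with $(w_0, w_1) \in E$. Repeating — since $w_1 \in X = D(X)$, there is $w_2 \in X$ with $(w_1, w_2) \in E$ — and proceeding inductively, I build an infinite walk $W = (w_i : i \in \N)$ entirely inside $X$. This uses the axiom of (dependent) choice to select each successor, which is standard in this context. This contradicts the assumption that $D$ has no infinite walks. Hence $X = \emptyset$, so $D$ is fixed point-free.

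The argument is essentially routine; the only mildly delicate point is the forward direction, where one must actually exhibit a nonzero \emph{fixed} point rather than merely a pre-fixed point $X \subseteq D(X)$. The cleanest route is to invoke monotonicity of $D$ and the fact (Knaster--Tarski) that the join of all pre-fixed points below a pre-fixed point is a fixed point, or equivalently to observe directly that the union of all vertex sets of infinite walks is $D$-invariant. I expect this to be the main (though still minor) obstacle; everything else is a direct unwinding of the definitions of $D(X)$, infinite walk, and fixed point-free, together with dependent choice for the converse.
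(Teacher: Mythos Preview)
Your proof is correct and follows essentially the same approach as the paper: an infinite walk gives a nonempty $X$ with $X \subseteq D(X)$, and conversely a nonempty fixed point lets you build an infinite walk by (dependent) choice. You are in fact more careful than the paper in the forward direction, explicitly passing from the pre-fixed point $X \subseteq D(X)$ to a genuine fixed point via Knaster--Tarski; the paper simply asserts this step, implicitly relying on the equivalence ``$f(x)\ge x$ for some $x\ne\zero$ $\iff$ $f$ not fixed point-free'' that is only formally established in the proposition immediately following the observation.
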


\begin{proof}
If $D$ has an infinite walk $W = (w_i : i \in \N)$, then $D( W ) \supseteq W$, and hence $D$ is not fixed point-free. 

Conversely, if $D$ is not fixed point-free, let $X \ne \emptyset$ satisfy $X = D(X)$. Then for all $u \in X$, there exists $v \in X$ such that $(u,v) \in E$. As such, for any $u \in X$, there exists a sequence $V = (v_i : i \in \N)$ with $v_0 = u$, $v_i \in X$ for all $i$, and $(v_i, v_{i+1}) \in E$ for all $i$.
\end{proof}

We now give several alternate definitions of fixed point-free graphs.

\begin{proposition}
Let $f$ be a graph. Then the following are equivalent:
\begin{equivalent}
    \item \label{item:fixed_point-free}
    $f$ is fixed point-free, i.e. $\Fix( f ) = \{ \zero \}$;

    \item \label{item:increased_point-free}
    for all $x \in L$, $f(x) \ge x$ if only if $x = \zero$;

    \item \label{item:f<fR}
    $f(x) < \Reflexive{f}(x)$ for all $x \ne \zero$;

    \item \label{item:increased_set-free}
    for all $X \subseteq L$, $f(X) \supseteq X$ only if $X \subseteq \{ \zero \}$;

    
    \item \label{item:parabolic_point-free}
    $\Par( f ) = \{ \zero \}$.
\end{equivalent}
\end{proposition}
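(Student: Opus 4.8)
The plan is to establish the cycle of implications $\ref{item:fixed_point-free} \Rightarrow \ref{item:increased_point-free} \Rightarrow \ref{item:f<fR} \Rightarrow \ref{item:increased_set-free} \Rightarrow \ref{item:parabolic_point-free} \Rightarrow \ref{item:fixed_point-free}$. Throughout I would use only three features of $f$: it is monotone, it fixes $\zero$, and it preserves arbitrary joins (so that $f(\bigjoin X) = \bigjoin f(X)$ for every $X \subseteq L$); I would also use the inclusion $\Fix(f) \subseteq \Par(f)$ noted earlier, together with $\zero \in \Fix(f)$. Two links of the cycle are then immediate: for $\ref{item:parabolic_point-free} \Rightarrow \ref{item:fixed_point-free}$ one just has $\Fix(f) \subseteq \Par(f) = \{\zero\}$ with $\zero \in \Fix(f)$; and in $\ref{item:increased_point-free}$ the implication $x = \zero \Rightarrow f(x) \ge x$ is trivial since $f(\zero) = \zero$, so that $\ref{item:increased_point-free}$ really says that $f(x) \ge x$ forces $x = \zero$.

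The only link with genuine content is $\ref{item:fixed_point-free} \Rightarrow \ref{item:increased_point-free}$, and this is the step I expect to be the main obstacle (though it is short). Assuming $\Fix(f) = \{\zero\}$ and $f(x) \ge x$, monotonicity makes the iterates an ascending chain $x = f^0(x) \le f^1(x) \le f^2(x) \le \cdots$, and I would set $y = \bigjoin_{i \in \N} f^i(x)$. Join-preservation gives $f(y) = \bigjoin_{i \in \N} f^{i+1}(x)$, and since $x \le f(x)$ this join equals $\bigjoin_{i \in \N} f^i(x) = y$; hence $y \in \Fix(f) = \{\zero\}$, so $y = \zero$ and $x \le y = \zero$. (This is the Knaster--Tarski description of the greatest fixed point specialised to graphs, where the relevant join is preserved, so the argument stays elementary.)

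The two remaining links are routine. For $\ref{item:increased_point-free} \Rightarrow \ref{item:f<fR}$, given $x \ne \zero$ one always has $f(x) \le x \join f(x) = \Reflexive{f}(x)$, and equality here would mean $x \le f(x)$, contradicting $\ref{item:increased_point-free}$, so the inequality is strict. For $\ref{item:f<fR} \Rightarrow \ref{item:increased_set-free}$, if $f(X) \supseteq X$ then with $z = \bigjoin X$ the inclusion $X \subseteq f(X)$ and join-preservation give $z \le \bigjoin f(X) = f(z)$; were $z \ne \zero$, applying $\ref{item:f<fR}$ to $z$ would yield $f(z) < \Reflexive{f}(z) = z \join f(z) = f(z)$, absurd, so $z = \zero$ and $X \subseteq \{\zero\}$. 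Finally, for $\ref{item:increased_set-free} \Rightarrow \ref{item:parabolic_point-free}$, given $x \in \Par(f)$ with witnessing sequence $(x_i : i \in \N)$, the set $X = \{x_i : i \in \N\}$ satisfies $f(X) \supseteq \{f(x_i) : i \ge 1\} = \{x_{i-1} : i \ge 1\} = X$, so $\ref{item:increased_set-free}$ forces $X \subseteq \{\zero\}$ and in particular $x = x_0 = \zero$; since also $\zero \in \Fix(f) \subseteq \Par(f)$, this gives $\Par(f) = \{\zero\}$.
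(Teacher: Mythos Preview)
Your proof is correct and follows essentially the same approach as the paper's: the crucial step $\ref{item:fixed_point-free} \Rightarrow \ref{item:increased_point-free}$ is done by taking the join $y = \bigjoin_{i \in \N} f^i(x)$ (which is exactly the paper's $\ReflexiveTransitive{f}(u)$) and using join-preservation to see it is a fixed point; the remaining steps also match, with only cosmetic differences in organisation (the paper proves $\ref{item:increased_point-free} \iff \ref{item:f<fR}$ and $\ref{item:increased_point-free} \Rightarrow \ref{item:increased_set-free}$ directly rather than chaining through $\ref{item:f<fR}$, and it derives $\ref{item:increased_set-free} \Rightarrow \ref{item:parabolic_point-free}$ by applying $\ref{item:increased_set-free}$ to $X = \Par(f)$ via Lemma~\ref{lemma:phi(Par)} rather than to the single witnessing sequence, but the content is the same).
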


\begin{proof}
$\ref{item:fixed_point-free} \implies \ref{item:increased_point-free}$. For the sake of contradiction, suppose $f(u) \ge u$ for some $u \ne \zero$. Let $v = \ReflexiveTransitive{f}(u)$, then we obtain
\[
    f(v) = f \left( \bigjoin_{i \in \N} f^i( u ) \right) = \bigjoin_{i \in \N} f^{i+1}(u) = \bigjoin_{i \in \N} f^i(u) = v,
\]
which is the desired contradiction.

$\ref{item:increased_point-free} \implies \ref{item:fixed_point-free}$. Trivial.

$\ref{item:increased_point-free} \iff \ref{item:f<fR}$. For all $x$, $f(x) \not\ge x \iff f(x) < f(x) \join x = \Reflexive{f}( x )$.

$\ref{item:increased_point-free} \implies \ref{item:increased_set-free}$. For the sake of contradiction, suppose $X \subseteq f( X )$ with $q = \bigjoin X > 0$. We have
\[
    f( q ) = f \left( \bigjoin X \right) = \bigjoin f( X ) \ge \bigjoin X = q,
\]
which is the desired contradiction.

$\ref{item:increased_set-free} \implies \ref{item:parabolic_point-free}$. The proof follows Lemma \ref{lemma:phi(Par)}.

$\ref{item:parabolic_point-free} \implies \ref{item:fixed_point-free}$. Trivial.
\end{proof}

\begin{corollary} \label{corollary:f_le_g_fixed_point-free}
If $f \le g$ are graphs and $g$ is fixed point-free, then so is $f$.
\end{corollary}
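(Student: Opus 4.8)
The plan is to reduce this to the equivalence between conditions \ref{item:fixed_point-free} and \ref{item:increased_point-free} in the preceding proposition, which characterises fixed point-freeness of a graph $g$ by the property that $g(x) \ge x$ forces $x = \zero$. The inequality $f \le g$ feeds directly into this characterisation: it is an order relation on mappings, so $f(x) \le g(x)$ for every $x \in L$, and that is exactly what is needed to transport a ``non-decreasing'' witness for $f$ into one for $g$.

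Concretely, I would argue by contraposition (or equivalently by a direct contradiction argument). Suppose $f$ is not fixed point-free, so there is some $x \ne \zero$ with $f(x) = x$. Then $x = f(x) \le g(x)$, hence $g(x) \ge x$ with $x \ne \zero$. By the equivalence \ref{item:fixed_point-free} $\iff$ \ref{item:increased_point-free} applied to $g$, this contradicts the assumption that $g$ is fixed point-free. Therefore every fixed point of $f$ must be $\zero$, i.e. $f$ is fixed point-free. One could equally invoke condition \ref{item:increased_set-free} or \ref{item:parabolic_point-free}, but the pointwise version is the cleanest.

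There is essentially no obstacle here: the only thing to be a little careful about is that $f \le g$ is a statement about mappings (the pointwise order on $L \to L$ introduced in Section \ref{subsection:residuated}), so that the step $f(x) \le g(x)$ is immediate, and that the equivalent formulation \ref{item:increased_point-free} of fixed point-freeness is genuinely needed (a fixed point of $f$ need not be a fixed point of $g$, only a point with $g(x) \ge x$). Monotonicity of $g$, or any closure operation, is not required; the whole argument is two lines once the proposition is in hand.
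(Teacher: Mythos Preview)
Your argument is correct and is exactly the intended one: the paper states this corollary immediately after the proposition without giving a proof, presumably because the two-line argument via condition \ref{item:increased_point-free} that you wrote is the obvious deduction.
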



A \Define{topological sort} for $f$ is a linear order $\preceq$ on $L$ such that $y \le \Transitive{f}(x)$ only if $x \prec y$ for all $x,y \ne \zero$. We remark that this definition differs from that given for digraphs in \cite{RRM13}, whereby the topological sort was defined for vertices only; in our definition, the topological sort for digraphs is defined for all subsets of vertices instead.

\begin{theorem}[Topological sort of fixed point-free graphs] \label{theorem:topological_sort}
A graph $f$ is fixed point-free if and only if it has a topological sort.
\end{theorem}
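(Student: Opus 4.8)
The equivalence has two directions. The harder one is: \emph{a fixed point-free graph admits a topological sort}; the easy one is the converse.

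Let me think about the structure. A topological sort is a linear order $\preceq$ on $L$ such that for all $x, y \neq \zero$, $y \le \Transitive{f}(x) \implies x \prec y$. So it's a linear order that refines (the strict part of) the relation "$y$ is reachable from $x$ via $\Transitive{f}$", restricted to nonzero elements.

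For the converse direction (topological sort $\implies$ fixed point-free): if $\preceq$ is a topological sort and $f(u) \ge u$ with $u \neq \zero$, then... hmm, I need $u \le \Transitive{f}(u)$. Since $f$ is monotone, $f(u) \ge u \implies f^2(u) \ge f(u) \ge u$, so $\Transitive{f}(u) = \bigjoin_{i \ge 1} f^i(u) \ge f(u) \ge u$, i.e. $u \le \Transitive{f}(u)$. Taking $x = y = u$ in the definition, we'd need $u \prec u$, contradiction. So no such $u$ exists, and by the Proposition (item \ref{item:increased_point-free} $\iff$ \ref{item:fixed_point-free}), $f$ is fixed point-free.

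For the forward direction (fixed point-free $\implies$ topological sort): this is the real work. The relation $R = \{(x,y) : x, y \neq \zero, y \le \Transitive{f}(x), x \neq y\}$ — we want to extend it to a linear order. The natural approach: show $R$ (or its transitive closure, or the relation $x \le \Transitive{f}(y)$ appropriately) is a strict partial order on $L \setminus \{\zero\}$, then invoke the order-extension principle (every partial order extends to a linear order — Szpilrajn, using the axiom of choice / Zorn) to get a linear order on $L \setminus \{\zero\}$, then tack on $\zero$ as the minimum. The key obstacles:

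First, I need to show the relation is \emph{irreflexive}: $y \le \Transitive{f}(y)$ with $y \neq \zero$ is exactly what fixed point-freeness forbids (via item \ref{item:increased_point-free}, since $y \le \Transitive{f}(y) \implies \Reflexive{\Transitive{f}}(y) = \Transitive{f}(y)$, wait let me be careful — actually $y \le \Transitive{f}(y)$ means $\ReflexiveTransitive{f}(y) = \Transitive{f}(y)$, and then the argument from the Proposition's proof of $\ref{item:fixed_point-free} \implies \ref{item:increased_point-free}$ shows $\ReflexiveTransitive{f}(y)$ is a nonzero fixed point of $f$, contradiction). Actually cleaner: I'll directly note fixed point-free means (item \ref{item:increased_point-free}) that $f(x) \ge x$ forces $x = \zero$; and $y \le \Transitive{f}(y) = f(\ReflexiveTransitive{f}(y))$... hmm, let me just show $\Transitive{f}$ is itself fixed point-free — by Proposition~\ref{proposition:fixed_points}, $\Fix(\Transitive{f}) = \Fix(f) = \{\zero\}$ — and apply item \ref{item:increased_point-free} to the \emph{transitive} graph $\Transitive{f}$: if $y \le \Transitive{f}(y)$ and $\Transitive{f}$ is transitive and monotone then $\Transitive{f}(y)$ is a fixed point of $\Transitive{f}$, so $\Transitive{f}(y) = \zero$, so $y = \zero$. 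That gives irreflexivity.

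Second, \emph{transitivity} of the relation: if $y \le \Transitive{f}(x)$ and $z \le \Transitive{f}(y)$ then $z \le \Transitive{f}(y) \le \Transitive{f}(\Transitive{f}(x)) = \Transitive{f}(x)$ by monotonicity and transitivity of $\Transitive{f}$ ($(\Transitive{f})^2 \le \Transitive{f}$). Combined with irreflexivity this gives antisymmetry too, so $R$ is a strict partial order on $L \setminus \{\zero\}$. Then Szpilrajn's extension theorem gives a linear order $\preceq'$ on $L \setminus \{\zero\}$ containing $R$; define $\preceq$ on $L$ by making $\zero$ the minimum. Finally I must check $\preceq$ really is a topological sort: for $x, y \neq \zero$ with $y \le \Transitive{f}(x)$, either $y = x$ — but then $y \le \Transitive{f}(y)$, contradicting irreflexivity, so $y \neq x$ — hence $(x,y) \in R \subseteq \preceq'$ with $x \neq y$, i.e. $x \prec y$. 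Done. The main subtlety to get right is handling the $y = x$ case (it can't happen, precisely by fixed point-freeness) and being careful that $\Transitive{f}$ inherits the right closure/monotonicity properties from being a transitive graph.
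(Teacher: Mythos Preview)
Your proposal is correct and follows essentially the same approach as the paper: both show that the relation $y \le \Transitive{f}(x)$ (on nonzero elements) is a strict partial order by using that $\Transitive{f}$ is itself fixed point-free, then extend to a linear order via Szpilrajn. You are slightly more explicit about invoking the order-extension principle and about deriving antisymmetry from irreflexivity plus transitivity, whereas the paper verifies antisymmetry directly; these are cosmetic differences.
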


\begin{proof}
Let $f$ be fixed point-free; note that $\Transitive{f}$ is also fixed point-free. We only need to show that the relation $y \le \Transitive{f}(x)$ is transitive, irreflexive (unless $x=y=\zero$), and antisymmetric on $L$. The relation is clearly transitive and irreflexive. We now prove antisymmetry. Suppose $\Transitive{f}(x) \ge y$ and $\Transitive{f}(y) \ge x$ for $y \ne x$. Firstly, $x \ne \zero$ for otherwise, we have $y \le \Transitive{f}(\zero) = \zero = x$. Then
\[
    x \le \Transitive{f}(y) \le \Transitive{f} ( \Transitive{f}(x) ) \le \Transitive{f}(x),
\]
which contradicts the fact that $\Transitive{f}$ is fixed point-free.

Conversely, if $f(x) = x$ with $x \ne \zero$, then $x \le \Transitive{f}(x)$ and $x \not\prec x$ for any linear order $\preceq$ on $L$. 
\end{proof}

\begin{example} \label{example:backwards_ray}
The \Define{ray} is the digraph on $V = \N$ with arcs $\{ (i, i+1) : i \in \N \}$.  It is acyclic, but not fixed point-free, for $f( V ) = V$. And indeed, one cannot place a topological sort on the vertices that extends to all subsets: for instance, if $A, B \subseteq V$ are two infinite sets then $\Transitive{D}( A ) = V \supseteq B$ and $\Transitive{D}( B ) = V \supseteq A$.

The \Define{backwards ray} is the digraph on $V = \N$ with arcs $\{ (i, i-1) : i \ge 1 \}$. It is acyclic, and fixed point-free. For any non-empty subsets $x, y$ we have $y \subseteq \Transitive{D}( x )$ if and only if $\min x < \min y$. Therefore, any linear order of $\Powerset( V )$ such that $\min y > \min x \implies x \prec y$ is a topological order.

Those digraphs are illustrated in Figure \ref{figure:rays}.
\end{example}

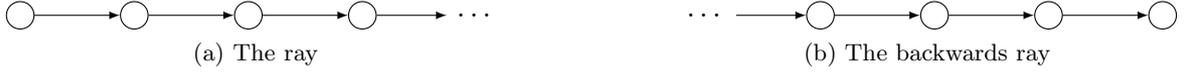
\begin{figure}
\centering
\subfloat[The ray]{
    \begin{tikzpicture}[xscale=1.5]
        \node[draw, circle] (0) at (0,0) {}; 
        \node[draw, circle] (1) at (1,0) {}; 
        \node[draw, circle] (2) at (2,0) {}; 
        \node[draw, circle] (3) at (3,0) {}; 
        \node (4) at (4,0) {$\dots$};
    
        \draw[-latex] (0) -- (1);
        \draw[-latex] (1) -- (2);
        \draw[-latex] (2) -- (3);
        \draw[-latex] (3) -- (4);
    \end{tikzpicture}
}
\hspace{2cm}\subfloat[The backwards ray]{
    \begin{tikzpicture}[xscale=1.5]
        \node[draw, circle] (0) at (0,0) {}; 
        \node[draw, circle] (1) at (-1,0) {}; 
        \node[draw, circle] (2) at (-2,0) {}; 
        \node[draw, circle] (3) at (-3,0) {}; 
        \node (4) at (-4,0) {$\dots$};
    
        \draw[-latex] (1) -- (0);
        \draw[-latex] (2) -- (1);
        \draw[-latex] (3) -- (2);
        \draw[-latex] (4) -- (3);
    \end{tikzpicture}
}
    \caption{The ray and backwards ray. The ray is not fixed point-free while the backwards ray is.}
    \label{figure:rays}
\end{figure}

\subsection{Meet-nilpotent graphs} \label{subsection:nilpotent_graphs}

We now introduce different notions of nilpotence. First, meet-nilpotence, which as we shall see is well suited for graphs:
\begin{itemize}
    \item let $K \in \N$, then we say $\psi : L \to L$ is \Define{$K$-meet-nilpotent} if $\psi^K( \one ) = \zero$;

    \item we simply say $\psi$ is \Define{meet-nilpotent} if it is $K$-meet-nilpotent for some $K$;

    \item we further say that $\psi$ is \Define{asymptotically meet-nilpotent} if $\bigmeet_{i \in \N} \psi^i( \one ) = \zero$.
\end{itemize}
Second, join-nilpotence, which is well suited for co-graphs instead:
\begin{itemize}
    \item let $K \in \N$, then we say $\psi : L \to L$ is \Define{$K$-join-nilpotent} if $\psi^K( \zero ) = \one$;

    \item we simply say $\psi$ is \Define{join-nilpotent} if it is $K$-join-nilpotent for some $K$;

    \item we further say that $\psi$ is \Define{asymptotically join-nilpotent} if $\bigjoin_{i \in \N} \psi^i( \zero ) = \one$.
\end{itemize}

Firstly, we focus on meet-nilpotent graphs. Let us characterise the meet-nilpotent digraphs. A \Define{finite walk} of length $l$ is a sequence of vertices $(w_0, w_1, \dots, w_l)$ such that $(w_i, w_{i+1})$ is an arc for all $0 \le i \le l-1$.

\begin{observation} \label{observation:nilpotent_digraphs}
A digraph $D$ is $K$-meet-nilpotent if and only if $D$ has no walks of length $K$.
\end{observation}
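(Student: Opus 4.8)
The plan is to compute the iterate $D^K(\one)=D^K(V)$ explicitly and recognise it as the set of vertices that start a walk of length $K$; the equivalence then falls out immediately. Concretely, I would first establish, by induction on $k\ge 0$, the identity
\[
    D^k(V) = \{\, w_0 \in V : \text{there is a walk } (w_0,w_1,\dots,w_k) \text{ in } D \,\}.
\]

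The base case $k=0$ is the tautology $D^0(V)=V$ (a length-$0$ walk being a single vertex). For the inductive step I would use $D^{k+1}(V)=D\bigl(D^k(V)\bigr)$ together with the definition of the in-neighbourhood function: $u\in D\bigl(D^k(V)\bigr)$ iff there is an arc $(u,v)\in E$ with $v\in D^k(V)$, which by the induction hypothesis means there is an arc $(u,v)$ and a walk $(v=v_0,v_1,\dots,v_k)$; prepending $u$ yields a walk $(u,v_0,\dots,v_k)$ of length $k+1$ from $u$, and conversely every walk of length $k+1$ starting at $u$ splits into its first arc plus a length-$k$ walk. The only point requiring a little care is keeping track of the direction of arcs: since $D(X)$ collects the tails of arcs whose heads lie in $X$, iterating $D$ traces walks \emph{forward}, so $D^k(V)$ really is indexed by the walk's starting vertex $w_0$. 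I do not expect any genuine obstacle here; it is a routine unfolding.

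Granting the identity, the observation is immediate. By definition $D$ is $K$-meet-nilpotent iff $D^K(\one)=\zero$, i.e. $D^K(V)=\emptyset$; and by the identity $D^K(V)=\emptyset$ precisely when no vertex $w_0\in V$ is the start of a walk of length $K$, that is, precisely when $D$ has no walk of length $K$. This also mirrors Observation \ref{observation:strongly_acyclic_digraphs}, where the analogous unfolding of $\Fix(D)$ in terms of infinite walks was carried out.
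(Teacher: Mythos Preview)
Your proposal is correct and follows essentially the same approach as the paper: the paper's proof simply asserts the identity $D^i(V)=\{v:\text{there is a walk of length }i\text{ starting from }v\}$ and reads off the conclusion, whereas you spell out the inductive verification of that identity. The direction of the arcs is handled correctly.
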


\begin{proof}
For any $i \in \N$, $D^i( V )$ is the set of vertices $v$ such that there exists a walk of length $i$ starting from $v$. Therefore, there exists a walk of length $K$ if and only if $D^K( V )$ is not empty.
\end{proof}

We now prove that nilpotence behaves well within the theory of residuated mappings.

\begin{lemma} \label{lemma:nilpotent}
Let $K \in \N$, $\psi, \theta : L \to L$ and $f$ be a graph on $\Lattice{L}$.
\begin{enumerate}
    \item \label{item:nilpotent_minus}
    If $\psi^K( \one ) = \zero$, then ${ \Residual{\psi} }^K ( \zero ) = \one$.
    
    \item \label{item:nilpotent_plus}
    If $\theta^K( \zero ) = \one$, then ${ \Residuated{\theta} }^K ( \one ) = \zero$.
    
    \item \label{item:nilpotent_graph}
    $f^K( \one ) = \zero$ if and only if ${ \Residual{f} }^K( \zero ) = \one$.
\end{enumerate}
\end{lemma}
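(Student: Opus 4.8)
The plan is to prove the three items in order, with \ref{item:nilpotent_minus} the substantive one; \ref{item:nilpotent_plus} then follows by lattice duality, and \ref{item:nilpotent_graph} by combining \ref{item:nilpotent_minus}, \ref{item:nilpotent_plus} with the structure theory of Theorem \ref{theorem:residuated}.

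For \ref{item:nilpotent_minus}, I would write $a_i = \psi^i(\one)$ for $0 \le i \le K$, so that $a_0 = \one$, $a_K = \zero$ by hypothesis, and $\psi(a_i) = a_{i+1}$. The one elementary fact needed is that $\psi(a_i) \le a_{i+1}$ places $a_i$ in $\psi^{-1}(a_{i+1}^\downarrow)$, whence $\Residual{\psi}(a_{i+1}) \ge a_i$ directly from the defining formula $\Residual{\psi}(b) = \bigjoin \psi^{-1}(b^\downarrow)$ --- no monotonicity of $\psi$ itself is invoked. Together with the monotonicity of $\Residual{\psi}$ (Lemma \ref{lemma:general_remarks_residual}\ref{item:residual_monotone}), an induction on $j$ shows $\big(\Residual{\psi}\big)^{j}(\zero) \ge a_{K-j}$ for $0 \le j \le K$: the base case $j = 0$ is $\zero \ge a_K = \zero$, and the step reads $\big(\Residual{\psi}\big)^{j+1}(\zero) = \Residual{\psi}\big(\big(\Residual{\psi}\big)^{j}(\zero)\big) \ge \Residual{\psi}(a_{K-j}) \ge a_{K-j-1}$, using the inductive hypothesis and the fact that $\psi(a_{K-j-1}) = a_{K-j}$. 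Taking $j = K$ gives $\big(\Residual{\psi}\big)^{K}(\zero) \ge a_0 = \one$, i.e.\ equality.

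For \ref{item:nilpotent_plus}, I would invoke \ref{item:nilpotent_minus} inside the opposite lattice $\Opposite{\Lattice{L}}$. The bookkeeping to check is that the $\Residual{(\cdot)}$ operator formed in $\Opposite{\Lattice{L}}$ is exactly $\Residuated{(\cdot)}$ formed in $\Lattice{L}$ (both send $b$ to $\bigmeet\{y : \theta(y) \ge b\}$), and that the hypothesis $\theta^K(\zero) = \one$ of \ref{item:nilpotent_plus} is precisely the statement that $\theta^K$ sends the top to the bottom, read in $\Opposite{\Lattice{L}}$, since $\zero$ and $\one$ exchange roles there. Then \ref{item:nilpotent_minus} delivers $\big(\Residuated{\theta}\big)^K(\one) = \zero$. (One could equally rerun the induction of \ref{item:nilpotent_minus} verbatim in dual form.)

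For \ref{item:nilpotent_graph}, the forward implication is just \ref{item:nilpotent_minus} with $\psi = f$. For the converse, Theorem \ref{theorem:residuated}\ref{item:f=g+} gives $f = \Residuated{\Residual{f}}$; hence from $\big(\Residual{f}\big)^K(\zero) = \one$ and \ref{item:nilpotent_plus} applied with $\theta = \Residual{f}$ we obtain $\big(\Residuated{\Residual{f}}\big)^K(\one) = f^K(\one) = \zero$. No step here is a genuine obstacle; the only points needing care are the index bookkeeping in the induction of \ref{item:nilpotent_minus} and the $\Lattice{L}$-versus-$\Opposite{\Lattice{L}}$ identification used in \ref{item:nilpotent_plus}.
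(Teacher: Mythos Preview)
Your proof is correct and follows essentially the same approach as the paper: the paper's proof of \ref{item:nilpotent_minus} is precisely the induction ${\Residual{\psi}}^i(\zero) \ge \psi^{K-i}(\one)$ that you spell out, \ref{item:nilpotent_plus} is handled by duality, and \ref{item:nilpotent_graph} is derived from the first two items exactly as you do (via $f = \Residuated{\Residual{f}}$). You have simply made explicit the details the paper leaves to the reader.
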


\begin{proof}
\begin{enumerate}
    \item \label{item:nilpotent_minus_proof}
    By induction on $0 \le i \le K$, it is easily shown that ${ \Residual{\psi} }^i ( \zero ) \ge \psi^{K-i}( \one )$. In particular, for $i = K$ we obtain ${ \Residual{\psi} }^K ( \zero ) = \one$.
    
    \item \label{item:nilpotent_plus_proof}
    The proof is similar as the first item.
    
    \item \label{item:nilpotent_graph_proof}
    Easily follows from the first two items.
\end{enumerate}
\end{proof}

Secondly, we focus on asymptotically meet-nilpotent graphs. Those are fixed point-free: if $f(x) = x$ some $x \ne \zero$, then $\bigmeet f^i( \one ) \ge \bigmeet f^i(x) = x > \zero$.

Let us classify the asymptotically meet-nilpotent digraphs. A collection of walks $(W_i:  i \in \N)$ in $D$ with a common starting vertex where $W_i$ has length $i$ for all $i$ is called an \Define{infinite pyramid}. 

\begin{observation} \label{observation:asymptotically_nilpotent_digraphs}
A digraph $D$ is asymptotically meet-nilpotent if and only if it has no infinite pyramid.
\end{observation}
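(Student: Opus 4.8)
The plan is to unwind both sides of the equivalence in terms of walks, using the same translation between powers of the in-neighbourhood function and walks that was already exploited in Observation \ref{observation:nilpotent_digraphs}. Recall from that proof that for any $i \in \N$, the set $D^i(V)$ consists exactly of those vertices $v$ from which there starts a walk of length $i$. Hence $\bigmeet_{i \in \N} D^i(V) = \bigcap_{i \in \N} D^i(V)$ is the set of vertices $v$ such that, for every $i$, there is a walk of length $i$ starting at $v$. So the digraph $D$ \emph{fails} to be asymptotically meet-nilpotent precisely when this intersection is non-empty, i.e. when there is a vertex $v$ from which walks of every finite length emanate. I need to show that this condition is equivalent to the existence of an infinite pyramid, i.e. a family $(W_i : i \in \N)$ of walks sharing a common starting vertex with $W_i$ of length $i$.

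The forward direction is immediate: if $v \in \bigcap_{i \in \N} D^i(V)$, then for each $i$ pick a walk $W_i$ of length $i$ starting at $v$; the family $(W_i : i \in \N)$ is an infinite pyramid. Conversely, an infinite pyramid $(W_i : i \in \N)$ with common starting vertex $v$ witnesses $v \in D^i(V)$ for every $i$ (the length-$0$ case being the trivial walk $(v)$, so $v \in V = D^0(V)$ as well), hence $v \in \bigcap_{i \in \N} D^i(V)$, and $D$ is not asymptotically meet-nilpotent. Combining the two directions: $D$ is asymptotically meet-nilpotent $\iff$ $\bigcap_{i \in \N} D^i(V) = \emptyset$ $\iff$ $D$ has no infinite pyramid.

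I do not expect any real obstacle here; the statement is essentially a reformulation, and the only thing to be careful about is the bookkeeping of walk lengths and the $i = 0$ base case, plus making explicit the identification $\bigmeet S = \bigcap S$ and $\zero = \emptyset$ in the power set lattice. One may optionally remark that this observation parallels the finite case (Observation \ref{observation:nilpotent_digraphs}: no walk of length $K$) and the fixed point-free case (Observation \ref{observation:strongly_acyclic_digraphs}: no infinite walk), with ``infinite pyramid'' being the natural intermediate notion — an infinite walk would give an infinite pyramid by truncation, consistent with the fact, noted just before the observation, that asymptotically meet-nilpotent graphs are fixed point-free.
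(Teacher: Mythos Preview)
Your proof is correct and follows essentially the same approach as the paper: both unwind $\bigcap_{i\in\N} D^i(V)$ as the set of vertices admitting walks of every length, then identify non-emptiness of this set with the existence of an infinite pyramid. The paper phrases the converse direction slightly differently (in terms of a maximum walk length from $v$, which exists by truncation once some length is missing), but the underlying argument is the same.
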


\begin{proof}
Suppose there exists a family $(W_i : i \in \N)$ of walks starting at $v$ with length $i$ for all $i \in \N$, and let $w_i$ be the end point of $W_i$. Then $v \in D^i( w_i ) \subseteq D^i( V )$ for all $i$, thus $\bigcap_i D^i( V ) \supseteq \{ v \}$ and $D$ is not asymptotically meet-nilpotent.

Conversely, if $D$ has no infinite pyramid, then for every vertex $v$, then $v \notin D^{ d }( V )$ for any $d$ greater than the maximum length of a walk starting at $v$. Therefore, $\bigcap_i D^i( V ) = \emptyset$, i.e. $D$ is asymptotically meet-nilpotent.
\end{proof}

\begin{example} \label{example:stairways}
We introduce two digraphs which shall be useful in the sequel of this paper.

The \Define{stairway to heaven} has vertex set $V = \{ a \} \cup \{ v_i^j : j, i \in \N, j \le i  \}$ and arc set $E = \{ (v_i^0, a) : i \in \N \} \cup \{ ( v_i^j, v_i^{j-1} ) : 1 \le j \le i \}$ (all arcs point towards $a$). Then the stairway to heaven is asymptotically meet-nilpotent, as there is no infinite pyramid.

The \Define{stairway to hell} has vertex set $V = \{ a \} \cup \{ v_i^j : j, i \in \N, j \le i  \}$ and arc set $E = \{ (a, v_i^0) : i \in \N \} \cup \{ ( v_i^{j-1}, v_i^j ) : 0 \le j \le i-1 \}$ (all arcs point away from $a$). Then the stairway to hell is not asymptotically meet-nilpotent, as it has the infinite pyramid $(W_i : i \in \N)$ where $W_0 = (a)$ and $W_{i+1} = (a, v_i^0, \dots, v_i^i)$ for all $i \ge 1$.

Those digraphs are illustrated in Figure \ref{figure:stairways}.
\end{example}

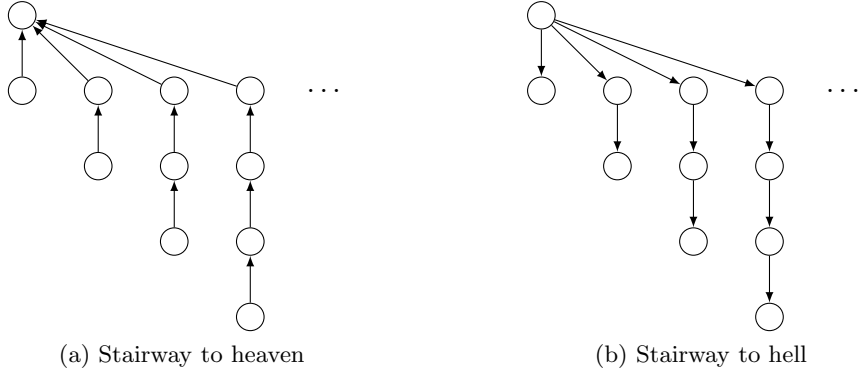
\begin{figure}
    \centering
\subfloat[Stairway to heaven]{
\begin{tikzpicture}
    
\begin{scope}[yshift = 0cm]
    \node[draw, circle] (p00) at (0,0) {};
    \node[draw, circle] (p11) at (0,-1) {};
    \node[draw, circle] (p21) at (1,-1) {};
    \node[draw, circle] (p22) at (1,-2) {};
    \node[draw, circle] (p31) at (2,-1) {};
    \node[draw, circle] (p32) at (2,-2) {};
    \node[draw, circle] (p33) at (2,-3) {};
    \node[draw, circle] (p41) at (3,-1) {};
    \node[draw, circle] (p42) at (3,-2) {};
    \node[draw, circle] (p43) at (3,-3) {};
    \node[draw, circle] (p44) at (3,-4) {};
    
    \node (p5) at (4,-1) {$\dots$};

    \draw[-latex] (p44) -- (p43);
    \draw[-latex] (p43) -- (p42);
    \draw[-latex] (p42) -- (p41);
    \draw[-latex] (p41) -- (p00);
    \draw[-latex] (p33) -- (p32);
    \draw[-latex] (p32) -- (p31);    
    \draw[-latex] (p31) -- (p00);
    \draw[-latex] (p22) -- (p21);
    \draw[-latex] (p21) -- (p00);
    \draw[-latex] (p11) -- (p00);
\end{scope}
\end{tikzpicture}
    
    

}%
\hspace{2cm}\subfloat[Stairway to hell]{
\begin{tikzpicture}
    
\begin{scope}[yshift = 0cm]
    \node[draw, circle] (p00) at (0,0) {};
    \node[draw, circle] (p11) at (0,-1) {};
    \node[draw, circle] (p21) at (1,-1) {};
    \node[draw, circle] (p22) at (1,-2) {};
    \node[draw, circle] (p31) at (2,-1) {};
    \node[draw, circle] (p32) at (2,-2) {};
    \node[draw, circle] (p33) at (2,-3) {};
    \node[draw, circle] (p41) at (3,-1) {};
    \node[draw, circle] (p42) at (3,-2) {};
    \node[draw, circle] (p43) at (3,-3) {};
    \node[draw, circle] (p44) at (3,-4) {};
    
    \node (p5) at (4,-1) {$\dots$};

    \draw[-latex] (p43) -- (p44);
    \draw[-latex] (p42) -- (p43);
    \draw[-latex] (p41) -- (p42);
    \draw[-latex] (p00) -- (p41);
    \draw[-latex] (p32) -- (p33);
    \draw[-latex] (p31) -- (p32);    
    \draw[-latex] (p00) -- (p31);
    \draw[-latex] (p21) -- (p22);
    \draw[-latex] (p00) -- (p21);
    \draw[-latex] (p00) -- (p11);
\end{scope}
\end{tikzpicture}
    
    

}
    \caption{The stairway to heaven is is asymptotically meet-nilpotent; the stairway to hell isn't. Both are fixed point-free.}
    \label{figure:stairways}
\end{figure}

We now characterise the lattices for which all fixed point-free graphs are meet-nilpotent.

\begin{proposition} \label{proposition:strongly_acyclic_nilpotent}
$\Lattice{L}$ is a DCC lattice if and only if all fixed point-free graphs on $\Lattice{L}$ are meet-nilpotent.
\end{proposition}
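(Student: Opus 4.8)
The plan is to prove the two implications separately: the forward direction is a short chain-condition argument, while the converse requires exhibiting a specific graph. For the forward direction, assume $\Lattice{L}$ is DCC and let $f$ be a fixed point-free graph. Since $f$ is monotone and $f(\one)\le\one$, the sequence $(f^i(\one):i\in\N)$ is a descending chain, so by the DCC it stabilises: there is $n$ with $f^{n+1}(\one)=f^n(\one)$, hence $f^n(\one)\in\Fix(f)=\{\zero\}$. Thus $f^n(\one)=\zero$, i.e.\ $f$ is $n$-meet-nilpotent. (Only monotonicity of $f$ is used here.)

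For the converse I would argue by contraposition: assuming $\Lattice{L}$ is not DCC, I build a graph that is fixed point-free but not meet-nilpotent. Extract an infinite strictly descending chain and, prepending $\one$ if necessary, write it as $\one=b_0>b_1>b_2>\cdots$; set $m=\bigmeet_i b_i$ and note $m<b_i$ for all $i$ (otherwise $b_i\le b_{i+1}$). For $x\in L$ the set $\{i\in\N:x\le b_i\}$ is a nonempty down-set of $\N$ (it contains $0$, as $b_0=\one$); it equals $\N$ exactly when $x\le m$, and otherwise equals $\{0,\dots,k(x)\}$ for a unique $k(x)\in\N$. Define $f(x)=b_{k(x)+1}$ if $x\not\le m$, and $f(x)=\zero$ if $x\le m$. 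Since $k(b_i)=i$ we get $f(b_i)=b_{i+1}$, so $f^i(\one)=b_i\ne\zero$ for all $i$ and $f$ is not meet-nilpotent. If $f(x)=x$ then either $x\le m$, forcing $x=\zero$, or $x=b_{k(x)+1}$, which gives the contradiction $k(x)=k(b_{k(x)+1})=k(x)+1$; hence $f$ is fixed point-free.

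The main obstacle is to check that $f$ is a graph, i.e.\ $f(\bigjoin S)=\bigjoin f(S)$ for all $S\subseteq L$. Write $s^\ast=\bigjoin S$. If $s^\ast\le m$, every $s\in S$ lies in $m^\downarrow$, so both sides equal $\zero$ (this covers $S=\emptyset$, giving $f(\zero)=\zero$). Otherwise put $k=k(s^\ast)$, so $s^\ast\le b_k$ but $s^\ast\not\le b_{k+1}$. Each $s\in S$ satisfies $s\le b_k$, hence $k(s)\ge k$ and $f(s)\le b_{k+1}$, so $\bigjoin f(S)\le b_{k+1}=f(s^\ast)$. For the reverse inequality, suppose no $s\in S$ had $k(s)=k$; then each $s\in S$ either lies in $m^\downarrow$ or has $k(s)\ge k+1$, so in either case $s\le b_{k+1}$, forcing $s^\ast=\bigjoin S\le b_{k+1}$, a contradiction. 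Hence some $s_0\in S$ has $k(s_0)=k$, whence $f(s_0)=b_{k+1}$ and $\bigjoin f(S)\ge b_{k+1}$. This shows $f$ is a graph and completes the proof.
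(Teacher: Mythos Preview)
Your proof is correct and follows essentially the same approach as the paper. In fact your graph $f$ coincides with the paper's construction $f(x)=\bigjoin(D\setminus x^\uparrow)$ for $D=(b_i)_i$, since $\bigjoin\{b_i:x\not\le b_i\}=b_{k(x)+1}$ when $x\not\le m$ and $=\zero$ otherwise; the paper verifies the join-preservation via the set identity $D\setminus(\bigjoin S)^\uparrow=\bigcup_{s\in S}D\setminus s^\uparrow$, which is slightly slicker than your case analysis but amounts to the same argument.
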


\begin{proof}
Let $\Lattice{L}$ be DCC. If $f$ is fixed point-free, then it is easily proved (by induction on $i$) that either $f^i(\one) = \zero$ or $f^{i+1}(\one) < f^i(\one)$. Thus the sequence $( f^i(\one) : i \in \N, f^i(\one) > 0 )$ forms a descending chain, and hence is finite, say of length $K$. Therefore, there exists $K$ such that $f^K( \one ) = \zero$.

If $L$ has an infinite descending chain $D = ( d_i : i \in \N )$ with $d_0 = \one$, then let 
\[
    f(x) = \bigjoin D \setminus x^\uparrow.
\]
We claim that $f$ is a graph, fixed point-free, but not meet-nilpotent.
\begin{itemize}
    \item Graph. Let $S \subseteq L$ and $q = \bigjoin S^\uparrow$, then we have $D \setminus q^\uparrow = \bigcup_{s \in S} D \setminus s^\uparrow$. We obtain
    \[
        f \left( \bigjoin S \right) = \bigjoin D \setminus q^\uparrow = \bigjoin_{s \in S} \bigjoin D \setminus s^\uparrow = \bigjoin f(S).
    \]
    
    \item Fixed point-free. Suppose $f(y) = y$ for $y \ne \zero$. Then $y \in D$, say $y = d_i$ and $f(y) = d_{i+1} < y$.
    
    \item Not meet-nilpotent. $f( d_i ) = d_{i+1}$ for all $i \in \N$, thus $f^K( \one ) = d_K > \zero$ for all $K \in \N$.
\end{itemize}
\end{proof}

Using a similar proof, we can characterise complete lattices where all fixed point-free graphs are asymptotically meet-nilpotent.

\begin{corollary} \label{corollary:strongly_acyclic_asymptotically_nilpotent}
Every descending chain $D = (d_i : i \in \N)$ in $\Lattice{L}$ satisfies $\bigmeet D = \zero$ if and only if all fixed point-free graphs on $\Lattice{L}$ are asymptotically meet-nilpotent.
\end{corollary}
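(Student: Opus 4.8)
The plan is to mimic the proof of Proposition \ref{proposition:strongly_acyclic_nilpotent}, replacing ``finite descending chain'' by ``descending chain meeting to $\zero$'' throughout. So I would prove the two directions separately.

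\textbf{Forward direction.} Suppose every descending chain $D = (d_i : i \in \N)$ satisfies $\bigmeet D = \zero$, and let $f$ be a fixed point-free graph. I would argue exactly as in Proposition \ref{proposition:strongly_acyclic_nilpotent} that, by induction on $i$, either $f^i(\one) = \zero$ or $f^{i+1}(\one) < f^i(\one)$; the inductive step uses that $f$ is monotone and that $f$ has no nonzero fixed point (here one invokes the equivalence $\Fix(f) = \{\zero\} \iff (f(x) \ge x \implies x = \zero)$ from the proposition preceding Corollary \ref{corollary:f_le_g_fixed_point-free}). If $f^i(\one) = \zero$ for some $i$ then $f$ is meet-nilpotent and a fortiori asymptotically meet-nilpotent. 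Otherwise $(f^i(\one) : i \in \N)$ is a (strictly) descending chain, so by hypothesis $\bigmeet_{i \in \N} f^i(\one) = \zero$, which is precisely asymptotic meet-nilpotence.

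\textbf{Reverse direction.} Suppose some descending chain $D = (d_i : i \in \N)$ has $\bigmeet D = m > \zero$; I would like to produce a fixed point-free graph that is not asymptotically meet-nilpotent. I would reuse the construction $f(x) = \bigjoin (D \setminus x^\uparrow)$ (after normalising so $d_0 = \one$, which is harmless since prepending $\one$ to the chain changes neither its being descending nor its meet). The same three bullet-point verifications from Proposition \ref{proposition:strongly_acyclic_nilpotent} apply almost verbatim: $f$ is a graph by the distributivity identity $D \setminus q^\uparrow = \bigcup_{s \in S} D \setminus s^\uparrow$ for $q = \bigjoin S$; $f$ is fixed point-free since a nonzero fixed point $y$ would lie in $D$, say $y = d_i$, forcing $f(y) = d_{i+1} < y = f(y)$, a contradiction; and $f$ is not asymptotically meet-nilpotent because $f(d_i) = d_{i+1}$ gives $f^i(\one) = d_i$, so $\bigmeet_{i \in \N} f^i(\one) = \bigmeet D = m > \zero$.

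\textbf{Anticipated obstacle.} The only genuinely delicate point is whether the construction in the reverse direction still works when the descending chain is not strictly decreasing (some $d_i = d_{i+1}$); this is why I would first pass to the subchain of distinct elements, or simply note that the arguments above only used $d_{i+1} \le d_i$ and $\bigmeet D = m$. A second subtlety is the case $d_i = d_{i+1}$ in the ``fixed point-free'' bullet, where $f(d_i) = d_{i+1}$ need not be strictly below $d_i$; passing to the strictly descending subchain (which still meets to $m > \zero$, being cofinal in $D$) resolves this cleanly. Everything else is routine lattice manipulation identical to the cited proposition.
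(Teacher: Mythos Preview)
Your proposal is correct and follows essentially the same route the paper intends: it explicitly says the corollary is proved ``using a similar proof'' to Proposition~\ref{proposition:strongly_acyclic_nilpotent}, and your two directions mirror that argument exactly, with the appropriate replacement of ``eventually zero'' by ``meet equal to $\zero$''. Your anticipated obstacle is well spotted; note that under the non-strict reading the corollary would actually fail (any DCC lattice with more than one element gives a counterexample via a constant chain), so the statement must be read with strictly descending chains, and then your fix of passing to the strict subchain is automatic.
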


We now provide different characterisations and properties of asymptotically meet-nilpotent graphs, which are summarised in Figure \ref{figure:asymptotically_meet-nilpotent}.

\begin{figure}
    \resizebox{\textwidth}{!}{
    \begin{tikzpicture}
        \node (c) at (0,0) [draw,thick,minimum width=2cm,minimum height=3cm,  text width = 3.2cm, align=center] {$f$ asymptotically meet-nilpotent\\~\\ $C(f) = \{ \zero \}$\\ $B(f) = \{ \zero \}$};
    
        \node (l) at (6,0) [draw,thick,minimum width=2cm,minimum height=3cm,  text width = 3.2cm, align=center] {$\Pyr(f) = \{ \zero \}$};
    
        \node (par) at (12,0) [draw,thick,minimum width=2cm,minimum height=3cm,  text width = 3.2cm, align=center] {$f$ fixed point-free\\~\\ $\Par(f) = \{ \zero \}$ \\ $\Fix(f) = \{ \zero \}$};

        \draw[-latex] (c) to node[above] {\tiny L.\ref{lemma:asymptotically_meet-nilpotent_Pyr}} (l);
        \draw[-latex] (l) -- (par);
    
        \draw[-latex, color=red, dotted] (l) to [bend right=15] node[above] {\tiny Ex.\ref{example:counterexample_pyramidal}} (c);
        \draw[-latex, color=green] (l) to [bend left=15] node[below] {\tiny O.\ref{observation:C(D)}} (c);
    
        \draw[-latex, color=red, dotted] (par) to [bend right=15] node[above] {\tiny Ex.\ref{example:stairways}} (l);
    \end{tikzpicture}
    }
    \caption{Implications for different kinds of fixed point-free graphs. The \textbf{\textcolor{red}{red dotted arrows}} represent counterexamples, the \textbf{\textcolor{green}{green arrows}} represent implications that hold for $\Lattice{L}$ being a power set lattice, while the \textbf{black arrows} represent unconditional implications.}
    \label{figure:asymptotically_meet-nilpotent}
\end{figure}
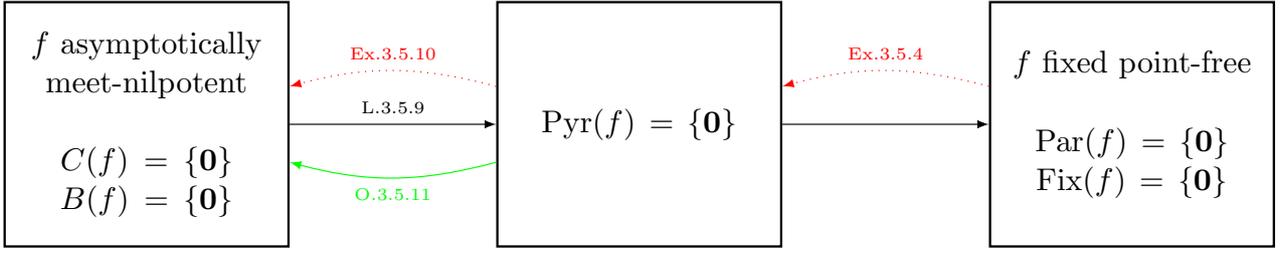

We first explain and prove the contents of the first box on Figure \ref{figure:asymptotically_meet-nilpotent}. For all $\phi: L \to L$, we denote
\begin{align*}
    B( \phi ) &= \bigcap_{i \in \N} \phi^i( L )^\downarrow,\\
    C( \phi ) &= \left( \bigmeet_{i \in \N} \phi^i( \one )  \right)^\downarrow.
\end{align*}

\begin{proposition} \label{proposition:asymptotically_meet-nilpotent_graph}
Let $f$ be a graph, then the following are equivalent.
\begin{equivalent}
    \item \label{item:asymptotically_meet-nilpotent_graph_A}
    $f$ is asymptotically meet-nilpotent, i.e. $\bigmeet_{i \in \N} f^i( \one ) = \zero$;

    \item \label{item:asymptotically_meet-nilpotent_graph_B}
    $B(f) = \{ \zero \}$.

    \item \label{item:asymptotically_meet-nilpotent_graph_C}
    $C(f) =  \{ \zero \}$.
\end{equivalent}
\end{proposition}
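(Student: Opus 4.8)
The plan is to establish the two equivalences $\ref{item:asymptotically_meet-nilpotent_graph_A} \iff \ref{item:asymptotically_meet-nilpotent_graph_C}$ and $\ref{item:asymptotically_meet-nilpotent_graph_B} \iff \ref{item:asymptotically_meet-nilpotent_graph_C}$ separately, the first being essentially a restatement of a definition and the second following from the stronger fact that $B(f) = C(f)$ for every graph $f$.

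For $\ref{item:asymptotically_meet-nilpotent_graph_A} \iff \ref{item:asymptotically_meet-nilpotent_graph_C}$, I would first record the trivial observation that for any $z \in L$ we have $z^\downarrow = \{ \zero \}$ if and only if $z = \zero$: the ``if'' direction is clear, and for the ``only if'' direction note that $z \le z$ forces $z \in z^\downarrow = \{ \zero \}$. Applying this with $z = \bigmeet_{i \in \N} f^i(\one)$ and recalling that $C(f) = \left( \bigmeet_{i \in \N} f^i(\one) \right)^\downarrow$ gives the equivalence immediately.

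For $\ref{item:asymptotically_meet-nilpotent_graph_B} \iff \ref{item:asymptotically_meet-nilpotent_graph_C}$, I would prove that $B(f) = C(f)$. The one genuine step is the claim that $f^i(L)^\downarrow = \left( f^i(\one) \right)^\downarrow$ for every $i \in \N$. Here $f^i$ is a graph (graphs are closed under composition), hence monotone, so $f^i(x) \le f^i(\one)$ for all $x \in L$; this shows $f^i(L) \subseteq \left( f^i(\one) \right)^\downarrow$, and taking downward closures yields $f^i(L)^\downarrow \subseteq \left( f^i(\one) \right)^\downarrow$, while the reverse inclusion holds because $f^i(\one) \in f^i(L)$. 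Consequently $B(f) = \bigcap_{i \in \N} f^i(L)^\downarrow = \bigcap_{i \in \N} \left( f^i(\one) \right)^\downarrow$, and in any complete lattice $\bigcap_{i} a_i^\downarrow = \{ y : y \le a_i \text{ for all } i \} = \left( \bigmeet_i a_i \right)^\downarrow$; applying this with $a_i = f^i(\one)$ gives $B(f) = \left( \bigmeet_{i \in \N} f^i(\one) \right)^\downarrow = C(f)$. Combining with the first equivalence then shows all three conditions are equivalent.

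There is no real obstacle here: the proof is a matter of unwinding the definitions of $B$ and $C$ carefully. The only point that requires a moment's thought is that $B(f)$ is defined via the images $f^i(L)$ rather than the elements $f^i(\one)$, so the argument pivots on the observation that downward-closing the $i$-th image collapses it to the single principal ideal generated by $f^i(\one)$ — and this uses nothing about $f$ beyond monotonicity.
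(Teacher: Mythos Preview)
Your proof is correct, and at the top level it follows the paper's structure exactly: the equivalence $\ref{item:asymptotically_meet-nilpotent_graph_A} \iff \ref{item:asymptotically_meet-nilpotent_graph_C}$ is dispatched by the observation that $z^\downarrow = \{\zero\}$ iff $z = \zero$, and $\ref{item:asymptotically_meet-nilpotent_graph_B} \iff \ref{item:asymptotically_meet-nilpotent_graph_C}$ is obtained from the stronger fact $B(f) = C(f)$, which the paper records as Lemma~\ref{lemma:B=C_graphs} (for monotone $f$). Where you differ is in the proof of that lemma. The paper introduces an auxiliary set $D(\phi) = \{ x : \exists (x_i), \ x = \bigmeet_i \phi^i(x_i) \}$, shows $B(\phi) = D(\phi)$ for arbitrary $\phi$, and then uses monotonicity to sandwich $D(f)$ and $C(f)$. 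Your route is more direct: you observe that monotonicity forces $f^i(L)^\downarrow = \bigl(f^i(\one)\bigr)^\downarrow$, after which $B(f) = \bigcap_i \bigl(f^i(\one)\bigr)^\downarrow = \bigl(\bigmeet_i f^i(\one)\bigr)^\downarrow = C(f)$ is immediate. Your argument is shorter and avoids the detour through $D$; the paper's formulation via $D$ has the minor advantage of isolating a description of $B(\phi)$ valid for arbitrary (not necessarily monotone) $\phi$, but for the proposition at hand that generality is not used.
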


\begin{lemma} \label{lemma:B=C_graphs}
If $f$ is monotone, then $B( f ) = C( f )$.
\end{lemma}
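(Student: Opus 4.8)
The plan is to establish the two inclusions $C(f) \subseteq B(f)$ and $B(f) \subseteq C(f)$ separately; only the second will use monotonicity. Throughout, write $m = \bigmeet_{i \in \N} f^i(\one)$, so that $C(f) = m^\downarrow$ by definition.

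For $C(f) \subseteq B(f)$, I would first observe that $B(f) = \bigcap_{i \in \N} f^i(L)^\downarrow$ is an intersection of downsets, hence itself a downset; therefore it suffices to check that $m \in B(f)$. But for every $i \in \N$ we have $m \le f^i(\one)$ while $f^i(\one) \in f^i(L)$, so $m \in f^i(L)^\downarrow$; intersecting over all $i$ gives $m \in B(f)$, and since $B(f)$ is a downset this yields $m^\downarrow \subseteq B(f)$, i.e. $C(f) \subseteq B(f)$. This step requires no hypothesis on $f$.

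For $B(f) \subseteq C(f)$, take any $x \in B(f)$. Then for each $i \in \N$ there is some $y_i \in L$ with $x \le f^i(y_i)$. Since $y_i \le \one$ and $f$ is monotone — hence so is $f^i$, by a trivial induction — we get $f^i(y_i) \le f^i(\one)$, and therefore $x \le f^i(\one)$ for every $i$. Taking the meet over $i$ gives $x \le m$, i.e. $x \in m^\downarrow = C(f)$. This is the single place where monotonicity is invoked.

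There is essentially no obstacle here; the statement is a bookkeeping exercise with meets and downsets. The only two things one must be careful about are that $B(f)$ is genuinely a downset, so that the first inclusion reduces to checking membership of the single element $m$, and that monotonicity of $f$ propagates to all iterates $f^i$ — both entirely routine.
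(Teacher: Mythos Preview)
Your proof is correct and uses the same core idea as the paper --- monotonicity yields $f^i(y_i) \le f^i(\one)$, whence $x \le \bigmeet_i f^i(\one)$. The paper takes a small detour through an auxiliary description $D(\phi) = \{x : \exists (x_i),\ x = \bigmeet_i \phi^i(x_i)\}$ and first shows $B(\phi) = D(\phi)$ before applying this observation, but your direct argument is cleaner and entirely equivalent.
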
 

\begin{proof}
For any $\phi : L \to L$, let $D(\phi) = \{ x \in L : \exists (x_i : i \in \N), x = \bigmeet_{i \in \N} \phi^i( x_i )\}$. We prove that $B( \phi ) = D( \phi )$.
We have
\[
    B( \phi ) = \bigcap_{i \in \N} \phi^i( L )^\downarrow = \{ x \in L : \exists (x_i : i \in \N), x \le \phi^i( x_i ) \ \forall i \in \N \}.
\]
Let $x \in B( \phi )$ with $x \le \phi^i( x_i )$ for all $i \in \N$. Define $x'_0 = x$ and $x'_i = x_i$ for all $i \ge 1$, then $x = \bigmeet_{i \in \N} \phi^i( x'_i )$, hence $x \in D( \phi )$. Conversely, let $y \in D( \phi )$ with $y = \bigmeet_{i \in \N} \phi^i( y_i )$, then $y \le \phi^i( y_i )$ for all $i \in \N$ and $y \in D( \phi )$.

We now prove the lemma. Let $b \in B( f ) = D( f )$, then there exists $(b_i : i \in \N)$ such that
\[
    b = \bigmeet_{i \in \N} f^i( b_i ) \le \bigmeet_{i \in \N} f^i( \one ),
\]
and hence $b \in C( f )$. Conversely, let $c \in C( f )$, then $c = c \meet \bigmeet_{i \in \N} f^i( \one )$. Therefore $c = \bigmeet_{i \in \N} f^i( c_i )$ for $c_0 = 0$ and $c_i = \one$ for all $i \ge 1$, thus $c \in B( f )$.
\end{proof}

\begin{proof}[Proof of Proposition \ref{proposition:asymptotically_meet-nilpotent_graph}]
$\ref{item:asymptotically_meet-nilpotent_graph_A} \iff \ref{item:asymptotically_meet-nilpotent_graph_C}$. For all $x \in L$, $x = \zero$ if and only if $x^\downarrow = \{ \zero \}$. Applying this to $x = \bigmeet_{i \in \N} f^i( 
\one )$ yields the desired equivalence.

$\ref{item:asymptotically_meet-nilpotent_graph_B} \iff \ref{item:asymptotically_meet-nilpotent_graph_C}$. Follows from Lemma \ref{lemma:B=C_graphs}.
\end{proof}

We now prove the relationships between the first two boxes on Figure \ref{figure:asymptotically_meet-nilpotent}. We begin with the black and red arrows.

\begin{lemma} \label{lemma:asymptotically_meet-nilpotent_Pyr}
Let $f$ be graph. If $f$ is asymptotically meet-nilpotent, then $\Pyr( f ) = \{ \zero \}$. 
\end{lemma}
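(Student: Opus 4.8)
The plan is to unwind the definition of $\Pyr(f)$ and compare it directly with the meet $\bigmeet_{i \in \N} f^i(\one)$. Recall that by definition $\Pyr(f) = \bigcap_{i \in \N} f^i(L)$, so I want to show this intersection is exactly $\{\zero\}$. One inclusion is immediate: since $f$ is a graph, $f(\zero) = \zero$, hence $f^i(\zero) = \zero$ and $\zero \in f^i(L)$ for every $i$, so $\zero \in \Pyr(f)$.

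For the reverse inclusion, I would take an arbitrary $x \in \Pyr(f)$. For each $i \in \N$ there is a witness $x_i \in L$ with $f^i(x_i) = x$ (this is exactly what membership in $f^i(L)$ says, and is the sequence condition in the definition of a pyramidal point). Since $x_i \le \one$ and $f$ — hence $f^i$ — is monotone (graphs are monotone), we get $x = f^i(x_i) \le f^i(\one)$ for all $i$. Taking the meet over all $i \in \N$ yields
\[
    x \le \bigmeet_{i \in \N} f^i(\one) = \zero,
\]
using the hypothesis that $f$ is asymptotically meet-nilpotent. Thus $x = \zero$, and combining with the previous paragraph $\Pyr(f) = \{\zero\}$.

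There is essentially no obstacle here; the only thing to be a little careful about is that the definition of pyramidal point ranges the index $i$ over $\N_1$ rather than all of $\N$, but this is harmless since $f^0(\one) = \one$ is the top element and contributes nothing to the meet, so $\bigmeet_{i \in \N_1} f^i(\one) = \bigmeet_{i \in \N} f^i(\one)$. In fact the argument shows the slightly stronger fact $\Pyr(f) \subseteq C(f)$ for any monotone $f$, so the statement is just the special case $C(f) = \{\zero\}$, which is equivalent to asymptotic meet-nilpotence by Proposition \ref{proposition:asymptotically_meet-nilpotent_graph}.
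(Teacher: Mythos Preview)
Your proof is correct and essentially identical to the paper's: both take $x \in \Pyr(f)$ with witnesses $f^i(x_i) = x$, use monotonicity to get $x \le f^i(\one)$, and conclude $x \le \bigmeet_i f^i(\one) = \zero$. Your additional remarks (that $\zero \in \Pyr(f)$, the indexing issue, and the observation $\Pyr(f) \subseteq C(f)$) are all correct but not needed for the bare statement.
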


\begin{proof}
Let $f$ be asymptotically meet-nilpotent and let $x \in \Pyr( f )$, with $f^i( x_i ) = x$ for all $i \in \N$. We then have
\[
    \zero = \bigmeet_{i \in \N} f^i( \one ) \ge \bigmeet_{i \in \N} f^i( x_i ) = x.
\]
\end{proof}

\begin{example} \label{example:counterexample_pyramidal}
On the other hand, there exists $\Lattice{L}$ and $f$ such that $\Pyr( f ) = \{ \zero \}$ and $f$ is not asymptotically meet-nilpotent. Let $L = [0,1]$ equipped with the natural order, $u \in (0,1)$ and
\[
    f(x) = \begin{cases}
        \frac{x}{2} &\text{if } x \le u \\
        \frac{ x + u }{ 2 } & \text{otherwise}.
    \end{cases}
\]
It is easy to check the following properties: $f$ is a graph; $\Pyr( f ) = \{ 0 \}$; $\bigmeet_{i \in \N} f^i(1) = u$.
\end{example}

The result on power set lattices is transparent.

\begin{observation} \label{observation:C(D)}
For any digraph $D$, we have
\[
    C(D) = \Pyr(D)^\downarrow.
\]
Therefore, $D$ is asymptotically meet-nilpotent if and only if $\Pyr(D) = \{ \emptyset \}$.
\end{observation}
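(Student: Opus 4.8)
The plan is to show that $C(D)$ and $\Pyr(D)^\downarrow$ are both principal ideals of $\Lattice{\Powerset}(V)$ with the same top element, namely
\[
    p := \bigmeet_{i \in \N} D^i(\one) = \bigcap_{i \in \N} D^i(V).
\]
That $C(D) = p^\downarrow$ is just the definition of $C$. To handle the other side I would first observe that $\Pyr(D)$ is closed under arbitrary joins: given a family $(x^{(\alpha)})_\alpha \subseteq \Pyr(D)$, choose for each $i$ and $\alpha$ a witness $x^{(\alpha)}_i$ with $D^i(x^{(\alpha)}_i) = x^{(\alpha)}$; since $D$, and hence each $D^i$, preserves joins, $D^i\big(\bigjoin_\alpha x^{(\alpha)}_i\big) = \bigjoin_\alpha x^{(\alpha)}$, so $\bigjoin_\alpha x^{(\alpha)} \in \Pyr(D)$. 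In particular $\bigjoin\Pyr(D) \in \Pyr(D)$, whence $\Pyr(D)^\downarrow = \big(\bigjoin\Pyr(D)\big)^\downarrow$, and the whole statement reduces to the identity $\bigjoin\Pyr(D) = p$.

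One inequality is immediate: any $x \in \Pyr(D)$ satisfies $x = D^i(x_i) \le D^i(\one)$ for every $i$ by monotonicity, so $x \le p$ and therefore $\bigjoin\Pyr(D) \le p$. Everything then comes down to proving $p \in \Pyr(D)$, i.e. that $p$ lies in $D^i(\Powerset(V))$ for every $i$. Here I would use the residuated machinery of Theorem \ref{theorem:residuated}: the composite $D^i$ is a graph, so it has a residual $\Residual{(D^i)}$ with $D^i\Residual{(D^i)} \le \id$ and $D^i\Residual{(D^i)}D^i = D^i$; consequently $D^i\Residual{(D^i)}(p) \le p$ always, with equality precisely when $p \in D^i(\Powerset(V))$. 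So the task is to prove $p \le D^i\Residual{(D^i)}(p)$. Unwinding this in the power set lattice via Observation \ref{observation:residual_digraphs}, where $\Residual{(D^i)}(p) = \{ w : D^i(w) \subseteq p \}$ and (as in the proof of Observation \ref{observation:asymptotically_nilpotent_digraphs}) $D^i(w)$ is the set of vertices from which there is a walk of length $i$ to $w$, what must be shown is that every vertex $v \in p$ has a walk of length $i$ to some vertex $w$ such that every vertex having a walk of length $i$ to $w$ again lies in $p$.

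This membership step is what I expect to be the main obstacle, and it is exactly where one must exploit that the lattice is a power set lattice and not an arbitrary complete lattice (compare Example \ref{example:counterexample_pyramidal}, where the corresponding implication fails). When $D$ is finite it is automatic: $D^0(V) \supseteq D^1(V) \supseteq \cdots$ stabilises at $n = |V|$, so $p = D^n(V) = D^i\big(D^{n-i}(V)\big) \in D^i(\Powerset(V))$ for all $i \le n$, and $p = D^i(V) \in D^i(\Powerset(V))$ for $i > n$. In the infinite case one has to trace a walk of length $i$ from $v$ deep enough into the ``eventual'' part of the digraph carried by $p$ and arrange that it terminates at a vertex whose length-$i$ preimage has been pruned down inside $p$; I would expect this to require a careful K\"onig-type walk-tracing argument. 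Granting the identity $\bigjoin\Pyr(D) = p$, we get $C(D) = p^\downarrow = \big(\bigjoin\Pyr(D)\big)^\downarrow = \Pyr(D)^\downarrow$, and the final sentence follows at once: since $\emptyset \in \Pyr(D)$ always, $\Pyr(D) = \{\emptyset\}$ iff $\Pyr(D)^\downarrow = \{\emptyset\}$ iff $C(D) = \{\emptyset\}$ iff $p = \emptyset$, i.e. iff $D$ is asymptotically meet-nilpotent.
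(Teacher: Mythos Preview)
Your reduction is correct: $C(D)=p^\downarrow$ by definition, $\Pyr(D)$ is indeed closed under arbitrary joins, and the whole statement is equivalent to $p\in\Pyr(D)$. Unfortunately this last claim --- and hence the observation itself --- is false for infinite digraphs. Take the stairway to hell from Example~\ref{example:stairways} and adjoin, for every $n\in\N$, a fresh leaf $u_n$ with the single arc $(u_n,v_n^0)$. Exactly as before the apex $a$ is the unique vertex admitting walks of every length, so $p=\{a\}$ and $D$ is \emph{not} asymptotically meet-nilpotent. But now $\{a\}\notin D(\Powerset(V))$: any $w$ with $a\in D(\{w\})$ must be some $v_n^0$, and then $u_n\in D(\{w\})$ as well, so every element of the image of $D$ that contains $a$ also contains some $u_n\notin p$. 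Hence $\Pyr(D)=\{\emptyset\}$ while $C(D)=\{a\}^\downarrow$, and both the equality $C(D)=\Pyr(D)^\downarrow$ and the reverse direction of the final biconditional fail.

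Your finite-case argument is fine, since stabilisation of $D^i(V)$ genuinely forces $p$ into each $D^i(\Powerset(V))$. The infinite case cannot be repaired by a K\"onig-type argument, because K\"onig's lemma needs local finiteness: in the counterexample the out-degree of $a$ is infinite, and the arbitrarily long walks from $a$ fan out over different branches sharing no common continuation. The paper gives no proof (calling the result ``transparent''), but what actually survives is only the inclusion $\Pyr(D)^\downarrow\subseteq C(D)$ --- which your inequality $\bigjoin\Pyr(D)\le p$ already delivers --- and hence only the forward implication of the biconditional, which is Lemma~\ref{lemma:asymptotically_meet-nilpotent_Pyr} in any case.
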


\section{Robert's theorem for mappings over complete lattices} \label{section:robert}

\subsection{Four kinds of dependency} \label{section:dependency}

Robert's theorem for finite automata networks relies on the concept of interaction graph, i.e. the smallest graph on which a network depends. Unfortunately, the interaction graph does not necessarily exist for general lattices, not even for $\Lattice{L} = \Lattice{\Powerset}(V)$ with infinite $V$, as we shall see in Example \ref{example:omega}. 
Instead, we generalise the concept, reviewed in Section \ref{subsection:robert}, of a mapping depending on a digraph to all lattices in two ways, thus yielding four kinds of dependency.

Let $\Lattice{L}$ be a complete lattice and $\phi, \alpha, \beta, \gamma, \delta : L \to L$. 
\begin{dependency}
    \item \label{item:pre-meet-depends}
    We say $\phi$ \Define{pre-meet-depends} on $\alpha$ if for all $x,y, s \in L$,
    \[
        x \meet \alpha(s) = y \meet \alpha(s) \implies \phi(x) \meet s = \phi(y) \meet s.
    \]

    \item \label{item:pre-join-depends}
    We say $\phi$ \Define{pre-join-depends} on $\beta$ if for all $x,y, s \in L$,
    \[
        x \join \beta(s) = y \join \beta(s) \implies \phi(x) \join s = \phi(y) \join s.
    \]

    \item \label{item:post-meet-depends}
    We say $\phi$ \Define{post-meet-depends} on $\gamma$ if for all $x,y, s \in L$,
    \[
        x \meet s = y \meet s \implies \phi(x) \meet \gamma(s) = \phi(y) \meet \gamma(s).
    \]

    \item \label{item:post-join-depends}
    We say $\phi$ \Define{post-join-depends} on $\delta$ if for all $x,y, s \in L$,
    \[
        x \join s = y \join s \implies \phi(x) \join \delta(s) = \phi(y) \join \delta(s).
    \]

\end{dependency}

It is clear that any mapping $\phi$ pre-meet-depends on the mapping $\alpha^*$ with $\alpha^*(x) = \one$ for all $x \in L$. Therefore, it is interesting to understand the minimal mappings on which $\phi$ pre-meet-depends. The lemma below gathers the corresponding observations for all four kinds of dependency.

\begin{lemma}
Let $\phi, \alpha, \bar{\alpha}, \beta, \bar{\beta}, \gamma, \bar{\gamma}, \delta, \bar{\delta} : L \to L$. Then the following hold.
\begin{dependency}
    \item \label{item:alpha_bar}
    If $\phi$ pre-meet-depends on $\alpha$ and $\bar{\alpha} \ge \alpha$, then $\phi$ pre-meet-depends on $\bar{\alpha}$.

    \item \label{item:beta_bar}
    If $\phi$ pre-join-depends on $\beta$ and $\bar{\beta} \le \beta$, then $\phi$ pre-join-depends on $\bar{\beta}$.

    \item \label{item:gamma_bar}
    If $\phi$ post-meet-depends on $\gamma$ and $\bar{\gamma} \le \gamma$, then $\phi$ post-meet-depends on $\bar{\gamma}$.
    
    \item \label{item:delta_bar}
    If $\phi$ post-join-depends on $\delta$ and $\bar{\delta} \ge \delta$, then $\phi$ post-join-depends on $\bar{\delta}$.
\end{dependency}
\end{lemma}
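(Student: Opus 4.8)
The plan is to prove each of the four parts by a direct, essentially one-line unfolding of the relevant definition, using only the monotonicity of the lattice operations. The statement is really four instances of the same phenomenon: enlarging (or shrinking, depending on polarity) the mapping on which $\phi$ depends only \emph{weakens} the hypothesis of the implication defining dependency, hence preserves the dependency.

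First I would do part \ref{item:alpha_bar}. Suppose $\phi$ pre-meet-depends on $\alpha$ and $\bar\alpha \ge \alpha$, and take $x, y, s \in L$ with $x \meet \bar\alpha(s) = y \meet \bar\alpha(s)$. The key step is to meet both sides with $\alpha(s)$: since $\alpha(s) \le \bar\alpha(s)$ we get $\alpha(s) \meet \bar\alpha(s) = \alpha(s)$, and therefore $x \meet \alpha(s) = x \meet \alpha(s) \meet \bar\alpha(s) = y \meet \alpha(s) \meet \bar\alpha(s) = y \meet \alpha(s)$. Now apply the hypothesis that $\phi$ pre-meet-depends on $\alpha$ to conclude $\phi(x) \meet s = \phi(y) \meet s$, which is exactly pre-meet-dependency on $\bar\alpha$.

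The remaining three parts are carried out the same way, swapping meets for joins and reversing the inequality as dictated by the polarity of the definition. For part \ref{item:beta_bar}, from $x \join \bar\beta(s) = y \join \bar\beta(s)$ with $\bar\beta \le \beta$, join both sides with $\beta(s)$ (absorbing $\bar\beta(s)$ since $\bar\beta(s) \le \beta(s)$) to get $x \join \beta(s) = y \join \beta(s)$, then invoke pre-join-dependency on $\beta$. Part \ref{item:gamma_bar} uses post-meet-dependency on $\gamma$: the hypothesis $x \meet s = y \meet s$ does not involve $\gamma$ at all, so from $\phi(x) \meet \gamma(s) = \phi(y) \meet \gamma(s)$ one meets both sides with $\bar\gamma(s) \le \gamma(s)$ to absorb $\gamma(s)$ and obtain $\phi(x) \meet \bar\gamma(s) = \phi(y) \meet \bar\gamma(s)$. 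Part \ref{item:delta_bar} is dual to part \ref{item:gamma_bar}, joining the conclusion with $\bar\delta(s) \ge \delta(s)$.

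There is no real obstacle here; the only thing to be careful about is tracking which side of each inequality the "bar" version sits on in each of the four definitions — the polarities are not uniform (pre-meet and post-join want $\bar\cdot$ larger; pre-join and post-meet want $\bar\cdot$ smaller), so one must match the direction of the absorption identity to the direction hypothesised in each part. Once that bookkeeping is set up correctly, each part is a two-step computation: absorb to recover the stronger hypothesis (for the "pre" cases) or to weaken the conclusion (for the "post" cases), then quote the assumed dependency.
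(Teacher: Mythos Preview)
Your proof is correct. The paper actually states this lemma without proof (presumably regarding it as immediate from the definitions), so there is nothing to compare against; your argument is exactly the natural one-line absorption computation that the paper leaves implicit.
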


We illustrate in the next example why it makes sense to generalise dependency to all mappings, instead of only graphs, even for power set lattices.

\begin{example} \label{example:omega}
Let $\Lattice{L} = \Lattice{\Powerset}(V)$ for some infinite $V$. Define $\omega : L \to L$ by
\[
    \omega(x) = \begin{cases}
    \zero & \text{if } x \text{ is finite} \\
    \one & \text{otherwise}. 
    \end{cases}
\]
We can readily characterise on which mappings $\omega$ depends; the proof is a simple exercise.
\begin{dependency}
    \item \label{item:omega_alpha}
    $\omega$ pre-meet-depends on $\alpha$ if and only if $\alpha(s)$ is co-finite for all $s \ne \zero$.

    \item \label{item:omega_beta}
    $\omega$ pre-join-depends on $\beta$ if and only if $\beta(t)$ is finite for all $t \ne \one$.

    \item \label{item:omega_gamma}
    $\omega$ post-meet-depends on $\gamma$ if and only if $\gamma(s) = \zero$ if $s$ is not co-finite.

    \item \label{item:omega_delta}
    $\omega$ post-join-depends on $\delta$ if and only if $\delta(t) = \one$ if $t$ is infinite (i.e. $\delta \ge \omega$).
\end{dependency}

As a consequence, we now give an example of mapping $\hat{\alpha}$ such that $\omega$ pre-meet-depends on $\hat{\alpha}$ but $\omega$ does not pre-meet-depend on any monotone mapping (let alone graph) $f \le \hat{\alpha}$. Namely, let $\hat{\alpha}$ such that $\hat{\alpha}(s) = s$ for all $s$ co-finite and $\hat{\alpha}(s) = \one$ otherwise. Let $f \le \hat{\alpha}$ be monotone, then for any finite $s$, 
\[
    f( s ) \le \bigmeet \{ f( x ) : s \le x, x \text{ co-finite} \} \le \bigmeet \{ \hat{\alpha}( x ) : s \le x, x \text{ co-finite} \} =  \bigmeet \{ x : s \le x, x \text{ co-finite} \} = s.
\]
Thus $f(s)$ is finite, and hence $\omega$ does not pre-meet-depend on $f$.
\end{example}

As seen from Example \ref{example:omega}, even when $\Lattice{L}$ is a power set lattice, we cannot assume without loss that $\phi$ pre-meet-depends on a monotone mapping $\alpha$. However, whenever $\Lattice{L}$ is a frame, we can assume without loss of generality that $\phi$ post-meet-depends on a monotone mapping $\gamma$.

\begin{lemma} \label{lemma:frame_monotone_gamma}
Let $\Lattice{L}$ be a frame and $\gamma : L \to L$. Then there exists a monotone mapping $\Gamma : L \to L$ such that $\Gamma \ge \gamma$ and for all $\phi: L \to L$, $\phi$ post-meet-depends on $\gamma$ if and only if $\phi$ post-meet-depends on $\Gamma$.
\end{lemma}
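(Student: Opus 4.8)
The plan is to take $\Gamma$ to be the monotone hull of $\gamma$ built from below, namely
\[
    \Gamma(s) = \bigjoin \{ \gamma(t) : t \le s \} = \bigjoin \gamma( s^\downarrow ).
\]
First I would record the two easy structural facts. Choosing $t = s$ in the defining join shows $\Gamma(s) \ge \gamma(s)$, so $\Gamma \ge \gamma$; and if $s \le s'$ then $s^\downarrow \subseteq (s')^\downarrow$, whence $\Gamma(s) \le \Gamma(s')$, so $\Gamma$ is monotone. The reverse implication of the stated equivalence is then immediate: since $\gamma \le \Gamma$, property~\ref{item:gamma_bar} (post-meet-dependency is downward closed in its argument) shows that if $\phi$ post-meet-depends on $\Gamma$ then $\phi$ post-meet-depends on $\gamma$.

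The substantive direction is to show that if $\phi$ post-meet-depends on $\gamma$, then $\phi$ post-meet-depends on $\Gamma$. So I would fix $x, y, s \in L$ with $x \meet s = y \meet s$ and aim to prove $\phi(x) \meet \Gamma(s) = \phi(y) \meet \Gamma(s)$. The key manoeuvre, and the only place the frame hypothesis enters, is to use the infinite distributive law to push the meet inside the defining join:
\[
    \phi(x) \meet \Gamma(s) = \phi(x) \meet \bigjoin_{t \le s} \gamma(t) = \bigjoin_{t \le s} \big( \phi(x) \meet \gamma(t) \big),
\]
and likewise for $\phi(y)$. Then I would observe that for every fixed $t \le s$ one has $x \meet t = x \meet s \meet t = y \meet s \meet t = y \meet t$, so the hypothesis that $\phi$ post-meet-depends on $\gamma$, applied to the triple $(x, y, t)$, yields $\phi(x) \meet \gamma(t) = \phi(y) \meet \gamma(t)$. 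Taking the join over all $t \le s$ and distributing back gives $\phi(x) \meet \Gamma(s) = \phi(y) \meet \Gamma(s)$, as required.

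I expect the only non-routine point to be recognising that the frame law $a \meet \bigjoin Y = \bigjoin (a \meet Y)$ is exactly what licenses handling the join $\bigjoin_{t \le s} \gamma(t)$ one term at a time, together with the elementary but essential remark that the equality $x \meet s = y \meet s$ ``restricts'' to $x \meet t = y \meet t$ for every $t \le s$. Everything else -- monotonicity of $\Gamma$, the comparison $\Gamma \ge \gamma$, and the reverse implication via~\ref{item:gamma_bar} -- is routine bookkeeping.
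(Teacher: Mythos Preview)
Your proof is correct and follows essentially the same approach as the paper: define $\Gamma(s) = \bigjoin_{t \le s} \gamma(t)$, use the frame law to distribute $\phi(x) \meet \Gamma(s)$ over the join, and apply the $\gamma$-dependency termwise via the observation that $x \meet s = y \meet s$ restricts to each $t \le s$. The reverse implication and the structural facts about $\Gamma$ are handled identically.
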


\begin{proof}
Define $\Gamma : L \to L$ by $\Gamma(a) = \bigjoin_{b \le a} \gamma(b)$. We note that $\Gamma \ge \gamma$ and that $\Gamma$ is monotone. Suppose $\phi$ post-meet-depends on $\gamma$. For all $x, y, s \in L$, we have
\begin{align*}
    x \meet s = y \meet s 
    &\iff x \meet t = y \meet t \text{ for all } t \le s\\
    &\implies \phi(x) \meet \gamma(t) = \phi(y) \meet \gamma(t) \text{ for all } t \le s\\
    &\implies \bigjoin_{t \le s} \left( \phi(x) \meet \gamma(t) \right) = \bigjoin_{t \le s} \left( \phi(y) \meet \gamma(t) \right)\\
    &\iff \phi(x) \meet \Gamma(s) = \phi(y) \meet \Gamma(s).
\end{align*}
Thus, $\phi$ post-meet-depends on $\Gamma$. Conversely, suppose $\phi$ post-meet-depends on $\Gamma$. Since $\gamma \le \Gamma$, $\phi$ post-meet-depends on $\gamma$ as well.
\end{proof}

The four kinds of dependency are related to one another, thanks to residuation.

\begin{lemma} \label{lemma:pre-post-dependence}
Let $\phi, \alpha, \beta, \gamma, \delta : L \to L$.
\begin{dependency}
    \item \label{item:pre-post-meet-minus}
    If $\phi$ post-meet-depends on $\Residual{ \alpha }$, then $\phi$ pre-meet-depends on $\alpha$.

    \item \label{item:pre-post-join-minus}
    If $\phi$ post-join-depends on $\Residuated{ \beta }$, then $\phi$ pre-join-depends on $\beta$.

    \item \label{item:pre-post-meet-plus}
    If $\phi$ pre-meet-depends on $\Residuated{ \gamma }$, then $\phi$ post-meet-depends on $\gamma$.
    
    \item \label{item:pre-post-join-plus}
    If $\phi$ pre-join-depends on $\Residual{ \delta }$, then $\phi$ post-join-depends on $\delta$.
\end{dependency}
\end{lemma}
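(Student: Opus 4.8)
The plan is to prove all four statements by one two‑move recipe: feed a suitable triple into the dependency hypothesis, then absorb using the elementary inequalities $\Residuated{\psi}\psi \le \id \le \Residual{\psi}\psi$ recorded in Lemma \ref{lemma:general_remarks_residual}\ref{item:residuated_compose} (these hold for \emph{any} $\psi : L \to L$, so nothing special is required of $\alpha,\beta,\gamma,\delta$). Moreover, \ref{item:pre-post-join-minus} and \ref{item:pre-post-join-plus} are just the opposite‑lattice incarnations of \ref{item:pre-post-meet-minus} and \ref{item:pre-post-meet-plus}: passing from $\Lattice{L}$ to $\Opposite{\Lattice{L}}$ swaps $\meet \leftrightarrow \join$, turns pre‑meet‑dependency into pre‑join‑dependency (and post‑meet‑ into post‑join‑dependency), and — because reversing the order interchanges $\bigjoin$ and $\bigmeet$ in the defining formulas — turns $\Residual{\psi}$ into $\Residuated{\psi}$ and vice versa. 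Hence it suffices to carry out the two meet‑versions in detail.

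For \ref{item:pre-post-meet-minus}: assume $\phi$ post‑meet‑depends on $\Residual{\alpha}$ and take $x,y,s \in L$ with $x \meet \alpha(s) = y \meet \alpha(s)$. Applying the post‑meet‑dependency of $\phi$ on $\Residual{\alpha}$ to the triple $(x,y,\alpha(s))$ gives $\phi(x) \meet \Residual{\alpha}(\alpha(s)) = \phi(y) \meet \Residual{\alpha}(\alpha(s))$. Since $\Residual{\alpha}(\alpha(s)) \ge s$ by Lemma \ref{lemma:general_remarks_residual}\ref{item:residuated_compose}, meeting both sides with $s$ and using $s \meet \Residual{\alpha}(\alpha(s)) = s$ yields $\phi(x) \meet s = \phi(y) \meet s$, i.e. pre‑meet‑dependency on $\alpha$.

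For \ref{item:pre-post-meet-plus}: assume $\phi$ pre‑meet‑depends on $\Residuated{\gamma}$ and take $x,y,s \in L$ with $x \meet s = y \meet s$. Now $\Residuated{\gamma}(\gamma(s)) \le s$ by Lemma \ref{lemma:general_remarks_residual}\ref{item:residuated_compose}, so meeting both sides of $x \meet s = y \meet s$ with $\Residuated{\gamma}(\gamma(s))$ and absorbing gives $x \meet \Residuated{\gamma}(\gamma(s)) = y \meet \Residuated{\gamma}(\gamma(s))$. Applying the pre‑meet‑dependency of $\phi$ on $\Residuated{\gamma}$ to the triple $(x,y,\gamma(s))$ then produces $\phi(x) \meet \gamma(s) = \phi(y) \meet \gamma(s)$, i.e. post‑meet‑dependency on $\gamma$. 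The opposite‑lattice versions of these two arguments are exactly \ref{item:pre-post-join-minus} and \ref{item:pre-post-join-plus}.

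There is no real obstacle here; the only care needed is in the bookkeeping — pairing each composite with the correct inequality, and getting the direction of the absorption right (upward for the $\Residual{}$‑items, where $a \meet b = a$ is used with $a = s \le b = \Residual{\alpha}(\alpha(s))$; downward for the $\Residuated{}$‑items, with $a = \Residuated{\gamma}(\gamma(s)) \le b = s$). I would also verify the duality explicitly on one case — e.g. that \ref{item:pre-post-meet-minus} read in $\Opposite{\Lattice{L}}$ is \ref{item:pre-post-join-minus} — to be confident the $\Residual{}/\Residuated{}$ interchange has been tracked correctly.
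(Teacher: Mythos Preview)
Your proof is correct and is essentially identical to the paper's own argument: apply the assumed dependency with the substituted element $\alpha(s)$ (resp.\ $\gamma(s)$), then use the inequalities $\Residuated{\psi}\psi \le \id \le \Residual{\psi}\psi$ from Lemma~\ref{lemma:general_remarks_residual}\ref{item:residuated_compose} to absorb, with the join items handled by passing to $\Opposite{\Lattice{L}}$. If anything, you spell out the absorption step more explicitly than the paper does.
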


\begin{proof}
\begin{dependency}
    \item \label{item:pre-post-minus_proof}
    If $\phi$ post-meet-depends on $\Residual{ \alpha }$, then for all $x,y,s \in L$
    \[
        x \meet \alpha(s) = y \meet \alpha(s) \implies \phi(x) \meet \Residual{ \alpha }( \alpha(s) ) = \phi(y) \meet \Residual{ \alpha }( \alpha(s) ) \implies \phi(x) \meet s = \phi(y) \meet s,
    \]
    whence $\phi$ pre-meet-depends on $\alpha$.
    
    \item \label{item:pre-post-join-minus_proof}
    Same proof as above for $\Opposite{ \Lattice{L} }$.
    
    \item \label{item:pre-post-plus_proof}
    If $\phi$ pre-meet-depends on $\Residuated{ \gamma }$, then for all $x,y,s \in L$
    \[
        x \meet s = y \meet s \implies x \meet \Residuated{ \gamma }( \gamma(s) ) = y \meet \Residuated{ \gamma }( \gamma(s) ) \implies \phi(x) \meet \gamma(s) = \phi(y) \meet \gamma(s),
    \]
    whence $\phi$ post-meet-depends on $\gamma$.
    
    \item \label{item:pre-post-join-plus_proof}
    Same proof as above for $\Opposite{ \Lattice{L} }$.

\end{dependency}
\end{proof}

\begin{corollary} \label{corollary:graph_pre-post}
Let $\phi : L \to L$ and let $f$ be a graph on $\Lattice{L}$.
\begin{enumerate}
    \item \label{item:pre-post-meet-graph}
    $\phi$  pre-meet-depends on $f$ if and only if $\phi$ post-meet-depends on $\Residual{f}$.

    \item \label{item:pre-post-join-graph}
    $\phi$ post-join-depends on $f$ if and only if $\phi$ pre-join-depends on $\Residual{f}$.
\end{enumerate}
\end{corollary}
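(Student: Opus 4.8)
The plan is to derive both equivalences from Lemma~\ref{lemma:pre-post-dependence} together with the identity $f = \Residuated{\Residual{f}}$, valid for every graph $f$ by Theorem~\ref{theorem:residuated}\ref{item:f=g+}. Each item of the corollary is a biconditional, and each direction will reduce to a single application of one clause of Lemma~\ref{lemma:pre-post-dependence}, in one direction applied directly and in the other after rewriting $f$ as $\Residuated{\Residual{f}}$. In effect, for a graph $f$ the pair $(f, \Residual{f})$ is a residuated/co-residuated pair, so the one-way implications of the lemma close up into equivalences.

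For item~\ref{item:pre-post-meet-graph}, I would argue as follows. The implication ``$\phi$ post-meet-depends on $\Residual{f}$ $\Rightarrow$ $\phi$ pre-meet-depends on $f$'' is exactly Lemma~\ref{lemma:pre-post-dependence}\ref{item:pre-post-meet-minus} with $\alpha := f$. For the converse, I would start from $\phi$ pre-meet-depending on $f$, use Theorem~\ref{theorem:residuated}\ref{item:f=g+} to replace $f$ by $\Residuated{\Residual{f}}$, and then apply Lemma~\ref{lemma:pre-post-dependence}\ref{item:pre-post-meet-plus} with $\gamma := \Residual{f}$ to conclude that $\phi$ post-meet-depends on $\Residual{f}$.

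Item~\ref{item:pre-post-join-graph} is handled by the same recipe (morally, working in $\Opposite{\Lattice{L}}$). The implication ``$\phi$ pre-join-depends on $\Residual{f}$ $\Rightarrow$ $\phi$ post-join-depends on $f$'' is Lemma~\ref{lemma:pre-post-dependence}\ref{item:pre-post-join-plus} with $\delta := f$, applied directly. For the converse, starting from $\phi$ post-join-depending on $f$, I would again rewrite $f$ as $\Residuated{\Residual{f}}$ and invoke Lemma~\ref{lemma:pre-post-dependence}\ref{item:pre-post-join-minus} with $\beta := \Residual{f}$ to obtain that $\phi$ pre-join-depends on $\Residual{f}$.

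There is essentially no obstacle: all the real content sits in Lemma~\ref{lemma:pre-post-dependence} and in Theorem~\ref{theorem:residuated}. The only points requiring care are pairing each direction with the correct clause and substitution ($\alpha,\gamma$ in the meet case versus $\beta,\delta$ in the join case), and noting that $\Residual{f}$ is legitimately a co-graph by Theorem~\ref{theorem:residuated}\ref{item:g_co-graph}, so that the identity $\Residuated{\Residual{f}} = f$ may be used. I expect the proof to be no more than a handful of lines.
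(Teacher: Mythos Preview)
Your proposal is correct and matches the intended argument: the corollary is stated immediately after Lemma~\ref{lemma:pre-post-dependence} with no proof in the paper, precisely because it follows by combining that lemma with $f = \Residuated{\Residual{f}}$ from Theorem~\ref{theorem:residuated}\ref{item:f=g+}, exactly as you describe. Your pairing of directions with clauses \ref{item:pre-post-meet-minus}/\ref{item:pre-post-meet-plus} and \ref{item:pre-post-join-minus}/\ref{item:pre-post-join-plus} is right.
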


Let $f$ be a graph. We remark that $f$ pre-join-depends on $\Residual{f}$:
\begin{align*}
    x \join \Residual{ f }(s) = y \join \Residual{ f }(s) &\implies f( x \join \Residual{ f }(s) ) = f( y \join \Residual{ f }(s) ) \\
    &\implies f( x ) \join f \Residual{ f }(s) = f( y ) \join f \Residual{ f }(s) \\
    &\implies f( x ) \join s = f( y ) \join s.
\end{align*}
In fact, the dependency relations between $f$ and $\Residual{f}$ are represented in Figure \ref{figure:dependencies1}, where the head of each arc depends on the tail of the arc. For instance, the fact that $f$ pre-join-depends on $\Residual{f}$ is the blue arc at the top. The proofs of the different dependencies are straightforward and hence omitted.

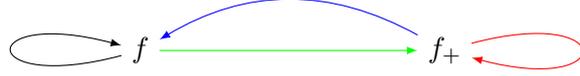
\begin{figure}
    \centering
    
    \begin{tikzpicture}[scale=4]
        \node (f) at (0,1) {$f$};
        \node (fm) at (1,1) {$\Residual{f}$};
        
        \draw[color=green,-latex] (f) -- (fm);
        
        \draw[color=red,-latex] (fm) to [loop right] (fm);
        \draw[color=black,-latex] (f) to [loop left] (f);
        \draw[color=blue,-latex] (fm) to [bend right] (f);
    \end{tikzpicture}
    
    \caption{Dependencies between $f$ and $\Residual{f}$. Each colour indicates a different kind of dependency: \textbf{\textcolor{green}{green for pre-meet}}, \textbf{\textcolor{blue}{blue for pre-join}}, \textbf{\textcolor{red}{red for post-meet}}, and \textbf{\textcolor{black}{black for post-join}}.}
    \label{figure:dependencies1}
\end{figure}

\subsection{Meet-metric and meet-complete mappings} \label{section:discriminating}

The aim of this subsection is to define mappings for which we shall be able to apply the (bounded or complete) Banach contraction principle. Firstly, the meet-metric property provides us with a contractive mapping on a bounded metric space (not necessarily complete) and hence guarantees that there is at most one pyramidal point; accordingly, this will also be the basis for the feedback bound. Secondly, the meet-complete property provides us with a bounded complete metric space and hence guarantees strong convergence towards a fixed point. Lastly, the weakest property is meet-covering, which is perhaps the simplest to state from a lattice point of view, and is actually equivalent to meet-complete for frames.

Consider a sequence $A = (a_i : i \in \N)$ of elements of $L$. For all $x \in L$ we denote $A_x = ( x \meet a_i : i \in \N )$, so that $A = A_\one$. We remark that $x \ge \bigjoin A_x$; we shall use this fact repeatedly in the sequel. For any $s,t \in L$, let $\Delta_A(s,t) = \{ i \in \N : s \meet a_i \ne t \meet a_i \}$ and $d_A(s,t) = \sum_{i \in \Delta_A(s,t)} 2^{-i}$. Then $d_A$ is a bounded pseudometric, i.e. it satisfies all the axioms of a metric apart from the identity of indiscernibles. We then say that $A$ is
\begin{enumerate}
    \item \Define{meet-covering} if $d_A( x, \one ) \ne 0$ for all $x \ne \one$;

    \item \Define{meet-metric} if $d_A$ is a bounded metric;

    \item \Define{meet-complete} if $(L,d_A)$ is a bounded complete metric space.
\end{enumerate}  
\Define{Join-covering}, \Define{join-metric} and \Define{join-complete} sequences are defined similarly, by placing ourselves in the opposite lattice $\Opposite{ \Lattice{L} }$.

We first give purely lattice-theoretic characterisations of these three properties.

\begin{proposition} \label{proposition:meet-covering}
Let $A$ be a sequence of elements of $L$. Then $A$ is meet-covering if and only if $\bigjoin A = \one$.
\end{proposition}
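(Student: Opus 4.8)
The plan is to reduce the metric condition defining meet-covering to a purely order-theoretic statement about $\bigjoin A$. First I would unwind the definition: for any $x \in L$, since every term $2^{-i}$ is positive, $d_A(x,\one) = 0$ holds if and only if $\Delta_A(x,\one) = \emptyset$, i.e. $x \meet a_i = \one \meet a_i = a_i$ for every $i \in \N$. By the absorption law, $x \meet a_i = a_i$ is equivalent to $a_i \le x$, so the condition $d_A(x,\one)=0$ says precisely that $x$ is an upper bound of $\{a_i : i \in \N\}$, equivalently $x \ge \bigjoin A$. In other words, $\{ x \in L : d_A(x,\one) = 0 \} = (\bigjoin A)^\uparrow$.

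With that equivalence in hand, the two directions are immediate. If $\bigjoin A = \one$, then any $x$ with $d_A(x,\one)=0$ satisfies $x \ge \one$, hence $x = \one$; contrapositively, $d_A(x,\one) \ne 0$ for all $x \ne \one$, so $A$ is meet-covering. Conversely, if $\bigjoin A \ne \one$, then $x = \bigjoin A$ is an element different from $\one$ with $a_i \le x$ for all $i$, so $d_A(x,\one) = 0$, witnessing that $A$ is not meet-covering.

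I do not anticipate any real obstacle here; the argument is essentially a restatement of the definition once one notes that the set of elements at $d_A$-distance $0$ from $\one$ is exactly the principal filter generated by $\bigjoin A$, and that meet-covering asserts precisely that this filter collapses to $\{\one\}$. The only points requiring (trivial) care are the identity $\one \meet a_i = a_i$ and the absorption law $x \meet a_i = a_i \iff a_i \le x$, both valid in any lattice.
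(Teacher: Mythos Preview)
Your proof is correct and follows essentially the same approach as the paper's: both arguments reduce to the observation that $d_A(x,\one)=0$ is equivalent to $a_i \le x$ for all $i$, i.e.\ to $x \ge \bigjoin A$. Your packaging of this as the identity $\{x : d_A(x,\one)=0\} = (\bigjoin A)^\uparrow$ is slightly more explicit than the paper's version, but the underlying logic is identical.
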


\begin{proof}
Suppose $\bigjoin A  = \one$. If $d_A( x, \one ) = 0$, then $A_x = A_\one = A$ and hence
\[
    x \ge \bigjoin A_x = \bigjoin A = \one.
\]
Conversely, suppose $d_A( x, \one ) \ne 0$ for all $x \ne \one$. Suppose $b = \bigjoin A < \one$, then $A_b = A = A_\one$ and hence $b = \one$.
\end{proof}

\begin{proposition} \label{proposition:meet-metric}
Let $A$ be a sequence of elements of $L$. The following are equivalent:
\begin{equivalent}
    \item \label{item:meet-metric_definition}
     $x = \bigjoin A_x$ for all $x \in L$;

    \item \label{item:meet-metric_injectivitiy}
    $A_x = A_y$ implies $x = y$ for all $x,y \in L$;

    \item \label{item:meet-metric_metric}
    $A$ is meet-metric, i.e. $d_A$ is a bounded metric over $L$.
\end{equivalent}
\end{proposition}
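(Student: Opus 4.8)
The plan is to run the implications $\ref{item:meet-metric_definition} \Rightarrow \ref{item:meet-metric_injectivitiy} \Rightarrow \ref{item:meet-metric_definition}$ and to observe separately that $\ref{item:meet-metric_injectivitiy}$ and $\ref{item:meet-metric_metric}$ are two ways of saying the same thing.

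For $\ref{item:meet-metric_injectivitiy} \Leftrightarrow \ref{item:meet-metric_metric}$: since $d_A$ is already known to be a bounded pseudometric, the only metric axiom in question is the identity of indiscernibles. By construction $d_A(s,t) = 0$ precisely when $\Delta_A(s,t) = \emptyset$, i.e. when $s \meet a_i = t \meet a_i$ for every $i$, i.e. when $A_s = A_t$. Hence $d_A$ is a (bounded) metric exactly when $A_s = A_t$ forces $s = t$, which is $\ref{item:meet-metric_injectivitiy}$.

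The implication $\ref{item:meet-metric_definition} \Rightarrow \ref{item:meet-metric_injectivitiy}$ is immediate: if $x = \bigjoin A_x$ for all $x$ and $A_x = A_y$, then $x = \bigjoin A_x = \bigjoin A_y = y$. For the converse $\ref{item:meet-metric_injectivitiy} \Rightarrow \ref{item:meet-metric_definition}$, I would fix $x \in L$, set $z = \bigjoin A_x$, and show $A_z = A_x$; then injectivity yields $z = x$, which is exactly $x = \bigjoin A_x$. One inclusion is the general remark already recorded in the text, namely $z \le x$, giving $z \meet a_i \le x \meet a_i$ for all $i$; the other uses that $z \ge x \meet a_i$ (as $z$ is the join of all the $x \meet a_j$) and that $x \meet a_i \le a_i$, so $z \meet a_i \ge (x \meet a_i) \meet a_i = x \meet a_i$.

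There is no real obstacle; the statement is short. The only step requiring a small idea rather than routine manipulation is the last implication, where instead of trying to compare $z = \bigjoin A_x$ with $x$ directly one notes that $z$ and $x$ have the same meet-profile $A_z = A_x$ and then invokes the injectivity hypothesis.
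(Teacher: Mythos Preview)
Your proposal is correct and matches the paper's own proof almost verbatim: the paper also establishes $\ref{item:meet-metric_injectivitiy} \Leftrightarrow \ref{item:meet-metric_metric}$ via the identity of indiscernibles, and handles $\ref{item:meet-metric_definition} \Leftrightarrow \ref{item:meet-metric_injectivitiy}$ by setting $z = \bigjoin A_x$ and verifying $A_z = A_x$ with the same two inequalities you give. The only cosmetic difference is that the paper phrases the key implication contrapositively (if $\ref{item:meet-metric_definition}$ fails then $z \ne x$ while $A_z = A_x$, violating $\ref{item:meet-metric_injectivitiy}$), whereas you state it directly.
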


\begin{proof}
$\ref{item:meet-metric_definition} \iff \ref{item:meet-metric_injectivitiy}$. For any sequence $A$ and any $x \in L$, let $z = \bigjoin A_x$. On the one hand, $x \ge z$, hence $z \meet a_i \le x \meet a_i$, and on the other hand
\[
    z \meet a_i = a_i \meet \bigjoin_j (x \meet a_j) \ge a_i \meet (x \meet a_i) = x \meet a_i,
\]
thus $A_z = A_x$. Therefore, if $A$ does not satisfy Property \ref{item:meet-metric_definition}, we have $A_z = A_x$ while $z \ne x$. Conversely, if $A$ satisfies Property \ref{item:meet-metric_definition}, then for all $x,y \in L$ with $A_x = A_y$, we have $x = \bigjoin A_x = \bigjoin A_y = y$.

$\ref{item:meet-metric_injectivitiy} \iff \ref{item:meet-metric_metric}$. The identity of indiscernibles for $d_A$ is equivalent to Property \ref{item:meet-metric_injectivitiy}.
\end{proof}

Say a sequence $Z = (z_i : i \in \N)$ is \Define{strongly Cauchy} (for $A$) if $i \ge j$ implies $z_i \meet a_j = z_j \meet a_j$. Define $A_Z = ( z_i \meet a_i : i \in \N )$. A simple exercise verifies that a strongly Cauchy sequence for $A$ is indeed a Cauchy sequence for $d_A$.

\begin{proposition} \label{proposition:metric_Cauchy}
A sequence $A$ is meet-complete if and only if it is meet-metric and for all strongly Cauchy $Z$, $A_Z = A_z$ for $z = \bigjoin A_Z$.
\end{proposition}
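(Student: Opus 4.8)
The plan is to prove the two directions separately, both hinging on the correspondence between strongly Cauchy sequences and Cauchy sequences in $(L,d_A)$. The remark just before the statement already tells us that a strongly Cauchy sequence $Z$ is Cauchy for $d_A$; the key additional fact I would establish first is a \emph{converse} of sorts: given any Cauchy sequence for $d_A$, one can extract a subsequence which, after correcting each term by meeting with the appropriate $a_i$, becomes strongly Cauchy and has the same limit behaviour. Concretely, if $(y_n)$ is $d_A$-Cauchy, then for each $j$ there is $N_j$ such that $y_m \meet a_j = y_n \meet a_j$ for all $m,n \ge N_j$ (this is exactly what $d_A(y_m,y_n)$ small forces on the low-index coordinates); taking a diagonal subsequence $z_j := y_{N_j'}$ with $N_j'$ increasing and $\ge N_j$ yields a strongly Cauchy sequence, and $d_A(y_n, z_j) \to 0$ appropriately, so $(y_n)$ converges iff $(z_j)$ does, and to the same point.

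For the forward direction, assume $A$ is meet-complete. Then $(L,d_A)$ is a bounded complete metric space, in particular $d_A$ is a metric, so $A$ is meet-metric by Proposition \ref{proposition:meet-metric}. Now let $Z = (z_i : i \in \N)$ be strongly Cauchy and set $z = \bigjoin A_Z$. Since $Z$ is Cauchy for $d_A$ and the space is complete, $Z$ converges to some $w \in L$; I claim $A_w = A_Z$, i.e.\ $w \meet a_i = z_i \meet a_i$ for all $i$. Indeed, $d_A(z_n, w) \to 0$ means that for every $i$, eventually $z_n \meet a_i = w \meet a_i$; but by strong Cauchyness $z_n \meet a_i = z_i \meet a_i$ for all $n \ge i$, so $w \meet a_i = z_i \meet a_i$. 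Hence $A_w = A_Z$, and therefore $\bigjoin A_w = \bigjoin A_Z = z$; since $A$ is meet-metric, Proposition \ref{proposition:meet-metric}\ref{item:meet-metric_definition} gives $w = \bigjoin A_w = z$. Finally $A_z = A_w = A_Z$, which is the desired conclusion.

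For the converse, assume $A$ is meet-metric and that every strongly Cauchy $Z$ satisfies $A_Z = A_z$ for $z = \bigjoin A_Z$; I must show $(L,d_A)$ is complete (boundedness is automatic, and meet-metricity gives that $d_A$ is a genuine metric). Let $(y_n)$ be an arbitrary $d_A$-Cauchy sequence. By the extraction step above, pass to a subsequence and correct it to obtain a strongly Cauchy sequence $Z = (z_i)$ with the property that $d_A(y_n, \,\cdot\,)$ and $d_A(z_i,\,\cdot\,)$ have the same limit, and such that $z_i \meet a_j = y_{n_i} \meet a_j$ for $j \le i$. Set $z = \bigjoin A_Z$; by hypothesis $A_z = A_Z$, so for each $j$, $z \meet a_j = z_i \meet a_j$ for all $i \ge j$, which forces $d_A(z_i, z) \to 0$, and hence $d_A(y_n, z) \to 0$ as well. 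Thus $(y_n)$ converges in $L$, and $(L,d_A)$ is complete, i.e.\ $A$ is meet-complete.

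The main obstacle I expect is the extraction/correction step: one must be careful that the "diagonal" subsequence is genuinely strongly Cauchy (the condition $z_i \meet a_j = z_j \meet a_j$ for $i \ge j$ must hold for \emph{all} $j \le i$, not merely eventually), and that the corrected sequence still tracks the original Cauchy sequence's limit. This is purely a bookkeeping argument about which coordinates have stabilised by which index, but it is the one place where the proof is not a one-line manipulation; everything else follows directly from Proposition \ref{proposition:meet-metric} and the definition of $d_A$.
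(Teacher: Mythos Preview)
Your proposal is correct and follows essentially the same route as the paper: in the forward direction you identify the limit $w$ of the strongly Cauchy sequence and use meet-metricity to get $w=z$, and in the converse you extract from an arbitrary Cauchy sequence a subsequence $z_k = y_{N(k)}$ that is strongly Cauchy, then push the limit back. One small remark: the ``correction by meeting with $a_i$'' you mention is not actually needed---once the indices $N(k)$ are chosen so that $d_A(y_m,y_n)<2^{-k}$ for $m,n\ge N(k)$, the raw subsequence $(y_{N(k)})_k$ is already strongly Cauchy, exactly as the paper observes.
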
 

\begin{proof}
Let $A$ be meet-complete. Then $A$ is meet-metric by definition. Now, let $Z = (z_i : i \in \N)$ be strongly Cauchy and suppose $A_b \ne A_Z$ for all $b \in L$. Then there exists $j \in \N$ such that $b \meet a_j \ne z_j \meet a_j = z_i \meet a_j$ for all $i \ge j$. Therefore, $d_A( b, z_i ) \ge 2^{-j}$ for all $i \ge j$ and $Z$ does not converge towards $b$, which contradicts the fact that $( L, d_A )$ is a complete metric space. Thus, there exists $b \in L$ with $A_b = A_Z$. But then $b = \bigjoin A_b$, which finally gives $b = \bigjoin A_b = \bigjoin A_Z = z$.

Conversely, suppose $A$ is meet-metric and for all strongly Cauchy $Z$, $A_Z = A_z$. We now prove that $(L, d_A)$ forms a complete metric space. Let $Y = ( y_i : i \in \N )$ be a Cauchy sequence. For all $k \in \N$, let $N( k ) = \min\{ M : d_A( y_i, y_j ) < 2^{-k} \ \forall i,j \ge M \}$ and define the sequence $Z = ( z_k = y_{ N(k) } : k \in \N )$. Then for all $i \ge N( k )$, we have $y_i \meet a_k = y_{ N(k) } \meet a_k = z_k \meet a_k$. In particular for $i = N( l )$ for some $l \ge k$, we obtain $z_l \meet a_k = z_k \meet a_k$, that is $Z$ is a strongly Cauchy sequence with $A_Z = A_z$.   
Now, let $\epsilon > 0$ and $k$ such that $2^{-k} < \epsilon$, then for all $i \ge N( k )$, $d_A( z, y_i ) < \epsilon$, that is $Y$ converges towards $z$.
\end{proof}

We now give sufficient conditions for a sequence to be meet-complete.

\begin{proposition} \label{proposition:discriminating}
If one of the following holds, then $A$ is meet-complete:
\begin{enumerate}
    \item \label{item:discriminating_un}
    $a_n = \one$ for some $n \in \N$;
    
    \item \label{item:discriminating_T}
    $\Lattice{L}$ is distributive and $\bigjoin T = \one$ for some finite $T \subseteq A$;
    
    \item \label{item:discriminating_A}
    $\Lattice{L}$ is a frame and $\bigjoin A = \one$.
\end{enumerate}
\end{proposition}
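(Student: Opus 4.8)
The plan is to reduce everything to the characterisation of meet-completeness in Proposition~\ref{proposition:metric_Cauchy}: in each of the three cases it suffices to show that $A$ is meet-metric and that every strongly Cauchy sequence $Z = (z_i : i \in \N)$ satisfies $A_Z = A_z$ with $z = \bigjoin A_Z$. Meet-metricity is the easy half. By Proposition~\ref{proposition:meet-metric} we must check $x = \bigjoin A_x$ for all $x$, and since $\bigjoin A_x \le x$ always, only $x \le \bigjoin A_x$ is at stake: in case (a), $x = x \meet a_n \le \bigjoin A_x$; in case (b), $x = x \meet \bigjoin T = \bigjoin_{a \in T}(x \meet a) \le \bigjoin A_x$ by finite distributivity; in case (c), $x = x \meet \bigjoin A = \bigjoin A_x$ by the frame law. (Case (a) is just the instance $T = \{a_n\}$ of case (b), with distributivity playing no role.)

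For the strongly Cauchy condition, note first that $z \ge z_k \meet a_k$, hence $z \meet a_k \ge z_k \meet a_k$ for every $k$; the work is the reverse inequality. In cases (a) and (b) I would fix $N$ with $\bigjoin\{a_0,\dots,a_N\} = \one$ (take $N = n$ in case (a), and $N$ large enough that $T \subseteq \{a_0,\dots,a_N\}$ in case (b)) and show $z$ equals the single term $z_N$. For $i \ge N$ and $j \le N$ the strongly Cauchy relations give $z_i \meet a_j = z_N \meet a_j = z_j \meet a_j$, so $z_i = z_i \meet \bigjoin_{j \le N} a_j = \bigjoin_{j \le N}(z_i \meet a_j) = \bigjoin_{j\le N}(z_N \meet a_j) = z_N$ using finite distributivity (in case (a) this is immediate: $z_i = z_i \meet a_n = z_n \meet a_n = z_n$). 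Consequently each $z_i \meet a_i \le z_N$ (for $i < N$ by strongly Cauchy, for $i \ge N$ since $z_i = z_N$), so $z \le z_N$; and $z \ge \bigjoin_{j \le N}(z_N \meet a_j) = z_N$ again by finite distributivity. Thus $z = z_N$, and then $z \meet a_k = z_N \meet a_k = z_k \meet a_k$ for $k \le N$ (strongly Cauchy) and for $k > N$ (since $z_k = z_N$), giving $A_z = A_Z$.

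In case (c) the frame law handles this directly, with no need to identify $z$ with a term of $Z$: $z \meet a_k = a_k \meet \bigjoin_i(z_i \meet a_i) = \bigjoin_i(a_k \meet z_i \meet a_i)$, and for each $i$ the strongly Cauchy relations yield $a_k \meet z_i \meet a_i \le z_k \meet a_k$ (using $z_i \meet a_k = z_k \meet a_k$ when $i \ge k$, and $z_i \meet a_i = z_k \meet a_i$ when $i < k$), so $z \meet a_k \le z_k \meet a_k$ and again $A_z = A_Z$. The only real obstacle is that in cases (a) and (b) the lattice carries no infinite distributive law, so the termwise expansion of $a_k \meet \bigjoin_i(z_i \meet a_i)$ used in case (c) is unavailable; the argument circumvents this by showing the Cauchy sequence stabilises at $z_N$, after which only finite joins occur.
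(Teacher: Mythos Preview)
Your proof is correct and follows essentially the same approach as the paper: reduce to Proposition~\ref{proposition:metric_Cauchy}, verify meet-metricity via the three available distributive laws, and in cases (a)--(b) show the strongly Cauchy sequence stabilises at a single term (which then equals $z$), while in case (c) use the frame law directly to bound $z \meet a_k$. Your handling of case (b) by choosing $N$ with $T \subseteq \{a_0,\dots,a_N\}$ is arguably cleaner than the paper's ``without loss $T = \{a_1,\dots,a_\tau\}$'' reindexing, but the two arguments are otherwise the same.
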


\begin{proof}
\begin{enumerate}
    \item \label{item:discriminating_un_proof}
    We already proved that $\bigjoin A_x \le x$ for all $x$ in the proof of Proposition \ref{proposition:meet-metric}. If $a_n = \one$, then $\bigjoin A_x = \bigjoin_i x \meet a_i \ge x \meet a_n = x$. Let $Z$ be strongly Cauchy, then for all $i \ge n$, $z_i = z_i \meet a_n = z_n \meet a_n = z_n$, so the sequence is eventually stationary. For all $j \le n$, we have $z_n \meet a_j = z_j \meet a_j$ so that $A_Z = A_{z_n}$ and $z_n = \bigjoin A_Z$.

    \item \label{item:discriminating_T_proof}
    Without loss, suppose $T = \{1, \dots, \tau\}$. For any $x \in L$, we have
    \[
        \bigjoin (x \meet T) =  x \meet \bigjoin T = x.  
    \]
    Let $Z$ be strongly Cauchy. If $t \le \tau \le s$, we have
    \[
        z_s \meet a_t = z_t \meet a_t = z_\tau \meet a_t,
    \]
    thus $z_s \meet \bigjoin T = z_\tau \meet \bigjoin T$, whence $z_s = z_\tau$ for all $s \ge \tau$. For all $t \le \tau$, we have $z_\tau \meet z_t = z_t \meet a_t$ so that $A_Z = A_{z_\tau}$ and $z_\tau = \bigjoin A_Z$.

    \item \label{item:discriminating_A_proof}
    For any $x \in L$, we have
    \[
        \bigjoin A_x = \bigjoin (x \meet A) = x \meet \bigjoin A = x.  
    \]
    Let $Z$ be strongly Cauchy, and let $z = \bigjoin A_Z$, then
    \[
        z \meet a_i 
        = a_i \meet \bigjoin (z_j \meet a_j) 
        \ge a_i \meet z_i \meet a_i 
        = z_i \meet a_i
    \]
    and on the other hand
    \[
        z \meet a_i 
        = a_i \meet \bigjoin_j (z_j \meet a_j) 
        = \bigjoin_{k < i} (z_k \meet a_k \meet a_i)  \join  \bigjoin_{j \ge i} (z_j \meet a_i \meet a_j)
        = \bigjoin_{k < i} (z_i \meet a_k \meet a_i)  \join  \bigjoin_{j \ge i} (z_i \meet a_i \meet a_j) 
        \le z_i \meet a_i.
    \]
    Thus $A_z = A_Z$.
\end{enumerate}
\end{proof}

\begin{corollary} \label{corollary:frame_meet-metric}
If $\Lattice{L}$ is a frame and $A$ is meet-covering, then $A$ is meet-complete.
\end{corollary}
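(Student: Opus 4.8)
The plan is to simply chain together the two previous results. By Proposition \ref{proposition:meet-covering}, the hypothesis that $A$ is meet-covering is equivalent to $\bigjoin A = \one$. Then, since $\Lattice{L}$ is assumed to be a frame, Proposition \ref{proposition:discriminating}\ref{item:discriminating_A} applies verbatim and yields that $A$ is meet-complete.

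So there is essentially nothing to do beyond invoking the correct lemmas in the correct order; the only ``work'' was already done inside the proof of Proposition \ref{proposition:discriminating}\ref{item:discriminating_A}, where one verifies that in a frame, $\bigjoin A = \one$ forces $\bigjoin A_x = x \meet \bigjoin A = x$ (so $A$ is meet-metric by Proposition \ref{proposition:meet-metric}) and that every strongly Cauchy sequence $Z$ satisfies $A_Z = A_z$ for $z = \bigjoin A_Z$ (using the infinite distributive law twice, once for each inequality), so that completeness follows from Proposition \ref{proposition:metric_Cauchy}. There is no genuine obstacle here; the statement is recorded as a corollary precisely because the frame case is the most natural and self-contained of the three sufficient conditions. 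I would simply write: ``By Proposition \ref{proposition:meet-covering}, $A$ meet-covering means $\bigjoin A = \one$, so the claim follows from Proposition \ref{proposition:discriminating}\ref{item:discriminating_A}.''

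If one wanted a fully self-contained argument instead of citing Proposition \ref{proposition:discriminating}, the one step worth isolating is the verification that $z = \bigjoin A_Z$ satisfies $z \meet a_i = z_i \meet a_i$ for all $i$: the inequality $z \meet a_i \ge z_i \meet a_i$ is immediate, and for the reverse one expands $z \meet a_i = a_i \meet \bigjoin_j (z_j \meet a_j)$ by the frame law, splits the join at index $i$, and uses the strong Cauchy property $z_k \meet a_k = z_i \meet a_k$ for $k < i$ together with $z_j \meet a_i = z_i \meet a_i$ for $j \ge i$. But since that computation is exactly the content of the proof of Proposition \ref{proposition:discriminating}\ref{item:discriminating_A} already in the excerpt, the corollary's proof should just defer to it.
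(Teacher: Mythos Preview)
Your proposal is correct and is exactly the paper's intended approach: the corollary is stated without proof precisely because it follows immediately from Proposition~\ref{proposition:meet-covering} (meet-covering $\iff \bigjoin A = \one$) together with Proposition~\ref{proposition:discriminating}\ref{item:discriminating_A}. The additional self-contained sketch you give simply reproduces that proposition's proof and is not needed.
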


We can now define four kinds of -complete mappings, each corresponding to a kind of dependency. Let $\Lattice{L}$ be a complete lattice and $\alpha, \beta, \gamma, \delta : L \to L$.
\begin{dependency}
    \item \label{item:pre-meet-complete}
    Say $\alpha : L \to L$ is \Define{pre-meet-complete} if there exists a meet-complete sequence $A = (a_i : i \in \N)$ with $a_0 = \zero$ and $\alpha(a_i) \le a_{i-1}$ for all $i \ge 1$. 
    
    \item \label{item:pre-join-complete}
    Say $\beta : L \to L$ is \Define{pre-join-complete} if there exists a join-complete sequence $B = (b_i : i \in \N)$ with $b_0 = \one$ and $\beta(b_i) \ge b_{i-1}$ for all $i \ge 1$. 

    \item \label{item:post-meet-complete}
    Say $\gamma : L \to L$ is \Define{post-meet-complete} if there exists a meet-complete sequence $C = (c_i : i \in \N)$ with $c_0 = \zero$ and $\gamma(c_{i-1}) \ge c_i$ for all $i \ge 1$. 

    \item \label{item:post-join-complete}
    Say $\delta : L \to L$ is \Define{post-join-complete} if there exists a join-complete sequence $D = (d_i : i \in \N)$ with $d_0 = \one$ and $\delta(d_{i-1}) \le d_i$ for all $i \ge 1$. 
\end{dependency}
We then say $\alpha$ is $A$-pre-meet-complete, and similarly for $\beta$, $\gamma$, and $\delta$. The corresponding -covering and -metric mappings are defined similarly.

For any $\psi : L \to L$, we define the following two sequences:
\begin{align*}
    \MeetSequence( \psi )   &= ( \psi^i( \zero ) : i \in \N ) \\
    \JoinSequence( \psi ) &= ( \psi^i( \one ) : i \in \N ).
\end{align*}
We note that $\psi$ is asymptotically join-nilpotent ($\bigjoin \psi^i( \zero ) = \one$) if and only if $\MeetSequence( \psi )$ is meet-covering. Similarly, $\psi$ is asymptotically meet-nilpotent ($\bigmeet \psi^i( \one ) = \zero$) if and only if $\JoinSequence( \psi )$ is join-covering. Those two sequences are also closely related to -covering/metric/complete mappings, as seen in Proposition \ref{proposition:metric_nilpotent} below.


    
    


\begin{proposition} \label{proposition:metric_nilpotent}
Let $\Lattice{L}$ be a complete lattice and $\alpha, \beta, \gamma, \delta : L \to L$. Then the following hold.
\begin{dependency}
    \item \label{item:pre-meet-covering}
    If $\alpha$ is pre-meet-covering/metric, then $\MeetSequence( \Residual{\alpha} )$ is meet-covering/metric. Conversely, if $\alpha$ is a graph and $\MeetSequence( \Residual{ \alpha } )$ is meet-covering/metric/complete, then $\alpha$ is pre-meet-covering/metric/complete.
    
    \item \label{item:pre-join-covering}
    If $\beta$ is pre-join-covering/metric, then $\JoinSequence( \Residuated{\beta} )$ is join-covering/metric. Conversely, if $\beta$ is a co-graph and $\JoinSequence( \Residuated{ \beta } )$ is join-covering/metric/complete, then $\beta$ is pre-join-covering/metric/complete.
    
    \item \label{item:post-meet-covering}
    If $\gamma$ is monotone and post-meet-covering/metric, then $\MeetSequence( \gamma )$ is meet-covering/metric. Conversely, if $\MeetSequence( \gamma )$ is meet-covering/metric/complete, then $\gamma$ is post-meet-covering/metric/complete.

    \item \label{item:post-join-covering}
    If $\delta$ is monotone and post-join-covering/metric, then $\JoinSequence( \delta )$ is join-covering/metric. Conversely, if $\JoinSequence( \delta )$ is join-covering/metric/complete, then $\delta$ is post-join-covering/metric/complete.
\end{dependency}
\end{proposition}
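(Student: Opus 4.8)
The plan is to prove parts (\alpha) and (\gamma) directly, and to obtain (\beta) and (\delta) by reading these statements in the opposite lattice $\Opposite{\Lattice{L}}$. Two elementary observations do most of the work. \textbf{First (monotonicity of covering/metric).} If $A=(a_i)$ and $A'=(a'_i)$ satisfy $a_i\le a'_i$ for all $i$, then $A$ meet-covering implies $A'$ meet-covering (since $\bigjoin A\le\bigjoin A'$ and Proposition \ref{proposition:meet-covering}), and $A$ meet-metric implies $A'$ meet-metric (since $x=\bigjoin A_x\le\bigjoin A'_x\le x$ by Proposition \ref{proposition:meet-metric}). Note this monotonicity genuinely fails for meet-completeness, which is exactly why the forward implications in the statement only assert covering/metric. \textbf{Second (canonical witnesses).} The sequences $\MeetSequence(\gamma)$ and $\MeetSequence(\Residual{\alpha})$ are the natural candidate witness sequences in the definitions of post-meet-complete and pre-meet-complete.

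For part (\gamma), write $c^\circ_i=\gamma^i(\zero)$, so $\MeetSequence(\gamma)=(c^\circ_i)$. Then $c^\circ_0=\zero$ and $\gamma(c^\circ_{i-1})=c^\circ_i\ge c^\circ_i$, so $\MeetSequence(\gamma)$ satisfies the recursion required of a post-meet witness with no hypothesis on $\gamma$; hence the converse is immediate. For the forward direction, let $C=(c_i)$ be a meet-covering (resp.\ meet-metric) post-meet witness for $\gamma$, i.e.\ $c_0=\zero$ and $\gamma(c_{i-1})\ge c_i$. Using that $\gamma$ is monotone, a one-line induction gives $\gamma^i(\zero)\ge c_i$ for all $i$ (base: equality; step: $\gamma^i(\zero)=\gamma(\gamma^{i-1}(\zero))\ge\gamma(c_{i-1})\ge c_i$). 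Now apply the first observation with $A=C$ and $A'=\MeetSequence(\gamma)$.

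For part (\alpha), the converse assumes $\alpha$ is a graph, so $\alpha\Residual{\alpha}\le\id$ by Theorem \ref{theorem:residuated}. Writing $a^\circ_i=\Residual{\alpha}^i(\zero)$, we get $a^\circ_0=\zero$ and $\alpha(a^\circ_i)=\alpha\Residual{\alpha}(a^\circ_{i-1})\le a^\circ_{i-1}$, so $\MeetSequence(\Residual{\alpha})$ is a valid pre-meet witness for $\alpha$; thus if it is meet-covering/metric/complete then $\alpha$ is pre-meet-covering/metric/complete by definition. For the forward direction, let $A=(a_i)$ be a meet-covering (resp.\ meet-metric) pre-meet witness for $\alpha$, so $a_0=\zero$ and $\alpha(a_i)\le a_{i-1}$. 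Induct to show $\Residual{\alpha}^i(\zero)\ge a_i$: the base is equality, and for the step, from $\alpha(a_i)\le a_{i-1}\le\Residual{\alpha}^{i-1}(\zero)$ together with $\Residual{\alpha}(b)=\bigjoin\{x:\alpha(x)\le b\}$ we get $a_i\le\Residual{\alpha}(\Residual{\alpha}^{i-1}(\zero))=\Residual{\alpha}^i(\zero)$ (no monotonicity of $\alpha$ is needed here). Conclude again via the first observation with $A'=\MeetSequence(\Residual{\alpha})$.

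Finally, (\beta) and (\delta) are (\alpha) and (\gamma) read in $\Opposite{\Lattice{L}}$. The only mildly delicate point in the argument is the bookkeeping for this dual pair: one must check that $\Residuated{\beta}$ and $\JoinSequence$ computed in $\Lattice{L}$ correspond under order-reversal to $\Residual{\beta}$ and $\MeetSequence$ computed in $\Opposite{\Lattice{L}}$, and that ``co-graph'' dualises to ``graph''; granting this, (\beta) and (\delta) follow with no further work. I expect this duality bookkeeping, rather than any substantive difficulty, to be the main obstacle — everything else reduces to the two short inductions above plus the monotonicity of the covering and metric properties under pointwise-increasing sequences.
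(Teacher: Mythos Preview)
Your proposal is correct and follows essentially the same route as the paper: your ``first observation'' is precisely the paper's Lemma~\ref{lemma:dominating_sequences} (proved there via $\Delta_A\subseteq\Delta_B$ rather than via the lattice characterisations, but equivalently), and your inductions showing $a_i\le \Residual{\alpha}^i(\zero)$ and $c_i\le\gamma^i(\zero)$, together with the ``canonical witness'' argument for the converses, match the paper's proof line for line; the paper likewise dispatches (\(\beta\)) and (\(\delta\)) by passing to $\Opposite{\Lattice{L}}$.
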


The proof uses the following lemma.

\begin{lemma} \label{lemma:dominating_sequences}
Let $A$ and $B$ are two sequences of elements in $L$ and suppose $B \ge A$. If $A$ is meet-covering, then so is $B$. Similarly, if $A$ is meet-metric, then so is $B$.
\end{lemma}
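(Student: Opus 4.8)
The plan is to prove Lemma \ref{lemma:dominating_sequences}, the statement that if $B \ge A$ (componentwise, so $b_i \ge a_i$ for all $i$) then meet-covering of $A$ implies meet-covering of $B$, and meet-metric of $A$ implies meet-metric of $B$. The key observation is that the sequence-theoretic properties are characterised purely lattice-theoretically by Propositions \ref{proposition:meet-covering} and \ref{proposition:meet-metric}, so I would work with those characterisations rather than directly with the pseudometric $d_A$.

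For the meet-covering part, recall that by Proposition \ref{proposition:meet-covering}, $A$ is meet-covering if and only if $\bigjoin A = \one$. Since $b_i \ge a_i$ for all $i$, we have $\bigjoin B = \bigjoin_i b_i \ge \bigjoin_i a_i = \bigjoin A = \one$, hence $\bigjoin B = \one$ and $B$ is meet-covering. This is immediate.

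For the meet-metric part, by Proposition \ref{proposition:meet-metric}, $A$ being meet-metric is equivalent to $x = \bigjoin A_x$ for all $x \in L$, where $A_x = (x \meet a_i : i \in \N)$. I always have $x \ge \bigjoin A_x$ and $x \ge \bigjoin B_x$ (this is the general fact noted before Proposition \ref{proposition:meet-covering}). So I only need the reverse inequality $x \le \bigjoin B_x$. Since $b_i \ge a_i$, we get $x \meet b_i \ge x \meet a_i$, hence $\bigjoin B_x = \bigjoin_i (x \meet b_i) \ge \bigjoin_i (x \meet a_i) = \bigjoin A_x = x$, using the assumption that $A$ is meet-metric for the last equality. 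Therefore $x = \bigjoin B_x$ for all $x$, so $B$ is meet-metric by Proposition \ref{proposition:meet-metric}.

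There is essentially no obstacle here: both parts follow by monotonicity of the join from the lattice-theoretic characterisations already established. The only thing to be careful about is to invoke the right characterisation (covering $\leftrightarrow$ $\bigjoin A = \one$; metric $\leftrightarrow$ $x = \bigjoin A_x$ for all $x$) and to use the fact $x \ge \bigjoin A_x$ which holds for every sequence. I would present it as a short three-line argument. Note the lemma does not claim meet-complete is preserved (indeed it need not be, since completeness of the metric space is not monotone in this sense), so I would not attempt that.

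\begin{proof}
If $A$ is meet-covering, then $\bigjoin A = \one$ by Proposition \ref{proposition:meet-covering}. Since $b_i \ge a_i$ for all $i \in \N$, we have $\bigjoin B \ge \bigjoin A = \one$, hence $\bigjoin B = \one$ and $B$ is meet-covering by Proposition \ref{proposition:meet-covering}.

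Now suppose $A$ is meet-metric. By Proposition \ref{proposition:meet-metric}, $x = \bigjoin A_x$ for all $x \in L$. Fix $x \in L$. We always have $x \ge \bigjoin B_x$. Conversely, since $b_i \ge a_i$ gives $x \meet b_i \ge x \meet a_i$ for all $i$, we obtain
\[
    \bigjoin B_x = \bigjoin_{i \in \N} (x \meet b_i) \ge \bigjoin_{i \in \N} (x \meet a_i) = \bigjoin A_x = x.
\]
Thus $x = \bigjoin B_x$ for all $x \in L$, and $B$ is meet-metric by Proposition \ref{proposition:meet-metric}.
\end{proof}
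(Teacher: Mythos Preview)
Your proof is correct, but it takes a different route from the paper's. The paper argues directly with the pseudometrics: from $b_i \ge a_i$ one gets $x \meet b_i = y \meet b_i \implies x \meet a_i = y \meet a_i$ (since $a_i = b_i \meet a_i$), hence $\Delta_A(x,y) \subseteq \Delta_B(x,y)$ and $d_B \ge d_A$; both conclusions then follow at once from this single inequality. You instead invoke the lattice-theoretic characterisations of Propositions~\ref{proposition:meet-covering} and~\ref{proposition:meet-metric} and argue by monotonicity of join. Your approach is slightly cleaner in that it reuses the characterisations already proved rather than unpacking the definition of $d_A$, while the paper's approach has the mild advantage of yielding the stronger intermediate statement $d_B \ge d_A$ (which immediately handles both meet-covering and meet-metric in one stroke, and could conceivably be reused elsewhere).
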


\begin{proof}
We prove that $d_B \ge d_A$. Let $x,y \in L$, then for all $i \in \N$,
\[
    x \meet b_i = y \meet b_i \implies x \meet b_i \meet a_i = y \meet b_i \meet a_i \implies x \meet a_i = y \meet a_i.
\]
Thus $\Delta_A(x, y) \subseteq \Delta_B(x,y)$ and $d_B(x,y) \ge d_A(x,y)$.

Therefore, if $A$ is meet-covering, then $d_B(x, \one) \ge d_A( x, \one ) > 0$ for all $x \ne \one$ and $B$ is also meet-covering. The proof is similar for meet-metric.
\end{proof}

\begin{proof}[Proof of Proposition \ref{proposition:metric_nilpotent}]
\begin{dependency}
    \item 
    Let $\MeetSequence( \Residual{ \alpha } ) = (m_i = { \Residual{\alpha} }^i( \zero ) : i \in \N)$. If $\alpha$ is pre-meet-covering/metric, then let $A = (a_i : i \in \N)$ be a meet-covering/metric sequence with $a_0 = \zero$ and $\alpha(a_i) \le a_{i-1}$ for all $i \ge 1$. It is easily shown by induction that $a_i \le m_i$ for all $i \in \N$, thus $\MeetSequence( \Residual{ \alpha } )$ is pre-meet-covering/metric by Lemma \ref{lemma:dominating_sequences}.  Conversely, if $\alpha$ is a graph and $\MeetSequence( \Residual{ \alpha } )$ is meet-covering/metric/complete, then $m_0 = \zero$, and $\alpha( m_i ) = \alpha( \Residual{\alpha}( m_{i-1} ) ) \le m_{i-1}$ for all $i \ge 1$, and hence $\alpha$ is pre-meet-covering/metric/complete.

    \item
    Same proof as above for $\Opposite{ \Lattice{L} }$.
    
    \item
    Let $\MeetSequence( \gamma ) = (m_i = \gamma^i( \zero ) : i \in \N)$. If $\gamma$ is post-meet-covering/metric, then let $C = (c_i : i \in \N)$ be a meet-covering/metric sequence with $c_0 = \zero$ and $\gamma( c_{i-1} ) \ge c_i$ for all $i \ge 1$. Then if $\gamma$ is monotone, it is easily shown by induction that $c_i \le m_i$ for all $i \in \N$, thus $\MeetSequence( \gamma )$ is meet-covering/metric. Conversely, if $\MeetSequence( \gamma )$ is meet-covering/metric/complete, then $m_0 = \zero$, and $\gamma( m_{i-1} ) = m_i$ for all $i \ge 1$, and hence $\gamma$ is post-meet-covering/metric/complete.
    
    \item
    Same proof as above for $\Opposite{ \Lattice{L} }$.
\end{dependency}
\end{proof}

    


    



\subsection{Robert's theorem for mappings on complete lattices} \label{section:robert_complete_lattices}

In this subsection, we give our generalised Robert's theorem. We shall study its tightness in the following subsection. Robert's theorem comes in four kinds, one for each kind of dependency. The overall picture for pre-meet-dependency is displayed in Figure \ref{figure:robert_alpha}, while the overall picture for post-join-dependency is displayed in Figure \ref{figure:robert_delta}.

\begin{theorem}[Robert's theorem for mappings over complete lattices] \label{theorem:generalised_robert}
Let $\Lattice{L}$ be a complete lattice and let $\phi, \alpha, \beta, \gamma, \delta : L \to L$.
\begin{dependency}
    \item \label{item:robert_pre-meet}
    Suppose $\alpha$ is $A$-pre-meet-metric and $\phi$ pre-meet-depends on $\alpha$. The following hold.
    \begin{enumerate}
        \item \label{item:robert_pre-meet_formula_e}
        There exists $e \in L$ such that $e = \bigjoin_i \phi^i( x ) \meet a_i$ for all $x \in L$.
    
        \item \label{item:robert_pre-meet_injective}
        If $\phi$ has a pyramidal point, then $\phi$ strongly converges to $e$.
        
        \item \label{item:robert_pre-meet_metric}
        If $\alpha$ is $A$-pre-meet-complete, then $\phi$ strongly converges to $e$.
    
        \item \label{item:robert_pre-meet_nilpotent}
        If $\alpha$ is $K$-meet-nilpotent, then $e = \phi^K(x)$ for all $x \in L$.
    \end{enumerate}

    \item \label{item:robert_pre-join}
    Suppose $\beta$ is $B$-pre-join-metric and $\phi$ pre-join-depends on $\beta$. The following hold.
    \begin{enumerate}
        \item \label{item:robert_pre-join_formula_e}
        There exists $e \in L$ such that $e = \bigmeet_i \phi^i( x ) \join b_i$ for all $x \in L$.
    
        \item \label{item:robert_pre-join_injective}
        If $\phi$ has a pyramidal point, then $\phi$ strongly converges to $e$.
        
        \item \label{item:robert_pre-join_metric}
        If $\beta$ is $B$-pre-join-complete, then $\phi$ strongly converges to $e$.
    
        \item \label{item:robert_pre-join_nilpotent}
        If $\beta$ is $K$-join-nilpotent, then $e = \phi^K(x)$ for all $x \in L$.
    \end{enumerate}

    \item \label{item:robert_post-meet}
    Suppose $\gamma$ is $C$-post-meet-metric and $\phi$ post-meet-depends on $\gamma$. The following hold.
    \begin{enumerate}
        \item \label{item:robert_post-meet_formula_e}
        There exists $e \in L$ such that $e = \bigjoin_i \phi^i( x ) \meet c_i$ for all $x \in L$. 
    
        \item \label{item:robert_post-meet_injective}
        If $\phi$ has a pyramidal point, then $\phi$ strongly converges to $e$.
        
        \item \label{item:robert_post-meet_metric}
        If $\gamma$ is $C$-post-meet-complete, then $\phi$ strongly converges to $e$.
    
        \item \label{item:robert_post-meet_nilpotent}
        If $\gamma$ is $K$-join-nilpotent, then $e = \phi^K(x)$ for all $x \in L$. 
    \end{enumerate}
    
    \item \label{item:robert_post-join}
    Suppose $\delta$ is $D$-post-join-metric and $\phi$ post-join-depends on $\delta$. The following hold.
    \begin{enumerate}
        \item \label{item:robert_post-join_formula_e}
        There exists $e \in L$ such that $e = \bigmeet_i \phi^i( x ) \join d_i$ for all $x \in L$.
    
        \item \label{item:robert_post-join_injective}
        If $\phi$ has a pyramidal point, then $\phi$ strongly converges to $e$.
        
        \item \label{item:robert_post-join_metric}
        If $\delta$ is $D$-post-join-complete, then $\phi$ strongly converges to $e$.
    
        \item \label{item:robert_post-join_nilpotent}
        If $\delta$ is $K$-meet-nilpotent, then $e = \phi^K(x)$ for all $x \in L$.
    
    \end{enumerate}
\end{dependency}
\end{theorem}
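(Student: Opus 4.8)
The plan is to prove \ref{item:robert_pre-meet} and \ref{item:robert_post-meet} directly, following the ``metric'' (second) proof of Theorem~\ref{theorem:robert}, and then deduce \ref{item:robert_pre-join} and \ref{item:robert_post-join} by applying these two to the opposite lattice $\Opposite{\Lattice{L}}$. Indeed, pre-join-dependency on $\Lattice{L}$ is exactly pre-meet-dependency on $\Opposite{\Lattice{L}}$, post-join-dependency corresponds to post-meet-dependency, $B$-pre-join-metric/complete (resp. $D$-post-join-metric/complete) mappings are $B$-pre-meet-metric/complete (resp. $D$-post-meet-metric/complete) mappings on $\Opposite{\Lattice{L}}$, a $K$-join-nilpotent mapping is $K$-meet-nilpotent on $\Opposite{\Lattice{L}}$, and the formulas $e=\bigmeet_i \phi^i(x)\join b_i$, $e=\bigmeet_i \phi^i(x)\join d_i$ are the duals of $e=\bigjoin_i \phi^i(x)\meet a_i$. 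So it suffices to treat the two ``meet'' cases.

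For \ref{item:robert_pre-meet}, the first step is to show $\phi$ is $\tfrac12$-contractive for $d_A$. The key computation: for any $x,y$ and $i\ge 1$, if $x\meet a_{i-1}=y\meet a_{i-1}$ then, since $\alpha(a_i)\le a_{i-1}$, we get $x\meet\alpha(a_i)=y\meet\alpha(a_i)$, whence $\phi(x)\meet a_i=\phi(y)\meet a_i$ by pre-meet-dependency; combined with $a_0=\zero$ (so $0\notin\Delta_A$ of anything) this yields $\Delta_A(\phi(x),\phi(y))\subseteq\{i+1:i\in\Delta_A(x,y)\}$, hence $d_A(\phi(x),\phi(y))\le\tfrac12 d_A(x,y)$. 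Iterating, and using $a_0=\zero$ once more at each stage, $\Delta_A(\phi^k(x),\phi^k(y))\subseteq\N_{k+1}$ for all $k$; applying the case $y=\phi^m(x)$ shows $(\phi^i(x):i\in\N)$ is strongly Cauchy for $A$, and applying the case $y$ arbitrary shows $\phi^k(x)\meet a_k=\phi^k(y)\meet a_k$ for every $k$. Thus the sequences $A_{(\phi^i(x))}$ and $A_{(\phi^i(y))}$ agree term-by-term, so $\bigjoin_i(\phi^i(x)\meet a_i)$ is independent of $x$; this is the element $e$ of part~(a).

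Parts (b) and (c) then come from Section~\ref{subsection:contraction_principles}: as $A$ is meet-metric, $(L,d_A)$ is a bounded metric space on which $\phi$ is contractive, so Theorem~\ref{theorem:bounded_Banach} gives some $e_0$ with $\Pyr(\phi)=\Fix(\phi)\subseteq\{e_0\}$ and $\phi\StronglyConverges e_0$ whenever $\Fix(\phi)\ne\emptyset$; a pyramidal point of $\phi$ lies in $\Pyr(\phi)=\Fix(\phi)$, forcing $\Fix(\phi)=\{e_0\}$, which proves (b). If moreover $A$ is meet-complete, $(L,d_A)$ is a complete bounded metric space, so Corollary~\ref{corollary:complete_bounded_Banach} gives $\Pyr(\phi)=\Fix(\phi)=\{e_0\}$ and $\phi\StronglyConverges e_0$ unconditionally, proving (c). In both cases $e_0=e$: since $\phi^n(x)$ converges to $e_0$ in $d_A$ and $(\phi^i(x))$ is strongly Cauchy, $e_0\meet a_j=\phi^j(x)\meet a_j$ for every $j$, hence $e_0=\bigjoin A_{e_0}=\bigjoin_j(\phi^j(x)\meet a_j)=e$ by Proposition~\ref{proposition:meet-metric}. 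For part (d), a straightforward induction on $k$ using pre-meet-dependency shows $x\meet\alpha^k(s)=y\meet\alpha^k(s)$ implies $\phi^k(x)\meet s=\phi^k(y)\meet s$; taking $k=K$, $s=\one$ and $\alpha^K(\one)=\zero$ gives that $\phi^K$ is constant, so $\phi^K(x)$ is a fixed point with $\phi^n(x)=\phi^K(x)$ for all $n\ge K$, whence $\phi\StronglyConverges\phi^K(x)$ and $\phi^K(x)=e$ by (b).

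The case \ref{item:robert_post-meet} is entirely parallel: post-meet-dependency with $s=c_{i-1}$ gives $\phi(x)\meet\gamma(c_{i-1})=\phi(y)\meet\gamma(c_{i-1})$, and meeting with $c_i\le\gamma(c_{i-1})$ yields the same contraction estimate $d_C(\phi(x),\phi(y))\le\tfrac12 d_C(x,y)$ (again using $c_0=\zero$); parts (a)--(c) then go through verbatim with $C$ in place of $A$, and for (d) one shows by induction that $x\meet s=y\meet s$ implies $\phi^k(x)\meet\gamma^k(s)=\phi^k(y)\meet\gamma^k(s)$, then takes $s=\zero$, $k=K$ with $\gamma^K(\zero)=\one$. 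The main obstacle is part~(a): the well-definedness of $e$ hinges on the one-step index shift $\Delta_A(\phi(x),\phi(y))\subseteq\{i+1:i\in\Delta_A(x,y)\}$ — it is precisely this shift (available thanks to $a_0=\zero$, resp. $c_0=\zero$, in the definition of a pre-/post-meet-metric mapping) that upgrades two strongly Cauchy orbits into sequences coinciding at every coordinate, and without it $e$ would depend on the starting point.
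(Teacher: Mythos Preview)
Your proof is correct and follows essentially the same route as the paper: both establish the index shift $\Delta_A(\phi(x),\phi(y))\subseteq\Delta_A(x,y)+1$ (using $a_0=\zero$ and $\alpha(a_i)\le a_{i-1}$), conclude $\tfrac12$-contractivity for $d_A$, and then invoke the bounded and complete Banach principles for (b) and (c). The only cosmetic differences are that the paper proves~(a) by a direct one-line induction on $\phi^i(x)\meet a_i=\phi^i(y)\meet a_i$ rather than via the full contraction estimate, identifies the fixed point with $e$ by the shorter computation $u=\bigjoin A_u=\bigjoin_i\phi^i(u)\meet a_i=e$ (using Proposition~\ref{proposition:meet-metric} directly on a fixed point), and runs the induction for~(d) as $\phi^i(x)\meet\alpha^{K-i}(\one)=\phi^i(y)\meet\alpha^{K-i}(\one)$ rather than your more general $x\meet\alpha^k(s)=y\meet\alpha^k(s)\Rightarrow\phi^k(x)\meet s=\phi^k(y)\meet s$; these are equivalent.
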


\begin{proof}
We only give the proofs for \ref{item:robert_pre-meet}; the other proofs are similar and hence omitted.
\begin{enumerate}
    \item 
    Let $x, y \in L$. By induction on $i$, we prove that $\phi^i(x) \meet a_i = \phi^i(y) \meet a_i$. Therefore $\bigjoin_i \phi^i( x ) \meet a_i = \bigjoin_i \phi^i( y ) \meet a_i$, which is a point on the lattice.
    
    \item 
    We prove that for any $x,y \in L$, $\Delta_A( \phi(x), \phi(y) ) \subseteq \Delta_A( x, y ) + 1$. First, since $a_0 = \zero$, $\phi(x) \meet a_0 = \phi(y) \meet a_0$ and $0 \notin \Delta_A( \phi(x), \phi(y) )$. Second, for $i \ge 1$ we have
    \[
        x \meet a_{i-1} = y \meet a_{i-1} \implies x \meet \alpha( a_i ) = y \meet \alpha( a_i ) \implies \phi(x) \meet a_i = \phi(y) \meet a_i.
    \]
    Therefore, if $j \in \Delta_A( \phi(x), \phi(y) )$, then $j-1 \in \Delta_A( x, y )$. 
    
    This proves that the mapping $\phi$ is $\frac{1}{2}$-contractive for $d_A$. The bounded Banach contraction principle then shows that if $\phi$ has a pyramidal point, then it strongly converges to its unique fixed point. We prove that the only possible fixed point is $e$. Let $u \in \Fix( \phi )$, then 
    \[
        u = \bigjoin A_u = \bigjoin_i u \meet a_i = \bigjoin_i \phi^i( u ) \meet a_i = e.
    \]

    \item 
    The complete Banach contraction principle guarantees the existence of a fixed point in this case.

    \item
    It is easily shown by induction on $0 \le i \le K$, that for all $x, y \in L$, $\phi^i(x) \meet \alpha^{K-i}( \one ) = \phi^i(y) \meet \alpha^{K-i}( \one )$. In particular, for $i = K$ we obtain $\phi^K( x ) = \phi^K( y )$.
\end{enumerate}
\end{proof}

\begin{figure}
\resizebox{\textwidth}{!}{
\begin{tikzpicture}[xscale=5, yscale=4]
    \node (apn) at (0,0) [draw,thick,minimum width=2cm,minimum height=2cm, text width = 3.2cm, align=center] {$\Residual{\alpha}$\\ $K$-join-nilpotent};
    \node (apm) at (1,0) [draw,thick,minimum width=2cm,minimum height=2cm, text width = 3.2cm, align=center] {$\MeetSequence( \Residual{ \alpha } )$\\ meet-complete};
    \node (api) at (2,0) [draw,thick,minimum width=2cm,minimum height=2cm, text width = 3.2cm, align=center] {$\MeetSequence( \Residual{ \alpha } )$\\ meet-metric};
    \node (apc) at (3,0) [draw,thick,minimum width=2cm,minimum height=2cm, text width = 3.2cm, align=center] {$\Residual{\alpha}$\\ asymptotically join-nilpotent};

    \draw[-latex, color=black] (apn) -- (apm);
    \draw[-latex, color=black] (apm) -- (api);
    \draw[-latex, color=black] (api) -- (apc);

    \draw[-latex, color=purple] (api) to [bend left=15] node[below] {\tiny P.\ref{proposition:discriminating}\ref{item:discriminating_A}} (apm);
    \draw[-latex, color=purple] (apc) to [bend left=15] node[below] {\tiny P.\ref{proposition:discriminating}\ref{item:discriminating_A}} (api);

    \node (an) at (0,1) [draw,thick,minimum width=2cm,minimum height=2cm,  text width = 3.2cm, align=center] {$\alpha$\\ $K$-meet-nilpotent};
    \node (am) at (1,1) [draw,thick,minimum width=2cm,minimum height=2cm, text width = 3.2cm, align=center] {$\alpha$\\  pre-meet-complete};
    \node (ai) at (2,1) [draw,thick,minimum width=2cm,minimum height=2cm, text width = 3.2cm, align=center] {$\alpha$\\ pre-meet-metric};
    \node (ac) at (3,1) [draw,thick,minimum width=2cm,minimum height=2cm, text width = 3.2cm, align=center] {$\alpha$\\ pre-meet-covering};

    \draw[-latex, color=black] (an) -- (am);
    \draw[-latex, color=black] (am) -- (ai);
    \draw[-latex, color=black] (ai) -- (ac);

    \draw[-latex, color=purple] (ai) to [bend left=15] node[below] {\tiny P.\ref{proposition:discriminating}\ref{item:discriminating_A}} (am);
    \draw[-latex, color=purple] (ac) to [bend left=15] node[below] {\tiny P.\ref{proposition:discriminating}\ref{item:discriminating_A}} (ai);

    \node (pn) at (0,2) [draw,thick,minimum width=3cm,minimum height=2cm, text width = 3.2cm, align=center] {$\phi^K = e$};
    \node (pm) at (1,2) [draw,thick,minimum width=3cm,minimum height=2cm, text width = 3.2cm, align=center] {$\phi \StronglyConverges e$};
    \node (pi) at (2,2) [draw,thick,minimum width=3cm,minimum height=2cm, text width = 3.2cm, align=center] {$\Pyr(\phi) \subseteq \{ e \}$};
    \node (pf) at (3,2) [draw,thick,minimum width=3cm,minimum height=2cm, text width=3.2cm, align=center] {$|\Fix(\phi)| \le 1$};

    \draw[-latex, color=black] (pn) -- (pm);
    \draw[-latex, color=black] (pm) -- (pi);
    \draw[-latex, color=black] (pi) -- (pf);

    \draw[-latex, color=blue] (apn) to [bend right=30] node[right] {\tiny L.\ref{lemma:nilpotent}\ref{item:nilpotent_graph}} (an);
    \draw[-latex, color=blue] (apm) to [bend right=30] node[right] {\tiny P.\ref{proposition:metric_nilpotent}\ref{item:pre-meet-covering} } (am);
    \draw[-latex, color=blue] (api) to [bend right=30] node[right] {\tiny P.\ref{proposition:metric_nilpotent}\ref{item:pre-meet-covering} } (ai);
    \draw[-latex, color=blue] (apc) to [bend right=30] node[right] {\tiny P.\ref{proposition:metric_nilpotent}\ref{item:pre-meet-covering} } (ac);

    \draw[-latex, color=black] (an) to node[left] {\tiny L.\ref{lemma:nilpotent}\ref{item:nilpotent_minus}} (apn);
    \draw[-latex, color=purple] (am) to [bend right=30] (apm);
    \draw[-latex, color=black] (ai) to node[left] {\tiny P.\ref{proposition:metric_nilpotent}\ref{item:pre-meet-covering} } (api);
    \draw[-latex, color=black] (ac) to node[left] {\tiny P.\ref{proposition:metric_nilpotent}\ref{item:pre-meet-covering} } (apc);

    \draw[-latex, color=black] (an) to node[left] {\tiny Th.\ref{theorem:robert}\ref{item:robert_pre-meet}\ref{item:robert_pre-meet_nilpotent} } (pn);
    \draw[-latex, color=black] (am) to node[left] {\tiny Th.\ref{theorem:robert}\ref{item:robert_pre-meet}\ref{item:robert_pre-meet_metric} } (pm);
    \draw[-latex, color=black] (ai) to node[left] {\tiny Th.\ref{theorem:robert}\ref{item:robert_pre-meet}\ref{item:robert_pre-meet_injective} } (pi);
    
    \draw[-latex, color=red, dotted] (am) to node[left] {\tiny Ex.\ref{example:counterexample_nilpotent}} (pn);
    \draw[-latex, color=red, dotted] (ai) to node[left] {\tiny Ex.\ref{example:counterexample_convergence} } (pm);
    \draw[-latex, color=red, dotted] (ac) to node[left] {\tiny Ex.\ref{example:counterexample_limit} } (pf);

\end{tikzpicture}
}
\caption{Robert's theorem for pre-meet-dependency. 
The \textbf{\textcolor{red}{red dotted arrows}} represent counterexamples, the \textbf{\textcolor{purple}{purple arrows}} represent implications that hold when $\Lattice{L}$ is a frame, the \textbf{\textcolor{blue}{blue arrows}} represent implications that hold when $\alpha$ is a graph, and the \textbf{\textcolor{black}{black arrows}} represent unconditional implications.}
\label{figure:robert_alpha} 
\end{figure}
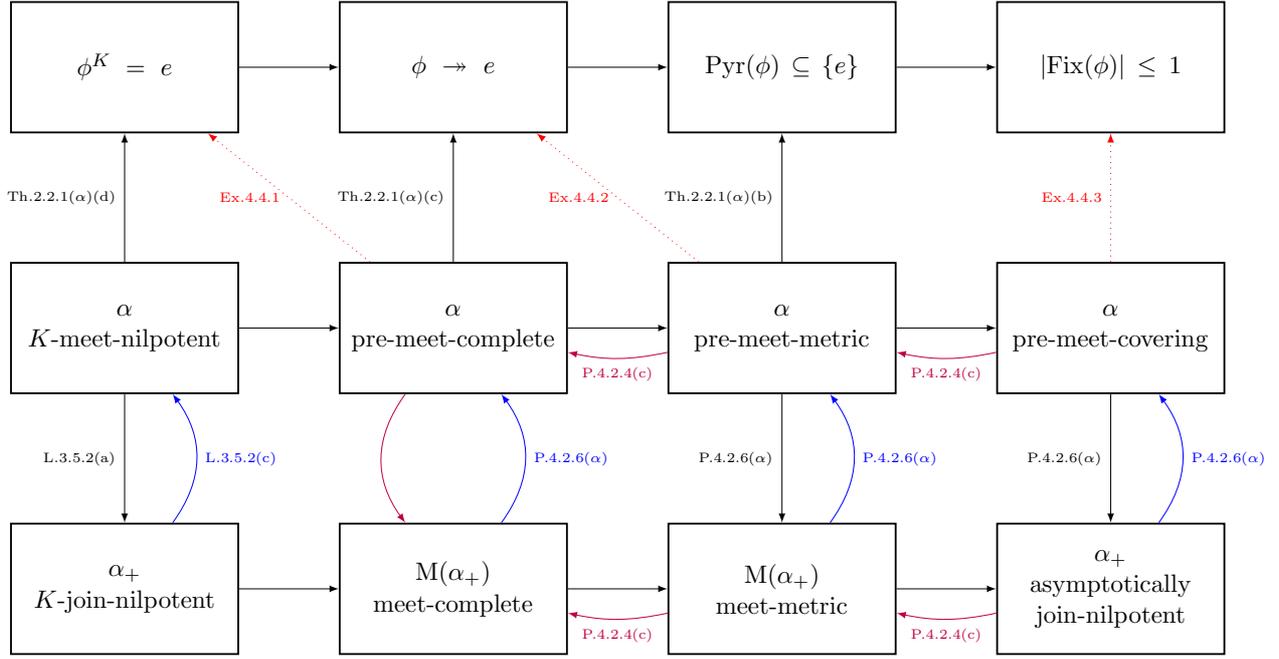

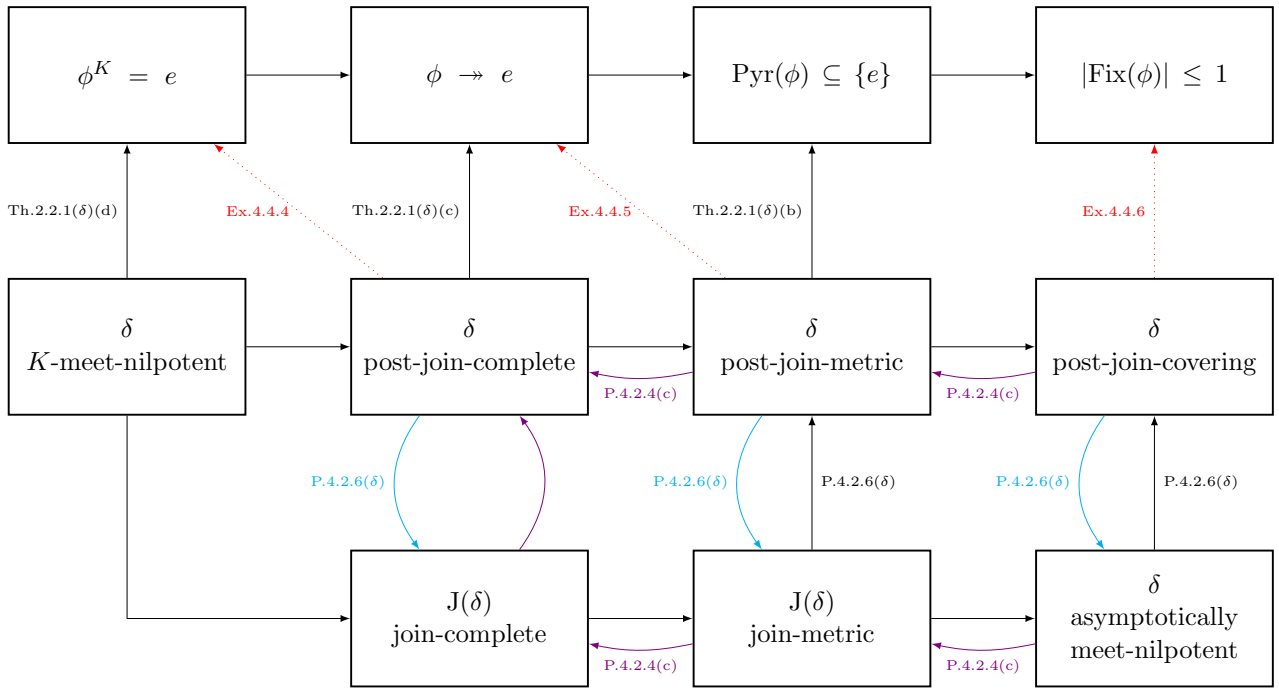
\begin{figure}
\resizebox{\textwidth}{!}{
\begin{tikzpicture}[xscale=5, yscale=4]
    \node (apm) at (1,0) [draw,thick,minimum width=2cm,minimum height=2cm, text width = 3.2cm, align=center] {$\JoinSequence( \delta )$\\ join-complete};
    \node (api) at (2,0) [draw,thick,minimum width=2cm,minimum height=2cm, text width = 3.2cm, align=center] {$\JoinSequence( \delta )$\\ join-metric};
    \node (apc) at (3,0) [draw,thick,minimum width=2cm,minimum height=2cm, text width = 3.2cm, align=center] {$\delta$\\ asymptotically meet-nilpotent};

    \draw[-latex, color=black] (apm) -- (api);
    \draw[-latex, color=black] (api) -- (apc);

    \draw[-latex, color=violet] (api) to [bend left=15] node[below] {\tiny P.\ref{proposition:discriminating}\ref{item:discriminating_A}} (apm);
    \draw[-latex, color=violet] (apc) to [bend left=15] node[below] {\tiny P.\ref{proposition:discriminating}\ref{item:discriminating_A}} (api);

    \node (an) at (0,1) [draw,thick,minimum width=2cm,minimum height=2cm,  text width = 3.2cm, align=center] {$\delta$\\ $K$-meet-nilpotent};
    \node (am) at (1,1) [draw,thick,minimum width=2cm,minimum height=2cm, text width = 3.2cm, align=center] {$\delta$\\  post-join-complete};
    \node (ai) at (2,1) [draw,thick,minimum width=2cm,minimum height=2cm, text width = 3.2cm, align=center] {$\delta$\\ post-join-metric};
    \node (ac) at (3,1) [draw,thick,minimum width=2cm,minimum height=2cm, text width = 3.2cm, align=center] {$\delta$\\ post-join-covering};

    \draw[-latex, color=black] (an) -- (am);
    \draw[-latex, color=black] (am) -- (ai);
    \draw[-latex, color=black] (ai) -- (ac);

    \draw[-latex, color=violet] (ai) to [bend left=15] node[below] {\tiny P.\ref{proposition:discriminating}\ref{item:discriminating_A}} (am);
    \draw[-latex, color=violet] (ac) to [bend left=15] node[below] {\tiny P.\ref{proposition:discriminating}\ref{item:discriminating_A}} (ai);

    \node (pn) at (0,2) [draw,thick,minimum width=3cm,minimum height=2cm, text width=3.2cm, align=center] {$\phi^K = e$};
    \node (pm) at (1,2) [draw,thick,minimum width=3cm,minimum height=2cm, text width=3.2cm, align=center] {$\phi \StronglyConverges e$};
    \node (pi) at (2,2) [draw,thick,minimum width=3cm,minimum height=2cm, text width=3.2cm, align=center] {$\Pyr(\phi) \subseteq \{ e \}$};
    \node (pf) at (3,2) [draw,thick,minimum width=3cm,minimum height=2cm, text width=3.2cm, align=center] {$|\Fix(\phi)| \le 1$};

    \draw[-latex, color=black] (pn) -- (pm);
    \draw[-latex, color=black] (pm) -- (pi);
    \draw[-latex, color=black] (pi) -- (pf);

    \draw[-latex, color=cyan] (am) to [bend right=30] node[left] {\tiny P.\ref{proposition:metric_nilpotent}\ref{item:post-join-covering} } (apm);
    \draw[-latex, color=cyan] (ai) to [bend right=30] node[left] {\tiny P.\ref{proposition:metric_nilpotent}\ref{item:post-join-covering} } (api);
    \draw[-latex, color=cyan] (ac) to [bend right=30] node[left] {\tiny P.\ref{proposition:metric_nilpotent}\ref{item:post-join-covering} } (apc);

    \draw[-latex, color=violet] (apm) to [bend right=30] (am);
    \draw[-latex, color=black] (an) |- (apm);
    \draw[-latex, color=black] (api) to node[right] {\tiny P.\ref{proposition:metric_nilpotent}\ref{item:post-join-covering} }  (ai);
    \draw[-latex, color=black] (apc) to node[right] {\tiny P.\ref{proposition:metric_nilpotent}\ref{item:post-join-covering} }  (ac);

    \draw[-latex, color=black] (an) to node[left] {\tiny Th.\ref{theorem:robert}\ref{item:robert_post-join}\ref{item:robert_post-join_nilpotent} } (pn);
    \draw[-latex, color=black] (am) to node[left] {\tiny Th.\ref{theorem:robert}\ref{item:robert_post-join}\ref{item:robert_post-join_metric} } (pm);
    \draw[-latex, color=black] (ai) to node[left] {\tiny Th.\ref{theorem:robert}\ref{item:robert_post-join}\ref{item:robert_post-join_injective} } (pi);
    
    \draw[-latex, color=red, dotted] (am) to node[left] {\tiny Ex.\ref{example:counterexample_complete_delta}} (pn);
    \draw[-latex, color=red, dotted] (ai) to node[left] {\tiny Ex.\ref{example:counterexample_metric_delta} } (pm);
    \draw[-latex, color=red, dotted] (ac) to node[left] {\tiny Ex.\ref{example:counterexample_covering_delta} } (pf);

\end{tikzpicture}
}
\caption{Robert's theorem for post-join-dependency. 
The \textbf{\textcolor{red}{red dotted arrows}} represent counterexamples, the \textbf{\textcolor{violet}{violet arrows}} represent implications that hold when $\Lattice{L}$ is a locale, the \textbf{\textcolor{cyan}{cyan arrows}} represent implications that hold when $\delta$ is monotone, and the \textbf{\textcolor{black}{black arrows}} represent unconditional implications.}
\label{figure:robert_delta} 
\end{figure}

\subsection{Tightness of Robert's theorem} \label{subsection:limitations_robert}

In this section, we show that the hypotheses in Robert's theorem cannot be relaxed much further without affecting the corresponding results. 

We first give three counterexamples for pre-meet-dependency (item \ref{item:robert_pre-meet} in Theorem \ref{theorem:generalised_robert}).

First, we give a counterexample to: ``If $\alpha$ is pre-meet-complete, then $\phi^K = e$ for some $K$.'' We give it for a power set lattice.

\begin{example} \label{example:counterexample_nilpotent}
Let $\phi$ be the backwards ray and $\alpha$ be the ray.  Then it is easily shown that $\phi$ pre-meet-depends on $\alpha$ (we shall make this connection more obvious in Section \ref{section:dependencies_boolean}), and that $\alpha$ is pre-meet-complete. However, we have $\phi^K( \N ) = \N_K$ for all $K$.
\end{example}

Second, we give a counterexample to: ``If $\alpha$ is pre-meet-metric, then $\phi \StronglyConverges e$.'' That statement holds for frames, and we give a counterexample for a non-distributive lattice.

\begin{example} \label{example:counterexample_convergence}
Let $\Lattice{L}$ be given as follows. Let $L = A \cup B \cup \{ \one \}$, where $A = (a_i : i \in \N)$ and $B = (b_i : i \in \N)$ are infinite ascending chains and $a_0 = b_0 = \zero$. We then have $a_i \meet b_j = b_k$ where $k = \min\{ i-1, j \}$. 

It is easily shown that $A$ is meet-metric, i.e. $x = \bigjoin_i x \meet a_i$ for all $x \in L$. Let us show that $A$ is not meet-complete. The sequence $B$ is strongly Cauchy for $A$, since $i \ge j$ implies $b_i \meet a_j = b_{ j-1 } = b_j \meet a_j$. However, it does not have a limit: if $x = b_j$, then $x \meet a_{j+1} = b_j$ while $b_i \meet a_{j+1} = b_{j+1}$ for all $i \ge j+1$; if $x \notin B$, then $x \meet a_i \notin B$ while $b_j \meet a_i \in B$.

Let $\alpha: L \to L$ be defined as $\alpha(a_{i+1}) = a_i$ for all $i \in \N$ and $\alpha( x ) = \one$ for all $x \notin A$. Since $a_0 = \zero$ and $\alpha(a_i) \le a_{i-1}$ for all $i \ge 1$, $\alpha$ is pre-meet-metric.

Let $\phi : L \to L$ be defined as 
\[
    \phi(x) = \begin{cases}
        b_{i+1} &\text{if } x = b_i \\
        \zero &\text{otherwise}.
    \end{cases}
\]
Then $\phi$ has no pyramidal points. All that is left to show is that $\phi$ pre-meet-depends on $\alpha$. Let $x, y, s \in L$ with $x \ne y$ and suppose $x \meet \alpha(s) = y \meet \alpha(s)$; we need to prove that $\phi(x) \meet s = \phi(y) \meet s$. This is clear for $s = \zero$ or $s \notin A$. Therefore, we can assume $s = a_{i+1}$ for some $i \ge 0$ (and hence $\alpha(s) = a_i$). We now perform a case analysis on $x$ and $y$.
\begin{enumerate}
    \item 
    $x, y \notin B$. Then $\phi(x) = \phi(y)$.

    \item
    $x \notin B, y \in B$. We have $i = 0$, for otherwise $x \meet a_i \in A$ while $y \meet a_i \in B$, hence $x \meet \alpha(s) \ne y \meet \alpha(s)$. Therefore, $\phi(x) \meet a_1 = \phi(y) \meet a_1 = \zero$.

    \item
    $x,y \in B$. Then $x = b_k$ and $y = b_l$ with $k,l \ge i-1$. We then have
    \[
        \phi(x) \meet a_{i+1} = b_{k+1} \meet a_{i+1} = b_i = b_{l+1} \meet a_{i+1} = \phi(y) \meet a_{i+1}.
    \]
\end{enumerate}
This example is illustrated in Figure \ref{figure:counterexample_convergence}.
\end{example}

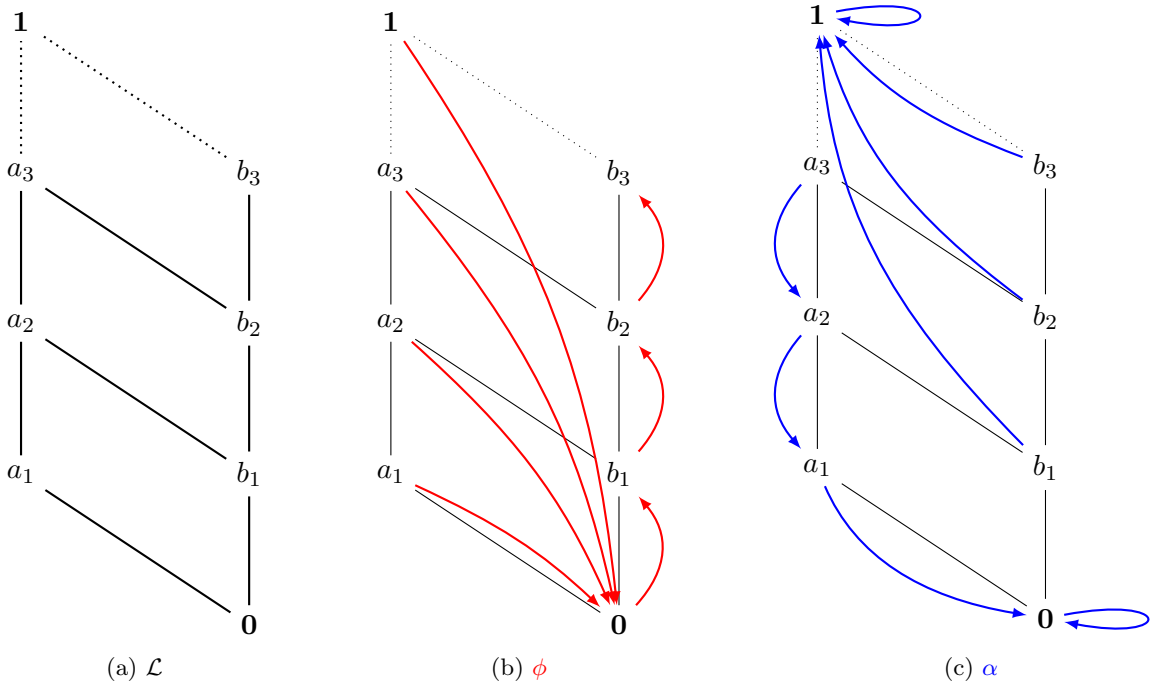
\begin{figure}[!htp]
\centering
\subfloat[$\Lattice{L}$]{

\begin{tikzpicture}[xscale=3, yscale=2]

    \node (one) at (0,4) {$\one$};
    \node (a3) at (0,3) {$a_3$};
    \node (a2) at (0,2) {$a_2$};
    \node (a1) at (0,1) {$a_1$};
    \node (b3) at (1,3) {$b_3$};
    \node (b2) at (1,2) {$b_2$};
    \node (b1) at (1,1) {$b_1$};
    \node (zero) at (1,0) {$\zero$};

    \draw[thick] (zero) -- (b1);
    \draw[thick] (zero) -- (a1);
    \draw[thick] (a1) -- (a2);
    \draw[thick] (a2) -- (a3);
    \draw[thick] (b1) -- (b2);
    \draw[thick] (b2) -- (b3);
    \draw[thick] (b1) -- (a2);
    \draw[thick] (b2) -- (a3);    
    \draw[thick, dotted] (a3) -- (one);    
    \draw[thick, dotted] (b3) -- (one);

\end{tikzpicture}
}
\hspace{1cm}\subfloat[\textcolor{red}{$\phi$}]{
\begin{tikzpicture}[xscale=3, yscale=2]

    \node (one) at (0,4) {$\one$};
    \node (a3) at (0,3) {$a_3$};
    \node (a2) at (0,2) {$a_2$};
    \node (a1) at (0,1) {$a_1$};
    \node (b3) at (1,3) {$b_3$};
    \node (b2) at (1,2) {$b_2$};
    \node (b1) at (1,1) {$b_1$};
    \node (zero) at (1,0) {$\zero$};


    \draw (zero) -- (b1);
    \draw (zero) -- (a1);
    \draw (a1) -- (a2);
    \draw (a2) -- (a3);
    \draw (b1) -- (b2);
    \draw (b2) -- (b3);
    \draw (b1) -- (a2);
    \draw (b2) -- (a3);    
    \draw[dotted] (a3) -- (one);    
    \draw[dotted] (b3) -- (one);

    \draw[-latex, red, thick] (zero) to [bend right=30] (b1);
    \draw[-latex, red, thick] (b1) to [bend right=30] (b2);
    \draw[-latex, red, thick] (b2) to [bend right=30] (b3);
    \draw[-latex, red, thick] (a1) to [bend left=10] (zero);
    \draw[-latex, red, thick] (a2) to [bend left=10] (zero);
    \draw[-latex, red, thick] (a3) to [bend left=10] (zero);
    \draw[-latex, red, thick] (one) to [bend left=10] (zero);


\end{tikzpicture}
}%
\hspace{1cm}\subfloat[\textcolor{blue}{$\alpha$}]{
\begin{tikzpicture}[xscale=3, yscale=2]

    \node (one) at (0,4) {$\one$};
    \node (a3) at (0,3) {$a_3$};
    \node (a2) at (0,2) {$a_2$};
    \node (a1) at (0,1) {$a_1$};
    \node (b3) at (1,3) {$b_3$};
    \node (b2) at (1,2) {$b_2$};
    \node (b1) at (1,1) {$b_1$};
    \node (zero) at (1,0) {$\zero$};

    \draw (zero) -- (b1);
    \draw (zero) -- (a1);
    \draw (a1) -- (a2);
    \draw (a2) -- (a3);
    \draw (b1) -- (b2);
    \draw (b2) -- (b3);
    \draw (b1) -- (a2);
    \draw (b2) -- (a3);    
    \draw[dotted] (a3) -- (one);    
    \draw[dotted] (b3) -- (one);

    \draw[-latex, blue, thick] (one) to [loop right] (one);
    \draw[-latex, blue, thick] (b3) to [bend left=15] (one);
    \draw[-latex, blue, thick] (a3) to [bend right=30] (a2);
    \draw[-latex, blue, thick] (b2) to [bend left=15] (one);
    \draw[-latex, blue, thick] (a2) to [bend right=30] (a1);
    \draw[-latex, blue, thick] (b1) to [bend left=15] (one);
    \draw[-latex, blue, thick] (a1) to [bend right=30] (zero);
    \draw[-latex, blue, thick] (zero) to [loop right] (zero);
    

\end{tikzpicture}
}
\caption{The counterexample given in Example \ref{example:counterexample_convergence}.}
\label{figure:counterexample_convergence}
\end{figure}

Third, we give a counterexample to: ``If $\alpha$ is pre-meet-covering, then $| \Fix( \phi ) | \le 1$.'' Again, this statement holds for frames and our counterexample is for a non-distributive lattice. The counterexample actually has two fixed points but replacing $\{ b \}$ in the example below by any arbitrary antichain $B$ yields as many fixed points as we want.

\begin{example} \label{example:counterexample_limit}
Let $\Lattice{L}$ be defined as follows. Let $L = A \cup \{ b \} \cup \{ \one \}$, where $A = (a_i : i \in \N)$ is an infinite ascending chain with $a_0 = \zero$ and $\bigjoin A = \one$, and $b \meet a_i = \zero$, $b \join a_i = \one$ for all $a_i \in A$.

Let $\alpha$ be the graph with $\alpha( \zero ) = \zero$, $\alpha( a_i ) = a_{i-1}$ for all $i \ge 1$, and $\alpha( b ) = \alpha( \one ) = \one$. Let $\phi$ with
\[
    \phi(x) = \begin{cases}
        b & \text{if } x = b\\
        \zero & \text{otherwise}.
    \end{cases}
\]
Then $\Fix(\phi) = \{ \zero, b \}$. All that is left to prove is that $\phi$ pre-meet-depends on $\alpha$. Let $x, y, s \in L$ with $x \ne y$ and suppose $x \meet \alpha(s) = y \meet \alpha(s)$; we need to prove that $\phi(x) \meet s = \phi(y) \meet s$. This is clear for $s \notin \{b, \one\}$, since in that case $\phi(u) \meet s = \zero$ for all $u \in L$. If $s \in \{b, \one\}$, then $\alpha(s) = \one$, thus $x = y$ and we are done.

This example is illustrated in Figure \ref{figure:counterexample_limit}.
\end{example}

\begin{figure}[!htp]
    \centering

\subfloat[$\Lattice{L}$]{
\begin{tikzpicture}[xscale=3, yscale=2]

    \node (one) at (0,4) {$\one$};
    \node (a3) at (0,3) {$a_3$};
    \node (a2) at (0,2) {$a_2$};
    \node (a1) at (0,1) {$a_1$};
    
    \node (b) at (1,2) {$b$};
    
    \node (zero) at (1,0) {$\zero$};

    \draw[thick] (zero) -- (b);
    \draw[thick] (zero) -- (a1);
    \draw[thick] (a1) -- (a2);
    \draw[thick] (a2) -- (a3);
    \draw[thick, dotted] (a3) -- (one);    
    \draw[thick] (b) -- (one);

\end{tikzpicture}
}%
\hspace{1cm}\subfloat[\textcolor{red}{$\phi$}]{
\begin{tikzpicture}[xscale=3, yscale=2]

    \node (one) at (0,4) {$\one$};
    \node (a3) at (0,3) {$a_3$};
    \node (a2) at (0,2) {$a_2$};
    \node (a1) at (0,1) {$a_1$};
    
    \node (b) at (1,2) {$b$};
    
    \node (zero) at (1,0) {$\zero$};

    \draw (zero) -- (b);
    \draw (zero) -- (a1);
    \draw (a1) -- (a2);
    \draw (a2) -- (a3);
    \draw[dotted] (a3) -- (one);    
    \draw (b) -- (one);

    \draw[-latex, red, thick] (zero) to [loop left] (zero);
    \draw[-latex, red, thick] (b) to [loop above] (b);
    \draw[-latex, red, thick] (a1) to [bend left=10] (zero);
    \draw[-latex, red, thick] (a2) to [bend left=10] (zero);
    \draw[-latex, red, thick] (a3) to [bend left=10] (zero);
    \draw[-latex, red, thick] (one) to [bend left=10] (zero);

\end{tikzpicture}
}%
\hspace{1cm}\subfloat[\textcolor{blue}{$\alpha$}]{
\begin{tikzpicture}[xscale=3, yscale=2]

    \node (one) at (0,4) {$\one$};
    \node (a3) at (0,3) {$a_3$};
    \node (a2) at (0,2) {$a_2$};
    \node (a1) at (0,1) {$a_1$};
    
    \node (b) at (1,2) {$b$};
    
    \node (zero) at (1,0) {$\zero$};

    \draw (zero) -- (b);
    \draw (zero) -- (a1);
    \draw (a1) -- (a2);
    \draw (a2) -- (a3);
    \draw[dotted] (a3) -- (one);    
    \draw (b) -- (one);

    \draw[-latex, blue, thick] (one) to [loop right] (one);
    \draw[-latex, blue, thick] (b) to [bend right=30] (one);
    \draw[-latex, blue, thick] (a3) to [bend right=30] (a2);
    \draw[-latex, blue, thick] (a2) to [bend right=30] (a1);
    \draw[-latex, blue, thick] (a1) to [bend right=30] (zero);
    \draw[-latex, blue, thick] (zero) to [loop right] (zero);

\end{tikzpicture}
}
    \caption{The counterexample given in Example \ref{example:counterexample_limit}.}
    \label{figure:counterexample_limit}
\end{figure}
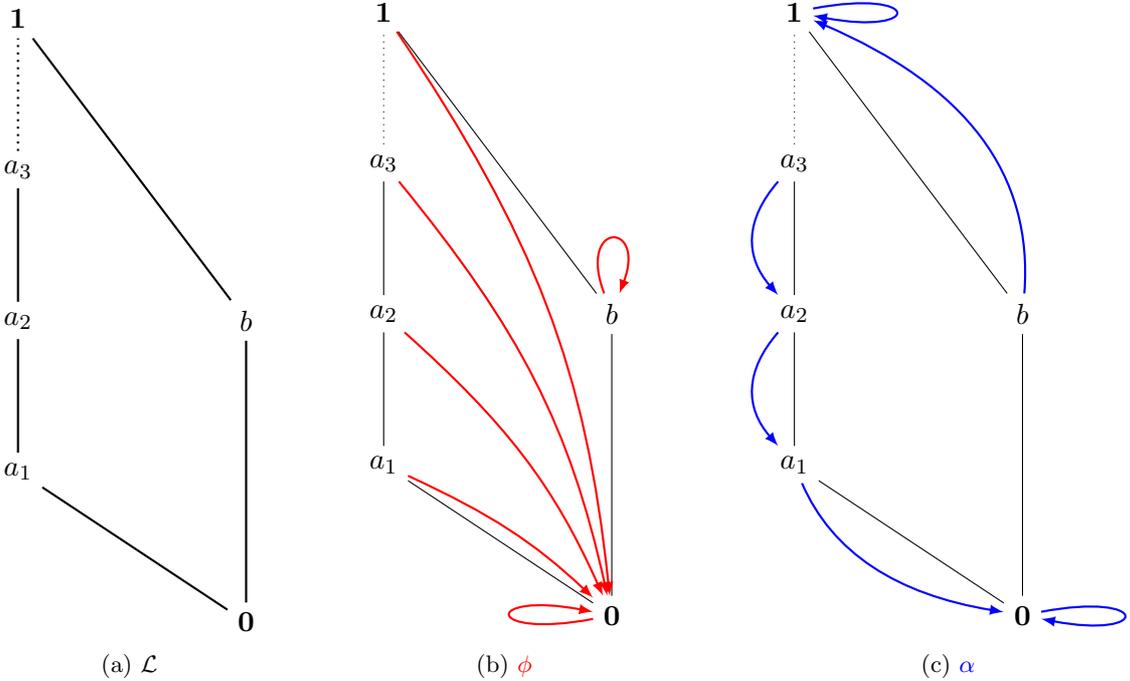

We now give three counterexamples for post-join-dependency (item \ref{item:robert_post-join} in Theorem \ref{theorem:generalised_robert}). All those counterexamples will be closely linked to their counterparts for pre-meet-dependency; as such, we shall omit the proofs.

First, we give a counterexample to: ``If $\delta$ is post-join-complete, then $\phi^K = e$.''

\begin{example} \label{example:counterexample_complete_delta}
Let $\phi = \delta$ be the backwards ray. Then $\delta$ is post-join-complete, $\phi$ satisfies $\phi^i( \N ) = \N_i$ for all $i \in \N$, and $\phi$ post-join-depends on $\delta$.
\end{example}

Second, we give a counterexample to: ``If $\delta$ is post-join-metric, then $\phi \StronglyConverges e$.''

\begin{example} \label{example:counterexample_metric_delta}
We consider the lattice opposite to the one in Example \ref{example:counterexample_convergence}. Let $L = A \cup B \cup \{ \zero \}$, where $A = (a_i : i \in \N)$ and $B = (b_i : i \in \N)$ are infinite descending chains and $a_0 = b_0 = \one$. We then have $a_i \join b_j = b_k$ where $k = \min\{ i-1, j \}$. Then $A$ is a join-metric but not join-complete sequence.

Let $\delta: L \to L$ be defined as $\delta(a_i) = a_{i+1}$ for all $i \in \N$ and $\delta( x ) = \one$ for all $x \notin A$. Then $\delta$ is $A$-pre-meet-metric.

Let $\phi : L \to L$ be defined as 
\[
    \phi(x) = \begin{cases}
        b_{i+1} &\text{if } x = b_i \\
        \one &\text{otherwise}.
    \end{cases}
\]
Then $\phi$ has no pyramidal points and $\phi$ post-join-depends on $\delta$.
\end{example}

Third, we give a counterexample to: ``If $\delta$ is post-join-covering, then $|\Fix( \phi )| \le 1$.''

\begin{example} \label{example:counterexample_covering_delta}
We consider the lattice opposite to the one in Example \ref{example:counterexample_limit}. Let $L = A \cup \{ b \} \cup \{ \zero \}$, where $A = (a_i : i \in \N)$ is an infinite descending chain with $a_0 = \one$ and $\bigmeet A = \zero$, and $b \join a_i = \one$, $b \meet a_i = \zero$ for all $a_i \in A$.

Let $\delta$ be defined as $\delta( a_i ) = a_{i-1}$ for all $i \in \N$, and $\delta( b ) = \delta( \zero ) = \one$. Let $\phi$ with
\[
    \phi(x) = \begin{cases}
        b & \text{if } x = b\\
        \one & \text{otherwise}.
    \end{cases}
\]
Then $\Fix(\phi) = \{ \one, b \}$ and $\phi$ post-join-depends on $\delta$.
\end{example}

\subsection{Feedback bounds} \label{Section:feedback_bounds}

While Robert's theorem was based on -complete mappings, the feedback bound is based on -metric mappings instead. Since there are four kinds of dependency, we obtain four kinds of feedback bounds. The four kinds of feedback vertex set (FVS) we consider here are defined as follows. Let $\Lattice{L}$ be a complete lattice and $\alpha, \beta, \gamma, \delta : L \to L$.
\begin{dependency}
    \item \label{item:pre-meet-FVS}
    Say $a \in L$ is a \Define{pre-meet-FVS} of $\alpha$ if there exists a meet-metric sequence $A = (a_i : i \in \N)$ with $a_0 = a$ and $\alpha(a_i) \le a_{i-1}$ for all $i \ge 1$.

    \item \label{item:pre-join-FVS}
    Say $b \in L$ is a \Define{pre-join-FVS} of $\beta$ if there exists a join-metric sequence $B = ( b_i : i \in \N )$ with $b_0 = b$ and $\beta( b_i ) \ge b_{i-1}$ for all $i \ge 1$.

    \item \label{item:post-meet-FVS}
    Say $c \in L$ is a \Define{post-meet-FVS} of $\gamma$ if there exists a meet-metric sequence $C = (c_i : i \in \N)$ with $c_0 = c$ and $\gamma(c_{i-1}) \ge c_i$ for all $i \ge 1$. 

    \item \label{item:post-join-FVS}
    Say $d \in L$ is a \Define{post-join-FVS} of $\delta$ if there exists a join-metric sequence $D = ( d_i : i \in \N )$ with $d_0 = d$ and $\delta( d_{i-1} ) \le d_i$ for all $i \ge 1$.
\end{dependency}



\begin{theorem}[Feedback bound for complete lattices] \label{theorem:feedback_bound}
Let $\phi, \alpha, \beta, \gamma, \delta : L \to L$. Then the following hold.
\begin{dependency}
    \item \label{item:feedback_bound_alpha}
    Let $\phi$ pre-meet-depend on $\alpha$ with pre-meet-FVS $a$. Then $|\Fix(\phi)| \le |a^\downarrow|$. 
    
    \item \label{item:feedback_bound_beta}
    Let $\phi$ pre-join-depend on $\beta$ with pre-join-FVS $b$. Then $|\Fix(\phi)| \le |b^\uparrow|$. 
    
    \item \label{item:feedback_bound_gamma}
    Let $\phi$ post-meet-depend on $\gamma$ with post-meet-FVS $c$. Then $|\Fix(\phi)| \le |c^\downarrow|$. 
    
    \item \label{item:feedback_bound_delta}
    Let $\phi$ post-join-depend on $\delta$ with post-join-FVS $d$. Then $|\Fix(\phi)| \le |d^\uparrow|$. 
\end{dependency}

\end{theorem}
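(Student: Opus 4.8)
The plan is to generalise the proof of the finite feedback bound (Theorem~\ref{theorem:feedback}): there the feedback vertex set condition ${\Residual{G}}^n(I) = V$ let one propagate agreement of two fixed points from the coordinates in $I$ to all coordinates; here the meet-metric sequence witnessing that $a$ is a pre-meet-FVS plays the role of the iterated residual, and the injectivity characterisation of meet-metric sequences (Proposition~\ref{proposition:meet-metric}) replaces the trivial finite step. I would prove case~\ref{item:feedback_bound_alpha} in full and derive the other three by duality.

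For \ref{item:feedback_bound_alpha}, fix a meet-metric sequence $A = (a_i : i \in \N)$ with $a_0 = a$ and $\alpha(a_i) \le a_{i-1}$ for all $i \ge 1$. The goal is to show that the map $\Fix(\phi) \to a^\downarrow$, $u \mapsto u \meet a$, is well defined (immediate, since $u \meet a \le a$) and injective. So take $u, v \in \Fix(\phi)$ with $u \meet a_0 = v \meet a_0$, and prove $u \meet a_i = v \meet a_i$ for all $i \in \N$ by induction. The base case $i=0$ is the hypothesis. For the step, assuming $u \meet a_{i-1} = v \meet a_{i-1}$, meet both sides with $\alpha(a_i)$; since $\alpha(a_i) \le a_{i-1}$ this yields $u \meet \alpha(a_i) = v \meet \alpha(a_i)$, and then pre-meet-dependency of $\phi$ on $\alpha$ with $s = a_i$ gives $\phi(u) \meet a_i = \phi(v) \meet a_i$, i.e.\ $u \meet a_i = v \meet a_i$ because $u,v$ are fixed points. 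This completes the induction, and meet-metricity (Proposition~\ref{proposition:meet-metric}, equivalence~\ref{item:meet-metric_injectivitiy}) then forces $u = v$. Hence $|\Fix(\phi)| \le |a^\downarrow|$.

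Case~\ref{item:feedback_bound_beta} is case~\ref{item:feedback_bound_alpha} read in the opposite lattice $\Opposite{\Lattice{L}}$: pre-join-dependency and pre-join-FVS in $\Lattice{L}$ are exactly pre-meet-dependency and pre-meet-FVS in $\Opposite{\Lattice{L}}$, $\Fix(\phi)$ is unchanged, and $b^\uparrow$ in $\Lattice{L}$ is $b^\downarrow$ in $\Opposite{\Lattice{L}}$. For case~\ref{item:feedback_bound_gamma} the skeleton is identical, with a meet-metric sequence $C = (c_i : i \in \N)$, $c_0 = c$, $\gamma(c_{i-1}) \ge c_i$; the inductive step now applies post-meet-dependency with $s = c_{i-1}$ to obtain $u \meet \gamma(c_{i-1}) = v \meet \gamma(c_{i-1})$, and then meets both sides with $c_i \le \gamma(c_{i-1})$ to conclude $u \meet c_i = v \meet c_i$. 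Finally case~\ref{item:feedback_bound_delta} is case~\ref{item:feedback_bound_gamma} in $\Opposite{\Lattice{L}}$.

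No step is genuinely hard; the only care needed is bookkeeping the direction of each FVS inequality and whether it feeds the input side (``pre'': $\alpha(a_i) \le a_{i-1}$) or the output side (``post'': $\gamma(c_{i-1}) \ge c_i$) of the dependency relation. The conceptual content is entirely in Proposition~\ref{proposition:meet-metric}: being meet-metric is precisely what upgrades ``$u$ and $v$ agree after meeting with every $a_i$'' to ``$u = v$''.
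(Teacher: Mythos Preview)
Your proposal is correct and follows essentially the same approach as the paper: show the map $u \mapsto u \meet a$ from $\Fix(\phi)$ to $a^\downarrow$ is injective by inductively propagating $u \meet a_i = v \meet a_i$ along the sequence via pre-meet-dependency, then invoke meet-metricity to conclude $u = v$. The paper's own proof is in fact terser than yours (it omits the inductive step and simply refers back to the proof of the generalised Robert theorem), and likewise dismisses the other three cases as ``similar and hence omitted'' where you spell out the dualities.
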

 
\begin{proof}
We only give the proof of \ref{item:feedback_bound_alpha}; the other three proofs are similar and hence omitted.

We prove that if $s,t \in \Fix(\phi)$ satisfy $s \meet a = t \meet a$, then $s = t$. Similarly to the proof of Theorem \ref{theorem:generalised_robert}, by induction we have $s \meet a_i = t \meet a_i$ for all $i \in \N$, and hence $A_s = A_t$. Since $A$ is meet-metric, we obtain $s = t$.
\end{proof}

\section{Complete Boolean algebras} \label{section:CBA}

\subsection{Transpose and symmetric graphs} \label{section:transpose}

In this section, $\Lattice{L}$ is a complete Boolean algebra. For any $\psi: L \to L$, let the \Define{dual} of $\psi$ be the mapping $\Dual{ \psi }: L \to L$ defined by
\[
    \Dual{\psi}( x ) = \neg \psi( \neg x ).
\]
We define the \Define{transpose} of $\psi$ as
\[
    \Transpose{\psi} := \Dual{ \Residual{\psi} }.
\]
The \Define{co-transpose} of $\psi$ is naturally defined as $\CoTranspose{ \psi } = \Dual{ \Residuated{ \psi } }$.

Let us make some quick remarks about the dual and the transpose:
\begin{itemize}
    \item $f$ is a graph if and only if $\Dual{f}$ is a co-graph;

    \item $\Transpose{f}$ is a graph if and only if $f$ is a graph and similarly $\CoTranspose{ g }$ is a co-graph if and only if $g$ is a co-graph;
    
    \item for any $\psi: L \to L$, $\Transpose{ \psi } = \Dual{ \Residual{\psi} } = \Residuated{ \Dual{\psi} }$ and $\CoTranspose{ \psi } = \Dual{ \Residuated{\psi} } = \Residual{ \Dual{\psi} }$.
\end{itemize}
We obtain the commutative diagram in Figure \ref{figure:commutative_diagram}, where $f$ is a graph, and $g = \Residual{ f }$.

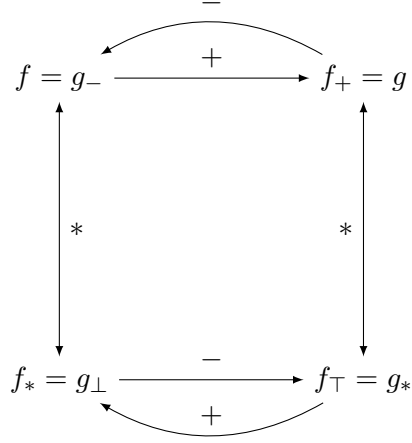
\begin{figure}
    \centering
    \begin{tikzpicture}[scale=4]
        \node (f)       at (0,1) {$f = \Residuated{ g }$};
        \node (fplus)   at (1,1) {$\Residual{ f } = g$};
        \node (fstar)   at (0,0) {$\Dual{ f } = \CoTranspose{ g }$};
        \node (ftop)    at (1,0) {$\Transpose{ f } = \Dual{ g }$};
    
        \draw[-latex] (f) to node[above] {$+$} (fplus);
        \draw[-latex] (fplus) [bend right=30] to node[above] {$-$} (f);
    
        \draw[-latex] (fstar) to node[above] {$-$} (ftop);
        \draw[-latex] (ftop) [bend left=30] to node[above] {$+$} (fstar);
    
        \draw[latex-latex] (f) to node[right] {$*$} (fstar);
    
        \draw[latex-latex] (fplus) to node[left] {$*$} (ftop);    
    \end{tikzpicture}
    \caption{Relations between a graph $f$, its residual $\Residual{ f }$, its dual $\Dual{ f }$ and its transpose $\Transpose{ f }$.}
    \label{figure:commutative_diagram}
\end{figure}


\begin{observation} \label{observation:dual_of_digraph}
The transpose of a digraph is obtained by reversing the direction of all the arcs.
\end{observation}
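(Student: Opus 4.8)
The plan is to unwind the definitions of transpose, dual, and residual in the special case $\Lattice{L} = \Lattice{\Powerset}(V)$, where all three operations have explicit combinatorial descriptions. Write $D = (V,E)$ and let $D' = (V,E')$ denote the digraph obtained by reversing every arc, i.e. $E' = \{(v,u) : (u,v) \in E\}$; identifying $D'$ with its in-neighbourhood function, one has $D'(x) = \{w \in V : \exists v \in x, (v,w) \in E\}$ for every $x \subseteq V$. The goal is then precisely to show $\Transpose{D} = D'$ as self-mappings of $\Lattice{\Powerset}(V)$.

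First I would recall that $\Transpose{D} = \Dual{\Residual{D}}$ by definition, and that in the power set lattice $\neg y = V \setminus y$, so that $\Transpose{D}(x) = \neg \Residual{D}(\neg x) = V \setminus \Residual{D}(V \setminus x)$ for all $x \subseteq V$. Next I would apply Observation \ref{observation:residual_digraphs}, which gives $\Residual{D}(V \setminus x) = \{v \in V : D(v) \subseteq V \setminus x\}$. Taking the complement of this set and noting that $D(v) \subseteq V \setminus x$ fails exactly when $D(v)$ meets $x$, I obtain $\Transpose{D}(x) = \{v \in V : D(v) \cap x \ne \emptyset\}$.

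The last step is a routine rewriting of the membership condition: for $v \in V$ we have $D(v) \cap x \ne \emptyset$ if and only if there is $u \in x$ with $u \in D(v)$, i.e. $(u,v) \in E$, i.e. $(v,u) \in E'$, which is exactly the condition that $v$ lies in the in-neighbourhood of $x$ in $D'$. Hence $\Transpose{D}(x) = D'(x)$ for every $x$, completing the proof. I do not expect any genuine obstacle here; the only point that needs care is the double complementation — the outer $\neg$ coming from the dual, applied to $\Residual{D}$ evaluated at the already-complemented argument $\neg x$ — which cancels the set-difference appearing in Observation \ref{observation:residual_digraphs} and turns the inclusion $D(v) \subseteq V\setminus x$ into the nonempty-intersection condition $D(v) \cap x \ne \emptyset$, from which arc reversal reads off immediately.
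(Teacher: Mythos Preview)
Your proof is correct. The paper states this observation without proof, treating it as a routine verification; your argument supplies exactly the details one would expect, unwinding $\Transpose{D} = \Dual{\Residual{D}}$ via Observation \ref{observation:residual_digraphs} and the explicit complement in $\Lattice{\Powerset}(V)$.
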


The main properties of the transpose graph are gathered in the theorem below.

\begin{theorem} \label{theorem:transpose}
Let $\Lattice{L}$ be a complete Boolean algebra. The following hold.
\begin{enumerate}
    \item \label{item:transpose_involution}
    For any graph $f$, $\Transpose{ \Transpose{f} } = f$.
    
    \item \label{item:transpose_graph}
    The transpose mapping $f \mapsto \Transpose{f}$ is a graph and a co-graph on the lattice of graphs.
    
    \item \label{item:transpose_composition}
    For any graphs $f,g$ we have $\Transpose{( f g )} = \Transpose{g} \Transpose{f}$.
    
    \item \label{item:transpose_transitive_reflexive}
    For any graph $f$, $\Transitive{ \Transpose{f} } = \Transpose{ \Transitive{f} }$ and $\Reflexive{ \Transpose{f} }  = \Transpose{ \Reflexive{f} }$.
\end{enumerate}
\end{theorem}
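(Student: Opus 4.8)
The plan is to push each clause through short identities relating the residual $\Residual{\cdot}$, the residuated map $\Residuated{\cdot}$ and the dual $\Dual{\cdot}$, using the properties of residuated mappings collected in Theorem~\ref{theorem:residuated} together with the stated remarks $\Transpose{\psi}=\Dual{\Residual{\psi}}=\Residuated{\Dual{\psi}}$ and ``$\Transpose{f}$ is a graph if and only if $f$ is''. Three elementary facts carry most of the weight. First, in a complete Boolean algebra $\Dual{\cdot}$ is an involution on self-maps of $L$ (so $\Dual{\Dual{\psi}}=\psi$) that commutes with composition ($\Dual{(\psi\theta)}=\Dual{\psi}\,\Dual{\theta}$) and swaps pointwise joins and meets of families of self-maps; all of this is immediate from $\neg\neg x=x$ and De Morgan. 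Second, on graphs the residual reverses composition, $\Residual{(fg)}=\Residual{g}\,\Residual{f}$, which follows from applying the adjunction ``$h(x)\le a$ iff $x\le\Residual{h}(a)$'' of Theorem~\ref{theorem:residuated}\ref{item:hf} to the graph $fg$. Third, for any family $F$ of graphs, $\Residual{(\bigjoin F)}=\bigmeet_{f\in F}\Residual{f}$ pointwise; this too comes from the adjunction, since for each $a$ the set $\{x:(\bigjoin F)(x)\le a\}$ equals $\bigcap_{f\in F}\{x:x\le\Residual{f}(a)\}=(\bigmeet_{f\in F}\Residual{f}(a))^{\downarrow}$, whose join is $\bigmeet_{f\in F}\Residual{f}(a)$.

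For Clause~\ref{item:transpose_involution}, substituting $\psi=\Transpose{f}$ into $\Transpose{\psi}=\Residuated{\Dual{\psi}}$ and using $\Dual{\Transpose{f}}=\Dual{\Dual{\Residual{f}}}=\Residual{f}$ gives $\Transpose{\Transpose{f}}=\Residuated{\Residual{f}}$, which equals $f$ by Theorem~\ref{theorem:residuated}\ref{item:f=g+}. For Clause~\ref{item:transpose_graph}, the third and first facts give $\Transpose{(\bigjoin F)}=\Dual{\Residual{(\bigjoin F)}}=\Dual{(\bigmeet_{f\in F}\Residual{f})}=\bigjoin_{f\in F}\Dual{\Residual{f}}=\bigjoin_{f\in F}\Transpose{f}$, so $f\mapsto\Transpose{f}$ preserves arbitrary joins on the lattice $\Graphs$, that is, it is a graph on $\Graphs$. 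By Clause~\ref{item:transpose_involution} this map is an involution, hence a bijection of $\Graphs$ that is monotone with monotone inverse; such a map is an order automorphism of $\Graphs$ and therefore also preserves arbitrary meets, that is, it is a co-graph as well.

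For Clause~\ref{item:transpose_composition}, compute $\Transpose{(fg)}=\Dual{\Residual{(fg)}}=\Dual{(\Residual{g}\,\Residual{f})}=\Dual{\Residual{g}}\,\Dual{\Residual{f}}=\Transpose{g}\,\Transpose{f}$, using the second and first facts. For Clause~\ref{item:transpose_transitive_reflexive}, an immediate induction on Clause~\ref{item:transpose_composition} yields $\Transpose{(f^{i})}=(\Transpose{f})^{i}$ for every $i\ge 1$, whence, by Clause~\ref{item:transpose_graph}, $\Transpose{\Transitive{f}}=\Transpose{(\bigjoin_{i\in\N_1}f^{i})}=\bigjoin_{i\in\N_1}(\Transpose{f})^{i}=\Transitive{\Transpose{f}}$; and since $\Residual{\id}=\id$ and $\Dual{\id}=\id$ force $\Transpose{\id}=\id$, Clause~\ref{item:transpose_graph} gives $\Transpose{\Reflexive{f}}=\Transpose{(\id\join f)}=\id\join\Transpose{f}=\Reflexive{\Transpose{f}}$.

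The only step calling for care is Clause~\ref{item:transpose_graph}: first checking that the transpose preserves arbitrary joins on $\Graphs$ (the third fact, which rests on the graph--residual adjunction of Theorem~\ref{theorem:residuated}), and then extracting meet-preservation from the involution property through the general principle that an order isomorphism of complete lattices preserves all joins and meets. The remaining clauses are routine symbolic manipulations with $\Residual{\cdot}$ and $\Dual{\cdot}$.
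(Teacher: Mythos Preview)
Your proof is correct and follows essentially the same route as the paper's: the same identities $\Residual{(fg)}=\Residual{g}\,\Residual{f}$, $\Dual{(\psi\theta)}=\Dual{\psi}\,\Dual{\theta}$, and $\Residual{(\bigjoin F)}=\bigmeet_{f\in F}\Residual{f}$ do the work, and parts \ref{item:transpose_involution}, \ref{item:transpose_composition}, \ref{item:transpose_transitive_reflexive} are handled identically. The only genuine difference is in the co-graph half of \ref{item:transpose_graph}: the paper appeals to \cite[Exercise 1.14]{Bly05} (a residuated involution is its own residual, hence a co-graph), whereas you argue directly that an involutive join-preserving map is an order automorphism of $\Graphs$ and therefore preserves arbitrary meets as well---a self-contained and equally short alternative.
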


\begin{proof}
\begin{enumerate}
   \item \label{item:transpose_involution_proof}
    We have
    \[
        \Transpose{ \Transpose{f} } = \Residuated{ \Dual{ \Dual{ \Residual{f} } } } = \Residuated{ \Residual{ f } } = f.
    \]
    
    \item \label{item:transpose_graph_proof}
    If $F \subseteq \Graphs$ is a family of graphs, then $\bigjoin F$ is a graph satisfying
    \[
        \Residual{ \left( \bigjoin F \right) } = \bigmeet \Residual{ F }.
    \]
    Indeed, let $h = \bigjoin F$ and $i = \bigmeet \Residual{F}$. We have $h \ge f$ for any $f \in F$, hence $\Residual{h} \le \bigmeet \Residual{F} = i$; on the other hand, $i \le \Residual{f}$ for all $f \in F$, hence $\Residuated{i} \ge \bigjoin \Residuated{ \Residual{ F } } = \bigjoin F = h$, thus $i \le \Residual{h}$. 
    
    For any collection $\Psi$ of $L \to L$ mappings, we have $\Dual{(\bigmeet \Psi)} = \bigjoin \Dual{\Psi}$, thanks to the infinite De Morgan's laws. Therefore, 
    \[
        \Transpose{ \left( \bigjoin F \right) } = \Dual{ \Residual{ \left( \bigjoin F \right) } } = \Dual{ \left( \bigmeet \Residual{ F } \right) } = \bigjoin \Dual{ \Residual{ F } }  = \bigjoin \Transpose{F}.
    \]
    Thus $f \mapsto \Transpose{f}$ is a graph on the lattice of graphs $\Graphs$. 
    
    We have proved that the transposition mapping is a residuated involution. According to \cite[Exercise 1.14]{Bly05}, the transpose mapping is its own residual, and in particular, it is a co-graph.
    
    \item \label{item:transpose_composition_proof}
    Firstly, $\Residual{ ( fg ) } = \Residual{g} \Residual{f}$, since for all $x,y \in L$ we have
    \[
        f g( x ) \le y \iff g(x) \le \Residual{f}(y) \iff x \le \Residual{g} \Residual{f} (y) .
    \]
    Secondly, $\Dual{ ( fg ) } = \Dual{f} \Dual{g}$, since for all $x \in L$ we have
    \[
        \Dual{( fg )}(x) = \neg f( g( \neg x ) ) = \neg f ( \neg \neg g( \neg x) ) = \neg f( \neg \Dual{g}(x) ) = \Dual{f} \Dual{g} (x).
    \]
    Thus,
    \[
        \Transpose{ ( fg ) } = \Dual{ \Residual{ ( fg ) } } = \Dual{ ( \Residual{f} \Residual{g} ) } = \Dual{ \Residual{g} } \Dual{ \Residual{f} } = \Transpose{g} \Transpose{f}.
    \]
    
    \item \label{item:transpose_transitive_reflexive_proof}
    Firstly, we have
    \[
        \Transpose{ \Transitive{f} } = \Transpose{ \left( \bigjoin_{ i \in \N_1 } f^i \right) } = \bigjoin_{ i \in \N_1 } \Transpose{f^i} = \bigjoin_{ i \in \N_1 } {\Transpose{f}}^i = \Transitive{ \Transpose{f} }.
    \]
    Secondly, $\Residual{ \id  } = \id$ and $\Dual{\id} = \id$, therefore $\Transpose{ \id } = \id$. Thus 
    \[
        \Transpose{ \Reflexive{ f } } = \Transpose{ ( f \join \id ) } = \Transpose{ f } \join \id = \Reflexive{ \Transpose{ f } }.
    \]
\end{enumerate}
\end{proof}

\begin{corollary} \label{corollary:transitive_transpose}
A graph is transitive if and only if its transpose is transitive. Similarly, a graph is reflexive if and only if its transpose is reflexive.
\end{corollary}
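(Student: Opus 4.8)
The plan is to deduce everything from Theorem~\ref{theorem:transpose}. Recall from Section~\ref{subsection:RT} that a graph $g$ is transitive if and only if $g = \Transitive{g}$, and reflexive if and only if $g = \Reflexive{g}$; thus the corollary will follow once we know that transposition commutes with the transitive and reflexive closures and is an involution, all of which are already recorded in Theorem~\ref{theorem:transpose}.

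Concretely, for transitivity I would argue as follows. Suppose $f$ is transitive, so $f = \Transitive{f}$. Applying transposition and invoking $\Transpose{(\Transitive{f})} = \Transitive{(\Transpose{f})}$ from Theorem~\ref{theorem:transpose}\ref{item:transpose_transitive_reflexive} gives $\Transpose{f} = \Transitive{(\Transpose{f})}$, so $\Transpose{f}$ is transitive. For the converse, apply this forward implication to the graph $\Transpose{f}$ to see that $\Transpose{(\Transpose{f})}$ is transitive, then use $\Transpose{(\Transpose{f})} = f$ from Theorem~\ref{theorem:transpose}\ref{item:transpose_involution}. The reflexive case is identical with $\Reflexive{\cdot}$ in place of $\Transitive{\cdot}$, using the identity $\Transpose{(\Reflexive{f})} = \Reflexive{(\Transpose{f})}$ from the same item.

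No real obstacle arises, since Theorem~\ref{theorem:transpose} does all the work; the only small points to keep track of are that the characterisations of transitive/reflexive graphs are used in both directions and that $\Transpose{f}$ is again a genuine graph (by Theorem~\ref{theorem:residuated}, as $f$ residuated implies $\Transpose{f}$ residuated). A slightly slicker alternative would be to note that $f \mapsto \Transpose{f}$ is a monotone involution of the lattice $\Graphs$ --- monotone since it is a graph on $\Graphs$ by Theorem~\ref{theorem:transpose}\ref{item:transpose_graph}, an involution by Theorem~\ref{theorem:transpose}\ref{item:transpose_involution} --- hence an order-automorphism of $\Graphs$. Then $f \ge \id \iff \Transpose{f} \ge \Transpose{\id} = \id$ (using $\Transpose{\id} = \id$) settles reflexivity, and $f \ge f^2 \iff \Transpose{f} \ge \Transpose{(f^2)} = (\Transpose{f})^2$, using $\Transpose{(fg)} = \Transpose{g}\Transpose{f}$ from Theorem~\ref{theorem:transpose}\ref{item:transpose_composition}, settles transitivity. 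I would present the first route as the main argument since it uses Theorem~\ref{theorem:transpose}\ref{item:transpose_transitive_reflexive} most directly.
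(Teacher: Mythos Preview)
Your proposal is correct and matches the paper's intent: the corollary is stated immediately after Theorem~\ref{theorem:transpose} with no separate proof, precisely because it follows at once from items~\ref{item:transpose_involution} and~\ref{item:transpose_transitive_reflexive} exactly as you describe. Your alternative route via the order-automorphism and item~\ref{item:transpose_composition} is equally valid and arguably cleaner, but either argument is what the paper has in mind.
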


From Lemma \ref{lemma:nilpotent}, we have that the graph $f$ is $K$-meet-nilpotent if and only if its transpose $\Transpose{ f }$ is $K$-meet-nilpotent. However, a similar equivalence does not hold for asymptotic meet-nilpotence, as the stairway to heaven is asymptotically meet-nilpotent, while its transpose the stairway to hell isn't.

Say $f$ is \Define{symmetric} if $f = \Transpose{f}$. The four examples of graphs given in Section \ref{section:graphs} are all symmetric; the proof is simple and hence omitted.

\begin{observation} \label{observation:symmetric_digraph}
A digraph $D$ is symmetric if $(u,v) \in E$ implies $(v,u) \in E$.
\end{observation}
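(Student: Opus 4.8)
The plan is to reduce the statement to the already-recorded fact (Observation~\ref{observation:dual_of_digraph}) that $\Transpose{D}$ is the digraph obtained from $D$ by reversing every arc, and then to unwind the definition of a symmetric graph. By that observation, if $D$ has arc set $E$, then $\Transpose{D}$ has arc set $E^{\mathrm{rev}} := \{ (v,u) : (u,v) \in E \}$. Hence $D$ is symmetric, i.e.\ $D = \Transpose{D}$, exactly when $E = E^{\mathrm{rev}}$, which says $(u,v) \in E \iff (v,u) \in E$ for all $u,v \in V$; since this biconditional is itself symmetric in $u$ and $v$, it is equivalent to the one-sided implication ``$(u,v) \in E$ implies $(v,u) \in E$''.

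If one prefers not to lean on Observation~\ref{observation:dual_of_digraph}, I would compute $\Transpose{D}$ from scratch. By Observation~\ref{observation:residual_digraphs} we have $\Residual{D}(x) = \{ v \in V : D(v) \subseteq x \}$, and since $\neg x = V \setminus x$ in $\Lattice{\Powerset}(V)$,
\[
    \Transpose{D}(x) = \Dual{ \Residual{D} }(x) = V \setminus \Residual{D}(V \setminus x) = \{ v \in V : D(v) \not\subseteq V \setminus x \} = \{ v \in V : \exists u \in x,\ (u,v) \in E \}.
\]
Evaluating at a singleton gives $\Transpose{D}(\{u\}) = \{ v : (u,v) \in E \}$, so the in-neighbour set of $u$ in $\Transpose{D}$ is $\{ v : (u,v) \in E \}$; that is, $\Transpose{D}$ has arc set $E^{\mathrm{rev}}$, recovering the arc-reversal description. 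The conclusion then follows as in the previous paragraph.

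There is no real obstacle here; the only point requiring care is the interplay between the in-neighbourhood convention for digraphs (so that $D(\{v\})$ is the set of \emph{in}-neighbours of $v$) and the reversal of arcs, which one must track precisely to be sure that $\Transpose{D}$ is the reversed digraph and not $D$ itself.
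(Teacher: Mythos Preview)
Your proof is correct. The paper does not provide a proof of this observation (it is stated without proof, in keeping with the remark just before it that such facts are ``simple and hence omitted''), so there is nothing to compare; your reduction to Observation~\ref{observation:dual_of_digraph} together with the explicit verification that $\Transpose{D}$ has arc set $E^{\mathrm{rev}}$ is exactly the intended, straightforward argument.
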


If the out-neighbourhood $N = \Transpose{D}(v)$ is not empty, then $v$ belongs to the in-neighbourhood of $N$: $v \in D(N) = D \Transpose{D}(v)$. In more succinct terms, if $D( V ) = V$ then $D \Transpose{D}$ is reflexive. In general, the in-neighbourhood of $v$ satisfies $D(v) \subseteq D \Transpose{D} D(v)$ for any digraph $D$ and any vertex $v$. Those two facts are generalised to graphs on complete Boolean algebras as follows. For any graph $f$, let $\Symmetric{ f } = f \Transpose{ f }$; it is clear that $\Symmetric{ f }$ is a symmetric graph.

\begin{proposition} \label{proposition:symmetric_graphs}
The following hold for any graph $f$:
\begin{enumerate}
    \item
    $\Symmetric{ f }$ is reflexive if and only if $f( \one ) = \one$;
    
    \item 
    $\Symmetric{ f } f \ge f$.

\end{enumerate}
\end{proposition}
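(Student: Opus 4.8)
The plan is to work from the identity $\Transpose{f}(x) = \Dual{\Residual{f}}(x) = \neg\,\Residual{f}(\neg x)$, which is immediate from the definitions of the dual and the transpose, together with two elementary facts about the graph $f$: it preserves arbitrary joins (so in particular $f(w \join \neg w) = f(w) \join f(\neg w)$, and $f$ is monotone), and $f\Residual{f} \le \id$ by Theorem \ref{theorem:residuated}\ref{item:gf}. The symmetric graph $\Symmetric{f} = f\Transpose{f}$ therefore acts as $\Symmetric{f}(x) = f(\neg\, \Residual{f}(\neg x))$, and all of the work is in estimating this expression.

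For the first item I would prove the two implications separately. The forward direction is a one-line evaluation at $\one$: if $\Symmetric{f} \ge \id$ then $f\Transpose{f}(\one) = \one$, and since $\Transpose{f}(\one) \le \one$ and $f$ is monotone this forces $f(\one) = \one$. For the converse, assume $f(\one) = \one$, fix $x \in L$, and set $w = \Residual{f}(\neg x)$, so that $\Transpose{f}(x) = \neg w$. Then $f(w) = f\Residual{f}(\neg x) \le \neg x$ by Theorem \ref{theorem:residuated}\ref{item:gf}, while join-preservation gives $f(w) \join f(\neg w) = f(w \join \neg w) = f(\one) = \one$. In a Boolean algebra this yields $f(\neg w) \ge \neg f(w) \ge \neg(\neg x) = x$, so $\Symmetric{f}(x) = f(\neg w) \ge x$; hence $\Symmetric{f}$ is reflexive.

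For the second item I would reuse exactly the same computation, but keeping $f(\one)$ general rather than assuming it equals $\one$. For an arbitrary $u \in L$, with $w = \Residual{f}(\neg u)$ one gets $f(w) \le \neg u$ and $f(w) \join f(\neg w) = f(\one)$, hence, by distributivity, $\Symmetric{f}(u) = f(\neg w) \ge f(\one) \meet \neg f(w) \ge f(\one) \meet u$. Thus $\Symmetric{f}(u) \ge u$ whenever $u \le f(\one)$. Since $f(x) \le f(\one)$ for every $x$ by monotonicity, specialising to $u = f(x)$ gives $\Symmetric{f} f(x) = \Symmetric{f}(f(x)) \ge f(x)$ for all $x$, i.e.\ $\Symmetric{f} f \ge f$.

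The only genuine subtlety — and the step I would flag as the main obstacle — is in the second item: one cannot simply invoke reflexivity of $\Symmetric{f}$, since $f(\one) \ne \one$ in general; the point is that the argument of the first item degrades gracefully from $L$ to the principal ideal $f(\one)^\downarrow$, which is precisely where the image of $f$ lives. Everything else is routine Boolean-algebra manipulation (complementation, distributivity, and the residuation inequality $f\Residual{f}\le\id$).
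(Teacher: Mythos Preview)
Your proof is correct and is essentially the same argument as the paper's. Both derive the key pointwise inequality $\Symmetric{f}(x) \ge x \meet f(\one)$: the paper first proves the general fact $f(y) \ge \Dual{f}(y) \meet f(\one)$ for all $y$ (from $f(y)\join f(\neg y)\join\neg f(\one)=\one$) and then substitutes $y=\Transpose{f}(x)$ using $\Dual{f}=\Residual{\Transpose{f}}$ and $\Residual{\Transpose{f}}\Transpose{f}\ge\id$, whereas you unpack $\Transpose{f}(x)=\neg\Residual{f}(\neg x)$ directly and use $f\Residual{f}\le\id$; these are the same computation viewed from dual sides. The paper's organisation is marginally tidier in that the single inequality \eqref{equation:symmetric} yields both items at once, while you redo the calculation for item~2, but the mathematical content and the ingredients (join-preservation on a complement pair, residuation, Boolean distributivity) are identical.
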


\begin{proof}
For all $y \in L$, we have
\[
    f(y) \join f(\neg y) \join \neg f( \one ) = f( y \join \neg y ) \join \neg f( \one ) = f( \one ) \join \neg f( \one ) = \one,
\]
hence
\[
    f(y) \ge \neg \left( f( \neg y ) \join \neg f( \one ) \right) = \Dual{f}( y ) \meet f( \one ).
\]
Setting $y = \Transpose{f}(x)$, we obtain
\begin{equation} \label{equation:symmetric}
    f( \Transpose{f}( x ) ) \ge \Dual{f}( \Transpose{f}( x ) ) \meet f( \one ) = \Residual{ \Transpose{f} }( \Transpose{f}( x ) ) \meet f( \one ) \ge x \meet f( \one ).
\end{equation}
Thus, if $f( \one ) = \one$, $\Symmetric{ f }(x) \ge x$ for all $x \in L$. Conversely, if $f( \one ) < \one$, then $\Symmetric{ f }( \one ) = f \Transpose{ f }( \one ) \le f( \one ) < \one$.

Moreover, setting $x = f(u)$ in Equation \eqref{equation:symmetric} yields $f \Transpose{f} f( u ) \ge f(u) \meet f( \one ) = f(u)$.
\end{proof}

If a digraph is both acyclic and symmetric, then it is empty. We generalise this fact to fixed point-free graphs on complete Boolean algebras.

\begin{corollary} \label{corollary:symmetric_strongly_acyclic}
Let $\Lattice{L}$ be a complete Boolean algebra. Then the unique fixed point-free symmetric graph on $\Lattice{L}$ is the empty graph.
\end{corollary}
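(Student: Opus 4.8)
The empty graph is symmetric (it is one of the four examples in Section \ref{section:graphs}) and fixed point-free (its only fixed point is $\zero$), so it is a valid example; the real content is uniqueness. So let $f$ be a symmetric fixed point-free graph on the complete Boolean algebra $\Lattice{L}$, and aim to show $f(x) = \zero$ for all $x$.

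The plan is to first replace $f$ by its transitive closure $g = \Transitive{f}$. This $g$ is still a graph, it is transitive, it is fixed point-free since $\Fix(\Transitive{f}) = \Fix(f) = \{\zero\}$ by Proposition \ref{proposition:fixed_points}, and it is symmetric: by Theorem \ref{theorem:transpose}\ref{item:transpose_transitive_reflexive} and the symmetry of $f$ we have $\Transpose{g} = \Transpose{\Transitive{f}} = \Transitive{\Transpose{f}} = \Transitive{f} = g$. Since $f \le \Transitive{f} = g$, it suffices to prove that $g$ is the empty graph.

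Now apply Proposition \ref{proposition:symmetric_graphs}(2) to $g$: $\Symmetric{g}\, g \ge g$. As $g$ is symmetric, $\Symmetric{g} = g\,\Transpose{g} = g^2$, so $g^3 \ge g$. On the other hand, transitivity of $g$ gives $g^2 \le g$, hence by monotonicity $g^3 \le g^2 \le g$. Combining, $g = g^3 \le g^2 \le g$, so $g^2 = g$, i.e. $g$ is idempotent. Then for every $x \in L$, $g(g(x)) = g^2(x) = g(x)$, so $g(x) \in \Fix(g) = \{\zero\}$; thus $g(x) = \zero$ for all $x$, i.e. $g$ is the empty graph, and therefore so is $f$.

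I do not expect a genuine obstacle here: the only step requiring care is verifying that the transitive closure inherits both symmetry (via Theorem \ref{theorem:transpose}\ref{item:transpose_transitive_reflexive}) and fixed-point-freeness (via Proposition \ref{proposition:fixed_points}), after which Proposition \ref{proposition:symmetric_graphs}(2) together with transitivity forces idempotence, and fixed-point-freeness then collapses an idempotent graph to the empty graph. The conceptual point — the one worth flagging — is the reduction to the transitive case, which lets us invoke $\Symmetric{g} = g^2$ cleanly.
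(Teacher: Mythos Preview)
Your proof is correct but takes a different route from the paper's. The paper does not pass to the transitive closure; it works directly with $f$ and shows that $u = f(\one)$ is itself a fixed point. From the inequality $f\Transpose{f}(x) \ge x \meet f(\one)$ established as Equation~\eqref{equation:symmetric} in the proof of Proposition~\ref{proposition:symmetric_graphs}, setting $x = u$ and using $\Transpose{f} = f$ gives $f^2(u) \ge u \meet f(\one) = u$; combined with $u = f(\one) \ge f(u) \ge f^2(u)$ (monotonicity) one obtains $u = f(u)$, hence $u = \zero$, hence $f(\one) = \zero$, hence $f$ is empty.

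Your argument instead replaces $f$ by $g = \Transitive{f}$, checks that symmetry (via Theorem~\ref{theorem:transpose}\ref{item:transpose_transitive_reflexive}) and fixed-point-freeness (via Proposition~\ref{proposition:fixed_points}) both survive, and then uses Proposition~\ref{proposition:symmetric_graphs}(b) together with transitivity to force $g^2 = g$, so that every value of $g$ is a fixed point. This is slightly heavier in bookkeeping, but it has a pleasant structural payoff: it makes the collapse to the empty graph transparent by exhibiting an idempotent fixed-point-free graph. Both proofs ultimately rest on the same underlying inequality; the paper applies it once at the single point $x = f(\one)$, whereas you apply its corollary $g^3 \ge g$ globally after arranging for transitivity.
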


\begin{proof}
If $f = \Transpose{f}$, then let $u = f( \one )$. We have $u \ge f( u ) \ge f( f(u) )$, and by Equation \eqref{equation:symmetric}, $f( f( u ) ) \ge u$, thus $u = f(u) = f(f(u))$. 
\end{proof}

\subsection{Dependencies and Boolean algebras} \label{section:dependencies_boolean}

We now show that all four dependencies are equivalent for mappings over a Boolean algebra that depend on graphs.

\begin{theorem} \label{theorem:CBA_dependencies}
Let $\Lattice{L}$ be a complete Boolean algebra and $\phi : L \to L$. Let $f$ be a graph on $\Lattice{L}$, then the following are equivalent:
\begin{equivalent}
    \item \label{item:Boolean_graph_alpha}
    $\phi$ pre-meet-depends on $f$;
    
    \item \label{item:Boolean_graph_beta}
    $\phi$ pre-join-depends on $\Dual{f}$;

    \item \label{item:Boolean_graph_gamma}
    $\phi$ post-meet-depends on $\Residual{f}$;
    
    \item \label{item:Boolean_graph_delta}
    $\phi$ post-join-depends on $\Transpose{f}$.
\end{equivalent}
\end{theorem}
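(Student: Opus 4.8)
The plan is to avoid proving four implications in sequence and instead read everything off Corollary~\ref{corollary:graph_pre-post} together with one elementary fact about complements. Two of the equivalences are almost immediate. First, $\ref{item:Boolean_graph_alpha}\iff\ref{item:Boolean_graph_gamma}$ is precisely Corollary~\ref{corollary:graph_pre-post}\ref{item:pre-post-meet-graph}. Second, since $f$ is a graph, so is $\Transpose{f}$ (Section~\ref{section:transpose}), so Corollary~\ref{corollary:graph_pre-post}\ref{item:pre-post-join-graph} applied to $\Transpose{f}$ states that $\phi$ post-join-depends on $\Transpose{f}$ if and only if $\phi$ pre-join-depends on $\Residual{\Transpose{f}}$; and $\Residual{\Transpose{f}}=\Dual{f}$, either by reading off the commutative diagram of Figure~\ref{figure:commutative_diagram} or by computing $\Residual{\Transpose{f}}=\Residual{\Dual{\Residual{f}}}=\Dual{\Residuated{\Residual{f}}}=\Dual{f}$ via the transpose identities and Theorem~\ref{theorem:residuated}\ref{item:f=g+}. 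This gives $\ref{item:Boolean_graph_beta}\iff\ref{item:Boolean_graph_delta}$.

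It then remains only to bridge the two pairs, and the cleanest bridge is $\ref{item:Boolean_graph_alpha}\iff\ref{item:Boolean_graph_beta}$, which I would prove by pure complementation. The one input is the elementary lemma that in a complete Boolean algebra $u\meet a=v\meet a$ if and only if $\neg u\meet a=\neg v\meet a$ for all $u,v,a\in L$ (both $\neg u\meet a$ and $\neg v\meet a$ are complements, inside the interval $a^\downarrow$, of the common element $u\meet a=v\meet a$, and relative complements are unique by distributivity); De~Morgan then yields $u\join a=v\join a$ iff $u\meet\neg a=v\meet\neg a$. Now take the defining implication of ``$\phi$ pre-join-depends on $\Dual{f}$''. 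Since $\neg\Dual{f}(s)=f(\neg s)$, the lemma rewrites its hypothesis $x\join\Dual{f}(s)=y\join\Dual{f}(s)$ as $x\meet f(\neg s)=y\meet f(\neg s)$ and its conclusion $\phi(x)\join s=\phi(y)\join s$ as $\phi(x)\meet\neg s=\phi(y)\meet\neg s$. As $s\mapsto\neg s$ is a bijection of $L$, substituting $\neg s\mapsto s$ throughout turns this into the verbatim definition of ``$\phi$ pre-meet-depends on $f$''. Combining the three equivalences closes the cycle $\ref{item:Boolean_graph_alpha}\iff\ref{item:Boolean_graph_beta}\iff\ref{item:Boolean_graph_delta}$ alongside $\ref{item:Boolean_graph_alpha}\iff\ref{item:Boolean_graph_gamma}$.

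I expect no serious obstacle: the proof is bookkeeping. The care points are (i) identifying $\Residual{\Transpose{f}}$ with $\Dual{f}$, and (ii) in the $\ref{item:Boolean_graph_alpha}\iff\ref{item:Boolean_graph_beta}$ step, applying the complementation lemma to the hypothesis with $a=\Dual{f}(s)$ and to the conclusion with $a=s$, keeping track of which of $x,y,s$ get complemented, and checking that the final relabelling of the universally quantified variable $s$ is legitimate. It is worth pointing out explicitly that this last step (in contrast with Corollary~\ref{corollary:graph_pre-post}, which holds over any complete lattice) is exactly where the complemented, distributive structure of $\Lattice{L}$ is used.
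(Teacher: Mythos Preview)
Your proof is correct and somewhat cleaner than the paper's. Both ultimately rest on the same elementary complementation fact (the paper's Lemma~\ref{lemma:equality_CBA}: $u\meet a=v\meet a$ iff $\neg u\meet a=\neg v\meet a$), but the paper builds a longer scaffold of auxiliary lemmas about arbitrary mappings (Lemmas~\ref{lemma:dependencies_dual}, \ref{lemma:dependencies_CBA}, \ref{lemma:dependencies_transpose}), passing through the dual $\Dual{\phi}$ as an intermediary, and then reads off the four implications from Lemma~\ref{lemma:dependencies_transpose} with $\alpha=f$, $\beta=\Dual{f}$, $\gamma=\Residual{f}$, $\delta=\Transpose{f}$. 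You instead invoke Corollary~\ref{corollary:graph_pre-post} --- which holds over any complete lattice --- to obtain $\ref{item:Boolean_graph_alpha}\iff\ref{item:Boolean_graph_gamma}$ and $\ref{item:Boolean_graph_beta}\iff\ref{item:Boolean_graph_delta}$ essentially for free, and deploy the Boolean structure exactly once, in the direct $\ref{item:Boolean_graph_alpha}\iff\ref{item:Boolean_graph_beta}$ bridge (your De~Morgan rewrite plus the bijective substitution $s\mapsto\neg s$). This decomposition is more transparent about where complementation actually enters; the paper's route has the compensating advantage that its intermediate lemmas apply to arbitrary $\alpha$ rather than only to graphs, and they are reused independently (for instance to justify the dependency diagram in Figure~\ref{figure:dependencies2}).
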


The rest of this subsection is devoted to the proof of Theorem \ref{theorem:CBA_dependencies}.

\begin{lemma} \label{lemma:equality_CBA}
Let $\Lattice{L}$ be a complete Boolean algebra, and $x,y, a \in L$. Then $x \meet a = y \meet a$ if and only if $\neg x \meet a = \neg y \meet a$; similarly, $x \join a = y \join a$ if and only if $\neg x \join a = \neg y \join a$.
\end{lemma}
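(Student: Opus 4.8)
The plan is to reduce both equivalences to a single symmetric statement about the ``symmetric difference'' of $x$ and $y$. Write $x \triangle y := (x \meet \neg y) \join (\neg x \meet y)$. The first observation is that complementing both arguments leaves this element unchanged: replacing $x$ by $\neg x$ and $y$ by $\neg y$, then using $\neg\neg z = z$ and the commutativity of $\join$, gives $\neg x \triangle \neg y = (\neg x \meet y) \join (x \meet \neg y) = x \triangle y$.

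Next I would establish that, for any $a \in L$,
\[
    x \meet a = y \meet a \iff (x \triangle y) \meet a = \zero .
\]
For the forward direction, from $x \meet a = y \meet a \le y$ we get $x \meet \neg y \meet a \le y \meet \neg y = \zero$, and symmetrically $\neg x \meet y \meet a \le x \meet \neg x = \zero$; distributing $\meet\, a$ over the join then yields $(x \triangle y) \meet a = \zero$. For the converse, if $x \meet \neg y \meet a = \zero$ and $\neg x \meet y \meet a = \zero$, then $x \meet a = x \meet a \meet (y \join \neg y) = (x \meet y \meet a) \join \zero$, and symmetrically $y \meet a = x \meet y \meet a$, so $x \meet a = y \meet a$. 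This is the only place where distributivity is invoked.

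Combining the two steps settles the meet case: both $x \meet a = y \meet a$ and $\neg x \meet a = \neg y \meet a$ are equivalent to $(x \triangle y) \meet a = \zero$ — for the latter, apply the displayed equivalence to $\neg x, \neg y$ and use $\neg x \triangle \neg y = x \triangle y$ — and hence to one another. Finally, the join statement is precisely the meet statement read in the opposite lattice $\Opposite{\Lattice{L}}$, which is again a complete Boolean algebra with the same complementation $\neg$; it therefore follows by duality without further argument.

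There is no genuine obstacle here: the content is routine Boolean algebra. The only points needing a little care are handling the distributivity steps correctly in the equivalence $x \meet a = y \meet a \iff (x \triangle y) \meet a = \zero$, and noting that the complement in $\Opposite{\Lattice{L}}$ coincides with that in $\Lattice{L}$, which is what makes the duality argument for the join case legitimate.
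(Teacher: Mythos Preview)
Your proof is correct. The underlying computations coincide with the paper's: both arguments hinge on showing $x \meet \neg y \meet a = \zero$ and $\neg x \meet y \meet a = \zero$ from the hypothesis, then rebuilding the conclusion via $y \join \neg y = \one$ and distributivity. The difference is organisational. The paper proves one implication directly by element-chasing and appeals to symmetry (and to $\neg\neg = \id$) for the converse and for the join statement. You instead factor the argument through the single equivalence $x \meet a = y \meet a \iff (x \triangle y) \meet a = \zero$, and then the invariance $\neg x \triangle \neg y = x \triangle y$ delivers both directions at once; the join case falls out by duality. Your packaging is slightly more conceptual and makes the symmetry explicit rather than invoked, at the cost of introducing the auxiliary object $x \triangle y$; the paper's version is marginally more bare-hands. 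Either way the content is routine Boolean algebra, as you note.
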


\begin{proof}
Suppose $x \meet a = y \meet a$. Let $z = x \meet a \meet \neg y$. We have
\[
    y \meet a = x \meet a = x \meet a \meet (y \join \neg y) = (x \meet a \meet y) \join (x \meet a \meet \neg y) = (y \meet a) \join z.
\]
Hence $z \le y$, and $z \le \neg y$, thus $z \le y \meet \neg y = \zero$. Thus $x \meet a \meet \neg y = \zero$ and by symmetry $\neg x \meet a \meet y = \zero$. We obtain
\[
    \neg x \meet a = \neg x \meet a \meet (y \join \neg y) = (\neg x \meet a \meet y) \join (\neg x \meet a \meet \neg y) = \neg x \meet a \meet \neg y,
\]
and by symmetry $\neg y \meet a = x \meet a \meet \neg y$. Thus $\neg x \meet a = \neg y \meet a$.

The proof of the second statement is similar and hence omitted.
\end{proof}

\begin{lemma} \label{lemma:dependencies_dual}
Let $\Lattice{L}$ be a complete Boolean algebra and $\phi, \alpha, \beta, \gamma, \delta : L \to L$. Then the following hold.
\begin{dependency}
    \item \label{item:pre-meet-dependency_dual}
    $\phi$ pre-meet-depends on $\alpha$ if and only if $\Dual{ \phi }$ pre-meet-depends on $\alpha$.

    \item \label{item:pre-join-dependency_dual}
    $\phi$ pre-join-depends on $\beta$ if and only if $\Dual{ \phi }$ pre-join-depends on $\beta$.

    \item \label{item:post-meet-dependency_dual}
    $\phi$ post-meet-depends on $\gamma$ if and only if $\Dual{ \phi }$ post-meet-depends on $\gamma$.

    \item \label{item:post-join-dependency_dual}
    $\phi$ post-join-depends on $\delta$ if and only if $\Dual{ \phi }$ post-join-depends on $\delta$.
\end{dependency}
\end{lemma}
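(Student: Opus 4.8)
The plan is to prove all four parts by the same two-step reduction, deriving the statement about $\Dual{\phi}$ from the statement about $\phi$ by negating the two ``configuration'' arguments. The only tool needed is Lemma~\ref{lemma:equality_CBA}: in a complete Boolean algebra, $x \meet a = y \meet a$ iff $\neg x \meet a = \neg y \meet a$, and dually $x \join a = y \join a$ iff $\neg x \join a = \neg y \join a$. I also use that $\Dual{\cdot}$ is an involution on $L \to L$ mappings, $\Dual{ \Dual{\phi} } = \phi$, which lets me get the backward direction of each biconditional by applying the forward direction to $\Dual{\phi}$.

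First I would treat part \ref{item:pre-meet-dependency_dual}. Suppose $\phi$ pre-meet-depends on $\alpha$ and fix $x,y,s \in L$ with $x \meet \alpha(s) = y \meet \alpha(s)$. By Lemma~\ref{lemma:equality_CBA} with $a = \alpha(s)$ this gives $\neg x \meet \alpha(s) = \neg y \meet \alpha(s)$; applying the pre-meet-dependency of $\phi$ to the triple $(\neg x, \neg y, s)$ yields $\phi(\neg x) \meet s = \phi(\neg y) \meet s$; and a second application of Lemma~\ref{lemma:equality_CBA}, now with $a = s$, gives $\neg \phi(\neg x) \meet s = \neg \phi(\neg y) \meet s$, i.e.\ $\Dual{\phi}(x) \meet s = \Dual{\phi}(y) \meet s$. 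Hence $\Dual{\phi}$ pre-meet-depends on $\alpha$, and the converse follows by running this implication with $\Dual{\phi}$ in place of $\phi$ and invoking $\Dual{ \Dual{\phi} } = \phi$.

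The remaining three parts follow the identical template. For \ref{item:pre-join-dependency_dual} one uses the join-version of Lemma~\ref{lemma:equality_CBA} (first with $a = \beta(s)$, then with $a = s$); for \ref{item:post-meet-dependency_dual} one starts from $x \meet s = y \meet s$, passes to $\neg x \meet s = \neg y \meet s$, applies the post-meet-dependency of $\phi$ at $(\neg x, \neg y, s)$ to obtain $\phi(\neg x) \meet \gamma(s) = \phi(\neg y) \meet \gamma(s)$, and then applies Lemma~\ref{lemma:equality_CBA} with $a = \gamma(s)$; and \ref{item:post-join-dependency_dual} is the join-analogue of \ref{item:post-meet-dependency_dual}, with $\delta(s)$ in the last step. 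Alternatively, since \ref{item:pre-join-dependency_dual} and \ref{item:post-join-dependency_dual} are just \ref{item:pre-meet-dependency_dual} and \ref{item:post-meet-dependency_dual} read in $\Opposite{ \Lattice{L} }$, one could phrase the whole proof once and appeal to duality.

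There is no genuine obstacle here; the argument is short and purely formal. The only points requiring care are bookkeeping: in each part the two invocations of Lemma~\ref{lemma:equality_CBA} must use the element $a$ that currently meets (or joins) the two terms in question -- $\alpha(s)$ or $s$ in the ``pre'' cases, $s$ or $\gamma(s)$ (resp.\ $\delta(s)$) in the ``post'' cases -- and the dependency hypothesis on $\phi$ must be applied to the negated configurations $(\neg x, \neg y)$ rather than to $(x,y)$.
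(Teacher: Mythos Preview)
Your proof is correct and follows essentially the same approach as the paper: the paper also proves only part \ref{item:pre-meet-dependency_dual}, negates the configurations via Lemma~\ref{lemma:equality_CBA}, applies the dependency of $\phi$ at $(\neg x, \neg y, s)$, and invokes duality for the converse. You are slightly more explicit than the paper in spelling out the second application of Lemma~\ref{lemma:equality_CBA} (the paper folds the passage from $\phi(\neg x)\meet s = \phi(\neg y)\meet s$ to $\Dual{\phi}(x)\meet s = \Dual{\phi}(y)\meet s$ into one line), but the argument is identical.
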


\begin{proof}
We only prove \ref{item:pre-meet-dependency_dual}; the other proofs are similar and hence omitted. We only have to prove one implication; the other one follows from duality. If $\phi$ pre-meet-depends on $\alpha$, then for all $x, y, s \in L$,
\begin{align*}
    x \meet \alpha( s ) = y \meet \alpha( s ) &\implies \neg x \meet \alpha( s ) = \neg y \meet \alpha( s ) \\
    &\implies \phi( \neg x ) \meet s = \phi( \neg y ) \meet s \\
    &\implies \Dual{ \phi }( x ) \meet s = \Dual{ \phi }( y ) \meet s.
\end{align*}
Thus $\Dual{ \phi }$ pre-meet-depends on $\alpha$.
\end{proof}

\begin{lemma} \label{lemma:dependencies_CBA}
Let $\Lattice{L}$ be a complete Boolean algebra and $\phi, \alpha, \beta, \gamma, \delta : L \to L$. Then the following hold.
\begin{dependency}
    \item \label{item:pre-meet-dependency_CBA}
    If $\phi$ pre-meet-depends on $\alpha$, then $\phi$ pre-join-depends on $\Dual{ \alpha }$.

    \item \label{item:pre-join-dependency_CBA}
    If $\phi$ pre-join-depends on $\beta$, then $\phi$ pre-meet-depends on $\Dual{ \beta }$.

    \item \label{item:post-meet-dependency_CBA}
    If $\phi$ post-meet-depends on $\gamma$, then $\phi$ post-join-depends on $\Dual{ \gamma }$.

    \item \label{item:post-join-dependency_CBA}
    If $\phi$ post-join-depends on $\delta$, then $\phi$ post-meet-depends on $\Dual{ \delta }$.
\end{dependency}
\end{lemma}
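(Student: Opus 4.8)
The plan is to derive all four implications from a single observation: since $\Lattice{L}$ is a complete Boolean algebra, complementation $x \mapsto \neg x$ is an order-reversing involution satisfying the (infinite) De Morgan laws, so an identity of the form $x \join a = y \join a$ is equivalent to $\neg x \meet \neg a = \neg y \meet \neg a$, and dually. Combined with the definition $\Dual{\psi}(x) = \neg\psi(\neg x)$ and with Lemma \ref{lemma:dependencies_dual}, this lets one convert a $\meet$-dependency into a $\join$-dependency (and back) by complementing everything in sight. I would prove \ref{item:pre-meet-dependency_CBA} in full and then note that the remaining three are entirely analogous, differing only in which of meet/join plays which role and in which item of Lemma \ref{lemma:dependencies_dual} is invoked at the end.

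For \ref{item:pre-meet-dependency_CBA}: assume $\phi$ pre-meet-depends on $\alpha$, and fix $x,y,s \in L$ with $x \join \Dual{\alpha}(s) = y \join \Dual{\alpha}(s)$. Since $\Dual{\alpha}(s) = \neg\alpha(\neg s)$, taking complements of both sides and applying De Morgan gives $\neg x \meet \alpha(\neg s) = \neg y \meet \alpha(\neg s)$. Now apply the hypothesis to the triple $(\neg x, \neg y, \neg s)$ to obtain $\phi(\neg x) \meet \neg s = \phi(\neg y) \meet \neg s$; complementing once more yields $\neg\phi(\neg x) \join s = \neg\phi(\neg y) \join s$, i.e. $\Dual{\phi}(x) \join s = \Dual{\phi}(y) \join s$. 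Thus $\Dual{\phi}$ pre-join-depends on $\Dual{\alpha}$, and since $\Dual{\Dual{\phi}} = \phi$ (complementation is an involution), Lemma \ref{lemma:dependencies_dual}\ref{item:pre-join-dependency_dual} gives that $\phi$ pre-join-depends on $\Dual{\alpha}$, as required.

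The remaining items run along the same lines. For \ref{item:pre-join-dependency_CBA} one starts from $x \meet \Dual{\beta}(s) = y \meet \Dual{\beta}(s)$, complements to reach $\neg x \join \beta(\neg s) = \neg y \join \beta(\neg s)$, applies pre-join-dependence of $\phi$ on $\beta$, complements back, and finishes with Lemma \ref{lemma:dependencies_dual}\ref{item:pre-meet-dependency_dual}. For the post items \ref{item:post-meet-dependency_CBA} and \ref{item:post-join-dependency_CBA} the same recipe applies, except that complementing the hypothesis turns $x \meet s = y \meet s$ into $\neg x \join \neg s = \neg y \join \neg s$ (resp. $x \join s = y \join s$ into $\neg x \meet \neg s = \neg y \meet \neg s$), and complementing the conclusion simultaneously turns $\gamma(s)$ into $\Dual{\gamma}(s)$ (resp. $\delta(s)$ into $\Dual{\delta}(s)$); one then closes with Lemma \ref{lemma:dependencies_dual}\ref{item:post-join-dependency_dual} (resp. \ref{item:post-meet-dependency_dual}). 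Alternatively, \ref{item:post-meet-dependency_CBA} and \ref{item:post-join-dependency_CBA} can be read off from \ref{item:pre-meet-dependency_CBA} and \ref{item:pre-join-dependency_CBA} by passing to the opposite lattice $\Opposite{\Lattice{L}}$.

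There is no genuine obstacle here: the whole argument is routine De Morgan manipulation. The one point that must not be overlooked is that the direct computation produces a statement about $\Dual{\phi}$ rather than $\phi$, so Lemma \ref{lemma:dependencies_dual} is essential to finish — which is presumably exactly why it is placed immediately beforehand.
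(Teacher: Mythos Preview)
Your proof is correct and matches the paper's approach essentially step for step: start from $x \join \Dual{\alpha}(s) = y \join \Dual{\alpha}(s)$, complement via De Morgan to get $\neg x \meet \alpha(\neg s) = \neg y \meet \alpha(\neg s)$, apply the pre-meet hypothesis, complement back to obtain that $\Dual{\phi}$ pre-join-depends on $\Dual{\alpha}$, and finish with Lemma~\ref{lemma:dependencies_dual}. Your observation that the computation naturally lands on $\Dual{\phi}$ and that Lemma~\ref{lemma:dependencies_dual} is needed to close the gap is exactly the structure of the paper's argument.
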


\begin{proof}
Once again, we only prove \ref{item:pre-meet-dependency_CBA}; the other proofs are similar and hence omitted. If $\phi$ pre-meet-depends on $\alpha$, then for all $x,y, s \in L$ we have
\begin{align*}
    x \join \Dual{ \alpha }( s ) = y \join \Dual{ \alpha }( s ) 
    &\implies \neg x \meet \neg \Dual{ \alpha }( s ) = \neg y \meet \neg \Dual{ \alpha }( s ) \\
    &\implies \neg x \meet \alpha( \neg s ) = \neg y \meet \alpha( \neg s ) \\
    &\implies \phi( \neg x ) \meet \neg s = \phi( \neg y ) \meet \neg s \\
    &\implies \neg \phi( \neg x ) \join s = \neg \phi( \neg y ) \join s \\
    &\implies \Dual{ \phi }( x ) \join s = \Dual{ \phi }( y ) \join s.
\end{align*}
Thus $\Dual{ \phi }$ pre-join-depends on $\Dual{ \alpha }$. By Lemma \ref{lemma:dependencies_dual}, $\phi$ pre-join-depends on $\Dual{ \alpha }$ as well.
\end{proof}

Lemma \ref{lemma:dependencies_CBA} can be summarised as follows: $\phi$ (pre/post)-meet-depends on $\psi$ if and only if $\phi$ (pre/post)-join-depends on $\Dual{ \psi }$.

\begin{lemma} \label{lemma:dependencies_transpose}
Let $\Lattice{L}$ be a complete Boolean algebra and $\phi, \alpha, \beta, \gamma, \delta : L \to L$. Then the following hold.
\begin{dependency}
    \item \label{item:pre-meet-dependency_transpose}
    If $\phi$ post-join-depends on $\Transpose{ \alpha }$, then $\phi$ pre-meet-depends on $\alpha$.

    \item \label{item:pre-join-dependency_transpose}
    If $\phi$ post-meet-depends on $\CoTranspose{ \beta }$, then $\phi$ pre-join-depends on $\beta$.

    \item \label{item:post-meet-dependency_transpose}
    If $\phi$ pre-join-depends on $\CoTranspose{ \gamma }$, then $\phi$ post-meet-depends on $\gamma$.

    \item \label{item:post-join-dependency_transpose}
    If $\phi$ pre-meet-depends on $\Transpose{ \delta }$, then $\phi$ post-join-depends on $\delta$.
\end{dependency}
\end{lemma}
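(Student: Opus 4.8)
The plan is to obtain all four implications mechanically by composing two results already in hand: Lemma~\ref{lemma:dependencies_CBA}, which (read as the biconditional noted just after its statement) converts meet-dependency on $\psi$ into join-dependency on $\Dual{\psi}$ for the same ``pre'' or ``post'' flavour, and Lemma~\ref{lemma:pre-post-dependence}, which converts post-dependency on a residual $\Residual{\psi}$ (resp.\ residuated $\Residuated{\psi}$) into pre-dependency on $\psi$. The key observation is that the transpose and co-transpose are exactly these two operations glued together: $\Transpose{\psi} = \Dual{\Residual{\psi}}$ and $\CoTranspose{\psi} = \Dual{\Residuated{\psi}}$. Since $\Lattice{L}$ is a complete Boolean algebra, $\Dual{}$ is an involution ($\Dual{\Dual{\psi}} = \psi$, because $\neg\neg x = x$), so applying the relevant case of Lemma~\ref{lemma:dependencies_CBA} first strips off the outer $\Dual{}$, leaving a residual, and then the matching case of Lemma~\ref{lemma:pre-post-dependence} strips off the residual.

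Concretely, for item~\ref{item:pre-meet-dependency_transpose}: suppose $\phi$ post-join-depends on $\Transpose{\alpha} = \Dual{\Residual{\alpha}}$. By Lemma~\ref{lemma:dependencies_CBA}\ref{item:post-join-dependency_CBA} (applied with its ``$\delta$'' equal to $\Transpose{\alpha}$), $\phi$ post-meet-depends on the dual of $\Transpose{\alpha}$, which is $\Residual{\alpha}$ since $\Dual{}$ is an involution; then Lemma~\ref{lemma:pre-post-dependence}\ref{item:pre-post-meet-minus} gives that $\phi$ pre-meet-depends on $\alpha$. The remaining three items follow by the same two-step recipe, only changing which cases of the two lemmas are invoked: for item~\ref{item:pre-join-dependency_transpose}, post-meet-dependency on $\CoTranspose{\beta} = \Dual{\Residuated{\beta}}$ becomes post-join-dependency on $\Residuated{\beta}$ by Lemma~\ref{lemma:dependencies_CBA}\ref{item:post-meet-dependency_CBA}, hence pre-join-dependency on $\beta$ by Lemma~\ref{lemma:pre-post-dependence}\ref{item:pre-post-join-minus}; for item~\ref{item:post-meet-dependency_transpose}, pre-join-dependency on $\CoTranspose{\gamma} = \Dual{\Residuated{\gamma}}$ becomes pre-meet-dependency on $\Residuated{\gamma}$ by Lemma~\ref{lemma:dependencies_CBA}\ref{item:pre-join-dependency_CBA}, hence post-meet-dependency on $\gamma$ by Lemma~\ref{lemma:pre-post-dependence}\ref{item:pre-post-meet-plus}; and for item~\ref{item:post-join-dependency_transpose}, pre-meet-dependency on $\Transpose{\delta} = \Dual{\Residual{\delta}}$ becomes pre-join-dependency on $\Residual{\delta}$ by Lemma~\ref{lemma:dependencies_CBA}\ref{item:pre-meet-dependency_CBA}, hence post-join-dependency on $\delta$ by Lemma~\ref{lemma:pre-post-dependence}\ref{item:pre-post-join-plus}.

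There is no real obstacle here; the entire argument is bookkeeping with the four dualities. The one thing to be careful about is picking the case of Lemma~\ref{lemma:dependencies_CBA} whose ``meet$\leftrightarrow$join'' and ``pre$/$post'' tags line up with the hypothesis, so that the outer $\Dual{}$ in $\Transpose{}$ or $\CoTranspose{}$ is genuinely the one that gets cancelled before Lemma~\ref{lemma:pre-post-dependence} is applied. In the write-up I would present item~\ref{item:pre-meet-dependency_transpose} in full and remark that the other three are obtained identically, in the same style used for Lemmas~\ref{lemma:dependencies_dual} and~\ref{lemma:dependencies_CBA}.
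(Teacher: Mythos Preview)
Your proof is correct and essentially the same as the paper's: both compose Lemma~\ref{lemma:dependencies_CBA} with Lemma~\ref{lemma:pre-post-dependence}. The only cosmetic difference is the order: the paper uses the identity $\Transpose{\alpha} = \Residuated{\Dual{\alpha}}$ and applies Lemma~\ref{lemma:pre-post-dependence} first, then Lemma~\ref{lemma:dependencies_CBA}, whereas you use $\Transpose{\alpha} = \Dual{\Residual{\alpha}}$ and apply the two lemmas in the opposite order.
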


\begin{proof}
Again, we only prove \ref{item:pre-meet-dependency_transpose}; the other proofs are similar and hence omitted. Let $\beta = \Dual{ \alpha }$ so that $\Transpose{ \alpha } = \Residuated{ \beta }$. If $\phi$ post-join-depends on $\Residuated{ \beta }$, then by Lemma \ref{lemma:pre-post-dependence}, $\phi$ pre-join-depends on $\beta$. Then, by Lemma \ref{lemma:dependencies_CBA}, $\phi$ pre-meet-depends on $\Dual{ \beta } = \alpha$.
\end{proof}

\begin{proof}[Proof of Theorem \ref{theorem:CBA_dependencies}]
The proof follows from Lemma \ref{lemma:dependencies_transpose}, where $\alpha = f$, $\beta = \Dual{ f }$, $\gamma = \Residual{ f }$, and $\delta = \Transpose{ f }$.
\end{proof}

Since $f$ pre-join-depends on $\Residual{f}$, Lemma \ref{lemma:dependencies_CBA} shows that $f$ pre-meet-depends on $\Transpose{f}$. In fact, the dependency relations amongst $f$, $\Residual{f}$, $\Dual{f}$, and $\Transpose{f}$ are represented in Figure \ref{figure:dependencies2}, where the head of each arc depends on the tail of the arc. For instance, the fact that $\Transpose{f}$ post-meet-depends on $\Residual{f}$ is the red arc on the right.

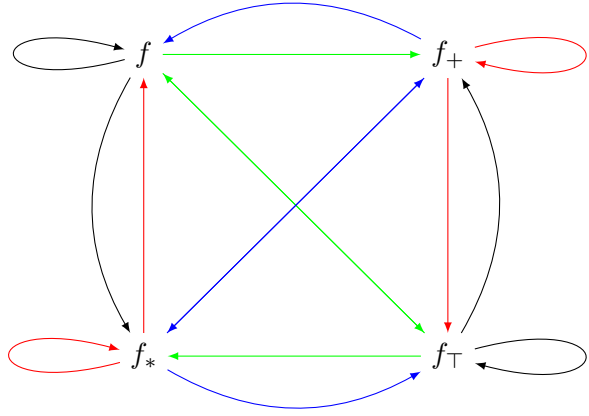
\begin{figure}
    \centering
    
    \begin{tikzpicture}[scale=4]
        \node (f) at (0,1) {$f$};
        \node (fm) at (1,1) {$\Residual{f}$};
        \node (ft) at (1,0) {$\Transpose{f}$};
        \node (fs) at (0,0) {$\Dual{f}$};
        
        \draw[color=green,-latex] (f) -- (fm);
        \draw[color=green,-latex] (f) -- (ft);
        \draw[color=green,-latex] (ft) -- (f);
        \draw[color=green,-latex] (ft) -- (fs);
        
        \draw[color=red,-latex] (fm) to [loop right] (fm);
        \draw[color=red,-latex] (fm) -- (ft);
        \draw[color=red,-latex] (fs) -- (f);
        \draw[color=red,-latex] (fs) to [loop left] (fs);
    
        \draw[color=black,-latex] (ft) to [bend right] (fm);
        \draw[color=black,-latex] (ft) to [loop right] (ft);
        \draw[color=black,-latex] (f) to [loop left] (f);
        \draw[color=black,-latex] (f) to [bend right] (fs);
    
        \draw[color=blue,-latex] (fs) -- (fm);
        \draw[color=blue,-latex] (fs) to [bend right] (ft);
        \draw[color=blue,-latex] (fm) to [bend right] (f);
        \draw[color=blue,-latex] (fm) -- (fs);
    \end{tikzpicture}
    
    \caption{Dependencies amongst $f$, $\Residual{f}$, $\Dual{f}$, and $\Transpose{f}$. Each colour indicates a different kind of dependency: \textbf{\textcolor{green}{green for pre-meet}}, \textbf{\textcolor{blue}{blue for pre-join}}, \textbf{\textcolor{red}{red for post-meet}}, and \textbf{\textcolor{black}{black for post-join}}.}
    \label{figure:dependencies2}
\end{figure}

\subsection{Robert's theorem for dependency on graphs over complete Boolean algebras} \label{subsection:robert_CBA}

The statement of Robert's theorem for pre-meet-dependency can be simplified further, depending on whether $\Lattice{L}$ is a frame  -- in which case $\alpha$ being pre-meet-covering is sufficient for convergence, or $\alpha$ being a graph -- in which case one can choose $A = \MeetSequence( \Residual{\alpha} )$. We thus derive Robert's theorem for dependency on graphs over complete Boolean algebras in this subsection. Similarly to Section \ref{section:robert}, we shall give Robert's theorem for complete Boolean algebras, study its tightness, and then refine the feedback bound. The overall picture for Robert's theorem is depicted in Figure \ref{figure:robert_CBA}.

We begin this subsection by showing that a graph being pre-meet-complete is actually equivalent to its transpose being asymptotically meet-nilpotent.

\begin{proposition} \label{proposition:CBA_complete_graphs}
Let $\Lattice{L}$ be a complete Boolean algebra and $f$ be a graph, then the following are equivalent:
\begin{equivalent}
    \item \label{item:CBA_complete_graphs_pre-meet-complete}
    $f$ is pre-meet-complete;

    \item \label{item:CBA_complete_graphs_pre-join-complete}
    $\Dual{f}$ is pre-join-complete;
        
    \item \label{item:CBA_complete_graphs_post-meet-complete}
    $\Residual{f}$ is post-meet-complete, and equivalently asymptotically join-nilpotent;

    \item \label{item:CBA_complete_graphs_post-join-complete}
    $\Transpose{f}$ is post-join-complete, and equivalently asymptotically meet-nilpotent.
\end{equivalent}
\end{proposition}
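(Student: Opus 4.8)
The plan is to funnel all four statements through a single invariant of $f$, namely the equality $\bigjoin_{i\in\N}\Residual{f}^{\,i}(\zero)=\one$; equivalently (by Proposition~\ref{proposition:meet-covering} and the remark preceding Proposition~\ref{proposition:metric_nilpotent}), that the ascending chain $M:=\MeetSequence(\Residual{f})=(\Residual{f}^{\,i}(\zero):i\in\N)$ is meet-covering, equivalently that $\Residual{f}$ is asymptotically join-nilpotent. So the strategy is to prove each of \ref{item:CBA_complete_graphs_pre-meet-complete}, both halves of \ref{item:CBA_complete_graphs_post-meet-complete}, and both halves of \ref{item:CBA_complete_graphs_post-join-complete} equivalent to ``$M$ is meet-covering'', and to derive \ref{item:CBA_complete_graphs_pre-join-complete} from \ref{item:CBA_complete_graphs_pre-meet-complete} by complementation.

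For \ref{item:CBA_complete_graphs_pre-meet-complete}: on a fixed witness sequence meet-complete implies meet-metric implies meet-covering, so $f$ pre-meet-complete implies $f$ pre-meet-covering, and the forward part of Proposition~\ref{proposition:metric_nilpotent}\ref{item:pre-meet-covering} then gives that $M$ is meet-covering. Conversely, if $M$ is meet-covering then, since a complete Boolean algebra is a frame, Corollary~\ref{corollary:frame_meet-metric} upgrades $M$ to meet-complete, and because $f$ is a graph the converse part of Proposition~\ref{proposition:metric_nilpotent}\ref{item:pre-meet-covering} yields that $f$ is pre-meet-complete. For \ref{item:CBA_complete_graphs_post-meet-complete}: $\Residual{f}$ is monotone (Lemma~\ref{lemma:general_remarks_residual}\ref{item:residual_monotone}), so the forward part of Proposition~\ref{proposition:metric_nilpotent}\ref{item:post-meet-covering} gives ``$\Residual{f}$ post-meet-covering implies $M$ meet-covering'', while the converse part gives ``$M$ meet-complete implies $\Residual{f}$ post-meet-complete'' --- here $M$ is itself the required witness, since $\Residual{f}^{\,0}(\zero)=\zero$ and $\Residual{f}(\Residual{f}^{\,i-1}(\zero))=\Residual{f}^{\,i}(\zero)$. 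Combining this with ``meet-complete implies meet-covering'' and the frame argument above, the three assertions ``$\Residual{f}$ post-meet-complete'', ``$\Residual{f}$ asymptotically join-nilpotent'' and ``$M$ meet-covering'' all coincide.

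For \ref{item:CBA_complete_graphs_post-join-complete}: iterating the identity $\Dual{(gh)}=\Dual{g}\,\Dual{h}$ (proved inside Theorem~\ref{theorem:transpose}) gives $(\Dual{g})^{i}=\Dual{g^{\,i}}$, so from $\Transpose{f}=\Dual{\Residual{f}}$ we get $\Transpose{f}^{\,i}(\one)=\neg\,\Residual{f}^{\,i}(\zero)$, hence $\JoinSequence(\Transpose{f})=\neg M$ and, by the infinite De Morgan law in a complete Boolean algebra, $\bigmeet_i\Transpose{f}^{\,i}(\one)=\neg\bigjoin_i\Residual{f}^{\,i}(\zero)$; thus ``$\Transpose{f}$ asymptotically meet-nilpotent'' is exactly ``$M$ meet-covering''. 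Moreover $\Transpose{f}$ is a graph (as $f$ is), hence monotone, so the forward part of Proposition~\ref{proposition:metric_nilpotent}\ref{item:post-join-covering} gives ``$\Transpose{f}$ post-join-complete implies $\JoinSequence(\Transpose{f})$ join-covering'', while, since a complete Boolean algebra is also a locale so that $\Opposite{\Lattice{L}}$ is a frame, the opposite-lattice form of Corollary~\ref{corollary:frame_meet-metric} upgrades $\JoinSequence(\Transpose{f})$ to join-complete and the converse part of Proposition~\ref{proposition:metric_nilpotent}\ref{item:post-join-covering} then gives ``$\Transpose{f}$ post-join-complete''. Finally, for \ref{item:CBA_complete_graphs_pre-meet-complete} $\Leftrightarrow$ \ref{item:CBA_complete_graphs_pre-join-complete}: the complement $x\mapsto\neg x$ is an order anti-automorphism of $\Lattice{L}$ conjugating $f$ into $\Dual{f}$ and sending a witness sequence $A=(a_i)$ for ``$f$ pre-meet-complete'' to $(\neg a_i)$, which satisfies $\neg a_0=\one$, $\Dual{f}(\neg a_i)\ge\neg a_{i-1}$, and is join-complete because $\neg$ restricts to a bijective isometry from $(L,d_{\neg A})$ onto $(L,d_{A})$ read in $\Opposite{\Lattice{L}}$; running this symmetrically in both directions gives the equivalence. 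I would only sketch this last routine step.

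The main obstacle is not a heavy calculation but a structural feature of Proposition~\ref{proposition:metric_nilpotent}: its forward implications transport only the ``-covering'' and ``-metric'' strengths, never ``-complete''. Consequently every converse step above, in which a ``-complete'' mapping must be produced, has to pass through ``-covering'' first and then invoke Corollary~\ref{corollary:frame_meet-metric}; this is exactly where the hypothesis that $\Lattice{L}$ is a complete Boolean algebra --- hence simultaneously a frame and a locale --- is indispensable. The remainder is bookkeeping: checking that $M=\MeetSequence(\Residual{f})$ legitimately witnesses the ``post-'' properties, that $\Transpose{f}$ is monotone, and that the complementation in the last step is performed in the opposite lattice.
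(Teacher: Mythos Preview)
Your proposal is correct and follows essentially the same route as the paper: both collapse -covering, -metric, and -complete via Corollary~\ref{corollary:frame_meet-metric} (using that a complete Boolean algebra is both a frame and a locale), then apply Proposition~\ref{proposition:metric_nilpotent} in both directions to funnel everything through the single condition that $\Residual{f}$ is asymptotically join-nilpotent, and finally invoke duality/complementation for item~\ref{item:CBA_complete_graphs_pre-join-complete}. The paper compresses this into two sentences, while you spell out each implication (including the explicit $\Transpose{f}^{\,i}(\one)=\neg\,\Residual{f}^{\,i}(\zero)$ computation), but the logical structure is identical.
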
 

\begin{proof}
Since a complete Boolean algebra is both a frame and a locale, by Proposition \ref{proposition:discriminating}, a sequence is meet/join-complete if and only if it is meet/join-covering. As such, Proposition \ref{proposition:metric_nilpotent} holds if we replace -covering by -complete everywhere. In particular, by item \ref{item:pre-meet-covering}, $f$ is pre-meet-complete if and only if $\Residual{f}$ is asymptotically join-nilpotent, and by item \ref{item:post-meet-covering}, the latter is equivalent to $\Residual{f}$ being post-meet-complete. By duality, $f$ is pre-meet-complete if and only if $\Dual{f}$ is pre-join-complete, and the rest of the proof is similar as above.
\end{proof}

We can now give Robert's theorem for dependency on graphs over complete Boolean algebras. As seen from Theorem \ref{theorem:CBA_dependencies}, all four kinds of dependencies are equivalent in this case, hence we only have one statement.

\begin{theorem}[Robert's theorem for dependency on graphs over complete Boolean algebras] \label{theorem:robert_CBA}
Let $\Lattice{L}$ be a complete Boolean algebra, $f$ be a graph on $\Lattice{L}$ with $\Transpose{f}$ asymptotically meet-nilpotent, and $\phi: L \to L$ pre-meet-depend on $f$. The following hold.
\begin{enumerate}
    \item \label{item:robert_CBA_e}
    There exists $e \in L$ such that $e = \bigmeet_i \phi^i( x ) \join \Transpose{f}^i( \one )$ for all $x \in L$ and $\phi \StronglyConverges e$.


    \item \label{item:robert_CBA_nilpotent}
    If $f$ is $K$-meet-nilpotent, then $e = \phi^K( x )$ for all $x \in L$.
\end{enumerate}
\end{theorem}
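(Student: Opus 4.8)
The plan is to transpose everything and invoke the post-join branch of the general theorem. By Theorem~\ref{theorem:CBA_dependencies}, $\phi$ pre-meet-depends on $f$ if and only if $\phi$ post-join-depends on $\delta := \Transpose{f}$, so it suffices to analyse $\phi$ relative to $\delta$. The natural candidate witnessing sequence is $D := \JoinSequence(\delta) = (\Transpose{f}^i(\one) : i \in \N)$, and the whole point is to show that $\delta$ is $D$-post-join-complete, so that Theorem~\ref{theorem:generalised_robert}\ref{item:robert_post-join} applies with the clean closed form $e = \bigmeet_i \phi^i(x) \join \Transpose{f}^i(\one)$.

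First I would establish $D$-post-join-completeness of $\delta$. The hypothesis that $\Transpose{f}$ is asymptotically meet-nilpotent says exactly that $\bigmeet_i \Transpose{f}^i(\one) = \zero$, i.e.\ that $D$ is join-covering. Since a complete Boolean algebra is a locale, Proposition~\ref{proposition:discriminating}\ref{item:discriminating_A} applied in $\Opposite{\Lattice{L}}$ upgrades join-covering to join-complete, so $D$ is a join-complete sequence. As $d_0 = \Transpose{f}^0(\one) = \one$ and $\delta(d_{i-1}) = \Transpose{f}^i(\one) = d_i$ for all $i \ge 1$, the converse direction of Proposition~\ref{proposition:metric_nilpotent}\ref{item:post-join-covering} yields that $\delta$ is $D$-post-join-complete (this is the content of Proposition~\ref{proposition:CBA_complete_graphs}, with the witnessing sequence read off explicitly). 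In particular $\delta$ is $D$-post-join-metric, so the hypotheses of Theorem~\ref{theorem:generalised_robert}\ref{item:robert_post-join} are satisfied.

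Now item~\ref{item:robert_CBA_e} follows: Theorem~\ref{theorem:generalised_robert}\ref{item:robert_post-join}\ref{item:robert_post-join_formula_e} gives $e \in L$ with $e = \bigmeet_i \phi^i(x) \join d_i = \bigmeet_i \phi^i(x) \join \Transpose{f}^i(\one)$ for all $x \in L$, and part~\ref{item:robert_post-join_metric}, which uses the $D$-post-join-completeness just established, gives $\phi \StronglyConverges e$. For item~\ref{item:robert_CBA_nilpotent}, note that $\Transpose{f} = \Dual{\Residual{f}}$ gives $\Transpose{f}^K(\one) = \neg\,\Residual{f}^K(\zero)$; combined with Lemma~\ref{lemma:nilpotent}\ref{item:nilpotent_graph} this shows $f$ is $K$-meet-nilpotent if and only if $\delta = \Transpose{f}$ is $K$-meet-nilpotent (as already remarked after Lemma~\ref{lemma:nilpotent}), so Theorem~\ref{theorem:generalised_robert}\ref{item:robert_post-join}\ref{item:robert_post-join_nilpotent} delivers $e = \phi^K(x)$ for all $x \in L$.

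The argument is essentially assembly of earlier results; the only delicate point is bookkeeping — making sure the witnessing sequence for post-join-completeness is literally $\JoinSequence(\Transpose{f})$, so that the formula for $e$ is expressed in terms of $\Transpose{f}^i(\one)$ rather than an abstract $d_i$, and checking that it is the locale (equivalently, Boolean) structure of $\Lattice{L}$ that permits the passage from the covering/metric level to the complete level required for strong convergence.
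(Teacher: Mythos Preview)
Your proposal is correct and follows essentially the same route as the paper's proof: translate pre-meet-dependence on $f$ into post-join-dependence on $\Transpose{f}$ via Theorem~\ref{theorem:CBA_dependencies}, verify that $\Transpose{f}$ is post-join-complete (this is Proposition~\ref{proposition:CBA_complete_graphs}), and then invoke Theorem~\ref{theorem:generalised_robert}\ref{item:robert_post-join}. Your write-up is slightly more explicit than the paper's in identifying the witnessing sequence as $D = \JoinSequence(\Transpose{f})$, which is exactly the bookkeeping needed to read off the formula $e = \bigmeet_i \phi^i(x) \join \Transpose{f}^i(\one)$ rather than an abstract $\bigmeet_i \phi^i(x) \join d_i$; the paper leaves this implicit in its appeal to Proposition~\ref{proposition:CBA_complete_graphs}.
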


\begin{proof}
\begin{enumerate}
    \item 
    By Theorem \ref{theorem:CBA_dependencies}, $\phi$ post-join depends on $\Transpose{f}$, which is post-join-complete by Proposition \ref{proposition:CBA_complete_graphs}. The result then follows from  Theorem \ref{theorem:robert}\ref{item:robert_post-join}\ref{item:robert_post-join_metric}.

    \item 
    In this case, $\Transpose{f}$ is also $K$-meet-nilpotent, and Theorem \ref{theorem:robert}\ref{item:robert_post-join}\ref{item:robert_post-join_nilpotent} concludes.
\end{enumerate}

\end{proof}

\begin{figure} 
\resizebox{\textwidth}{!}{
\begin{tikzpicture}[xscale=5, yscale=4]
    
    \node (an) at (0,0) [draw,thick,minimum width=2cm,minimum height=2cm, text width = 3.2cm, align=center] {$\Transpose{f}$\\ $K$-meet-nilpotent};
    \node (am) at (1,0) [draw,thick,minimum width=2cm,minimum height=2cm, text width = 3.2cm, align=center] {$\Transpose{f}$\\ asymptotically meet-nilpotent};
    \node (al) at (2,0) [draw,thick,minimum width=2cm,minimum height=2cm, text width = 3.2cm, align=center] {$\Pyr( \Transpose{f} ) = \{ \zero \}$};
    \node (ai) at (3,0) [draw,thick,minimum width=2cm,minimum height=2cm, text width = 3.2cm, align=center] {$\Transpose{f}$\\ fixed point-free};

    \draw[-latex, color=black] (an) -- (am);
    \draw[-latex, color=black] (am) -- (al);
    \draw[-latex, color=green] (al) to [bend left=15] (am);
    \draw[-latex, color=black] (al) -- (ai);

    \node (pn) at (0,1) [draw,thick,minimum width=3cm,minimum height=2cm, text width = 3.2cm, align=center] {$\phi^K = e$};
    \node (pm) at (1,1) [draw,thick,minimum width=3cm,minimum height=2cm, text width = 3.2cm, align=center] {$\phi \StronglyConverges e$};
    \node (pi) at (2,1) [draw,thick,minimum width=3cm,minimum height=2cm, text width = 3.2cm, align=center] {$\Pyr(\phi) \subseteq \{ e \}$};
    \node (pf) at (3,1) [draw,thick,minimum width=3cm,minimum height=2cm, text width = 3.2cm, align=center] {$| \Par( \phi ) | \le 1$};

    \draw[-latex, color=black] (pn) -- (pm);
    \draw[-latex, color=black] (pm) -- (pi);
    \draw[-latex, color=black] (pi) -- (pf);

    \draw[-latex, color=black] (an) to node[left] {\tiny Th.\ref{theorem:robert_CBA}\ref{item:robert_CBA_nilpotent} } (pn);
    \draw[-latex, color=black] (am) to node[left] {\tiny Th.\ref{theorem:robert_CBA}\ref{item:robert_CBA_e} } (pm);
    \draw[-latex, color=black] (ai) to node[left] {\tiny Th.\ref{theorem:precarious} } (pf);
    
    \draw[-latex, color=red, dotted] (am) to node[left] {\tiny Ex.\ref{example:counterexample_nilpotent_CBA} } (pn);
    \draw[-latex, color=red, dotted] (ai) to node[left] {\tiny Ex.\ref{example:counterexample_limit_CBA} } (pi);

\end{tikzpicture}
}
\caption{Robert's theorem for dependency on graphs over complete Boolean algebras. 
The \textbf{\textcolor{red}{red dotted arrows}} represent counterexamples, the \textbf{\textcolor{green}{green arrow}} represents an implication that holds when $\Lattice{L}$ is a power set lattice, while the \textbf{\textcolor{black}{black arrows}} represent unconditional implications.}
\label{figure:robert_CBA} 
\end{figure}
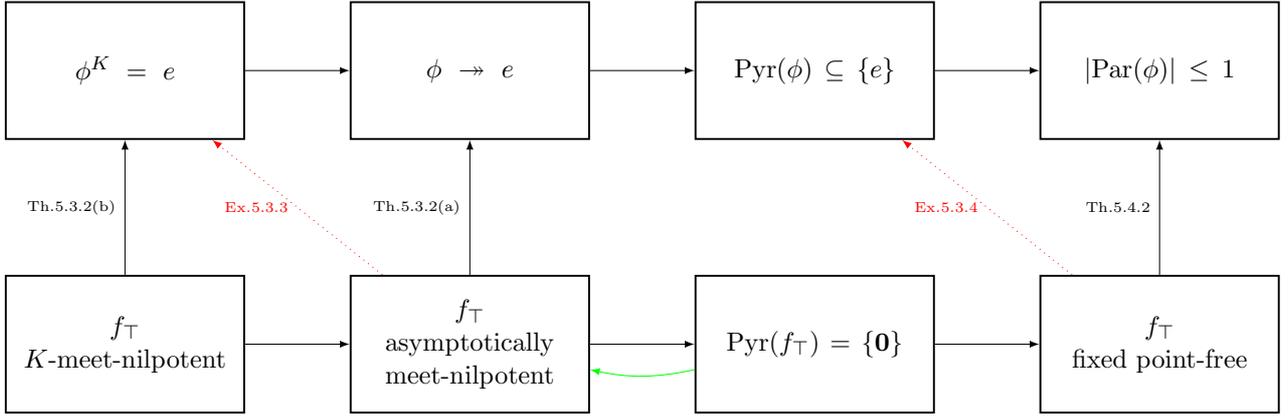

We now comment on the tightness of Robert's theorem for complete Boolean algebras. We give two counterexamples.

Firstly, we give a counterexample to: ``If $\Transpose{f}$ is asymptotically meet-nilpotent, then $\phi^K = e$.'' 

\begin{example} \label{example:counterexample_nilpotent_CBA}
Let $f$ be the backwards ray and $\phi = \Transpose{f}$ be the ray. Then $\phi$ pre-meet-depends on $f$, and $\Transpose{f}$ is not nilpotent.
\end{example}

Secondly, we give a counterexample to: ``If $\Transpose{f}$ is fixed point-free, then $\Pyr( \phi ) \subseteq \{ e \}$.''

\begin{example} \label{example:counterexample_limit_CBA}
Let $\phi$ be the stairway to hell. Then $\phi$ pre-meet-depends on the stairway to heaven $f$, but $\Transpose{f} = \phi$ is not asymptotically meet-nilpotent: $| \Pyr( \phi ) | = 2$.
\end{example}



    





\subsection{Fixed-point free graphs and the feedback bound} \label{subsection:fixed_point-free_CBA}







In \cite{RRM13}, a digraph $D$ is called precarious if there exists a Boolean network $\phi$ with multiple fixed points (i.e. a hypodox) which post-join depends on $D$. They prove that a digraph is precarious if and only if it has an infinite walk. In our terminology, that means $D$ is not fixed point-free. We now generalise this characterisation to all complete Boolean algebras; we also show refine it by giving the corresponding result for parabolic points.

We begin with a lemma that shall be used repeatedly in this subsection.

\begin{lemma} \label{lemma:difference_CBA}
Let $\Lattice{L}$ be a complete Boolean algebra and $x, y \in L$. Let $A = \{ s \in L : x \join s = y \join s \}$ and $a = \bigmeet A$. Then $A = a^\uparrow$.
\end{lemma}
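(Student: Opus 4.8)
The plan is to give an explicit description of the set $A$. Working in the complete Boolean algebra $\Lattice{L}$, I would first show that for any $s \in L$,
\[
 x \join s = y \join s \iff w \le s, \qquad \text{where } w := (x \meet \neg y) \join (\neg x \meet y)
\]
is the symmetric difference of $x$ and $y$. The equivalence rests on two elementary facts valid in any Boolean algebra: $u \le v$ if and only if $u \meet \neg v = \zero$, and De Morgan's law $\neg(y \join s) = \neg y \meet \neg s$. Concretely, $x \join s = y \join s$ is equivalent to the conjunction $x \le y \join s$ and $y \le x \join s$; the first holds iff $x \meet \neg y \meet \neg s = \zero$, i.e. iff $x \meet \neg y \le s$, and symmetrically the second holds iff $\neg x \meet y \le s$. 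Taking the join of these two lower bounds, these two conditions together are equivalent to $w \le s$, so $s \in A \iff s \in w^\uparrow$.

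This already shows $A = w^\uparrow$. To finish and identify $a$, I would check that $w$ itself lies in $A$: indeed $x \join w = x \join (\neg x \meet y) = x \join y$ and likewise $y \join w = y \join (x \meet \neg y) = x \join y$, so $x \join w = y \join w$. Hence $w$ is the least element of $A$, which gives $a = \bigmeet A = w$, and therefore $A = w^\uparrow = a^\uparrow$, as claimed.

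I do not expect a genuine obstacle here; the content is routine Boolean-algebra manipulation. The only points requiring a little care are the justification of the rewriting $x \meet \neg(y \join s) = x \meet \neg y \meet \neg s$ (immediate from De Morgan and associativity) and the equivalence $u \le v \iff u \meet \neg v = \zero$ (immediate from distributivity and complementation, using that the complement is unique in a Boolean algebra), together with the verification that $w \in A$. This last step is what upgrades the observation ``$A$ is an up-set with infimum $w$'' to the stronger conclusion ``$A$ is the principal filter $a^\uparrow$''.
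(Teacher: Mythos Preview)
Your proof is correct. The equivalence $x \join s = y \join s \iff w \le s$ with $w = (x \meet \neg y) \join (\neg x \meet y)$ is established exactly as you say, and once $A = w^\uparrow$ is known, $w$ is the minimum of $A$ so $a = w$ automatically (your separate verification that $w \in A$ is harmless but already implied).

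Your route is genuinely different from the paper's. The paper never identifies $a$ explicitly: it shows $a \in A$ directly via the infinite distributive law
\[
  x \join a = x \join \bigmeet A = \bigmeet (x \join A) = \bigmeet (y \join A) = y \join a,
\]
using that a complete Boolean algebra is a locale, and then observes that $A$ is upward closed above $a$. Your argument, by contrast, uses only finitary Boolean identities (De Morgan, distributivity, complementation) to pin down $A = w^\uparrow$; completeness enters only to make sense of $\bigmeet A$. This buys you an explicit formula for $a$ as the symmetric difference of $x$ and $y$, and avoids appealing to infinite distributivity. The paper's argument is shorter and more in keeping with its lattice-theoretic framework, but yours is slightly more elementary and more informative.
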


\begin{proof}
First, $a \in A$ since
\[
    x \join a = x \join \bigmeet A = \bigmeet x \join A = \bigmeet y \join A = y \join a.
\]
Second, if $b \ge a$, then 
\[
    x \join b = x \join a \join b =  y \join a \join b = y \join b.
\]
\end{proof}

\begin{theorem} \label{theorem:precarious}
Let $\Lattice{L}$ be a complete Boolean algebra and $f$ be a graph on $\Lattice{L}$. Then the following are equivalent:
\begin{equivalent}
    \item \label{item:precarious_fixed_point-free}
    $\Transpose{f}$ is fixed point-free;

    \item \label{item:precarious_fixed_points}
    $| \Fix( \phi ) | \le 1$ whenever $\phi : L \to L$ pre-meet-depends on $f$;

    \item \label{item:precarious_parabolic_points}
    $| \Par( \phi ) | \le 1$ whenever $\phi : L \to L$ pre-meet-depends on $f$.
\end{equivalent}
\end{theorem}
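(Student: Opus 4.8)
The plan is to prove the cyclic chain $\ref{item:precarious_parabolic_points}\Rightarrow\ref{item:precarious_fixed_points}\Rightarrow\ref{item:precarious_fixed_point-free}\Rightarrow\ref{item:precarious_parabolic_points}$. Throughout, write $g=\Transpose{f}$; by Theorem \ref{theorem:CBA_dependencies}, the statement ``$\phi$ pre-meet-depends on $f$'' is equivalent to ``$\phi$ post-join-depends on $g$'', so the whole argument can be carried out in terms of post-join-dependency on the graph $g$, with the hypotheses and conclusions on $\Transpose{f}$ becoming hypotheses and conclusions on $g$. I will also use the elementary Boolean identity $a\join c=b\join c\iff a\triangle b\le c$, where $a\triangle b:=(a\meet\neg b)\join(b\meet\neg a)$; this is the content of Lemma \ref{lemma:difference_CBA} once one notes that $\bigmeet\{c:a\join c=b\join c\}=a\triangle b$. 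The implication $\ref{item:precarious_parabolic_points}\Rightarrow\ref{item:precarious_fixed_points}$ is then immediate, since $\Fix(\phi)\subseteq\Par(\phi)$.

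For $\ref{item:precarious_fixed_points}\Rightarrow\ref{item:precarious_fixed_point-free}$ I argue by contraposition. Suppose $g$ is not fixed point-free, so there is $q>\zero$ with $g(q)=q$. The key observation is that \emph{every} graph post-join-depends on itself: if $x\join s=y\join s$, then $g(x)\join g(s)=g(x\join s)=g(y\join s)=g(y)\join g(s)$. Hence $g$ post-join-depends on $g$, so by Theorem \ref{theorem:CBA_dependencies} the mapping $\phi:=g=\Transpose{f}$ pre-meet-depends on $f$; and $\Fix(\phi)\supseteq\{\zero,q\}$, so $|\Fix(\phi)|\ge 2$, contradicting \ref{item:precarious_fixed_points}. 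Thus the witnessing hypodox can always be taken to be $\Transpose{f}$ itself, which refines the corresponding construction in \cite{RRM13}.

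For $\ref{item:precarious_fixed_point-free}\Rightarrow\ref{item:precarious_parabolic_points}$, assume $g$ is fixed point-free, let $\phi$ post-join-depend on $g$, and let $u,v\in\Par(\phi)$ with backward orbits $(u_i:i\in\N)$, $(v_i:i\in\N)$, i.e. $u_0=u$, $v_0=v$ and $\phi(u_i)=u_{i-1}$, $\phi(v_i)=v_{i-1}$ for $i\ge 1$. Set $w_i=u_i\triangle v_i$. A routine Boolean computation gives $u_i\join w_i=u_i\join v_i=v_i\join w_i$, so post-join-dependency applied with $s=w_i$ yields $u_{i-1}\join g(w_i)=\phi(u_i)\join g(w_i)=\phi(v_i)\join g(w_i)=v_{i-1}\join g(w_i)$, whence $w_{i-1}=u_{i-1}\triangle v_{i-1}\le g(w_i)$ for every $i\ge 1$ by the identity above (equivalently, by Lemma \ref{lemma:difference_CBA}). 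Now put $p=\bigjoin_{i\in\N}w_i$. By monotonicity of $g$ we have $w_{i-1}\le g(w_i)\le g(p)$ for all $i\ge 1$, so $w_j\le g(p)$ for every $j\in\N$ and therefore $p\le g(p)$. Since $g$ is fixed point-free, this forces $p=\zero$ (by the characterisation of fixed point-free graphs of Section \ref{subsection:fixed_point-free}, item \ref{item:increased_point-free}); in particular $w_0=u\triangle v=\zero$, i.e. $u=v$. Hence $|\Par(\phi)|\le 1$.

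The only mildly delicate point is $\ref{item:precarious_fixed_point-free}\Rightarrow\ref{item:precarious_parabolic_points}$: one must recognise that the pair of backward orbits of $\phi$ should be replaced by the single sequence $(u_i\triangle v_i)$, that this sequence satisfies $w_{i-1}\le g(w_i)$, and that joining its terms produces a pre-fixed point $p\le g(p)$ of $g$ — after which fixed point-freeness closes the argument. Everything else is bookkeeping with the translation dictionary of Theorem \ref{theorem:CBA_dependencies} and the symmetric-difference calculus in a complete Boolean algebra.
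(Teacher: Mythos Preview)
Your proof is correct. The implications $\ref{item:precarious_parabolic_points}\Rightarrow\ref{item:precarious_fixed_points}$ and $\ref{item:precarious_fixed_points}\Rightarrow\ref{item:precarious_fixed_point-free}$ match the paper's exactly (the paper also uses $\phi=\Transpose{f}$ as the witnessing hypodox). The interesting difference is in $\ref{item:precarious_fixed_point-free}\Rightarrow\ref{item:precarious_parabolic_points}$.

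The paper proceeds in two stages: first it proves $\ref{item:precarious_fixed_point-free}\Rightarrow\ref{item:precarious_fixed_points}$ directly (comparing two fixed points via the minimal $a$ with $x\join a=y\join a$), and then for $\ref{item:precarious_fixed_point-free}\Rightarrow\ref{item:precarious_parabolic_points}$ it shows that \emph{every parabolic point is a fixed point}. For a single $x\in\Par(\phi)$ with backward orbit $(x_i)_{i\ge 0}$, it extends to $(x_i)_{i\in\Z}$ by setting $x_{-i}=\phi^i(x)$, forms $d_i=\bigmeet\{s:x_i\join s=x_{i-1}\join s\}$, obtains $\Transpose{f}(d_i)\ge d_{i-1}$ for all $i\in\Z$, and joins over $\Z$ to force each $d_i=\zero$, in particular $d_0=\zero$, i.e.\ $\phi(x)=x$. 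Your approach instead compares \emph{two} parabolic points $u,v$ directly: with backward orbits $(u_i),(v_i)$ you set $w_i=u_i\triangle v_i$, obtain $w_{i-1}\le g(w_i)$, and join over $\N$ to get $p\le g(p)$, hence $p=\zero$ and $u=v$. Your route is more economical: it avoids the separate proof of $\ref{item:precarious_fixed_point-free}\Rightarrow\ref{item:precarious_fixed_points}$ and the extension of the orbit to $\Z$. The paper's route yields the intermediate statement $\Par(\phi)=\Fix(\phi)$ explicitly, though this also follows from your conclusion via Lemma~\ref{lemma:phi(Par)} (if $|\Par(\phi)|\le 1$ and $p\in\Par(\phi)$, then $\phi(p)\in\Par(\phi)$ forces $\phi(p)=p$).
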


\begin{proof}
$\ref{item:precarious_fixed_point-free} \implies \ref{item:precarious_fixed_points}$. Suppose $x, y \in \Fix( \phi )$. Let $A = \{ s \in L : x \join s = y \join s \}$ and $a = \bigmeet A$. By Lemma \ref{lemma:difference_CBA}, $a \in A$, hence we have the following chain of implications:
\begin{alignat*}{3}
    x \join a = y \join a &\implies \phi( x ) \join \Transpose{f}( a ) = \phi( y ) \join \Transpose{f}( a ) &\quad& \text{because $\phi$ post-join depends on $\Transpose{f}$} \\
    &\implies \Transpose{f}( a ) \ge a &\quad& \text{because $a = \bigmeet A$} \\
    &\implies a = \zero &\quad& \text{because $\Transpose{f}$ is fixed-point free} \\
    &\implies x = y &\quad& \text{because $x \join a = y \join a$}.
\end{alignat*}

$\ref{item:precarious_fixed_point-free} \implies \ref{item:precarious_parabolic_points}$. In view of the previous implication, we only need to show that any parabolic point of $\phi$ is a fixed point. Suppose $x \in \Par( \phi )$, with $(x_i : i \in \N)$ such that $\phi( x_i ) = x_{i-1}$ for all $i \in \N_1$. Extend this sequence to all $\Z$ with $x_{ -i } = \phi^i( x )$ for all $i \in \N$. For any $i \in \Z$, let $D_i = \{ s \in L: x_i \join s = x_{i-1} \join s \}$ and $d_i = \bigmeet D_i$. By Lemma \ref{lemma:difference_CBA}, $d_i \in D_i$, hence we have the following chain of implications:
\begin{alignat*}{3}
    x_i \join d_i = x_{i-1} \join d_i &\implies x_{i-1} \join \Transpose{f}( d_i ) = \phi( x_{i-2} ) \join \Transpose{f}( d_i ) &\quad& \text{because $\phi$ post-join depends on $\Transpose{f}$} \\
    &\implies \Transpose{f}( d_i ) \ge d_{i-1} &\quad& \text{because $d_{ i-1 } = \bigmeet D_{i-1}$}.
\end{alignat*}
Let $D = (d_i : i \in \Z)$, then $\Transpose{f} \left( \bigjoin D \right) \ge \bigjoin D$, thus $\bigjoin D = \zero$. In particular, $d_0 = \zero$, which implies $x \in \Fix( \phi )$.

$\ref{item:precarious_parabolic_points} \implies \ref{item:precarious_fixed_points}$. Trivial.

$\ref{item:precarious_fixed_points} \implies \ref{item:precarious_fixed_point-free}$. If $\Transpose{f}$ is not fixed point-free, then $\phi = \Transpose{f}$ pre-meet-depends on $f$, and $| \Fix( \phi ) | \ge 2$.
\end{proof}

We now refine the feedback bound for complete Boolean algebras. Recall the definition of induced subgraph from item \ref{item:conjunction} in Section \ref{subsection:graphs}. Say $t$ is a \Define{fixed point-free set} of $f$ if $\Transpose{f}[t]$, the subgraph of $\Transpose{f}$ induced by $t$, is fixed point-free. 

\begin{lemma} \label{lemma:fpf_set}
Let $\Lattice{L}$ be a complete Boolean algebra and let $f$ be a graph on $\Lattice{L}$. The following hold.
\begin{enumerate}
    \item \label{item:fpf_set_definition}
    If $t$ is a fixed point-free set of $f$, then for any $u \le t$, $\Transpose{f}( u ) \ge u \implies u = \zero$. 

    \item \label{item:fpf_set_FVS}
    If $a$ is a pre-meet-FVS of $f$, then $\neg a$ is a fixed point-free set of $f$.
\end{enumerate}
\end{lemma}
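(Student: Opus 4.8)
The plan is to treat the two items in turn: item (a) will be almost immediate from the characterisation of fixed point-free graphs, while item (b) is the substantive part and will follow from a short monotone induction that transports the witnessing sequence for the pre-meet-FVS $a$ of $f$ across the transpose. For item (a), recall that since $\Lattice{L}$ is a complete Boolean algebra it is in particular a frame, so by item \ref{item:conjunction} of Section \ref{subsection:graphs} the induced subgraph $\Transpose{f}[t]$ is a genuine graph on the complete lattice $t^\downarrow$ (whose bottom element is $\zero$), acting by $\Transpose{f}[t](x) = \Transpose{f}(x) \meet t$. By hypothesis it is fixed point-free, so by the equivalence \ref{item:fixed_point-free}~$\iff$~\ref{item:increased_point-free} in the characterisation of fixed point-free graphs, the only $x \in t^\downarrow$ with $\Transpose{f}[t](x) \ge x$ is $x = \zero$. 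Hence if $u \le t$ satisfies $\Transpose{f}(u) \ge u$, then $\Transpose{f}[t](u) = \Transpose{f}(u) \meet t \ge u \meet t = u$, forcing $u = \zero$.

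For item (b), let $A = (a_i : i \in \N)$ be a meet-metric sequence witnessing that $a$ is a pre-meet-FVS of $f$, so that $a_0 = a$ and $f(a_i) \le a_{i-1}$ for all $i \ge 1$, and set $b_i = \neg a_i$. The first step is to show $\Transpose{f}(b_{i-1}) \le b_i$ for all $i \ge 1$: by Theorem \ref{theorem:residuated}\ref{item:hf}, $f(a_i) \le a_{i-1}$ is equivalent to $a_i \le \Residual{f}(a_{i-1})$, and complementing (which reverses the order) and using $\Transpose{f}(x) = \neg \Residual{f}(\neg x)$ gives $b_i \ge \neg \Residual{f}(a_{i-1}) = \Transpose{f}(\neg a_{i-1}) = \Transpose{f}(b_{i-1})$. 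The second step records that a meet-metric sequence is meet-covering, so $\bigjoin_i a_i = \one$ by Proposition \ref{proposition:meet-covering}, whence $\bigmeet_i b_i = \neg \bigjoin_i a_i = \zero$ by the infinite De Morgan laws. Finally, let $u$ be any fixed point of $\Transpose{f}[\neg a]$ — that is, $u \le \neg a$ and $\Transpose{f}(u) \meet \neg a = u$, so in particular $u \le b_0$ and $u \le \Transpose{f}(u)$. A straightforward induction using the monotonicity of the graph $\Transpose{f}$ together with $\Transpose{f}(b_{i-1}) \le b_i$ shows $u \le b_i$ for every $i$: if $u \le b_i$ then $u \le \Transpose{f}(u) \le \Transpose{f}(b_i) \le b_{i+1}$. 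Hence $u \le \bigmeet_i b_i = \zero$, so $\Transpose{f}[\neg a]$ has only $\zero$ as a fixed point, which is precisely the assertion that $\neg a$ is a fixed point-free set of $f$.

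I do not expect a real obstacle here; the one delicate point is the first step of (b), where one must correctly dualise the chain condition $f(a_i) \le a_{i-1}$ into the transpose condition $\Transpose{f}(b_{i-1}) \le b_i$ by combining Theorem \ref{theorem:residuated}\ref{item:hf} with the definition $\Transpose{f} = \Dual{\Residual{f}}$ and the order-reversing property of complementation. It is also worth being careful that (b) only uses the meet-metric hypothesis through the identity $\bigjoin_i a_i = \one$, so that completeness of the pseudometric $d_A$ is not secretly invoked.
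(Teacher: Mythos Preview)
Your proof is correct and follows essentially the same approach as the paper. The only cosmetic differences are in the dualisation step of (b)---you go via Theorem \ref{theorem:residuated}\ref{item:hf} and the definition $\Transpose{f} = \Dual{\Residual{f}}$, while the paper routes through $\Dual{f}$ and $\Residuated{\Transpose{f}}$---and in the final induction, where you track $u \le b_i$ directly whereas the paper first bounds $\Transpose{f}^i(b) \le b_i$ and then squeezes $c$ below that; both arguments are equivalent.
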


\begin{proof}
\begin{enumerate}
    \item 
    We have the following set of equivalences:
    \begin{align*}
        \Transpose{f}[t] \text{ fixed point-free} &\iff \left[ \Transpose{f}(u) \meet t \ge u \implies u = \zero \ \forall u \le t \right] \\
        &\iff \left[ \Transpose{f}(u) \ge u \implies u = \zero \ \forall u \le t \right].
    \end{align*}

    \item 
    Let $b = \neg a$, then the sequence $B = \neg A$ satisfies $b_0 = b$,
    \[
        f( a_i ) \le a_{i-1} \iff \Dual{f}( b_i ) \ge b_{i-1} \iff \Residuated{ \Transpose{f} }( b_i ) \ge b_{i-1} \iff \Transpose{f}( b_{i-1} ) \le b_i,
    \]
    and $\bigmeet B = \zero$. Therefore,
    \[
        \bigmeet_i \Transpose{f}^i( b ) \le \bigmeet_i b_i = \zero.
    \]
    If $c \le b$ and $\Transpose{f}( c ) \ge c$, then
    \[
        \bigmeet_i \Transpose{f}^i( b ) \ge \bigmeet_i \Transpose{f}^i( c ) \ge c,
    \]
    which implies $c = \zero$, and hence $b$ is a fixed point-free set of $f$.
\end{enumerate}

\end{proof}

In view of Lemma \ref{lemma:fpf_set} \ref{item:fpf_set_FVS}, the feedback bound for complete Boolean algebras in Theorem \ref{theorem:feedback_bound_CBA} below is tighter than the one for complete lattices in Theorem \ref{theorem:feedback_bound}.

\begin{theorem}[Feedback bound for dependency on graphs over complete Boolean algebras] \label{theorem:feedback_bound_CBA}
Let $\Lattice{L}$ be a complete Boolean algebra, $f$ be a graph on $\Lattice{L}$ and $t$ be an acyclic set of $f$. If $\phi : L \to L$ pre-meet-depends on $f$, then $| \Fix( \phi ) | \le | t^\uparrow |$.
\end{theorem}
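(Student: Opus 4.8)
The plan is to reduce the Boolean-algebra feedback bound to the injectivity argument used in Theorem~\ref{theorem:feedback_bound}, but with the fixed point-free set $t$ playing the role of the complement of a pre-meet-FVS. Concretely, I would first show that whenever $t$ is a fixed point-free set of $f$ and $\phi$ pre-meet-depends on $f$, any two fixed points of $\phi$ that agree on $\neg t$ must coincide; the bound $|\Fix(\phi)| \le |t^\uparrow|$ then follows because the map $\Fix(\phi) \to t^\uparrow$ sending a fixed point $x$ to $x \join \neg t$ (equivalently $x \meet t$ up to a bijection $x^\downarrow \leftrightarrow (\neg x)^\uparrow$) is injective, and its image lies in $(\neg t)^\uparrow$, which is in bijection with $t^\downarrow$, whose cardinality equals $|t^\uparrow|$ via complementation.

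The heart of the matter is the agreement argument. Suppose $x, y \in \Fix(\phi)$ with $x \join \neg t = y \join \neg t$. By Theorem~\ref{theorem:CBA_dependencies}, $\phi$ post-join-depends on $\Transpose{f}$. Mimicking the proof of Theorem~\ref{theorem:precarious}, set $A = \{ s \in L : x \join s = y \join s\}$ and $a = \bigmeet A$; by Lemma~\ref{lemma:difference_CBA}, $A = a^\uparrow$, and since $\neg t \in A$ we have $a \le \neg t$. Then
\[
    x \join a = y \join a \implies \phi(x) \join \Transpose{f}(a) = \phi(y) \join \Transpose{f}(a) \implies \Transpose{f}(a) \ge a,
\]
where the last step uses $\phi(x) = x$, $\phi(y) = y$, and $a = \bigmeet A$. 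Now $a \le \neg t$, i.e. $a \le \neg(\neg t) = t$ — wait, we need $a \le t$; instead $a \le \neg t$, so I should run the induced-subgraph criterion on $\neg t$. The cleaner route: do the whole argument with the \emph{acyclic set} $t$ hypothesised in the statement, which (per the paper's terminology, ``acyclic set'' = ``fixed point-free set'') means $\Transpose{f}[t]$ is fixed point-free; then fixed points agreeing on $\neg t$ have their ``difference'' $a$ satisfying $a \le t$ by construction (take $A = \{s : x \meet \neg s = y \meet \neg s\}$-type bookkeeping, or dualise Lemma~\ref{lemma:difference_CBA} appropriately), and Lemma~\ref{lemma:fpf_set}\ref{item:fpf_set_definition} forces $a = \zero$, hence $x = y$.

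The main obstacle I anticipate is bookkeeping the complementation carefully: the feedback-set data in Theorem~\ref{theorem:feedback_bound}\ref{item:feedback_bound_alpha} is phrased via a pre-meet-FVS $a$ with bound $|a^\downarrow|$, whereas here we are given an acyclic/fixed point-free \emph{set} $t$ with bound $|t^\uparrow|$, and Lemma~\ref{lemma:fpf_set}\ref{item:fpf_set_FVS} tells us $\neg a$ is a fixed point-free set when $a$ is a pre-meet-FVS. So I must make sure the injection goes from $\Fix(\phi)$ into the right principal filter/ideal: I expect to use $x \mapsto x \join \neg t$, note two fixed points with the same image agree on $\neg t$ hence (by the agreement argument, with the difference element landing below $t$) are equal, so this map is injective; its image sits inside $(\neg t)^\uparrow$, and complementation $z \mapsto \neg z$ is an order-reversing bijection $(\neg t)^\uparrow \to t^\downarrow$, with a further complementation bijection $t^\downarrow \to (\neg t)^\uparrow$ — so in fact $|\,(\neg t)^\uparrow\,| = |t^\downarrow|$, and since $t^\downarrow$ and $t^\uparrow$ need not have equal size in general, I should instead route the injection directly into $t^\uparrow$ via $x \mapsto \neg(x \meet \neg t) = \neg x \join t$, or equivalently observe $|t^\uparrow| = |(\neg t)^\downarrow|$ and land there. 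Once the correct filter is identified, everything else is the induction from Theorem~\ref{theorem:feedback_bound} together with Lemma~\ref{lemma:fpf_set}, and no genuinely new idea is needed.
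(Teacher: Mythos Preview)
Your core argument is the paper's: form $A=\{s:x\join s=y\join s\}$, set $a=\bigmeet A$, use post-join dependence on $\Transpose{f}$ to get $\Transpose{f}(a)\ge a$, and then invoke Lemma~\ref{lemma:fpf_set}\ref{item:fpf_set_definition} to conclude $a=\zero$. Where you go astray is purely in the bookkeeping around $t$ versus $\neg t$, and this confusion is self-inflicted.

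The paper simply takes the map $\Fix(\phi)\to t^\uparrow$, $x\mapsto x\join t$. If $x\join t=y\join t$ then $t\in A$, so $a\le t$ immediately; that is exactly the inequality you need for Lemma~\ref{lemma:fpf_set}\ref{item:fpf_set_definition}, and the image already sits in $t^\uparrow$ with no complementation needed. Your detour through $x\join\neg t$, then ``agreeing on $\neg t$'', then the chain $(\neg t)^\uparrow\leftrightarrow t^\downarrow\leftrightarrow\cdots$, and finally $x\mapsto\neg x\join t$, is unnecessary: in a Boolean algebra $x\join t=y\join t$ is equivalent to $x\meet\neg t=y\meet\neg t$ (and, via Lemma~\ref{lemma:equality_CBA}, to $\neg x\join t=\neg y\join t$), so all of your candidate maps have the same fibres as the obvious one. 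There is no gap in the mathematics once untangled, but you should drop the $\neg t$ entirely and state the injection as $x\mapsto x\join t$ from the start.
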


\begin{proof}
Suppose $x, y \in \Fix( \phi )$ such that $x \join t = y \join t$. Let $A = \{ s \in L : x \join s = y \join s \}$ and $a = \bigmeet A$. First, $a \le t$ since $t \in A$. By Lemma \ref{lemma:difference_CBA}, $a \in A$, hence we have the following chain of implications:
\begin{alignat*}{3}
    x \join a = y \join a &\implies x \join \Transpose{f}( a ) = y \join \Transpose{f}( a ) &\quad& \text{because $\phi$ post-join depends on $\Transpose{f}$} \\
    &\implies \Transpose{f}( a ) \ge a &\quad& \text{because $a = \bigmeet A$} \\
    &\implies a = \zero &\quad& \text{because $t$ is a fixed point-free set of $f$} \\
    &\implies x = y &\quad& \text{because $x \join a = y \join a$}.
\end{alignat*}
Therefore, if $x \join t = y \join t$, then $x = y$. Thus, the mapping $\Fix( \phi ) \to t^\uparrow$ that maps $x$ to $x \join t$ is injective and $| \Fix( \phi ) | \le | t^\uparrow |$.
\end{proof}

\section{Applications to automata networks} \label{subsection:applications_robert}

\subsection{Reviewing the original Robert theorem} \label{subsection:original_robert}

The \Define{direct product} of complete lattices is defined as follows \cite[Chapter I, Section 3]{Gra13}. Let $V$ be a set, and for any $v \in V$, let $\Lattice{L}_v = ( L_v, \join_v, \meet_v, \zero_v, \one_v )$ be a complete lattice. Then their direct product is $\Lattice{L} = \prod_{v \in V} \Lattice{L}_v$ with $\Lattice{L} = ( L, \join, \meet, \zero, \one )$ is defined as follows:
\begin{align*}
    L &= \prod_{v \in V} L_v = \{ x = (x_v \in L_v : v \in V) \}, \\
    x \join y &= ( x_v \join_v y_v : v \in V ), \\
    x \meet y &= ( x_v \meet_v y_v : v \in V ), \\
    \zero &= ( \zero_v : v \in V ), \\
    \one &= ( \one_v : v \in V ).
\end{align*}
We note that the notation is consistent: $\zero = (\zero_v : v \in V)$ and $\one = ( \one_v : v \in V)$. We shall introduce shorthand notations for any subset $I$ of $V$, e.g. $L_I = \prod_{i \in I} L_i$, $x_I = (x_i : i \in I)$, etc.

Let $\Lattice{L} = \prod_{v \in V} \Lattice{L}_v$. Then an \Define{automata network} is a mapping $\phi: L \to L$; a \Define{finite automata network} is an automata network where $V$ is finite; a \Define{Boolean network} is an automata network where $|L_v| = 2$ for all $v \in V$. 

Denoting $T = \prod_{v \in V} \{ \zero_v, \one_v \}$, we have that $\Lattice{T} = (T, \join, \meet, \zero, \one)$ is isomorphic to the power set lattice $\Lattice{\Powerset}(V)$. $T$ is a bi-topology, hence the mapping $h$ from the proof of the bi-topology theorem defined by $h(x) = \bigmeet T \cap x^\downarrow$ is a graph. More explicitly, we have
\[
    h(s)_v = \begin{cases}
        \zero_v &\text{if }s_v = \zero_v,\\
        \one_v &\text{otherwise}.
    \end{cases}
\]
Then an \Define{interaction graph} is any graph on $\Lattice{L}$ of the form $f = gh$ where $g$ is a graph on $\Lattice{T}$. Therefore, there is a one-to-one correspondence between interaction graphs on $\Lattice{L}$ and digraphs on $V$.

The following proposition justifies our terminology and shows that Theorem \ref{theorem:generalised_robert} is indeed a generalisation of Robert's original theorem.

\begin{proposition} \label{lemma:interaction_graph}
Let $\Lattice{L} = \prod_{v \in V} \Lattice{L}_v$ as above. If $f : L \to L$ is an interaction graph on $\Lattice{L}$, then there exists a digraph $D$ on $V$ such that $\phi : L \to L$ pre-meet-depends on $f$ if and only if $\phi$ depends on $D$.
\end{proposition}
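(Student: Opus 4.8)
The plan is to make the definition of the interaction graph $f = gh$ explicit, read off the digraph $D$ from the graph $g$ via Observation~\ref{observation:digraphs}, and then show that pre-meet-dependency on $f$ is literally the classical dependency on $D$ by a coordinate-wise analysis of meets with elements of $T$.

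First I would fix notation. For $s \in L$ write $\sigma(s) = \{ v \in V : s_v \ne \zero_v \}$ for its support, and for $S \subseteq V$ let $\tau_S \in T$ be the element with $(\tau_S)_v = \one_v$ for $v \in S$ and $(\tau_S)_v = \zero_v$ for $v \notin S$; the map $S \mapsto \tau_S$ is the isomorphism $\Lattice{\Powerset}(V) \cong \Lattice{T}$, and the explicit formula for $h$ says precisely $h(s) = \tau_{\sigma(s)}$. Since $g$ is a graph on $\Lattice{T}$, Observation~\ref{observation:digraphs} (transported along this isomorphism) furnishes a digraph $D$ on $V$ with $g(\tau_S) = \tau_{D(S)}$ for every $S \subseteq V$; hence $f(s) = g(h(s)) = \tau_{D(\sigma(s))}$, and in particular $f(s) \in T$ for all $s$. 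This $D$ is the digraph claimed by the proposition.

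The engine of the argument is the trivial fact that for $x,y \in L$ and $S \subseteq V$,
\[
    x \meet \tau_S = y \meet \tau_S \iff x_S = y_S,
\]
because $(x \meet \tau_S)_v$ equals $x_v$ when $v \in S$ and $\zero_v$ when $v \notin S$. Taking $S = D(\sigma(s))$ turns the hypothesis of pre-meet-dependency into $x_{D(\sigma(s))} = y_{D(\sigma(s))}$. For the forward direction, assume $\phi$ pre-meet-depends on $f$; given $S \subseteq V$, apply the hypothesis with $s = \tau_S$ (so $\sigma(s) = S$): if $x_{D(S)} = y_{D(S)}$ then $x \meet f(\tau_S) = y \meet f(\tau_S)$, hence $\phi(x) \meet \tau_S = \phi(y) \meet \tau_S$, i.e.\ $\phi_S(x) = \phi_S(y)$; so $\phi$ depends on $D$. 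For the converse, assume $\phi$ depends on $D$ and let $x,y,s \in L$ satisfy $x \meet f(s) = y \meet f(s)$. Then $x_{D(\sigma(s))} = y_{D(\sigma(s))}$, whence $\phi_{\sigma(s)}(x) = \phi_{\sigma(s)}(y)$; checking $\phi(x) \meet s = \phi(y) \meet s$ coordinate by coordinate is then immediate, since for $v \notin \sigma(s)$ both sides are $\zero_v$ and for $v \in \sigma(s)$ we have $\phi(x)_v = \phi(y)_v$. Thus $\phi$ pre-meet-depends on $f$.

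The one subtlety, and the thing to keep track of, is the asymmetry in the pre-meet-dependency condition: its hypothesis always involves an element $f(s)$ of the ``crisp'' sublattice $T$, whereas its conclusion involves an arbitrary element $s$. This is why the two directions are not quite symmetric --- in the forward direction it suffices to test the condition on crisp inputs $s \in T$, which already yields dependency on $D$, while in the converse dependency on $D$ only controls $\phi(x)$ and $\phi(y)$ on the coordinates in $\sigma(s)$, but that is exactly what the meet with $s$ requires. Beyond this bookkeeping I do not anticipate any real obstacle.
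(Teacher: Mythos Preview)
Your proof is correct and follows essentially the same approach as the paper's: both identify the digraph $D$ via the isomorphism $\Lattice{\Powerset}(V)\cong\Lattice{T}$ applied to $g$, and both reduce the pre-meet-dependency condition to the classical one via the fact that meeting with $\tau_S$ extracts the $S$-coordinates. The only cosmetic difference is that where you compute coordinate-by-coordinate using the support map $\sigma$, the paper phrases the same asymmetry abstractly via the properties $h(s)\ge s$ (for the forward direction) and the idempotence of $h$ (for the converse); your discussion of the ``one subtlety'' is exactly this observation in concrete dress.
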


\begin{proof}
Let $T = \prod_{v \in V} \{ \zero_v, \one_v \}$, then there is a bijection $\tau : \Powerset(V) \to T$ where $\tau(a) = ( \one_a, \zero_{\neg a} )$. We remark that for all $x \in L$, $x \meet \tau( a ) = ( x_a, \zero_{\neg a} )$. Let $f = gh$ be an interaction graph on $\Lattice{L}$ and let $D = \tau^{-1} g \tau$ be the corresponding digraph on $V$.

We then have the following chain of equivalences:
\begin{align}
    \nonumber
    \phi \text{ depends on } D 
    &\iff  \left[ x_{ D( a ) } = y_{ D( a ) } \implies \phi(x)_a = \phi(y)_a \quad \forall a \subseteq V \right] \\
    \nonumber
    &\iff \left[ x \meet g \tau( a ) = y \meet g \tau( a ) \implies \phi(x) \meet \tau( a ) = \phi(y) \meet \tau( a ) \quad \forall a \subseteq V \right] \\
    \nonumber
    &\iff \left[ x \meet g( t ) = y \meet g( t ) \implies \phi(x) \meet t = \phi(y) \meet t \quad \forall t \in T \right] \\
    \label{equation:condition1}
    &\iff \left[ x \meet g h(s) = y \meet g h(s) \implies \phi(x) \meet h(s) = \phi(y) \meet h(s) \quad \forall s \in L \right].
\end{align}
Since $h(s) \ge s$, the property in the RHS of \eqref{equation:condition1} clearly implies
\begin{equation}    \label{equation:condition2}
    x \meet gh(s) = y \meet gh(s) \implies \phi(x) \meet s = \phi(y) \meet s \quad \forall s \in L,
\end{equation}
which in turn is equivalent to $\phi$ pre-meet-depending on $f$. Conversely, since $h$ is idempotent, restricting \eqref{equation:condition2} to $h(s)$ implies the property in the RHS of \eqref{equation:condition1}.
\end{proof}

\subsection{Beyond the original Robert theorem} \label{subsection:beyond_original_Robert}


The so-called converse of the complete Banach contraction principle shows that if a mapping converges to a unique fixed point, then it is contractive for some metric \cite{Bes59}. The bounded Banach contraction principle has a similar converse \cite{Jac00}. Instead of searching for a converse of Robert's theorem in general, we show it for the finite case below. So far, we placed on $X = Q^V$ the structure of a direct product of lattices, but by using a different lattice structure (namely, a chain) we prove that any nilpotent mapping $\phi : X \to X$ pre-meet-depends on a meet-nilpotent graph.

\begin{theorem} \label{theorem:all_nilpotent_alpha}
Let $X$ be a finite set and $\phi : X \to X$ be $K$-nilpotent. Then there exists a lattice $\Lattice{L}$ on $X$ and a $K$-meet-nilpotent graph $f : X \to X$ on $\Lattice{L}$ such that $\phi$ pre-meet-depends on $f$.
\end{theorem}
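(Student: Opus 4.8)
The plan is to exploit the freedom we have in choosing the lattice structure on the finite set $X$. Since $\phi$ is $K$-nilpotent, there is a unique point $x^* \in X$ with $\phi(x^*) = x^*$, and $\phi^K$ is the constant map onto $x^*$. First I would order the points of $X$ by their ``level'': for $x \in X$, let $\ell(x)$ be the least $i$ with $\phi^i(x) = x^*$, so $\ell(x^*) = 0$ and $\ell(x) \le K$ for all $x$, with $\ell(\phi(x)) = \max(\ell(x) - 1, 0)$. I would then define $\Lattice{L}$ to be a chain on $X$ in which $x^* = \zero$ is the bottom and the remaining points are stacked so that $\ell$ is (weakly) monotone increasing along the chain; concretely, enumerate $X = \{x_0 = x^*, x_1, \ldots, x_{n-1}\}$ with $\ell(x_0) \le \ell(x_1) \le \cdots \le \ell(x_{n-1})$ and declare $x_0 < x_1 < \cdots < x_{n-1}$. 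Since $X$ is finite, this chain is automatically a complete lattice, and being a chain it is in particular distributive (indeed a complete Boolean algebra only when $|X| \le 2$, but we only need ``lattice'').

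**Defining the graph $f$.**

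On this chain I would define $f : X \to X$ by $f(x^*) = \zero = x^*$ and, for $x \ne x^*$, set
\[
    f(x) = \bigjoin \{\, \phi(y) : y \le x \,\}.
\]
Since $\Lattice{L}$ is a chain, $\{y : y \le x\}$ has maximum $x$, but $\phi$ need not be monotone, so $f(x)$ is genuinely a join over the whole down-set of $x$; one checks directly that $f$ preserves arbitrary joins (the down-set of a join of a family is the union of the down-sets, because we are in a chain), hence $f$ is a graph in the sense of Section~\ref{subsection:graphs}. The key structural property is that $\ell(f(x)) \le \ell(x) - 1$ whenever $x \ne x^*$: indeed every $y \le x$ has $\ell(y) \le \ell(x)$ by our choice of chain order, so $\ell(\phi(y)) \le \ell(x) - 1$, and since $\ell$ is monotone along the chain, the join of these $\phi(y)$ also has level $\le \ell(x) - 1$. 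Iterating, $\ell(f^K(\one)) \le \ell(\one) - K \le 0$, i.e. $f^K(\one) = x^* = \zero$, so $f$ is $K$-meet-nilpotent.

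**Verifying the dependency.**

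It remains to check that $\phi$ pre-meet-depends on $f$, i.e. for all $x, y, s \in X$, $x \meet f(s) = y \meet f(s) \implies \phi(x) \meet s = \phi(y) \meet s$. Here the chain structure is doing the heavy lifting: in a chain, $u \meet v = \min(u,v)$, so $x \meet f(s) = y \meet f(s)$ says $\min(x, f(s)) = \min(y, f(s))$, which (since we are in a chain) means either $x = y$, or both $x, y \ge f(s)$ while $x \ne y$. In the first case there is nothing to prove. In the second case, I claim $\phi(x), \phi(y) \le s$, which forces $\phi(x) \meet s = \phi(x)$ and $\phi(y) \meet s = \phi(y)$ — wait, that is not yet equality; what I actually want is that when $x, y \ge f(s)$ we have $\phi(x) = f(s)$-comparable enough that $\min(\phi(x), s) = \min(\phi(y), s)$. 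The clean way: note that $x \ge f(s) \ge \phi(x')$ for every $x' \le s$; taking $x' = \min(x, s)$, if $x \ge s$ then $\phi(s) \le f(s) \le x$ so $\min(\phi(x), s)$... this needs care. The honest approach is: whenever $x \ge f(s)$ and $x \ne$ the specific values below $f(s)$, one shows $\phi(x) \ge s$ is false in general, so instead one should define $f$ so that $f(s)$ is exactly the threshold making this work — that is, take $f(s) = \bigjoin\{x : \phi(x) \not\le s\} $-type definition, or more simply redefine via $f := \Residuated{\Residual{g}}$ where $g$ is first built on the level-quotient.

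**The main obstacle.**

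The genuine difficulty is precisely this dependency condition: a chain makes meets into minima, which is very rigid, and one must choose $f$ so that $f(s)$ records exactly ``how much of the input is needed to pin down $\phi(\cdot)$ below $s$''. The right move is to first pass to the chain $\Lattice{C}$ of levels $\{0, 1, \ldots, K\}$, let $\bar\phi : \Lattice{C} \to \Lattice{C}$ send $i \mapsto \max(i-1,0)$ (the ``shift''), note $\phi$ pre-meet-depends on the level map composed appropriately because $\ell(\phi(x))$ depends only on $\ell(x)$, and then lift this along the level-refinement $X \to \{0,\ldots,K\}$ using a tie-breaking total order; the graph $f$ on $X$ is the pullback of $\bar\phi$ (as a graph on the chain $\Lattice{C}$, $\bar\phi$ is exactly $K$-meet-nilpotent since $\bar\phi^K(K) = 0$). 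Checking that pre-meet-dependency survives the pullback along a surjective lattice quotient of chains is then a short computation with minima, and this is the one step I would write out in full.

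\medskip

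\noindent\emph{[Proof to follow.]}
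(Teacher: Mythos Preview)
Your plan has a genuine gap at exactly the point you flag as ``the main obstacle'': the pre-meet-dependency fails for your construction. Take $X=\{e,a,b,c\}$ with $\phi(e)=\phi(a)=e$, $\phi(b)=\phi(c)=a$, so $K=2$. With your chain $e<a<b<c$ (so $e=\zero$) and $f(x)=\bigjoin\{\phi(y):y\le x\}$, one computes $f(b)=a$. Now set $s=b$, $x=a$, $y=b$: then $x\meet f(s)=\min(a,a)=a=\min(b,a)=y\meet f(s)$, but $\phi(x)\meet s=\min(e,b)=e$ while $\phi(y)\meet s=\min(a,b)=a$. So $\phi$ does \emph{not} pre-meet-depend on $f$. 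Your ``level-chain pullback'' variant produces the same $f$ on this example and fails identically; the step you defer (``pre-meet-dependency survives the pullback along a surjective lattice quotient'') is simply false once two elements share a level.

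The paper's proof proceeds quite differently. The key move is to place $e=\one$ at the \emph{top} and then refine the level stratification to a chain order that makes $\phi$ itself \emph{monotone}: one splits each level $S_i$ into $U_i=\phi(S_{i+1})$ and $T_i=S_i\setminus U_i$, stacks them as $S_0>T_1>U_1>\cdots$, orders each $T_i$ so that $\phi$-preimages are consecutive, and lets $U_i$ inherit the order from $S_{i+1}$ via $\phi$. Once $\phi$ is monotone and fixes $\one$ on a finite chain, it is a co-graph (dual of Proposition~\ref{proposition:classification_L}); setting $f=\Residuated{\phi}$ then gives a graph that is $K$-meet-nilpotent by Lemma~\ref{lemma:nilpotent}\ref{item:nilpotent_plus}, and $\phi$ pre-meet-depends on $f$ for free by the residuation relations in Figure~\ref{figure:dependencies1}. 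The insight you are missing is that merely sorting by level is not enough: the tie-breaking \emph{within} levels must be engineered so that $\phi$ becomes order-preserving, after which the residual machinery does all the work.
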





\begin{proof}
Let $e$ be the unique fixed point of $\phi$. For all $x \in X$, let $d(x) = \min\{ k : \phi^k( x ) = e \}$; for all $0 \le i \le K$, let $S_i = \{ x \in X: d(x) = k \}$. We note that $S_0 = \{ e \}$ and that if $x \in S_i$ for some $i \ge 1$, then $\phi( x ) \in S_{i-1}$. For any $1 \le i \le K$, let $U_i = \phi( S_{i+1} ) \subseteq S_i$ and $T_i = S_i \setminus U_i$, so that $T_K = S_K$. 
The linear order is given by:
\begin{itemize}
    \item 
    $S_0 > T_1 > U_1 > \dots > T_K > U_K$ (where $A < B$ indicates $a < b$ for all $a \in A$ and $b \in B$ for any $A,B \subseteq X$);

    \item 
    $T_i$ is ordered with ``consecutive pre-images'', i.e. if $a < b \in T_i$ satisfy $\phi(a) = \phi(b)$, then $\phi(c) = \phi(a)$ for all $a < c < b$;

    \item 
    $U_i$ inherits the order of $S_{i+1}$, i.e. if $a < b \in S_{i+1}$, then $\phi(a) \le \phi(b)$.
    
\end{itemize}

Let $\Lattice{L}$ be the lattice on $X$ induced by that linear order. We now prove that $\phi$ is monotone. Suppose $a < b$. If $d(a) > d(b)$, then either $d( \phi(a) ) > d( \phi(b) )$ and hence $\phi(a) < \phi(b)$, or $b = e$ and hence $\phi(a) = \phi(b) = e$. If $d(a) = d(b)$, then $a,b \in S_{i+1}$ for some $i \ge 0$, and hence by definition $\phi(a) \le \phi(b)$.

Since $\phi$ is monotone and fixes $e = \one$, and since $\Lattice{L}$ is a finite chain, $\phi$ is a co-graph by Proposition \ref{proposition:classification_L}. Let $f = \Residuated{ \phi }$. Since $\phi$ is a co-graph, $f$ is a graph by Theorem \ref{theorem:residuated}\ref{item:g_co-graph}, $f$ is $K$-meet-nilpotent by Lemma \ref{lemma:nilpotent}\ref{item:nilpotent_plus}, and $\phi$ pre-meet-depends on $f$ by Figure \ref{figure:dependencies1}.
\end{proof}



Below is an example of a nilpotent mapping $\phi$ that does not satisfy the consequence of the original Robert's theorem (and hence does not satisfy the hypothesis either). Indeed, $\phi$ is $(q-1)$-nilpotent, even though $|V| = 2$. However, $\phi$ pre-meet-depends on a graph $f$ such that $f^{q-1}( \one ) = \zero$, and hence $\phi^{q-1}( x ) = e$ for all $x$. Obviously, this example can be vastly generalised to yield nilpotence with arbitrarily long trajectories.

\begin{example} \label{example:nilpotent_reflexive_interaction_graph}
We consider $\Lattice{L} = \prod_{v \in V} \Lattice{L}_v$ where $V = \{1, 2\}$ and $\Lattice{L}_1 = \Lattice{L}_2$ is the chain $0 < 1 < \dots < q-1$. For any $a \in \{0, \dots, q-1\}$, let $a^+ = \min\{ q-1, a+1 \}$ and $a^- = \max\{ 0, a-1 \}$. Let $x = (x_1, x_2) \in L$ and define the automata network $\phi$ by
\[
    \phi( x_1, x_2 ) = ( x_1^+, x_2^+ ).
\]
Then $\phi$ is $(q-1)$-join-nilpotent but not $K$-join-nilpotent for any $K < q-1$. Note that the interaction graph of $\phi$ is reflexive. However, since $\phi$ is a co-graph, it pre-meet-depends on the graph $f = \Residuated{ \phi }$, given explicitly by
\[
    f( x_1, x_2 ) = ( x_1^-, x_2^- ).
\]

This example is illustrated for $q = 4$ in Figure \ref{figure:nilpotent_reflexive_interaction_graph}.
\end{example}

\begin{figure}[!htp]
\centering
\subfloat[$\Lattice{L}$]{
\begin{tikzpicture}[xscale=1.3]
    \node (00) at (0,0) {$00$};
    
    \node (10) at (-0.5,1) {$10$};
    \node (01) at (0.5,1) {$01$};
    
    \node (20) at (-1,2) {$20$};
    \node (11) at (0,2) {$11$};
    \node (02) at (1,2) {$02$};
    
    \node (30) at (-1.5,3) {$30$};
    \node (21) at (-0.5,3) {$21$};
    \node (12) at (0.5,3) {$12$};
    \node (03) at (1.5,3) {$03$};

    \node (31) at (-1,4) {$31$};
    \node (22) at (0,4) {$22$};
    \node (13) at (1,4) {$13$};

    \node (32) at (-0.5,5) {$32$};
    \node (23) at (0.5,5) {$23$};

    \node (33) at (0,6) {$33$};

    \draw[thick] (00) -- (01);
    \draw[thick] (00) -- (10);
    
    \draw[thick] (10) -- (20);
    \draw[thick] (10) -- (11);
    \draw[thick] (01) -- (11);
    \draw[thick] (01) -- (02);
    
    \draw[thick] (20) -- (30);
    \draw[thick] (20) -- (21);
    \draw[thick] (11) -- (21);
    \draw[thick] (11) -- (12);
    \draw[thick] (02) -- (12);
    \draw[thick] (02) -- (03);

    \draw[thick] (30) -- (31);
    \draw[thick] (21) -- (31);
    \draw[thick] (21) -- (22);
    \draw[thick] (12) -- (22);
    \draw[thick] (12) -- (13);
    \draw[thick] (03) -- (13);

    \draw[thick] (31) -- (32);
    \draw[thick] (22) -- (32);
    \draw[thick] (22) -- (23);
    \draw[thick] (13) -- (23);

    \draw[thick] (32) -- (33);
    \draw[thick] (23) -- (33);
\end{tikzpicture}
}
\hspace{1cm}\subfloat[\textcolor{red}{$\phi$}]{
\begin{tikzpicture}[xscale=1.3]
    \node (00) at (0,0) {$00$};
    
    \node (10) at (-0.5,1) {$10$};
    \node (01) at (0.5,1) {$01$};
    
    \node (20) at (-1,2) {$20$};
    \node (11) at (0,2) {$11$};
    \node (02) at (1,2) {$02$};
    
    \node (30) at (-1.5,3) {$30$};
    \node (21) at (-0.5,3) {$21$};
    \node (12) at (0.5,3) {$12$};
    \node (03) at (1.5,3) {$03$};

    \node (31) at (-1,4) {$31$};
    \node (22) at (0,4) {$22$};
    \node (13) at (1,4) {$13$};

    \node (32) at (-0.5,5) {$32$};
    \node (23) at (0.5,5) {$23$};

    \node (33) at (0,6) {$33$};

    \draw (00) -- (01);
    \draw (00) -- (10);
    
    \draw (10) -- (20);
    \draw (10) -- (11);
    \draw (01) -- (11);
    \draw (01) -- (02);
    
    \draw (20) -- (30);
    \draw (20) -- (21);
    \draw (11) -- (21);
    \draw (11) -- (12);
    \draw (02) -- (12);
    \draw (02) -- (03);

    \draw (30) -- (31);
    \draw (21) -- (31);
    \draw (21) -- (22);
    \draw (12) -- (22);
    \draw (12) -- (13);
    \draw (03) -- (13);

    \draw (31) -- (32);
    \draw (22) -- (32);
    \draw (22) -- (23);
    \draw (13) -- (23);

    \draw (32) -- (33);
    \draw (23) -- (33);

    \draw[-latex, red, thick] (00) -- (11);
    \draw[-latex, red, thick] (10) -- (21);
    \draw[-latex, red, thick] (01) -- (12);
    \draw[-latex, red, thick] (20) -- (31);
    \draw[-latex, red, thick] (11) -- (22);
    \draw[-latex, red, thick] (02) -- (13);
    \draw[-latex, red, thick] (30) to [bend left=15] (31);
    \draw[-latex, red, thick] (21) -- (32);
    \draw[-latex, red, thick] (12) -- (23);
    \draw[-latex, red, thick] (03) to [bend right=15] (13);
    \draw[-latex, red, thick] (31) to [bend left=15] (32);
    \draw[-latex, red, thick] (22) -- (33);
    \draw[-latex, red, thick] (13) to [bend right=15] (23);
    \draw[-latex, red, thick] (32) to [bend left=15] (33);
    \draw[-latex, red, thick] (23) to [bend right=15] (33);
    \draw[-latex, red, thick] (33) to [loop left] (33);
\end{tikzpicture}
}
\hspace{1cm}\subfloat[\textcolor{blue}{$f$}]{
\begin{tikzpicture}[xscale=1.3]
    \node (00) at (0,0) {$00$};
    
    \node (10) at (-0.5,1) {$10$};
    \node (01) at (0.5,1) {$01$};
    
    \node (20) at (-1,2) {$20$};
    \node (11) at (0,2) {$11$};
    \node (02) at (1,2) {$02$};
    
    \node (30) at (-1.5,3) {$30$};
    \node (21) at (-0.5,3) {$21$};
    \node (12) at (0.5,3) {$12$};
    \node (03) at (1.5,3) {$03$};

    \node (31) at (-1,4) {$31$};
    \node (22) at (0,4) {$22$};
    \node (13) at (1,4) {$13$};

    \node (32) at (-0.5,5) {$32$};
    \node (23) at (0.5,5) {$23$};

    \node (33) at (0,6) {$33$};

    \draw (00) -- (01);
    \draw (00) -- (10);
    
    \draw (10) -- (20);
    \draw (10) -- (11);
    \draw (01) -- (11);
    \draw (01) -- (02);
    
    \draw (20) -- (30);
    \draw (20) -- (21);
    \draw (11) -- (21);
    \draw (11) -- (12);
    \draw (02) -- (12);
    \draw (02) -- (03);

    \draw (30) -- (31);
    \draw (21) -- (31);
    \draw (21) -- (22);
    \draw (12) -- (22);
    \draw (12) -- (13);
    \draw (03) -- (13);

    \draw (31) -- (32);
    \draw (22) -- (32);
    \draw (22) -- (23);
    \draw (13) -- (23);

    \draw (32) -- (33);
    \draw (23) -- (33);

    \draw[-latex, blue, thick] (00) to [loop left]  (00);
    \draw[-latex, blue, thick] (10) to [bend right=15] (00);
    \draw[-latex, blue, thick] (01) to [bend left=15] (00);
    \draw[-latex, blue, thick] (20) to [bend right=15] (10);
    \draw[-latex, blue, thick] (11) -- (00);
    \draw[-latex, blue, thick] (02) to [bend left=15] (01);
    \draw[-latex, blue, thick] (30) to [bend right=15] (20);
    \draw[-latex, blue, thick] (21) -- (10);
    \draw[-latex, blue, thick] (12) -- (01);
    \draw[-latex, blue, thick] (03) to [bend left=15] (02);
    \draw[-latex, blue, thick] (31) -- (20);
    \draw[-latex, blue, thick] (22) -- (11);
    \draw[-latex, blue, thick] (13) -- (02);
    \draw[-latex, blue, thick] (32) -- (21);
    \draw[-latex, blue, thick] (23) -- (12);
    \draw[-latex, blue, thick] (33) -- (22);
\end{tikzpicture}
}
\caption{A nilpotent automata network whose convergence cannot be proved via Robert's original theorem.}
\label{figure:nilpotent_reflexive_interaction_graph}
\end{figure}
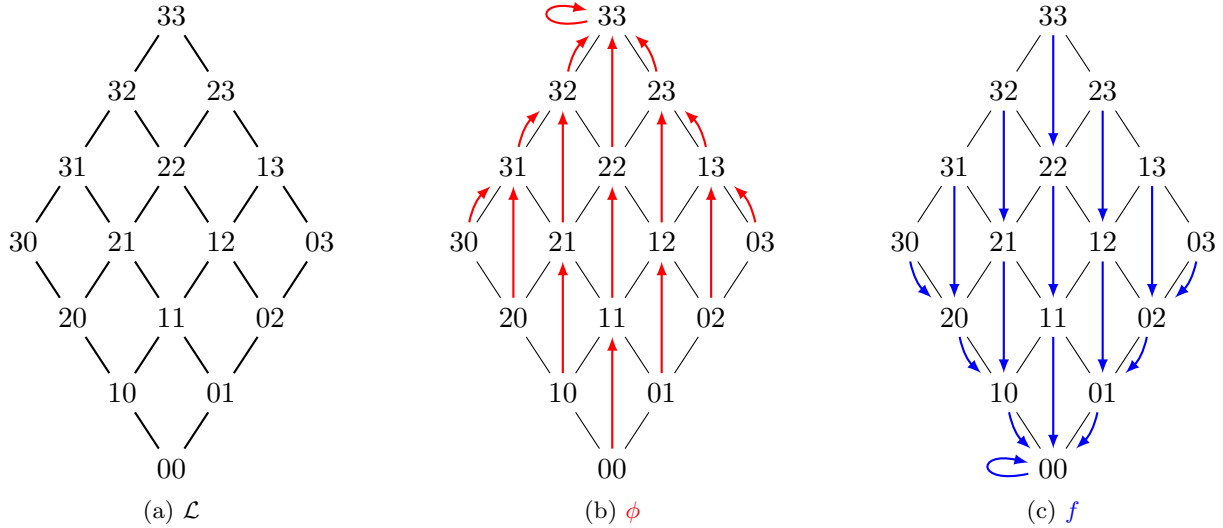



Finally, the original Robert's theorem for finite automata networks can be recast as follows: If there exists a sequence $S_0 = \emptyset, S_1, \dots, S_K = V$ of subsets of $V$ such that
\[
    x_{S_i} = y_{S_i} \implies \phi( x )_{S_{i+1}} = \phi( y )_{S_{i+1}}
\]
for all $i \le K-1$, then $\phi^K = e$, its unique fixed point. (In Robert's theorem, $S_i = { \Residual{G^\phi} }^i( \emptyset )$.) This idea can be easily generalised to any complete lattice and any finite sequence beginning at $\zero$ and finishing at $\one$.

\begin{proposition} \label{proposition:sequence_nilpotent}
Let $\Lattice{L}$ be a complete lattice. If there exists a sequence $A = (a_0 = \zero, a_1, \dots, a_K = \one)$ of elements of $L$ such that 
\[
    x \meet a_i = y \meet a_i \implies \phi( x ) \meet a_{i+1} = \phi( y ) \meet a_{i+1}
\]
for all $0 \le i \le K-1$, then $\phi^K = e$ with $\Fix( \phi ) = \{ e \}$.
\end{proposition}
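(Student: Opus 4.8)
The plan is to transport the ``crystallisation'' argument from the first proof of Theorem~\ref{theorem:robert} verbatim to the lattice setting. The sequence $A$ plays the role of the sets $S_i = \Residual{G^\phi}^i(\emptyset)$, and the hypothesis of the proposition is precisely the lattice-theoretic analogue of the relation $G^\phi(S_{i+1}) \subseteq S_i$ used there.

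First I would prove, by induction on $i$ for $0 \le i \le K$, that $\phi^i(x) \meet a_i = \phi^i(y) \meet a_i$ for all $x, y \in L$. The base case $i = 0$ is immediate since $a_0 = \zero$, so both sides equal $\zero$. For the inductive step, assuming $\phi^i(x) \meet a_i = \phi^i(y) \meet a_i$, one applies the hypothesis of the proposition with the pair $(\phi^i(x), \phi^i(y))$ in place of $(x,y)$; this yields $\phi(\phi^i(x)) \meet a_{i+1} = \phi(\phi^i(y)) \meet a_{i+1}$, that is, $\phi^{i+1}(x) \meet a_{i+1} = \phi^{i+1}(y) \meet a_{i+1}$. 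The only point requiring care is that the hypothesis must be instantiated at the iterates, not at the original arguments; everything else is routine.

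Taking $i = K$ and using $a_K = \one$ gives $\phi^K(x) = \phi^K(y)$ for all $x, y \in L$, so $\phi^K$ is a constant map; let $e$ denote its unique value. Then $\phi(e) = \phi(\phi^K(x)) = \phi^{K+1}(x) = \phi^K(\phi(x)) = e$ by constancy, so $e \in \Fix(\phi)$; and conversely if $u \in \Fix(\phi)$ then $u = \phi^K(u) = e$, whence $\Fix(\phi) = \{e\}$.

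I do not expect a genuine obstacle here: the argument is a direct induction together with the elementary observation that a self-map whose $K$-th iterate is constant has exactly one fixed point. The only thing worth flagging is that this statement gives a strictly stronger conclusion than the generic levels of Theorem~\ref{theorem:generalised_robert}, namely convergence to the fixed point in finitely many steps with no metric or completeness hypothesis whatsoever, at the cost of requiring an explicit finite chain $a_0 = \zero, \dots, a_K = \one$ with the stated transfer property rather than merely the existence of a meet-nilpotent mapping on which $\phi$ depends.
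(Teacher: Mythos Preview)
Your proof is correct and is exactly the ``straightforward induction'' that the paper gives as its primary argument. The paper also offers an alternative derivation: define $\alpha(a_{i+1}) = a_i$ for $i < K$ and $\alpha(x) = \one$ otherwise, so that $\phi$ pre-meet-depends on the $K$-meet-nilpotent mapping $\alpha$, and then invoke Theorem~\ref{theorem:generalised_robert}\ref{item:robert_pre-meet}\ref{item:robert_pre-meet_nilpotent}; this shows, contrary to your closing remark, that the proposition is in fact a special case of the nilpotent level of the general Robert's theorem rather than something lying outside its scope.
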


\begin{proof}
The proof is by straightforward induction. Alternatively, define $\alpha: L \to L$ by
\[
    \alpha(x) = \begin{cases}
    a_i &\text{if } x = a_{i+1}, i < K\\
    \one &\text{otherwise}.
    \end{cases}
\]
Then $\phi$ pre-meet-depends on $\alpha$, and $\alpha$ is $K$-meet-nilpotent, hence $\phi^K = e$.
\end{proof}

\section{Conclusion} \label{section:conclusion}

This paper made two main contributions: the generalisation of a ``graph theory'' for mappings on complete lattices, and the generalisation of Robert's theorem in the same context. As such, this work can be expanded in two major ways.

Firstly, one could further develop this sort of graph theory for complete lattices. For instance, one could define further concepts from digraphs, such as dominating set ($x \in L$ such that $\Reflexive{f}(x) = \one$), total dominating set ($f(x) = \one$), independent set ($f(x) \meet x = \zero$), etc. Moreover, we saw that induced subgraphs could be defined for frames, while transpose graphs could be defined for complete Boolean algebras. In general, different classes of complete lattices inherit different concepts from digraphs, and hence yield distinct graph theories with various degrees of richness.

Secondly, we have given bounds on fixed points (and even pyramidal points) of automata networks. However, we know that the interaction graph has a strong influence on other dynamical properties of finite automata networks, such as the rank (see \cite{Gad20}). It would be very interesting to generalise the results on finite automata networks based on the interaction graph to our setting. Moreover, an automata network also has a signed interaction graph, which indicates whether an influence is positive (excitation) or negative (inhibition) or both (this is typically the case in linear networks). The signed interaction graph has a strong influence on the dynamics of finite automata networks, see for example \cite{ADG04, Ara08, Ric09, R10, ARS17, Ric18}. Our framework based on lattices does not seem appropriate to handle signed interaction graphs, especially the case where one entity can simultaneously excite a second entity and inhibit a third one. As such, developing a different or more general framework to encompass the signed interaction graph is an important avenue for future work in this area.


\begin{thebibliography}{10}

\bibitem{Ara08}
Julio Aracena.
\newblock Maximum number of fixed points in regulatory {B}oolean networks.
\newblock {\em Bulletin of Mathematical Biology}, 70:1398--1409, 2008.

\bibitem{ADG04}
Julio Aracena, Jacques Demongeot, and Eric Goles.
\newblock Positive and negative circuits in discrete neural networks.
\newblock {\em IEEE Transactions on Neural Networks}, 15(1):77--83, January
  2004.

\bibitem{ARS17}
Julio Aracena, Adrien Richard, and Lilian Salinas.
\newblock Number of fixed points and disjoint cycles in monotone boolean
  networks.
\newblock {\em SIAM Journal on Discrete mathematics}, 31:1702--1725, 2017.

\bibitem{Bes59}
C.~Bessaga.
\newblock On the converse of the banach fixed-point principle.
\newblock {\em Colloq. Math.}, 7:41--43, 1959.

\bibitem{Bly05}
T.~S. Blyth.
\newblock {\em Lattices and Ordered Algebraic Structures}.
\newblock Springer, 2005.

\bibitem{BJ72}
T.~S. Blyth and M.~F. Janowitz.
\newblock {\em Residuation Theory}.
\newblock Pergamon Press Ltd., 1972.

\bibitem{Cam99}
Peter~J. Cameron.
\newblock {\em Permutation Groups}, volume~45 of {\em London Mathematical
  Society Student Texts}.
\newblock Cambridge University Press, 1999.

\bibitem{Gad13}
Maximilien Gadouleau.
\newblock Closure solvability for network coding and secret sharing.
\newblock {\em IEEE Transactions on Information Theory}, 59(12):7858--7869,
  December 2013.

\bibitem{Gad18a}
Maximilien Gadouleau.
\newblock Finite dynamical systems, hat games, and coding theory.
\newblock {\em SIAM Journal on Discrete Mathematics}, 32(3):1922--1945, August
  2018.

\bibitem{Gad20}
Maximilien Gadouleau.
\newblock On the influence of the interaction graph on a finite dynamical
  system.
\newblock {\em Natural Computing}, 19:15--28, 2020.

\bibitem{Gad21}
Maximilien Gadouleau.
\newblock Dynamical properties of disjunctive {B}oolean networks.
\newblock In {\em Proc. International Workshop on Cellular Automata and
  Discrete Complex Systems}, pages 1--15, July 2021.

\bibitem{GG15}
Maximilien Gadouleau and Nicholas Georgiou.
\newblock New constructions and bounds for {W}inkler's hat game.
\newblock {\em SIAM Journal of Discrete Mathematics}, 29:823--834, 2015.

\bibitem{GRF16}
Maximilien Gadouleau, Adrien Richard, and Eric Fanchon.
\newblock Reduction and fixed points of {B}oolean networks and linear network
  coding solvability.
\newblock {\em IEEE Transactions on Information Theory}, 62(5):2504--2519,
  2016.

\bibitem{GRR15}
Maximilien Gadouleau, Adrien Richard, and S\o{}ren Riis.
\newblock Fixed points of {B}oolean networks, guessing graphs, and coding
  theory.
\newblock {\em SIAM Journal on Discrete Mathematics}, 29(4):2312--2335, 2015.

\bibitem{GR11}
Maximilien Gadouleau and S\o{}ren Riis.
\newblock Graph-theoretical constructions for graph entropy and network coding
  based communications.
\newblock {\em IEEE Transactions on Information Theory}, 57(10):6703--6717,
  October 2011.

\bibitem{GH09}
Steven Givant and Paul Halmos.
\newblock {\em Introduction to Boolean Algebras}.
\newblock Springer, 2009.

\bibitem{GD03}
Andrzej Granas and James Dugundji.
\newblock {\em Fixed Point Theory}.
\newblock Springer, 2003.

\bibitem{Gra13}
George Gr\"atzer.
\newblock {\em General Lattice Theory: Second Edition}.
\newblock Birkh\"auser, 2013.

\bibitem{Jac00}
Jacek Jachymski.
\newblock A short proof of the converse to the contraction principle and some
  related results.
\newblock {\em Topological Methods in Nonlinear Analysis}, 15:179--186, 2000.

\bibitem{RRM13}
Landon Rabern, Brian Rabern, and Matthew Macauley.
\newblock Dangerous reference graphs and semantic paradoxes.
\newblock {\em Journal of Philosophical Logic}, 42(5):727--765, 2013.

\bibitem{Ran17}
Francesco Ranzato.
\newblock A new characterization of complete {H}eyting and co-{H}eyting
  algebras.
\newblock {\em Logical Methods in Computer Science}, 13(3:25):1--11, 2017.

\bibitem{RR13}
A.~Richard and P.~Ruet.
\newblock From kernels in directed graphs to fixed points and negative cycles
  in boolean networks.
\newblock {\em Discrete Applied Mathematics}, 161(7):1106--1117, 2013.

\bibitem{Ric09}
Adrien Richard.
\newblock Positive circuits and maximal number of fixed points in discrete
  dynamical systems.
\newblock {\em Discrete Applied Mathematics}, 157:3281--3288, 2009.

\bibitem{R10}
Adrien Richard.
\newblock Negative circuits and sustained oscillations in asynchronous automata
  networks.
\newblock {\em Advances in Applied Mathematics}, 44(4):378 -- 392, 2010.

\bibitem{Ric18}
Adrien Richard.
\newblock Fixed points and connections between positive and negative cycles in
  boolean networks.
\newblock {\em Discrete Applied Mathematics}, to appear.

\bibitem{Rii07a}
S\o{}ren Riis.
\newblock Graph entropy, network coding and guessing games.
\newblock {\em ArXiv}, November 2007.

\bibitem{Rii07}
S\o{}ren Riis.
\newblock Information flows, graphs and their guessing numbers.
\newblock {\em The Electronic Journal of Combinatorics}, 14:1--17, 2007.

\bibitem{Rob80}
Fran\c{c}ois Robert.
\newblock Iterations sur des ensembles finis et automates cellulaires
  contractants.
\newblock {\em Linear Algebra and its Applications}, 29:393--412, 1980.

\bibitem{Rob95}
Fran\c{c}ois Robert.
\newblock {\em Les Syst\`emes Dynamiques Discrets}.
\newblock Springer, 1995.

\bibitem{Yab93}
Steven Yablo.
\newblock Paradox without self-reference.
\newblock {\em Analysis}, 53(4):251--252, 1993.

\end{thebibliography}

\end{document}